\title{What You Must Remember When Transforming Datawords} 
\titlerunning{What You Must Remember When Transforming Datawords} 
\author{M.~Praveen}{Chennai Mathematical Institute, India \and UMI
ReLaX, Indo-French joint research unit}{}{}{Partially supported by a grant from the Infosys foundation.}
\authorrunning{M.~Praveen} 
\keywords{Streaming String Transducers, Data words, Machine
independent characterization} 
\begin{document}

\maketitle

\newtheorem{construction}[theorem]{Construction}

\tikzstyle{state}=[draw=black, fill=black, circle, inner sep=0.01cm, minimum
height=0.1cm]
\usetikzlibrary{shapes.multipart}

\begin{abstract}
    Streaming Data String Transducers (SDSTs) were introduced to model
a class of imperative and a class of functional programs, manipulating
lists of data items. These can be used to write commonly used routines
such as insert, delete and reverse. SDSTs can handle data values from
a potentially infinite data domain. The model of Streaming String
Transducers (SSTs) is the fragment of SDSTs where
the infinite data domain is dropped and only finite alphabets are
considered. SSTs have been much studied from a language theoretical
point of view. We introduce data back into SSTs, just like data was
introduced to finite state automata to get register automata. The
result is Streaming String Register Transducers (SSRTs), which is a
subclass of SDSTs.

We use origin semantics for SSRTs and give a machine independent
characterization, along the lines of Myhill-Nerode theorem. Machine independent
characterizations for similar models are the basis of learning
algorithms and enable us to understand fragments of the models. Origin
semantics of transducers track which positions of the output originate from
which positions of the input.  Although a restriction, using origin semantics
is well justified and is known to simplify many problems related to transducers.
We use origin semantics as a technical building block, in addition to
characterizations of deterministic register automata. However, we need to build
more on top of these to overcome some challenges unique to SSRTs.
\end{abstract}

\newcommand{\partfn}{\rightharpoonup}
\newcommand{\set}[1]{\{#1\}}
\newcommand{\curr}{\mathit{curr}}
\newcommand{\Nat}{\mathbb{N}}
\newcommand{\Int}{\mathbb{I}}
\newcommand{\Whole}{\Nat^{+}}
\newcommand{\semantics}[1]{\llbracket #1 \rrbracket}
\newcommand{\lef}{\mathtt{left}}
\newcommand{\mi}{\mathtt{middle}}
\newcommand{\ri}{\mathtt{right}}
\newcommand{\data}{\mathtt{data}}
\newcommand{\lett}{\mathtt{letter}}
\newcommand{\ifl}{\mathtt{ifl}}
\newcommand{\aifl}{\mathtt{aifl}}
\newcommand{\ifls}{\mathtt{m}}
\newcommand{\iflp}{\mathtt{v}}
\newcommand{\iflpp}{\mathtt{iflpp}}
\newcommand{\iso}{\simeq}
\newcommand{\abs}{\mathtt{abs}}
\newcommand{\concr}{\mathtt{concr}}
\newcommand{\iflsp}{\mathtt{vm}}
\newcommand{\fequiv}{\equiv_{f}}
\newcommand{\requiv}{\equiv_{f}^{E}}
\newcommand{\uequiv}{\equiv_{U,f}^{E}}
\newcommand{\sequiv}{\equiv_{S}}
\newcommand{\tsc}[1]{\textsuperscript{#1}}
\newcommand{\bv}[1]{\langle #1 \rangle}
\newcommand{\drlst}{\nleftarrow}
\newcommand{\pbl}{\mathcal{P}}
\newcommand{\pref}{\mathit{pref}}
\newcommand{\suf}{\mathit{suf}}
\newcommand{\bl}{\mathit{bl}}
\newcommand{\ur}{\mathit{ur}}
\newcommand{\val}{\mathit{val}}
\newcommand{\ptr}{\mathit{ptr}}
\newcommand{\reu}{\mathit{reuse}}
\newcommand{\pcomp}{\odot}
\newcommand{\dropOrig}{\upharpoonright_{2}}
\newcommand{\update}{\mathit{ud}}


\section{Introduction}
\label{sec:intro}
Transductions are in general relations among words. Transducers are
theoretical models that implement transductions. Transducers are used
in a variety of applications, such as analysis of web sanitization
frameworks, host based intrusion detection, natural language
processing, modeling some classes of programming languages and
constructing programming language tools like evaluators, type
checkers and translators.
Streaming Data String Transducers (SDSTs) were introduced in
\cite{AC2011} to model a class of imperative and a class of functional
programs, manipulating lists of data items. Transducers have been used
in \cite{HISBJ2012} to infer semantic interfaces of data structures
such as stacks. Such applications use Angluin style learning, which involves
constructing transducers by looking at example operations of the
object under study. Since the transducer is still under construction,
we need to make inferences about the transduction without having
access to a transducer which implements it. Theoretical
bases for doing this are machine independent characterizations, which
identify what kind of transductions can be implemented by what kind of
transducers and give a template for constructing transducers. Indeed
the seminal Myhill-Nerode theorem gives a machine independent
characterization for regular languages over finite alphabets, which
form the basis of Angluin style learning of regular languages
\cite{Angluin87}. A similar characterization for a fragment of SDSTs
is given in \cite{Bojanczyk2014} and is used as a basis to design a
learning algorithm.

Programs deal with data from an infinite domain and transducers
modeling the programs should also treat data as such. For example in
\cite{HISBJ2012}, the state space reduced from $10^9$ to $800$ and the
number of learning queries reduced from billions to $4000$ by
switching to a transducer model that can deal with data from an
infinite domain. We give a machine independent characterization for a
fragment of SDSTs more powerful than those in \cite{HISBJ2012,
Bojanczyk2014}. The additional power comes from
significant conceptual differences. The transducers used in
\cite{HISBJ2012} produce the output in a linear fashion without
remembering what was output before. For example, they cannot output
the reverse of the input strings, which can be done by our model. The
model studied in \cite{Bojanczyk2014} are called Streaming String
Transducers (SSTs), the fragment obtained from SDSTs by dropping the
ability to deal with data values from an infinite domain. We retain
this ability in our model, called Streaming String Register
Transducers (SSRTs). It is obtained from SDSTs by dropping the ability
to deal with linear orders in the data domain.  Apart from Angluin
style learning algorithms, machine independent characterizations are
also useful for studying fragments of transducer models. E.g.~in
\cite{Bojanczyk2014}, machine independent characterization of SSTs is
used to study fragments such as non-deterministic automata with
output and transductions definable in First Order logic.

We use origin semantics of transducers, which are used in
\cite{Bojanczyk2014} to take into account how positions of the output
originate from the positions of the input. Using origin semantics is
known to ease some of the problems related to transducers, e.g.,
\cite{BMPP2018}.  Origin semantics is a restriction, but a reasonable
one and is used extensively in this paper.

\paragraph*{Contributions} Machine independent characterizations are
known for automata over data values from an infinite domain
\cite{FK2003, BCG2010} and for streaming transducers over finite
alphabets \cite{Bojanczyk2014}, but not for streaming transducers over
data values, which is what we develop here. This involves both
conceptual and technical challenges.  In \cite{FK2003,
BCG2010}, data values that must be remembered by an automaton while
reading a word from left to right are identified using a machine independent definition.
We lift this to transducers and identify that the concept of
factored outputs from \cite{Bojanczyk2014} is necessary for this. Factored
outputs can let us ignore some parts of transduction outputs, which is
necessary to define when two words behave similarly. However,
\cite{Bojanczyk2014} does not deal with data values from an infinite
domain and it takes quite a bit of manipulation with permutations on
data values to make ideas from there work here. In transductions,
suffixes can influence how prefixes are transformed. This is
elegantly handled in \cite{Bojanczyk2014} using two way transducer
models known to be equivalent to SSTs. There are no such models known
when data values are present. To handle it in a one way transducer
model, we introduce data structures based on trees that keep track of
all possible suffixes. This does raise the question of whether there
are interesting two way transducer models with data values. Recent
work \cite{BS2020} has made progress in this direction, which we
discuss at the end of this article. We concentrate here on
SDSTs and its fragments, which are known to be equivalent to classes
of imperative and functional programming languages. In \cite{AC2011},
it is explained in detail which features of programming languages
correspond to which features of the transducer. Over finite alphabets,
streaming string transducers are expressively equivalent to regular
transductions, which are also defined by two way deterministic
finite-state transducers and by monadic second order logic
\cite{AC2010}.

\paragraph*{Related Works} Studying transducer models capable of
handling data values from an infinite domain is an active area of
research \cite{EFR2019,EFR2020}. Streaming transducers like SDSTs have
the distinctive feature of using variables to store intermediate
values while computing transductions; this idea appears in an earlier
work \cite{CJ1977} that introduced \emph{simple programs on strings},
which implement the same class of transductions as those implemented
by SSTs. An Angluin style learning algorithm for deterministic automata
with memory is given in \cite{HSJC2012}. A machine independent
characterization of automata with finite memory is given in
\cite{CHJMS2015}, which is further extended to data domains with
arbitrary binary relations in \cite{CJHS2012}. The learning algorithm
of \cite{HSJC2012} is extended to Mealy machines with data in
\cite{HISBJ2012}. However, Mealy machines are not as powerful as SSRTs
that we consider here. Using a more abstract approach of nominal
automata, \cite{MSSKS2017} presents a learning algorithm for automata
over infinite alphabets. Logical characterizations of transducers that
can handle data are considered in \cite{DH2016}.  However, the
transducers in that paper cannot use data values to make decisions,
although they are part of the output.  Register automata with linear
arithmetic introduced in \cite{CLTW2017} shares some of the features
of the transducer model used here. Here, data words stored in
variables can be concatenated, while in register automata with linear
arithmetic, numbers stored in variables can be operated upon by linear
operators.

Proofs of some of the results in this paper are tedious and are moved to
the appendix to maintain flow of ideas in the main paper. Proofs of results stated in the main
part of the paper are in Sections \ref{app:Equivalences}, \ref{app:ConstSSRT}
and \ref{app:TransdSSRT}.  Section~\ref{app:Fundamental} states and proves some
basic properties of transductions and transducers that are only invoked in
Sections \ref{app:Equivalences}, \ref{app:ConstSSRT} and \ref{app:TransdSSRT}.
Section~\ref{app:Lengthy} contains proofs that are especially long.  They
consist of lengthy case analyses to rigorously verify facts that are
intuitively clear.

\section{Preliminaries}
\label{sec:prelim}
Let $\Int$ be the set of integers, $\Nat$ be the set of non-negative
integers and $D$ be an infinite set of data values. We will refer to
$D$ as the \emph{data domain}.  For $i,j \in \Int$, we denote by
$[i,j]$ the set $\set{k \mid i \le k \le j}$. For any set $S$, $S^{*}$
denotes the set of all finite sequences of elements from $S$. The empty
sequence is denoted by $\epsilon$. Given $u,v \in S^{*}$, $v$ is a
\emph{prefix} (resp.~suffix) of $u$ if there exists $w \in
S^{*}$ such that $u=vw$ (resp.~$u=wv$). The sequence $v$ is an
\emph{infix} of $u$ if there are sequences $w_{1},w_{2}$ such that
$u=w_{1} v w_{2}$.

 Let $\Sigma, \Gamma$ be finite alphabets. We will use $\Sigma$ for input
alphabet and $\Gamma$ for output alphabet. A \emph{data word} over $\Sigma$ is
a word in $(\Sigma \times D)^{*}$. A \emph{data word with origin information}
over $\Gamma$ is a word in $(\Gamma \times D \times \Nat)^*$. Suppose $\Sigma =
\set{\mathrm{title}, \mathrm{firstName}, \mathrm{lastName}}$ and
$\Gamma=\set{\mathrm{givenName}, \mathrm{surName}}$. An example data word over
$\Sigma$ is $(\mathrm{title}, \mathrm{Mr.}) (\mathrm{firstName},
\mathrm{Harry}) (\mathrm{lastName},\mathrm{Tom})$. If we were to give this as
input to a device that reverses the order of names, the output would be the
data word with origin information $(\mathrm{surName}, \mathrm{Tom},3)
(\mathrm{givenName},$ $\mathrm{Harry},2)$, over $\Gamma$. In the triple
$(\mathrm{givenName},\mathrm{Harry},2)$, the third component $2$ indicates that
the pair $(\mathrm{givenName},$ $\mathrm{Harry})$ originates from the second
position of the input data word. We call the third component \emph{origin} and
it indicates the position in the input that is responsible for
producing the output triple. If a transduction is being implemented by
a transducer, the origin of an output position is the position of the
input that the transducer was reading when it produced the output.
The data value at some position of the output may come from any
position (not necessarily the origin) of the input data word. We write
\emph{transduction} for any function from data words over $\Sigma$ to
data words with origin information over $\Gamma$.

For a data word $w$, $|w|$ is its length. For a position $i \in
[1,|w|]$, we denote by $\data(w,i)$ (resp.~$\lett(w,i)$) the data
value (resp.~the letter from the finite alphabet) at the
$i$\textsuperscript{th} position of $w$. We denote by $\data(w,*)$ the
set of all data values that appear in $w$. For positions $i \le j$, we
denote by $w[i,j]$ the infix of $w$ starting at position $i$ and
ending at position $j$. Note that $w[1,|w|]=w$. Two data words $w_{1},
w_{2}$ are \emph{isomorphic} (denoted by $w_{1} \iso w_{2}$) if
$|w_{1}| = |w_{2}|$, $\lett(w_{1},i)=\lett(w_{2},i)$
and $\data(w_{1},i)=\data(w_{1},j)$ iff
$\data(w_{2},i)=\data(w_{2},j)$ for all positions $i,j \in
[1,|w_{1}|]$.  For data values $d,d'$, we denote by $w[d/d']$ the data
word obtained from $w$ by replacing all occurrences of $d$ by $d'$. We
say that $d'$ is a \emph{safe replacement} for $d$ in $w$ if $w[d/d']
\iso w$. Intuitively, replacing $d$ by $d'$ doesn't introduce new
equalities/inequalities among the positions of $w$. For example,
$d_{1}$ is a safe replacement for $d_{2}$ in $(a,d_{3}) (b,d_{2})$,
but not in $(a,d_{1}) (b,d_{2})$.

A permutation on data values is any bijection $\pi: D \to D$.  For a
data word $u$, $\pi(u)$ is obtained from $u$ by replacing all its data
values by their respective images under $\pi$. A transduction $f$ is
\emph{invariant under permutations} if for every data word $u$ and
every permutation $\pi$, $f(\pi(u)) = \pi(f(u))$ (permutation can be
applied before or after the transduction).

Suppose a transduction $f$ has the property that for any triple
$(\gamma,d,o)$ in any output $f(w)$, there is a position $i \le o$ in
$w$ such that $\data(w,i)=d$. If the data value $d$ is
output from the origin $o$, then $d$ should have already occurred in
the input on or before $o$. Such transductions are said to be
\emph{without data peeking}. We say that a transduction has
\emph{linear blow up} if there is a constant $K$ such that for any
position $o$ of any input, there are at most $K$ positions in
the output whose origin is $o$.

\paragraph*{Streaming String Register Transducers}
We present an extension of SSTs to handle data values, just like
finite state automata were extended to finite memory automata
\cite{KF1994}. Our model is a subclass of SDSTs, which can
store intermediate values (which can be long words) in
variables. E.g., reversing an input word can be achieved as follows:
as each input symbol is read, concatenate it to the back of a variable
maintained for this purpose. At the end, the variable will have the
reverse of the input. There are also registers in these models, which
can store single data values. Transitions can be enabled/disabled
based on whether the currently read data value is equal/unequal to the
one stored in one of the registers.
\begin{definition}
    \label{def:fmsst}
    A \emph{Streaming String Register Transducer} (SSRT) is an eight tuple
    $S=(\Sigma, \Gamma, Q, q_{0}, R, X,O, \Delta)$, where
    \begin{itemize}
        \item the finite alphabets $\Sigma, \Gamma$ are used for
            input, output respectively,
        \item $Q$ is a finite set of states, $q_{0}$ is the initial
            state,
        \item $R$ is a finite set of registers and $X$ is a finite set
            of data word variables,
        \item $O:Q \partfn ( (\Gamma \times \hat{R}) \cup X)^{*}$ is a
            partial output function, where $\hat{R} = R \cup
            \set{\curr}$, with $\curr$ being a special symbol used to
            denote the current data value being read and
        \item $\Delta \subseteq (Q \times \Sigma \times \Phi \times Q
            \times 2^{R} \times U)$ is a finite set of transitions. The set
            $\Phi$ consists of all Boolean combinations of atomic constraints of
            the form $r^{=}$ or $r^{\ne}$ for $r \in R$. The set $U$ is
            the set of all functions from the set $X$ of data word variables to
            $( (\Gamma \times \hat{R}) \cup X)^{*}$.
    \end{itemize}
    It is required that
    \begin{itemize}
        \item For every $q \in Q$ and $x \in X$, there is at most one
            occurrence of $x$ in $O(q)$ and
        \item for every transition $(q,\sigma,\phi,q',R',\update)$ and for every
            $x \in X$, $x$ appears at most once in the set
            $\set{\update(y) \mid y \in X}$.
    \end{itemize}
\end{definition}
We say that the last two conditions above enforce a SSRT to be
\emph{copyless}, since it prevents multiple copies of contents being
made.

A \emph{valuation} $\val$ for a transducer $S$ is a partial function
over registers and data word variables such that for every register
$r \in R$, either $\val(r)$ is undefined or is a data value in $D$, and
for every data word variable $x \in X$, $\val(x)$ is a data word with
origin information over $\Gamma$. The valuation
$\val$ and data value $d$ satisfies the atomic constraint $r^{=}$
(resp.~$r^{\ne}$) if $\val(r)$ is defined and $d=\val(r)$
(resp.~undefined or $d \ne \val(r)$). Satisfaction is
extended to Boolean combinations in the standard way. We say that a
SSRT is \emph{deterministic} if for every two transitions
$(q,\sigma,\phi,q',R',u)$ and $(q,\sigma,\phi',q'',R'',u')$ with the
same source state $q$ and input symbol $\sigma$, the formulas $\phi$
and $\phi'$ are mutually exclusive (i.e., $\phi \land \phi'$ is
unsatisfiable). We consider only deterministic SSRTs here.

A configuration is a triple $(q,\val,i)$ where $q\in Q$
is a state, $\val$ is a valuation and $i$ is the number of symbols
read so far. The transducer starts in the configuration
$(q_{0},\val_{\epsilon},0)$ where $q_{0}$ is the initial state and
$\val_{\epsilon}$ is the valuation such that $\val_{\epsilon}(r)$ is
undefined for every register $r \in R$ and $\val_{\epsilon}(x) =
\epsilon$ for every data word variable $x \in X$.  From a
configuration $(q,\val,i)$, the transducer can read a pair $(\sigma,d)
\in \Sigma \times D$ and go to the configuration $(q',\val',i+1)$ if
there is a transition $(q,\sigma,\phi,q',R',\update)$ and 1)
$d$ and $\val$ satisfies $\phi$ and 2)
$\val'$ is obtained from $\val$ by assigning $d$ to all the
registers in $R'$ and for every $x \in X$, setting $\val'(x)$ to
$\update(x)[y\mapsto \val(y),(\gamma,\curr)\mapsto (\gamma,d,i+1),
(\gamma,r)\mapsto(\gamma,\val(r),i+1)]$ (in $\update(x)$,
replace every occurrence of $y$ by $\val(y)$ for every data word
variable $y \in X$, replace every occurrence of $(\gamma,\curr)$ by
$(\gamma,d,i+1)$ for every output letter $\gamma \in \Gamma$ and
replace every occurrence of $(\gamma,r)$ by $(\gamma,\val(r),i+1)$
for every output letter $\gamma \in \Gamma$ and every register $r \in
R$). After reading a data word $w$, if the transducer reaches some
configuration $(q,\val,n)$ and $O(q)$ is not defined, then the
transducer's output $\semantics{S}(w)$ is undefined for the input $w$.
Otherwise, the transducer's output is defined as
$\semantics{S}(w)=O(q)[y\mapsto \val(y),(\gamma,\curr)\mapsto
(\gamma,d,n), (\gamma,r)\mapsto(\gamma,\val(r),n)]$, where $d$ is the
last data value in $w$.

Intuitively, the transition $(q,\sigma,\phi,q',R',\update)$ checks that the
current valuation $\val$ and the data value $d$ being read satisfies
$\phi$, goes to the state $q'$, stores $d$ into the registers in $R'$
and updates data word variables according to the update function
$\update$. The condition that $x$ appears at most once in the set
$\set{\update(y) \mid y \in X}$ ensures that the contents of any data
word variable are not duplicated into more than one variable. This
ensures, among other things, that the length of the output is linear
in the length of the input. The condition that for every two transitions
$(q,\sigma,\phi,q',R',\update)$ and
$(q,\sigma,\phi',q'',R'',\update')$ with the same source state and
input symbol, the formulas $\phi$ and $\phi'$ are mutually exclusive
ensures that the transducer cannot reach multiple configurations after
reading a data word (i.e., the transducer is deterministic).

\begin{example}
    \label{ex:identityOrReverse}
    Consider the transduction that is the identity on inputs in which the
    first and last data values are equal. On the remaining inputs, the
    output is the reverse of the input. This can be implemented by a
    SSRT using two data word variables. As each input symbol is read,
    it is appended to the front of the first variable and to the back
    of the second variable. The first variable stores the input and
    the second one stores the reverse. At the end, either the first or
    the second variable is output, depending on whether the last data
    value is equal or unequal to the first data value (which is stored
    in a register).
\end{example}

In Section~\ref{sec:equivalences-abs}, we define an equivalence relation on
data words and state our main result in terms of the finiteness of the index of
the equivalence relation and a few other properties. In
Section~\ref{sec:constSSRT-abs}, we prove that transductions satisfying certain
properties can be implemented by SSRTs (the backward direction of the main
result) and we prove the converse in Section~\ref{sec:propTransdSSRT}.


\section{How Prefixes and Suffixes Influence Each Other}
\label{sec:equivalences-abs}
As is usual in many machine independent characterizations (like the
classic Myhill-Nerode theorem for regular languages), we define
an equivalence relation on the set of data words to identify similar
ones. If the equivalence relation has finite index, it can be used to
construct finite state models. We start by looking at what ``similar data
words'' mean in the context of transductions.

Suppose $L$ is the set of all even length words over some finite alphabet. The
words $a$ and $aaa$ do the same thing to any suffix $v$: $a \cdot v \in L$ iff
$aaa \cdot v \in L$. So, $a$ and $aaa$ are identified to be similar with
respect to $L$ in the classic machine independent characterization. Instead of
a language $L$, suppose we have a transduction $f$ and we are trying to
identify words $u_1,u_2$ that do the same thing to any suffix $v$.  The naive
approach would be to check if $f(u_1\cdot v)=f(u_2 \cdot v)$, but this does not
work. Suppose a transduction $f$ is such that $f(a \cdot b)=(a,1) (b,2)$,
$f(aaa \cdot b) = (a,1)(a,2)(a,3) \cdot (b,4)$ and $f(c \cdot
b)=(c,1)(b,2)(b,2)$ (we have ignored data values in this transduction). The
words $a$ and $aaa$ do the same thing to the suffix $b$ (the suffix is copied
as it is to the output), as opposed to $c$ (which copies the suffix twice to
the output). But $f(a \cdot b) \ne f(aaa \cdot b)$. The problem is that we are
not only comparing what $a$ and $aaa$ do to the suffix $b$, but also comparing
what they do to themselves. We want to indicate in some way that we want to
ignore the parts of the output that come from $a$ or $aaa$:
$f(\underline{a}\mid v)=\lef \cdot (b,2)$ and $f(\underline{aaa} \mid b)=\lef
\cdot (b,4)$.  We have underlined $a$ and $aaa$ on the input side to indicate
that we want to ignore them; we have replaced $a$ and $aaa$ in the output by
$\lef$ to indicate that they are coming from ignored parts of the input. This
has been formalized as factored outputs in \cite{Bojanczyk2014}. This is still
not enough for our purpose, since the outputs $(b,2)$ and $(b,4)$ indicate that
$a$ and $aaa$ have different lengths. This can be resolved by offsetting one of
the outputs by the difference in the lengths: $f(\underline{a}\mid v)=\lef
\cdot (b,2) = f_{-2}(\underline{aaa} \mid b)$. The subscript $-2$ in
$f_{-2}(\underline{aaa} \mid b)$ indicates that we want to offset the origins
by $-2$. We have formalized this in the definition below, in which we have
borrowed the basic definition from \cite{Bojanczyk2014} and added data values
and offsets.
\begin{definition}[Offset factored outputs]
    \label{def:factoredOutputs}
    Suppose $f$ is a transduction and $uvw$ is a data word over
    $\Sigma$. For a triple $(\gamma,d,o)$ in $f(uvw)$, the
    \emph{abstract origin} $\abs(o)$ of $o$ is $\lef$ (resp.~$\mi$,
    $\ri$) if $o$ is in $u$ (resp.~$v$, $w$). The factored output
$f(\underline{u} \mid v \mid w)$ is obtained from $f(uvw)$ by first
replacing every triple $(\gamma,d,o)$ by $(*,*,\abs(o))$ if
$\abs(o)=\lef$ (the other
triples are retained without change). Then all consecutive occurrences
of $(*,*,\lef)$ are replaced by a single triple $(*,*,\lef)$ to get
$f(\underline{u} \mid v \mid w)$. Similarly we get $f(u \mid
\underline{v} \mid w)$ and $f(u \mid v \mid \underline{w})$ by using
$(*,*,\mi)$ and $(*,*,\ri)$ respectively. We get $f(\underline{u} \mid
v)$ and $f(u \mid \underline{v})$ similarly, except that there is no
middle part. For an integer $z$, we obtain $f_z(\underline{u} \mid v)$
by replacing every triple $(\gamma,d,o)$ by $(\gamma,d,o+z)$ (triples
$(*,*,\lef)$ are retained without change).
\end{definition}
Let $w=(a,d_{1}) (a,d_2)(b,d_{3}) (c,d_{4})$ and $f$ be the transduction
in Example~\ref{ex:identityOrReverse}. Then $f(w) =
(c,d_{4},4)(b,d_{3},3)(a,d_2,2)(a,d_{1},1)$ (assuming $d_{4} \ne
d_{1}$). The factored output $f( \underline{(a,d_1)(a,d_2)} \mid
(b,d_3) \mid (c,d_4))$ is $(c,d_{4},4)(b,d_{3},3)(*,*,\lef)$.

It is tempting to say that two data words $u_1$ and $u_2$ are
equivalent if for all $v$, $f(\underline{u_1}\mid v) =
f_z(\underline{u_2} \mid v)$, where $z=|u_1| - |u_2|$. But this does
not work; continuing with the transduction $f$ from
Example~\ref{ex:identityOrReverse}, no two data words from the
infinite set
$\set{(a,d_i)\mid i \ge 1}$ would be equivalent:
$f(\underline{(a,d_i)} \mid (a,d_i)) \ne  f(\underline{(a,d_j)} \mid
(a,d_i))$ for $i \ne j$. To get an equivalence relation with finite
index, we need to realize that the important thing is not the first data value, but its (dis)equality with the last one.
So we can say that for every $i$, there is a permutation $\pi_i$ on
data values
mapping $d_i$ to $d_1$ such that $f(\underline{\pi_i(a,d_i)} \mid v) =
f(\underline{(a,d_1)}\mid v)$. This will get us an equivalence
relation with finite index but it is not enough, since the transducer
model we build must satisfy another property: it must use only
finitely many registers to remember data values. Next we examine which
data values must be remembered.

Suppose $L$ is the set of all data words in which the first and last
data values are equal. Suppose a device is reading the word
$d_1 d_2 d_3 d_1$ from left to right and trying to determine whether
the word belongs to $L$ (we are ignoring letters from the finite
alphabet here). The device must remember $d_1$ when it is read first,
so that it can be compared to the last data value. A machine
independent characterization of what must be remembered is given in
\cite[Definition 2]{BCG2010}; it says that the first occurrence of
$d_1$ in $d_1 d_2 d_3 d_1$ is \emph{$L$-memorable} because replacing
it with some fresh data value $d_4$ (which doesn't occur in the word)
makes a difference: $d_1 d_2 d_3 d_1 \in L$ but $d_4 d_2 d_3 d_1
\notin L$. We adapt this concept to transductions, by suitably
modifying the definition of ``making a difference''.
\begin{definition}[memorable values]
    \label{def:fMemorable}
    Suppose $f$ is a transduction. A data value $d$ is
    \emph{$f$-memorable} in a data word $u$ if there exists a
    data word $v$ and a safe replacement $d'$ for $d$ in $u$ such that
    $f(\underline{u[d/d']} \mid v)\ne f(\underline{u} \mid v)$.
\end{definition}
Let $f$ be the transduction of Example~\ref{ex:identityOrReverse} and
$d_{1},d_{2},d_3,d_{1}'$ be distinct data values. We have
$f(\underline{d_{1}d_{2}d_3}\mid d_1)=(*,*,\lef)(d_{1},4)$ and
$f(\underline{d_{1}'d_{2}d_3}\mid d_1)
=(d_{1},4)(*,*,\lef)$. Hence, $d_{1}$ is $f$-memorable in $d_1d_2d_3$.

We have to consider one more phenomenon in transductions. Consider the transduction $f$ whose output is $\epsilon$
for inputs of length less than five. For other inputs, the output is
the third (resp.~fourth) data value if the first and fifth are equal
(resp.~unequal). Let $d_1,d_2,d_3,d_4,d_5,d_1'$ be distinct data
values. We have $f(\underline{d_1d_2d_3d_4} \mid v)=\epsilon =
f(\underline{d_1'd_2d_3d_4} \mid v)$ if $v=\epsilon$ and $f(\underline{d_1d_2d_3d_4}
\mid v)=(*,*,\lef) = f(\underline{d_1'd_2d_3d_4} \mid v)$ otherwise. Hence,
$d_{1}$ is not $f$-memorable in $d_1d_2d_3d_4$. However, any device
implementing $f$ must remember $d_{1}$ after reading $d_1d_2d_3d_4$, so that
it can be compared to the fifth data value. Replacing $d_1$ by $d_1'$
does make a difference but we cannot detect it by comparing
$f(\underline{d_1d_2d_3d_4} \mid v)$ and $f(\underline{d_1'd_2d_3d_4} \mid v)$. We
can detect it as follows: $f(d_{1}d_{2}d_{3}d_{4}\mid
\underline{d_{1}})=(d_{3},3)\ne (d_{4},4)=f(d_{1}d_{2}d_{3}d_{4} \mid
\underline{d_{5}})$. Changing the suffix from $d_1$ to $d_5$
influences how the prefix $d_1d_2d_3d_4$ is transformed (in
transductions, prefixes are vulnerable to the influence of suffixes).
The value $d_1$ is also contained in the prefix $d_1d_2$, but $f(d_1d_2\mid
\underline{v})=f(d_1d_2\mid \underline{v[d_1/d_5]})$ for all $v$.
To detect that $d_1d_2$ is vulnerable, we
first need to append $d_3d_4$ to $d_1d_2$ and then have a suffix in which we
substitute $d_1$ with something else. We formalize this in the
definition below; it can be related to the example above by setting
$u=d_1d_2$, $u'=d_3d_4$ and $v=d_1$.
\begin{definition}[vulnerable values]
A data value $d$ is \emph{$f$-vulnerable} in a data word $u$
if there exist data words $u',v$ and a data value $d'$ such that $d$
does not occur in $u'$, $d'$ is a safe replacement for $d$ in $u\cdot
u'\cdot v$ and $f(u\cdot u' \mid \underline{v[d/d']}) \ne f(u\cdot u'
\mid \underline{v})$.
\end{definition}

Consider the transduction $f$ defined as $f(u)=f_1(u)\cdot f_2(u)$;
for $i \in [1,2]$, $f_i$ reverses its input if the $i$\tsc{th} and
last data values are distinct. On other inputs, $f_i$ is the identity
($f_1$ is the transduction given in
Example~\ref{ex:identityOrReverse}). In the two words
$d_1d_2d_3d_1d_2d_3$ and $d_1d_2d_3d_2d_1d_3$, $d_1$ and $d_2$ are
$f$-memorable. For every data word $v$,
$f(\underline{d_1d_2d_3d_1d_2d_3}\mid
v)=f(\underline{d_1d_2d_3d_2d_1d_3}\mid v)$, so it is tempting to say
that the two words are equivalent. But after reading
$d_1d_2d_3d_1d_2d_3$, a transducer would remember that $d_2$ is the
latest $f$-memorable value it has seen. After reading
$d_1d_2d_3d_2d_1d_3$, the transducer would remember that $d_1$ is the
latest $f$-memorable value it has seen. Different $f$-memorable values
play different roles and one way to distinguish which is which is to
remember the order in which they occurred last. So we distinguish between 
$d_1d_2d_3d_1d_2d_3$ and
$d_1d_2d_3d_2d_1d_3$. Suppose $d_{2},d_{1}$ are two data values in
some data word $u$. We say that $d_{1}$ is \emph{fresher} than $d_{2}$ in
$u$ if the last occurrence of $d_{1}$ in $u$ is to the right of the
last occurrence of $d_{2}$ in $u$.
\begin{definition}
    \label{def:iflSeq}
Suppose $f$ is a transduction and $u$ is a data word. We say that a
data value $d$ is $f$-influencing in $u$ if it is either $f$-memorable
or $f$-vulnerable in $u$. We denote by
$\ifl_{f}(u)$ the sequence $d_{m} \cdots d_{1}$, where
$\set{d_m,\ldots, d_1}$ is the set of all $f$-influencing values in
$u$ and for all $i \in [1,m-1]$, $d_{i}$ is fresher than
$d_{i+1}$ in $u$. We call $d_{i}$ the
$i$\textsuperscript{th} $f$-influencing data value in $u$. If a data
value $d$ is both $f$-vulnerable and $f$-memorable in $u$, we say
that $d$ is of \emph{type} $\iflsp$. If $d$ is $f$-memorable
but not $f$-vulnerable (resp.~$f$-vulnerable but not
$f$-memorable) in $u$, we say that $d$ is of type $\ifls$
(resp.~$\iflp$). We denote by $\aifl_{f}(u)$ the sequence $(d_{m},
t(d_{m})) \cdots (d_{1}, t(d_{1}))$, where $t(d_{i})$ is the type of
$d_{i}$ for all $i \in [1,m]$.
\end{definition}
To consider two data words $u_1$ and $u_2$ to be equivalent, we can
insist that $\aifl_f(u_1)=\aifl_f(u_2)$. But as before, this may
result in some infinite set of pairwise non-equivalent data words. We will
relax the condition by saying that there must be a permutation $\pi$
on data values such that $\aifl_f(\pi(u_2))=\aifl_f(u_1)$. This is
still not enough; we have overlooked one more thing that must be
considered in such an equivalence. Recall that in transductions,
prefixes are vulnerable to the influence of suffixes. So if $u_1$ is
vulnerable to changing the suffix from $v_1$ to $v_2$, then $\pi(u_2)$
must also have the same vulnerability. This is covered by the third
condition in the definition below.
\begin{definition}
    \label{def:fEquivalence}
    For a transduction $f$, we define the relation $\fequiv$ on data
    words as $u_{1} \fequiv u_{2}$ if there exists
    a permutation $\pi$ on data values satisfying the following
    conditions:
    \begin{itemize}
        \item $\lambda v.f_{z}(\underline{\pi(u_{2})} \mid v) = \lambda
	    v.f(\underline{u_{1}} \mid v)$, where $z=|u_{1}| -
            |u_{2}|$,
        \item $\aifl_{f}(\pi(u_{2})) = \aifl_{f}(u_{1})$ and
        \item for all $u,v_{1},v_{2}$, $f(u_{1} \cdot u \mid
            \underline{v_{1}}) = f(u_{1} \cdot u \mid \underline{v_{2}})$ iff
            $f(\pi(u_{2}) \cdot u \mid \underline{v_{1}}) = f(\pi(u_{2}) \cdot u \mid
            \underline{v_{2}})$.
    \end{itemize}
\end{definition}
As in the standard lambda calculus notation, $\lambda
v.f_{z}(\underline{u}\mid v)$ denotes the function that maps each
input $v$ to $f_{z}(\underline{u}\mid v)$. It is routine to verify
that for any data word $u$ and permutation $\pi$, $\pi(u) \fequiv
u$, since $\pi$ itself satisfies all the conditions above.
\begin{lemma}
    \label{lem:fEquivalence}
    If $f$ is invariant under permutations,
    then $\fequiv$ is an equivalence relation.
\end{lemma}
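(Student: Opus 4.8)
The plan is to check reflexivity, symmetry and transitivity of $\fequiv$. Reflexivity is immediate and is already noted in the text: the identity permutation satisfies all three conditions, so $u \fequiv u$. For symmetry and transitivity the workhorse will be a \emph{commutation lemma} stating that every construction appearing in Definitions~\ref{def:factoredOutputs}, \ref{def:fMemorable} and \ref{def:iflSeq} is compatible with permutations of data values. Concretely: for every permutation $\pi$, all data words $u,v,u',w,\dots$ and every integer $z$, one has $f_{z}(\underline{\pi(u)}\mid\pi(v)) = \pi\bigl(f_{z}(\underline{u}\mid v)\bigr)$, $f(\pi(u)\mid\underline{\pi(v)}) = \pi\bigl(f(u\mid\underline{v})\bigr)$, and likewise for the three-block variants, where $\pi$ acts on a (possibly factored, possibly offset) output by sending each triple $(\gamma,d,o)$ to $(\gamma,\pi(d),o)$ and fixing the triples $(*,*,\lef),(*,*,\mi),(*,*,\ri)$; moreover $\pi(d)$ is $f$-memorable (resp.\ $f$-vulnerable) in $\pi(u)$ iff $d$ is $f$-memorable (resp.\ $f$-vulnerable) in $u$, and consequently $\aifl_{f}(\pi(u)) = \pi\bigl(\aifl_{f}(u)\bigr)$, with $\pi$ acting on the data value in each pair. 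All of this should follow from invariance under permutations, i.e.\ $f(\pi(w)) = \pi(f(w))$: the abstraction and merging steps of Definition~\ref{def:factoredOutputs} depend only on the origins of triples and on the split lengths $|u|,|v|$, none of which $\pi$ changes, while the offset step touches only origins; for the memorable/vulnerable clauses one additionally uses that $\iso$ is preserved in both directions by $\pi$ (since $\pi$ is injective) and that $\pi(w[d/d']) = \pi(w)[\pi(d)/\pi(d')]$, so that $d'$ is a safe replacement for $d$ in $w$ iff $\pi(d')$ is one for $\pi(d)$ in $\pi(w)$. Two elementary facts are used throughout: the action of $\pi$ on outputs is a bijection with inverse the action of $\pi^{-1}$, and offsetting by $z$ and then by $z'$ is offsetting by $z+z'$.

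Given the commutation lemma, symmetry goes as follows. If $\pi$ witnesses $u_{1}\fequiv u_{2}$, I would show that $\pi^{-1}$ witnesses $u_{2}\fequiv u_{1}$, with offset $-z$ for $z=|u_{1}|-|u_{2}|$. For the first condition, instantiating it at $\pi(v)$, rewriting both sides through the commutation lemma as $\pi\bigl(f_{z}(\underline{u_{2}}\mid v)\bigr)$ and $\pi\bigl(f(\underline{\pi^{-1}(u_{1})}\mid v)\bigr)$, cancelling $\pi$, and offsetting by $-z$ gives $f_{-z}(\underline{\pi^{-1}(u_{1})}\mid v) = f(\underline{u_{2}}\mid v)$; since $v\mapsto\pi(v)$ is onto, this holds for all $v$. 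The second condition comes from applying $\pi^{-1}$ to $\aifl_{f}(\pi(u_{2}))=\aifl_{f}(u_{1})$ and using $\aifl_{f}(\pi^{-1}(u_{1}))=\pi^{-1}(\aifl_{f}(u_{1}))$. For the third, write $f(u_{2}\cdot u\mid\underline{v_{1}}) = \pi^{-1}\bigl(f(\pi(u_{2})\cdot\pi(u)\mid\underline{\pi(v_{1})})\bigr)$ by commutation, so the equality $f(u_{2}u\mid\underline{v_{1}})=f(u_{2}u\mid\underline{v_{2}})$ is, via the third condition for $u_{1}\fequiv u_{2}$ at $\pi(u),\pi(v_{1}),\pi(v_{2})$, equivalent to $f(u_{1}\pi(u)\mid\underline{\pi(v_{1})})=f(u_{1}\pi(u)\mid\underline{\pi(v_{2})})$, which by commutation equals $f(\pi^{-1}(u_{1})u\mid\underline{v_{1}})=f(\pi^{-1}(u_{1})u\mid\underline{v_{2}})$.

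For transitivity, if $\pi$ witnesses $u_{1}\fequiv u_{2}$ and $\sigma$ witnesses $u_{2}\fequiv u_{3}$, I would show $\pi\circ\sigma$ witnesses $u_{1}\fequiv u_{3}$, with offset $z_{12}+z_{23}$ for $z_{12}=|u_{1}|-|u_{2}|$ and $z_{23}=|u_{2}|-|u_{3}|$; the idea is to push the facts coming from $u_{2}\fequiv u_{3}$ through $\pi$ and chain them with those from $u_{1}\fequiv u_{2}$. For the first condition, applying the commutation lemma to $f_{z_{23}}(\underline{\sigma(u_{3})}\mid v)=f(\underline{u_{2}}\mid v)$ at $\pi(v)$ gives $f_{z_{23}}(\underline{\pi\sigma(u_{3})}\mid v)=f(\underline{\pi(u_{2})}\mid v)$ for all $v$; substituting into the first condition for $u_{1}\fequiv u_{2}$ and using additivity of offsets yields $f_{z_{12}+z_{23}}(\underline{\pi\sigma(u_{3})}\mid v)=f(\underline{u_{1}}\mid v)$. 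The second condition is the chain $\aifl_{f}(\pi\sigma(u_{3}))=\pi(\aifl_{f}(\sigma(u_{3})))=\pi(\aifl_{f}(u_{2}))=\aifl_{f}(\pi(u_{2}))=\aifl_{f}(u_{1})$. For the third, the commutation lemma together with the third condition for $u_{2}\fequiv u_{3}$ (instantiated at $\pi^{-1}(u),\pi^{-1}(v_{1}),\pi^{-1}(v_{2})$) makes $f(\pi(u_{2})u\mid\underline{v_{1}})=f(\pi(u_{2})u\mid\underline{v_{2}})$ equivalent to $f(\pi\sigma(u_{3})u\mid\underline{v_{1}})=f(\pi\sigma(u_{3})u\mid\underline{v_{2}})$, and chaining with the third condition for $u_{1}\fequiv u_{2}$ finishes it.

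The only real obstacle is proving the commutation lemma carefully, especially its clauses on $f$-memorable and $f$-vulnerable values: one must verify that the auxiliary witnesses ($v$, $u'$, and $d'$) and their images under $\pi$ and $\pi^{-1}$ match up, that safeness of replacements is preserved in both directions, and that the factored outputs with an underlined \emph{suffix} and an unfactored prefix $u\cdot u'$ behave as the intuition dictates. Everything after that — reflexivity, symmetry, transitivity — is the routine diagram-chasing sketched above.
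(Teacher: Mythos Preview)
Your proposal is correct and follows essentially the same route as the paper's proof: the paper's Lemma~\ref{lem:factoredOutputInvariant} and Lemma~\ref{lem:iflInvUnderPerm} are precisely the two halves of your ``commutation lemma'', and with those in hand the paper verifies symmetry via $\pi^{-1}$ and transitivity via the composite permutation, exactly as you outline. Your writeup is in fact slightly more explicit than the paper's (which leaves transitivity as a routine verification), but there is no substantive difference in method.
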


\iffalse
In the third condition of Definition~\ref{def:fEquivalence}, we take
two prefixes $u_1$ and $u_2$ and check if every suffix has the
property that it influences $u_1$ and $u_2$ in the same way. We apply
a permutation to one of the prefixes before checking this. But now, we
want to take two suffixes (say $v_1$ and $v_2$) and check whether
\emph{every prefix} has the property that it is influenced similarly
by $v_1$ and $v_2$. Before we check this, we must apply permutations
to every prefix; different prefixes may need different permutations.
This is formalized below.
\begin{definition}
    Let $f$ be a transduction and $\Pi$ be the set of all permutations
    on $D$. An \emph{equalizing scheme} for $f$ is a function $E:
    (\Sigma \times D)^{*} \to \Pi$ such that there exists a sequence
    $\delta_{1} \delta_{2} \cdots $ of data values satisfying the
    following condition: for every data word $u$ and every $i \in
    [1,|\ifl_f(u)|]$, the $i$\textsuperscript{th} $f$-influencing data
    value of $E(u)(u)$ is $\delta_{i}$.
\end{definition}
Note that $E(u)(u)$ denotes the application of the permutation $E(u)$
to the data word $u$.\footnote{Just $E(u)$ would be a better notation
instead of $E(u)(u)$. But later we will need to apply the inverse of
the permutation $E(u)$ to some other data word $v$, so we will write
$E(u)^{-1}(v)$ for that.} Left parts that have been equalized like
this will not have arbitrary influencing data values --- they will be
from the sequence $\delta_{1} \delta_{2} \cdots$. For the transduction
in Example~\ref{ex:identityOrReverse}, the first data value is the
only influencing value in any data word. An equalizing scheme will map
the first data value of all data words to $\delta_1$.

\begin{definition}
    For a transduction $f$ and equalizing scheme $E$, we define
    the relation $\requiv$ on data words as
    $v_{1} \requiv v_{2}$ if for every data word $u$,
    $f( E(u)(u) \mid \underline{v_{1}}) = f( E(u)(u) \mid
    \underline{v_{2}})$.
\end{definition}
It is routine to verify that $\requiv$ is an equivalence relation.
If $E_{1}$ and $E_{2}$ are two distinct equalizing schemes for
$f$, then in general $\equiv_{f}^{E_{1}}$ and $\equiv_{f}^{E_{2}}$ are
different. However, what matters is the index of the equivalence
relations.
\begin{lemma}
    \label{lem:equalizerUnimportant}
    Suppose $f$ is a transduction that is invariant under permutations
    and without data peeking and $E_{1}$, $E_{2}$ are equalizing
    schemes.
    Then $\equiv_{f}^{E_{1}}$ and $\equiv_{f}^{E_{2}}$ have
    the same index.
\end{lemma}
\fi

We denote by $[u]_f$ the equivalence class of $\fequiv$ containing $u$.
Following is the main result of this paper.
\begin{theorem}
    \label{thm:mainResult}
    A transduction $f$ is implemented by a SSRT iff $f$ satisfies the
    following properties: 1)$f$ is invariant under permutations, 2)
    $f$ is without data peeking, 3) $f$ has linear blow up and 4) $\fequiv$
    has finite index.
\end{theorem}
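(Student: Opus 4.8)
The plan is to establish the two implications separately; the forward direction (``$f$ implemented by a SSRT $\Rightarrow$ properties 1--4'') is taken up in Section~\ref{sec:propTransdSSRT} and the converse in Section~\ref{sec:constSSRT-abs}, and I sketch both. Fix a deterministic SSRT $S$ with $f=\semantics{S}$. Properties 1--3 fall out of a routine induction on the length of the input. Permutation invariance holds because every transition only tests equality/disequality of data values and the output is assembled by the substitutions of Definition~\ref{def:fmsst}, so applying a permutation $\pi$ to the input and simultaneously to every defined register value preserves the whole run and permutes the output. For ``no data peeking'' one shows that after reading $w$ each defined register $r$ holds $\data(w,p)$, where $p$ is the last position at which $r$ was written; hence a triple $(\gamma,d,o)$ in $\semantics{S}(w)$ either has $d=\data(w,o)$ (when it comes from $\curr$) or has $d=\data(w,p)$ with $p\le o$ (when it comes from a register last written at $p$). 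For linear blow up, all triples with origin $o$ are manufactured by the single transition reading position $o$ (plus $O$ when $o$ is the last position), so $K=\max_{q}\#(O(q))+\max_{\tau}\sum_{x\in X}\#(\update_{\tau}(x))$, counting occurrences of symbols from $\Gamma\times\hat{R}$, works.

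The substantial part of the forward direction is property 4. I would introduce the structural equivalence $\sequiv$ induced by $S$: $u_{1}\sequiv u_{2}$ if $S$ reaches the same state on both, there is a bijection between their defined registers preserving the equality pattern and the freshness order of the stored values, and the same subset of data word variables is non-empty in the two reached configurations. This has finite index, so it is enough to show $\sequiv$ refines $\fequiv$. Given $u_{1}\sequiv u_{2}$, let $\pi$ realise the bijection on register contents; by permutation invariance $\pi(u_{2})$ reaches a configuration agreeing with that of $u_{1}$ on state, on all register contents, and on the non-emptiness pattern of the variables, so reading any continuation $v$ traverses identical transitions after $u_{1}$ and after $\pi(u_{2})$. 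Since everything built into the variables while reading the prefix carries prefix origins and collapses to $\lef$-blocks, whereas everything output while reading $v$ carries $v$-origins and is assembled from $\curr$ and the matched registers, the factored outputs agree up to the offset $z=|u_{1}|-|u_{2}|$, giving condition 1 of Definition~\ref{def:fEquivalence}. For condition 2 one must show that the set of $f$-influencing values of $u$ (Definitions~\ref{def:fMemorable}, the definition of $f$-vulnerable values, and~\ref{def:iflSeq}), their types in $\set{\iflsp,\ifls,\iflp}$, and their freshness order are determined by the reached configuration; the inclusion ``$f$-memorable $\Rightarrow$ held in some register'' follows because replacing a value held in no register changes neither the run nor the output, and a symmetric argument on the suffix-substitution criterion handles $f$-vulnerable values. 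Condition 3 again reduces to identical runs on $u$ and then on $v_{1},v_{2}$, its ``iff'' form being exactly what neutralises the differing lengths and origins of the two prefixes. Pinning down the correspondence between influencing/vulnerable values and register contents is the most delicate bookkeeping of this direction.

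For the backward direction, assume $f$ satisfies 1--4. Finite index of $\fequiv$ together with condition 2 of Definition~\ref{def:fEquivalence} (whose right-hand side is permutation-stable, using Lemma~\ref{lem:fEquivalence} and permutation invariance) bounds $|\aifl_{f}(u)|$ uniformly, so finitely many registers suffice to hold the $f$-influencing values of the prefix read so far, in freshness order. The state records the $\fequiv$-class of the prefix together with a finite refinement (the sequence $\aifl_{f}$ and the assignment of influencing values to registers); on reading $(\sigma,d)$ the transition branches on the equality pattern of $d$ against the stored values, and one extracts from Definition~\ref{def:fEquivalence}, using permutation invariance, a right-congruence property saying this pattern plus the current class determines the class of the extended prefix, hence which registers stay influencing and whether $d$ must now be stored. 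The real obstacle is producing the output: because suffixes influence how prefixes are transformed, after reading $u$ one does not yet know which triples $f$ attributes to positions of $u$, and the two-way-model trick used over finite alphabets is unavailable with data. Instead I would maintain, inside the data word variables, a tree of bounded size whose branches correspond to the finitely many behaviours the remaining suffix can still exhibit (indexed by the suffix-distinctions implicit in condition 3 of Definition~\ref{def:fEquivalence}); each node stores the output fragment committed along its branch, and a triple is created with the current position as its origin precisely when $f$ would place one there, linear blow up bounding how many per step. Reading a symbol refines and prunes the tree, and copylessness forces each old variable's content to be routed to a single node, so the main technical labour is to show that branches which have become indistinguishable can be merged to keep the tree bounded, that the re-routing is realisable copylessly, and that the origins of surviving triples match those of $f$ (using ``no data peeking''); a final induction then shows that $O$ applied to the committed branch reproduces $f$.
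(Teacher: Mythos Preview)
Your high-level decomposition matches the paper's, and the backward sketch (registers for influencing values, a bounded tree of suffix-behaviours stored copylessly in the data word variables) is the right picture, though you gloss over the equalizing scheme $E$ and the auxiliary relation $\requiv$ (Definition~\ref{def:rEquivalence}) that the paper needs to make ``finitely many suffix behaviours'' precise and to prove it is actually finite (Lemma~\ref{lem:finREquiv}).

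The forward direction, however, has a real gap. Your structural equivalence $\sequiv$ records the state, the equality pattern on registers, the freshness order of stored values, and the non-emptiness pattern of the data word variables; the paper's $\sequiv$ (Definition~\ref{def:sequiv}) adds a fifth condition: for all subsets $X_{1},X_{2}\subseteq X$ and all arrangements $\chi_{1},\chi_{2}$ thereof, $\val_{1}(\chi_{1})=\val_{1}(\chi_{2})$ iff $\val_{2}(\chi_{1})=\val_{2}(\chi_{2})$. Without this, your $\sequiv$ does not refine $\fequiv$. The problem is precisely condition~3 of Definition~\ref{def:fEquivalence}: there the prefix is \emph{not} abstracted, so the left blocks of $f(u_{1}\cdot u\mid\underline{v})$ are concrete words of the form $\val_{1}(\chi)$ for some arrangement $\chi$ of a subset of $X$. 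Whether two suffixes $v_{1},v_{2}$ produce the same factored output depends on whether the arrangements they induce give equal words under $\val_{1}$, and transferring this ``iff'' to $\val_{2}$ is exactly the arrangement condition. Your sentence ``Condition~3 again reduces to identical runs on $u$ and then on $v_{1},v_{2}$'' is where the argument fails: identical runs give identical arrangements $\chi_{1},\chi_{2}$, but two configurations that agree on your four items can still disagree on whether $\val(\chi_{1})=\val(\chi_{2})$, since the variable contents are arbitrary data words built along possibly different paths. The paper handles this by splitting via Lemma~\ref{lem:facOpEqualityFromParts} into a case about $f(u_{1}\mid\underline{u v_{i}})$ (where condition~\ref{sequiv:varArrEquiv} is invoked) and a case about $f(\underline{u_{1}}\mid u\mid\underline{v_{i}})$ (handled by the already-established first condition of $\fequiv$).
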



\section{Constructing a SSRT from a Transduction}
\label{sec:constSSRT-abs}
In this section, we prove the reverse direction of
Theorem~\ref{thm:mainResult}, by showing how to construct a
SSRT that implements a transduction, if it satisfies the four
conditions in the theorem. SSRTs read their input from left to right. Our first task is to get
SSRTs to identify influencing data values as they are read one by one.
Suppose a transducer that is intended to implement a transduction $f$
has read a data word $u$ and has stored in its registers the data
values that are $f$-influencing in $u$. Suppose the transducer reads
the next symbol $(\sigma,e)$. To identify the data values that are
$f$-influencing in $u \cdot (\sigma,e)$, will the transducer need to
read the whole data word $u \cdot (\sigma,e)$ again? The answer turns
out to be no, as the following result shows. The only data values that
can possibly be $f$-influencing in $u \cdot (\sigma,e)$ are $e$ and
the data values that are $f$-influencing in $u$.
\begin{lemma}
    \label{lem:iflValsMonotonic}
    Let $f$ be a transduction, $u$ be a data word, $\sigma \in \Sigma$
and $d,e$ be distinct data values. If $d$ is not $f$-memorable
(resp.~$f$-vulnerable) in $u$, then $d$ is not $f$-memorable
(resp.~$f$-vulnerable) in $u\cdot (\sigma,e)$.
\end{lemma}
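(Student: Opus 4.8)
The plan is to prove the contrapositive in each of the two cases: if $d$ is $f$-memorable (resp.~$f$-vulnerable) in $u\cdot(\sigma,e)$, then $d$ is already $f$-memorable (resp.~$f$-vulnerable) in $u$. Throughout I use that $d\ne e$. The vulnerable case transfers a witness almost verbatim, while the memorable case needs one observation: appending a symbol and then hiding it on the left is a deterministic coarsening of the previously computed factored output, so an inequality of factored outputs after the operation reflects an inequality before it.

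\emph{Vulnerable case.} Suppose $d$ is $f$-vulnerable in $u\cdot(\sigma,e)$, witnessed by data words $u',v$ and a value $d'$ with $d\notin u'$, with $d'$ a safe replacement for $d$ in $u\cdot(\sigma,e)\cdot u'\cdot v$, and with $f(u\cdot(\sigma,e)\cdot u'\mid \underline{v[d/d']})\ne f(u\cdot(\sigma,e)\cdot u'\mid\underline{v})$. I would take $u'':=(\sigma,e)\cdot u'$ and keep $v$ and $d'$. Since $d\ne e$ and $d\notin u'$, we have $d\notin u''$; since $u\cdot u''\cdot v = u\cdot(\sigma,e)\cdot u'\cdot v$, the value $d'$ is a safe replacement for $d$ in $u\cdot u''\cdot v$, and the displayed inequality is literally $f(u\cdot u''\mid\underline{v[d/d']})\ne f(u\cdot u''\mid\underline{v})$. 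Hence $d$ is $f$-vulnerable in $u$.

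\emph{Memorable case.} Suppose $d$ is $f$-memorable in $u\cdot(\sigma,e)$, witnessed by $v$ and a safe replacement $d'$ for $d$ in $u\cdot(\sigma,e)$, i.e.\ $f(\underline{(u\cdot(\sigma,e))[d/d']}\mid v)\ne f(\underline{u\cdot(\sigma,e)}\mid v)$. First, restricting the isomorphism $(u\cdot(\sigma,e))[d/d']\iso u\cdot(\sigma,e)$ to the length-$|u|$ prefixes shows $u[d/d']\iso u$, so $d'$ is a safe replacement for $d$ in $u$; and since $d\ne e$, comparing prefixes gives $(u\cdot(\sigma,e))[d/d'] = u[d/d']\cdot(\sigma,e)$. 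The key claim I would establish is that there is a single map $g$ with $f(\underline{w\cdot(\sigma,e)}\mid v) = g\bigl(f(\underline{w}\mid(\sigma,e)\cdot v)\bigr)$ for every data word $w$ of length $|u|$. Indeed, $f(w\cdot(\sigma,e)\cdot v)$ and $f(w\cdot((\sigma,e)\cdot v))$ are literally the same data word with origin information, and the two factored outputs differ only in that $f(\underline{w\cdot(\sigma,e)}\mid v)$ additionally replaces the triples whose origin is $|u|+1$ (the position carrying $(\sigma,e)$) by $(*,*,\lef)$ and then re-merges consecutive $\lef$'s; but those origin-$(|u|+1)$ triples are precisely the triples of the form $(\gamma,d'',|u|+1)$ still visible, unmarked, in $f(\underline{w}\mid(\sigma,e)\cdot v)$, so they can be identified and collapsed by an operation $g$ that does not look at $w$ (that this staged merging agrees with collapsing all of origins $\le|u|+1$ at once in $f(w\cdot(\sigma,e)\cdot v)$ is the routine fact that such ``merge consecutive equal blocks'' operations compose). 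Since $|u[d/d']| = |u|$, the same $g$ applies to $w=u$ and $w=u[d/d']$; so from $g\bigl(f(\underline{u[d/d']}\mid(\sigma,e) v)\bigr)\ne g\bigl(f(\underline{u}\mid(\sigma,e) v)\bigr)$ we get $f(\underline{u[d/d']}\mid(\sigma,e) v)\ne f(\underline{u}\mid(\sigma,e) v)$. With witness $(\sigma,e)\cdot v$ and safe replacement $d'$, this shows $d$ is $f$-memorable in $u$.

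The main obstacle is exactly this coarsening claim in the memorable case: one must argue carefully that hiding the newly appended position inside the left block is a well-defined, $w$-independent function of the earlier factored output, so that an inequality downstream witnesses an inequality upstream; without this, the two factored outputs $f(\underline{u\cdot(\sigma,e)}\mid v)$ and $f(\underline{u[d/d']\cdot(\sigma,e)}\mid v)$ carry strictly less information than $f(\underline{u}\mid(\sigma,e)v)$ and $f(\underline{u[d/d']}\mid(\sigma,e)v)$ and the implication could conceivably fail. Everything else is bookkeeping with prefixes, the fact that isomorphism of data words restricts to equal-length prefixes, and the observation that $d\ne e$ keeps the last symbol untouched by the substitution $[d/d']$.
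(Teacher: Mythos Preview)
Your proof is correct and follows essentially the same contrapositive strategy as the paper: the vulnerable case is identical, and for the memorable case both you and the paper reduce $f(\underline{u[d/d']\cdot(\sigma,e)}\mid v)\ne f(\underline{u\cdot(\sigma,e)}\mid v)$ to $f(\underline{u[d/d']}\mid(\sigma,e)\cdot v)\ne f(\underline{u}\mid(\sigma,e)\cdot v)$. The only difference is packaging: the paper invokes the contrapositive of a general helper lemma (its Lemma~\ref{lem:rightCongruenceHelper}, point~\ref{rch:middleToLeft}), whereas you prove the needed instance directly via your ``coarsening'' map $g$; your inline argument is exactly the content of that lemma's proof specialized to this situation.
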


Next, suppose that $d$ is $f$-influencing in $u$. How will we get the
transducer to detect whether $d$ continues to be $f$-influencing in $u
\cdot (\sigma,e)$? The following result provides a partial answer. If
$u_1 \fequiv u_2$ and the $i$\tsc{th} $f$-influencing value in $u_1$
continues to be $f$-influencing in $u_1 \cdot (\sigma,e)$, then the
$i$\tsc{th} $f$-influencing value in $u_2$ continues to be
$f$-influencing in $u_2 \cdot (\sigma,e)$. The following result
combines many such similar results into a single one.
\begin{lemma}
    \label{lem:rightCongruences}
    Suppose $f$ is a transduction that is invariant under permutations
    and without data peeking. Suppose $u_{1}, u_{2}$ are data words such
    that $u_{1} \fequiv u_{2}$, $\ifl_f(u_{1}) =
    d_{1}^{m} d_{1}^{m-1} \cdots d_{1}^{1}$ and $\ifl_f(u_{2}) =
    d_{2}^{m} d_{2}^{m-1} \cdots d_{2}^{1}$. Suppose $d_{1}^{0} \in D$
    is not $f$-influencing in $u_1$, $d_{2}^{0} \in D$
    is not $f$-influencing in $u_2$ and $\sigma \in \Sigma$. For all $i,j \in
[0,m]$, the following are true:
    \begin{enumerate}
        \item\label{rc:iflPreserved} $d_{1}^{i}$ is $f$-memorable
	(resp.~$f$-vulnerable) in $u_{1} \cdot (\sigma,d_{1}^{j})$ iff
	    $d_{2}^{i}$ is $f$-memorable (resp.~$f$-vulnerable) in
	$u_{2} \cdot (\sigma,d_{2}^{j})$.
        \item\label{rc:fEqPreserved} $u_{1} \cdot (\sigma,d_{1}^{j}) \fequiv u_{2} \cdot
            (\sigma,d_{2}^{j})$.
    \end{enumerate}
\end{lemma}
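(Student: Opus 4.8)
The plan is to transport the hypotheses along the permutation $\pi$ witnessing $u_{1}\fequiv u_{2}$, exploiting that appending a single letter to a data word changes its factored outputs only through a controlled ``collapsing'' operation. Fix such a $\pi$. From $\aifl_{f}(\pi(u_{2}))=\aifl_{f}(u_{1})$ together with permutation-invariance of $f$ --- which, as a basic fact about factored outputs, transports $f$-memorability, $f$-vulnerability and the freshness order of a data word to its $\pi$-image --- one gets that $\pi$ maps the $i$\tsc{th} $f$-influencing value of $u_{2}$ to the $i$\tsc{th} one of $u_{1}$, so $\pi(d_{2}^{i})=d_{1}^{i}$ for $i\in[1,m]$, and $\pi(d_{2}^{0})$ is not $f$-influencing in $\pi(u_{2})$. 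Since $\fequiv$ is an equivalence relation (Lemma~\ref{lem:fEquivalence}) and $\pi(u_{2})\fequiv u_{2}$, and since $\pi(u_{2}\cdot(\sigma,d_{2}^{j}))=\pi(u_{2})\cdot(\sigma,\pi(d_{2}^{j}))$, it suffices to prove both statements with $u_{2}$ replaced by $\pi(u_{2})$. So I assume henceforth that $\pi$ is the identity, $\ifl_{f}(u_{1})=\ifl_{f}(u_{2})=d^{m}\cdots d^{1}$, and $d_{1}^{j}=d_{2}^{j}=d^{j}$ for $j\in[1,m]$. The sub-case $i=0$, $j\ge 1$ of statement~\ref{rc:iflPreserved} is then immediate from Lemma~\ref{lem:iflValsMonotonic}: $d_{\ell}^{0}$, not being $f$-influencing in $u_{\ell}$ and hence distinct from $d^{j}$, stays not $f$-influencing in $u_{\ell}\cdot(\sigma,d^{j})$. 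For $j=0$ I reduce to the case where $d_{1}^{0}=d_{2}^{0}$ is a single value $d^{0}$ fresh for both $u_{1}$ and $u_{2}$, using that appending two distinct non-$f$-influencing data values to one and the same data word produces $\fequiv$-equivalent data words with equal $\aifl_{f}$ (the delicate point, discussed at the end). Write $w_{\ell}=u_{\ell}\cdot(\sigma,d^{j})$, where $d^{j}$ now denotes $d^{0}$ in the case $j=0$, and $z=|u_{1}|-|u_{2}|$.

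The core of the argument is the identity $f_{z}(\underline{w_{2}}\mid v)=f(\underline{w_{1}}\mid v)$ for every $v$, which is exactly condition~1 of Definition~\ref{def:fEquivalence} for $w_{1}$ and $w_{2}$. To see it, note that $f(\underline{u_{\ell}\cdot(\sigma,d^{j})}\mid v)$ is obtained from $f(\underline{u_{\ell}}\mid(\sigma,d^{j})\cdot v)$ by the map $g_{\ell}$ that replaces every triple whose origin is $|u_{\ell}|+1$ by $(*,*,\lef)$ and then merges consecutive occurrences of $(*,*,\lef)$. Condition~1 of $u_{1}\fequiv u_{2}$, applied with the suffix $(\sigma,d^{j})\cdot v$, gives $f(\underline{u_{1}}\mid(\sigma,d^{j})v)=f_{z}(\underline{u_{2}}\mid(\sigma,d^{j})v)$; since shifting origins by $z$ carries origin $|u_{2}|+1$ to origin $|u_{1}|+1$, one has $g_{1}\circ f_{z}=f_{z}\circ g_{2}$ on these outputs, and applying $g_{1}$ to both sides yields the identity.

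For statement~\ref{rc:iflPreserved} in the remaining cases I treat vulnerability and memorability separately, after first normalizing the replacement data value to one fresh for both $u_{1}$ and $u_{2}$ (by a suitable transposition, using permutation-invariance). A witness $(u',v,d')$ for ``$d^{i}$ is $f$-vulnerable in $w_{1}$'' is, since $w_{1}=u_{1}\cdot(\sigma,d^{j})$, precisely an instance of condition~3 of $u_{1}\fequiv u_{2}$ with its quantified prefix-extension taken to be $(\sigma,d^{j})\cdot u'$; hence $d^{i}$ is $f$-vulnerable in $w_{1}$ iff it is in $w_{2}$. For memorability, the key observation is that replacing a data value $d^{i}$ occurring in $w_{\ell}$ by a fresh value $d'$ is the same as applying the transposition $\tau=(d^{i}\ d')$ to $w_{\ell}$, so $w_{\ell}[d^{i}/d']=\tau(w_{\ell})$; permutation-equivariance of factored outputs turns $f(\underline{w_{\ell}[d^{i}/d']}\mid v)$ into $\tau(f(\underline{w_{\ell}}\mid\tau(v)))$, and combining this with the core identity of the previous paragraph shows $f(\underline{w_{1}[d^{i}/d']}\mid v)\neq f(\underline{w_{1}}\mid v)$ iff $f(\underline{w_{2}[d^{i}/d']}\mid v)\neq f(\underline{w_{2}}\mid v)$. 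The converse implications follow symmetrically, working with $\pi^{-1}$. It remains to compare the full $\aifl_{f}$'s: the freshness order of the surviving $f$-influencing values of $w_{\ell}=u_{\ell}\cdot(\sigma,d^{j})$ is $d^{j}$ (now freshest) followed by the remaining ones in their $u_{\ell}$-order, which is common to $\ell=1,2$ because $\ifl_{f}(u_{1})=\ifl_{f}(u_{2})$; together with Lemma~\ref{lem:iflValsMonotonic} (no new value becomes influencing) and statement~\ref{rc:iflPreserved} just proved (the same subset survives, with the same types), this gives $\aifl_{f}(w_{1})=\aifl_{f}(w_{2})$.

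For statement~\ref{rc:fEqPreserved} take the identity as witnessing permutation: its condition~1 is the core identity, its condition~2 is the $\aifl_{f}$-equality above, and its condition~3 is immediate from condition~3 of $u_{1}\fequiv u_{2}$ by instantiating the quantified prefix-extension with $(\sigma,d^{j})\cdot u$. Undoing the initial reduction (apply $\pi$, then use $\pi(w_{2})\fequiv w_{2}$ and transitivity) delivers $u_{1}\cdot(\sigma,d_{1}^{j})\fequiv u_{2}\cdot(\sigma,d_{2}^{j})$ in general. I expect the real obstacle to be the case $j=0$: all of the machinery above goes through verbatim once both sides append the \emph{same} fresh data value, but establishing that an arbitrary non-$f$-influencing appended value may be replaced by a fresh one --- equivalently, that two distinct non-$f$-influencing values are interchangeable when appended --- requires unwinding the definitions of $f$-memorable and $f$-vulnerable and a further appeal to Lemma~\ref{lem:iflValsMonotonic}; this, plus keeping the offset $z$ coherent through the interplay of the ``collapse'' and ``shift'' operations on factored outputs, is where the bulk of the case analysis lives.
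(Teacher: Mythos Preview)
Your proposal is essentially correct and follows the same overall architecture as the paper's proof, but with a cleaner initial packaging. The paper works with the witnessing permutation $\pi$ throughout and carries it through long chains of (in)equalities using Lemma~\ref{lem:factoredOutputInvariant}, Lemma~\ref{lem:nonSufIflPermutable}, and Lemma~\ref{lem:rightCongruenceHelper}; you instead absorb $\pi$ once at the start (replacing $u_{2}$ by $\pi(u_{2})$) and then argue with the identity witness, which makes your ``core identity'' and your transposition argument for memorability noticeably shorter than the paper's corresponding chains. Your treatment of vulnerability via condition~3 of Definition~\ref{def:fEquivalence} is exactly the paper's. For statement~\ref{rc:fEqPreserved} the paper uses the composite $\pi'\pcomp\pi$ (where $\pi'$ swaps $d_{1}^{0}$ and $\pi(d_{2}^{0})$) and verifies the three conditions directly; after your reduction this collapses to the identity witness, which is the same verification in simpler notation.

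On the $j=0$ case: you are right that this is where the work is, and the paper indeed isolates it into two standalone results, Lemma~\ref{lem:allNewValuesSufIfl} and Lemma~\ref{lem:allNewValuesPrefIfl}, each of which says that for $d,e$ not $f$-influencing in $u$, the $f$-memorable (resp.\ $f$-vulnerable) values of $u\cdot(\sigma,d)$ and $u\cdot(\sigma,e)$ correspond under the swap of $d$ and $e$. These are exactly the ``unwinding the definitions'' you anticipate. One caution: your phrasing of the $j=0$ reduction as ``$u\cdot(\sigma,d)\fequiv u\cdot(\sigma,e)$'' is slightly stronger than what the paper proves or needs. Conditions~1 and~2 for that $\fequiv$ follow from Lemma~\ref{lem:nonSufIflPermutable} and the two auxiliary lemmas, but condition~3 for this special $\fequiv$ is itself nontrivial, and the paper sidesteps it: for statement~\ref{rc:iflPreserved} only the $\aifl_{f}$-matching (the auxiliary lemmas) is used, and for statement~\ref{rc:fEqPreserved} condition~3 is obtained directly from condition~3 of the original $u_{1}\fequiv u_{2}$ together with Lemma~\ref{lem:nonSufIflPermutable}, Lemma~\ref{lem:nonPreIflPermutable}, and Lemma~\ref{lem:facOpEqualityFromParts}, never passing through the intermediate $\fequiv$. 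If you keep your reduction as stated you will have to supply that extra condition~3 argument; it goes through, but it is easier to follow the paper here and use the auxiliary lemmas only for the $\aifl_{f}$ part.
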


If $u_1\fequiv u_2$, there exists a permutation $\pi$ such that
$\aifl_f(u_1)=\aifl_f(\pi(u_2))$. Hence, all data words in the same
equivalence class of $\fequiv$ have the same number of $f$-influencing
values. If $\fequiv$ has finite index, then there is a bound (say $I$)
such that any data word has at most $I$ $f$-influencing data values.
We are going to construct a SSRT to identify $f$-influencing data
values. The construction is technically involved, so we motivate it by
stating the end result first.
Consider a SSRT $S_f^\ifl$ with the set of registers $R = \set{r_1, \ldots,
r_I}$. The states are of the form $([u]_{f},\ptr)$, where $u$ is some
data word and $\ptr:[1,|\ifl_f(u)|] \to R$ is a pointer function. Let
$\ptr_\bot$ be the trivial function from $\emptyset$ to $R$. The
transitions can be designed to satisfy the following.
\begin{lemma}
    \label{lem:SSRTIflValInv}
    Suppose the SSRT $S_f^\ifl$ starts in the configuration
	$(([\epsilon]_{f},\ptr_{\bot}),\val_{\epsilon},0)$ and reads some data
	word $u$. It reaches the configuration $(([u]_{f},\ptr),$ $\val,|u|)$
	such that $\val(\ptr(i))$ is the $i$\tsc{th} $f$-influencing value in
	$u$ for all $i \in [1,|\ifl_{f}(u)|]$.
\end{lemma}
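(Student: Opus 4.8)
The plan is to construct $S_f^\ifl$ by induction on the length of the input, maintaining the invariant stated in Lemma~\ref{lem:SSRTIflValInv} as an inductive hypothesis. The base case is immediate: in the initial configuration $(([\epsilon]_{f},\ptr_{\bot}),\val_{\epsilon},0)$, the set of $f$-influencing values of $\epsilon$ is empty, so there is nothing to check. For the inductive step, suppose after reading $u$ the transducer is in configuration $(([u]_{f},\ptr),\val,|u|)$ with $\val(\ptr(i))$ the $i$\tsc{th} $f$-influencing value in $u$, and it now reads $(\sigma,e)$. By Lemma~\ref{lem:iflValsMonotonic}, the only candidates for $f$-influencing values in $u\cdot(\sigma,e)$ are $e$ together with the values already stored in registers pointed to by $\ptr$. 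So the transition from state $([u]_f,\ptr)$ on input $\sigma$ only needs to: (i) test, via register equality constraints, whether $e$ equals one of the currently stored influencing values (this partitions the data value $e$ into finitely many cases, so finitely many mutually exclusive guards $\phi$ suffice, keeping the SSRT deterministic); (ii) in each case, determine which of the old influencing values survive and whether $e$ is influencing, thereby computing $\ifl_f(u\cdot(\sigma,e))$ from $\ifl_f(u)$; and (iii) update the pointer function and, if necessary, store $e$ into a free register.

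The key point that makes step (ii) well-defined on states — which are $\fequiv$-classes, not individual words — is Lemma~\ref{lem:rightCongruences}. Item~\ref{rc:fEqPreserved} tells us that $[u\cdot(\sigma,d_1^j)]_f$ depends only on $[u]_f$, $\sigma$, and the index $j$ (where $j=0$ is the ``fresh'' case $e\notin\data(u,*)$ restricted to influencing values), so the target state is well-defined. Item~\ref{rc:iflPreserved} tells us that whether the $i$\tsc{th} influencing value of $u$ survives, and of which type it becomes, likewise depends only on $[u]_f$, $\sigma$, and $j$; combined with the analogous reasoning for $e$ itself, this determines $\aifl_f(u\cdot(\sigma,e))$ as a function of $[u]_f$, $\sigma$, $j$. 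Since $\fequiv$ has finite index, freshness order within the class is consistent, so we can read off $\ifl_f(u\cdot(\sigma,e))$ and hence the required new pointer function. Concretely: the surviving old influencing values keep their registers; if $e$ is influencing in $u\cdot(\sigma,e)$ it is the freshest, so $\ptr'(|\ifl_f(u\cdot(\sigma,e))|)$ points to the register holding $e$ (either the register that already held it, if $e$ was a repeated influencing value, or a freshly chosen free register, if $e$ is new); all indices are shifted to reflect the new freshness order, with registers of dropped values becoming free. That the bound $I$ suffices follows because, as noted before the lemma, finiteness of the index of $\fequiv$ forces a uniform bound $I$ on $|\ifl_f(u)|$ over all $u$, so a free register is always available when one is needed.

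Maintaining the invariant through this step is then routine bookkeeping: the register equality tests read off which abstract case we are in; the concrete value $e$ is the one actually read, so storing it into the designated register makes $\val'$ agree with the intended $i$\tsc{th} influencing value for the top index, and the untouched registers continue to hold the surviving old values correctly because $\ptr'$ reindexes them consistently with the new freshness order. I expect the main obstacle to be the careful case analysis in step~(ii) — in particular, tracking how the freshness order is permuted when a repeated value becomes ``refreshed'' to the top and some previously influencing values drop out — and verifying that the transition function is genuinely well-defined on $\fequiv$-classes rather than on representatives; both of these rest entirely on Lemma~\ref{lem:rightCongruences}, so the work is to apply it exhaustively and set up the pointer-update formula so that the invariant is literally preserved. (A subsidiary point is that Lemma~\ref{lem:rightCongruences} is stated for a single new value $d_1^j$ with $j$ among the existing influencing indices or $j=0$; one must observe that every actual input value $e$ falls into one of these cases up to the permutation witnessing $u_1\fequiv u_2$, which is exactly what the finitely many register guards detect.)
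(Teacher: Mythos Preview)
Your proposal is correct and follows essentially the same approach as the paper: induction on $|u|$, with Lemma~\ref{lem:iflValsMonotonic} bounding the candidates and Lemma~\ref{lem:rightCongruences} guaranteeing that the survival/type information and the target $\fequiv$-class depend only on $[u]_f$, $\sigma$, and the index $j$ of the matched register. One cosmetic slip: in the paper's convention (Definition~\ref{def:iflSeq}) the freshest influencing value has index $1$, not $|\ifl_f(u\cdot(\sigma,e))|$, so your pointer update should set $\ptr'(1)$ to the register holding $e$; the paper also fixes the representative via an equalizing scheme (the $\delta_i$ in Construction~\ref{const:SSRTIflVal}) rather than leaving it abstract, but this is exactly the concretization your well-definedness argument justifies.
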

In short, the idea is that we can hard code rules such as ``if the
data value just read is the $i$\tsc{th} $f$-influencing value in $u$, it
continues to be $f$-influencing in the new data word''.
Lemma~\ref{lem:rightCongruences} implies that the validity of such
rules depend only on the equivalence class $[u]_f$ containing $u$ and
does not depend on $u$ itself. So the SSRT need not remember the
entire word $u$; it just remembers the equivalence class $[u]_f$ in
its control state. The SSRT can check whether the new
data value is the $i$\tsc{th} $f$-influencing value in $u$, by
comparing it with the register $\ptr(i)$. To give the full details of
constructing $S_f^\ifl$, we need another concept explained in the
following paragraph.

Recall the transduction $f$ from Example~\ref{ex:identityOrReverse}
and the infinite set of data words $\set{(a,d_i)\mid i \ge 1}$. For
any $i \ne j$, $f(\underline{(a,d_i)} \mid (a,d_i)) \ne
f(\underline{(a,d_j)} \mid (a,d_i))$ for $i \ne j$. But for every $i$,
there is a permutation $\pi_i$ on data values mapping $d_i$ to $d_1$
so that $f(\underline{\pi_i(a,d_i)} \mid v) =
f(\underline{(a,d_1)}\mid v)$ for any data word $v$.  We have revealed
that all data words in $\set{(a,d_i)\mid i \ge 1}$ are equivalent by
applying a permutation to each data word, so that they all have the
same $f$-influencing data values. We formalize this idea below.
\begin{definition}
    \label{def:equalScheme}
    Let $f$ be a transduction and $\Pi$ be the set of all permutations
    on the set of data values $D$. An \emph{equalizing scheme} for $f$ is a function $E:
    (\Sigma \times D)^{*} \to \Pi$ such that there exists a sequence
    $\delta_{1} \delta_{2} \cdots $ of data values satisfying the
    following condition: for every data word $u$ and every $i \in
    [1,|\ifl_f(u)|]$, the $i$\textsuperscript{th} $f$-influencing data
    value of $E(u)(u)$ is $\delta_{i}$.
\end{definition}
Note that $E(u)(u)$ denotes the application of the permutation $E(u)$
to the data word $u$. We will write $E(u)(u)$ as $u_q$ for short
(intended to be read as ``equalized $u$'').
Note that $E(u)^{-1}(u_q)=u$.

Now we give the full details of constructing $S_f^\ifl$.
\begin{construction}
    \label{const:SSRTIflVal}
    Suppose $f$ is a transduction that is invariant under
permutations,
    $\fequiv$ has finite index and $E$ is an equalizing scheme. Let $I$ be the maximum number of
    $f$-influencing data values in any data word and $\delta_1 \cdots
\delta_I \in D^*$ be such that for any data word $u$, $\delta_i$ is the
$i$\textsuperscript{th} $f$-influencing value in $u_q$. Consider a SSRT $S_f^\ifl$
    with the set of registers $R=\set{r_{1}, \ldots, r_{I}}$. The
    states are of the form $([u]_{f},\ptr)$, where
    $u$ is some data word and $\ptr:[1,|\ifl_f(u)|] \to R$ is a pointer function. If $|\ifl_f(u)|=0$, then $\ptr=\ptr_{\bot}$, the trivial
    function from $\emptyset$ to $R$. We let the set $X$ of data word
    variables to be empty. Let $\update_{\bot}$ be the trivial update
    function for the empty set $X$. The initial state is
    $([\epsilon]_{f},\ptr_{\bot})$. Let $\delta_0$ be an arbitrary
data value in $D\setminus
\set{\delta_1, \ldots, \delta_I}$. From a state $([u]_f,\ptr)$, for every $\sigma \in \Sigma$ and
$i \in [0,|\ifl_f(u)|]$, there is a transition $( ([u]_f, \ptr),
\sigma, \phi, ([u_q\cdot (\sigma,\delta_i)]_f, \ptr'), R',
\update_\bot)$. The condition $\phi$ is as follows, where $m =
|\ifl_f(u)|$:

\begin{align*}
 \phi = \begin{cases}
\bigwedge_{j=1}^{m}\ptr(j)^{\ne} & i=0\\
\phi = \ptr(i)^{=} \land \bigwedge_{j \in
            [1,m]\setminus \set{i}}\ptr(j)^{\ne} & i \ne 0
\end{cases}
\end{align*} 
For every $j\in [1,|\ifl_f(u_q\cdot (\sigma,\delta_i))|]$, $\ptr'(j)$
is as follows: if the $j$\textsuperscript{th} $f$-influencing
value of $u_q\cdot (\sigma,\delta_i)$ is the
$k$\textsuperscript{th} $f$-influencing value of $u_q$ for some
$k$, then $\ptr'(j)=\ptr(k)$. Otherwise, $\ptr'(j)=r_\reu = \min(R \setminus \set{\ptr(k) \mid 1 \le k \le m,
\delta_k \text{ is } f\text{-influencing in } u_q\cdot
(\sigma,\delta_i)})$ (minimum is based on the order $r_1 <
r_2 < \cdots < r_I$). The set $R'$ is $\set{r_\reu}$ if $i=0$ and
$\delta_0$ is $f$-influencing in $u_q\cdot (\sigma,\delta_0)$;
$R'$ is $\emptyset$ otherwise.
\end{construction}
It is routine to verify that the SSRT constructed above is
deterministic.  The definition of the next pointer function $\ptr'$
ensures that the register $\ptr(j)$ always stores the
$j$\textsuperscript{th} $f$-influencing value in the data word read so
far. This is shown in the proof of Lemma~\ref{lem:SSRTIflValInv},
which can be found in Section~\ref{sec:recIflVals}.

Next we will extend the transducer to compute the output of a transduction.
Suppose the transducer has read the data word $u$ so far. The
transducer doesn't know what is the suffix that is going to come, so
whatever computation it does has to cover all possibilities. The idea
is to compute $\set{f(u\mid \underline{v}) \mid v \in (\Sigma\times
D)^*}$ and store them in data word variables, so that when it has to
output $f(u)$ at the end, it can output $f(u \mid
\underline{\epsilon})$.  However, this set can be infinite. If
$\fequiv$ has finite index, we can reduce it to a finite set.  

Left parts that have been equalized by an equalizing scheme will not
have arbitrary influencing data values --- they will be from the
sequence $\delta_{1} \delta_{2} \cdots$. For the transduction in
Example~\ref{ex:identityOrReverse}, the first data value is the only
influencing value in any data word. An equalizing scheme will map the
first data value of all data words to $\delta_1$.

The relation $\fequiv$ identifies two prefixes when they behave
similarly. We now define a relation that serves a similar propose, but
for suffixes.
\begin{definition}
    \label{def:rEquivalence}
    For a transduction $f$ and equalizing scheme $E$, we define
    the relation $\requiv$ on data words as
    $v_{1} \requiv v_{2}$ if for every data word $u$,
    $f( u_q \mid \underline{v_{1}}) = f( u_q \mid
    \underline{v_{2}})$.
\end{definition}
It is routine to verify that $\requiv$ is an equivalence relation.
Saying that $v_1$ and $v_2$ are similar suffixes if $f(u\mid
\underline{v_1}) = f(u \mid \underline{v_2})$ for all $u$ doesn't
work; this may result in infinitely many pairwise unequivalent
suffixes (just like $\fequiv$ may have infinite index if we don't
apply permutations to prefixes). So we ``equalize'' the prefixes so
that they have the same $f$-influencing data values, before checking
how suffixes influence them.
\begin{lemma}
\label{lem:finREquiv}
	Suppose $f$ is a transduction satisfying all the conditions of
Theorem~\ref{thm:mainResult}. If $E$ is an equalizing scheme for $f$,
then $\requiv$ has finite index.
\end{lemma}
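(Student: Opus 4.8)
The plan is to show that $\requiv$ is a finite intersection of finite-index equivalence relations. For a data word $w$, let $\approx_w$ relate $v_1,v_2$ whenever $f(w\mid\underline{v_1})=f(w\mid\underline{v_2})$; this is an equivalence relation, and by Definition~\ref{def:rEquivalence} we have $\requiv=\bigcap_u\approx_{u_q}$, the intersection ranging over all data words $u$. First I would check that each $\approx_w$ has finite index: in $f(wv)$, every triple with origin in $[1,|w|]$ carries a data value from the finite set $\data(w,*)$ (``without data peeking''), and there are at most $K|w|$ such triples, where $K$ is the linear-blow-up constant; hence $f(w\mid\underline{v})$ ranges over the finite set of sequences formed from at most $K|w|$ triples in $\Gamma\times\data(w,*)\times[1,|w|]$ interspersed with at most $K|w|+1$ markers $(*,*,\ri)$. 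The crux is then to cut the intersection down to finitely many $u$.

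For that, the key claim is: \emph{if $u\fequiv u'$ then $\approx_{u_q}=\approx_{u'_q}$}. Granting this, picking one representative $u_C$ from each of the finitely many $\fequiv$-classes $C$ yields $\requiv=\bigcap_C\approx_{(u_C)_q}$, a finite intersection of finite-index equivalence relations, hence of finite index. To prove the claim I would argue as follows. Since $\pi(w)\fequiv w$ for every permutation (noted after Definition~\ref{def:fEquivalence}) and $\fequiv$ is transitive (Lemma~\ref{lem:fEquivalence}), $u\fequiv u'$ gives $u_q\fequiv u'_q$; let $\pi$ be a witnessing permutation as in Definition~\ref{def:fEquivalence}. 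All words in a $\fequiv$-class have the same number $m$ of $f$-influencing values and $E$ is an equalizing scheme, so $\ifl_f(u_q)=\ifl_f(u'_q)=\delta_m\cdots\delta_1$; then the second clause of Definition~\ref{def:fEquivalence}, together with the fact that freshness order and the types $\ifls,\iflp,\iflsp$ are preserved by permutations (routine from $f$ being invariant under permutations), forces $\pi$ to fix $\delta_1,\dots,\delta_m$. The third clause of Definition~\ref{def:fEquivalence} with middle word $\epsilon$ gives $v_1\approx_{u_q}v_2$ iff $v_1\approx_{\pi(u'_q)}v_2$, and permutation-invariance of $f$ gives $f(\pi(u'_q)\mid\underline{v})=\pi\bigl(f(u'_q\mid\underline{\pi^{-1}(v)})\bigr)$ (applying $\pi$ changes neither origins, nor letters, nor which triples survive the factoring). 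Using injectivity of $\pi$, the claim reduces to a \emph{normalization lemma}: $f(u'_q\mid\underline{\rho(v)})=f(u'_q\mid\underline{v})$ for every $v$ and every permutation $\rho$ fixing the $f$-influencing values of $u'_q$ (apply it with $\rho=\pi^{-1}$).

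The normalization lemma is where I expect the real work. The point is that a value of $u'_q$ which is not $f$-influencing is in particular not $f$-vulnerable in $u'_q$, so (taking the middle word $u'$ in the definition of $f$-vulnerable to be $\epsilon$) replacing such a value, in the suffix only, by a fresh value leaves $f(u'_q\mid\underline{\cdot})$ unchanged. Iterating this, I would ``clean'' both $v$ and $\rho(v)$, replacing every suffix value coinciding with a non-influencing value of $u'_q$ by a fresh value (distinct values by distinct fresh values, chosen to avoid $\delta_1,\dots,\delta_I$ as well), obtaining $\tilde v$ and $\widetilde{\rho(v)}$ with the same $f(u'_q\mid\underline{\cdot})$-image as $v$ and $\rho(v)$ respectively. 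Cleaning preserves letters and the internal equality pattern of the suffix and does not disturb occurrences of $\delta_1,\dots,\delta_m$; since $\rho$ fixes those, $v$ and $\rho(v)$ hold each $\delta_i$ at the same positions, so $\tilde v$ and $\widetilde{\rho(v)}$ agree on letters, on the internal equality pattern, and on which positions hold which $\delta_i$, with every other position holding a value outside $\data(u'_q,*)$. Consequently $\tilde v$ and $\widetilde{\rho(v)}$ differ by a permutation $\sigma$ that fixes $\data(u'_q,*)$ pointwise; since $\sigma$ fixes $u'_q$, permutation-invariance together with ``without data peeking'' (the triples with origin in $u'_q$ carry values from $\data(u'_q,*)$, which $\sigma$ fixes) gives $f(u'_q\mid\underline{\tilde v})=f(u'_q\mid\underline{\sigma(\tilde v)})=f(u'_q\mid\underline{\widetilde{\rho(v)}})$. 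Chaining the equalities finishes the normalization lemma, and hence the proof.

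The main obstacle is the normalization lemma, and within it the bookkeeping around safe replacements and permutations: one must make sure that ``cleaning'' only ever swaps values that are genuinely invisible to $f(u'_q\mid\underline{\cdot})$ (this is exactly where non-vulnerability of non-influencing values is used), that the cleaned suffixes use only values from $\{\delta_1,\dots,\delta_m\}$ together with values outside $\data(u'_q,*)$, and that the permutation $\sigma$ can therefore be chosen to fix all of $\data(u'_q,*)$. The hypotheses ``invariant under permutations'' and ``without data peeking'' are used repeatedly; ``linear blow up'' bounds the index of each $\approx_w$; and ``$\fequiv$ has finite index'' is used both to bound the number of classes in the final intersection and, through the equalizing scheme, to pin the influencing values of every $u_q$ to the fixed sequence $\delta_1\delta_2\cdots$.
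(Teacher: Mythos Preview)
Your proof is correct and cleaner than the paper's; it is worth noting the differences.

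The paper proceeds by contradiction: from an infinite set of pairwise $\requiv$-inequivalent words it pigeonholes the distinguishing prefixes into a single $\fequiv$-class $U$, then constructs a second equalizing scheme $E'$ tailored to a fixed $u\in U$ so that the witnessing permutations in the third clause of Definition~\ref{def:fEquivalence} coincide with $E'(u_{i,j})$; an auxiliary Lemma~\ref{lem:rEquivDist} transports the inequivalences from $E$ to $E'$, after which the third clause yields infinitely many distinct $f(u\mid\underline{\cdot})$-values, contradicting linear blow up. Your route is direct: you write $\requiv$ as $\bigcap_u\approx_{u_q}$, bound the index of each $\approx_{u_q}$ via linear blow up and no data peeking, and then collapse the intersection to one term per $\fequiv$-class by showing $\approx_{u_q}=\approx_{u'_q}$ whenever $u\fequiv u'$. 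The crucial observation that makes your argument short is that the witness permutation for $u_q\fequiv u'_q$ must fix $\delta_1,\dots,\delta_m$ because both words are already equalized; this lets you avoid the scheme-switching machinery of Lemma~\ref{lem:rEquivDist} entirely.

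Your ``normalization lemma'' is exactly Lemma~\ref{lem:nonPreIflPermutable} of the paper (in fact a weaker instance, since that lemma only needs $\rho$ to fix the $f$-vulnerable values, not all $f$-influencing ones), and your sketch of it matches the paper's proof. So the two arguments share the same core ingredients, the third clause of Definition~\ref{def:fEquivalence} and Lemma~\ref{lem:nonPreIflPermutable}, but your decomposition packages them more transparently.
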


Suppose we are trying to design a SSRT to implement a transduction
$f$, which has the property that $\requiv$ has finite index. The SSRT
can compute the set
$\set{f(u_q\mid \underline{v}) \mid v \in (\Sigma \times D)^*}$,
which is
finite (it is enough to consider one representative $v$ from every
equivalence class of $\requiv$). At the end when the SSRT has to
output $f(u)$, it can output $E(u)^{-1}(f(u_q \mid
\underline{\epsilon})) = f(u)$. The SSRT never knows what is the next
suffix; at any point of time, the next suffix could be $\epsilon$. So
the SSRT has to apply the permutation $E(u)^{-1}$ at each step.
Letting $V$ be a finite set of representatives from every equivalence
class of $\requiv$,  the SSRT computes $\set{f(u \mid
\underline{E(u)^{-1}(v)}) \mid v \in V}$ at every step.

Now suppose the SSRT has computed $\set{f(u \mid
\underline{E(u)^{-1}(v)}) \mid v \in V}$, stored them in data word
variables and it reads the next symbol
$(\sigma,d)$. The SSRT has to compute $\set{f(u \cdot (\sigma,d) \mid
\underline{E(u\cdot (\sigma,d))^{-1}(v)}) \mid v \in V}$ from
whatever it had computed for $u$.

To explain how the above computation is done, we use some terminology.
In factored outputs of the form $f(u \mid \underline{v})$,
$f(\underline{u} \mid \underline{v})$, $f(\underline{u} \mid v \mid
\underline{w})$ or $f(\underline{u} \mid \underline{v} \mid
\underline{w})$, a triple is said to come from $u$ if it has origin in
$u$ or it is the triple $(*,*,\lef)$. A left block in such a factored
output is a maximal infix of triples, all coming from the left part
$u$. Similarly, a non-right block is a maximal infix of triples, none
coming from the right part. Middle blocks are defined similarly.
For the transduction $f$ in Example~\ref{ex:identityOrReverse},
$f((a,d_{1}) (b,d_2)(c,d_{3}))$ is
$(c,d_{3},3)(b,d_2,2)(a,d_{1},1)$. In $f((a,d_{1}) (b,d_2) \mid
\underline{(c,d_{3})})$, $(b,d_2,2)(a,d_{1},1)$ is a left block. In
$f(\underline{(a,d_1)} \mid (b,d_2) \mid \underline{(c,d_3)})$,
$(b,d_2, 2)$ is a middle block. In $f(\underline{(a,d_1)} \mid
\underline{(b,d_2)} \mid \underline{(c,d_3)})$, $(*,*,\mi)(*,*,\lef)$
is a non-right block, consisting of one middle and one left block.

The concretization of the $i$\tsc{th} left block (resp.~middle block)
in $f(\underline{u} \mid \underline{v} \mid \underline{w})$ is defined
to be the $i$\tsc{th} left block in $f(u \mid \underline{vw})$
(resp.~the $i$\tsc{th} middle block in $f(\underline{u} \mid v \mid
\underline{w})$). The concretization of the $i$\tsc{th} non-right
block in $f(\underline{u} \mid \underline{v} \mid \underline{w})$ is
obtained by concatenating the concretizations of the left and middle
blocks that occur in the $i$\tsc{th} non-right block.  The following
is a direct consequence of the definitions.
\begin{proposition}
\label{prop:leftBlNRBlock}
The $i$\textsuperscript{th} left block of $f(u\cdot (\sigma,d) \mid
\underline{v})$ is the concretization of the $i$\tsc{th} non-right block of
$f(\underline{u} \mid \underline{(\sigma,d)} \mid \underline{v})$.
\end{proposition}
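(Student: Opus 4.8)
The plan is to unwind the definitions of factored output, of left/middle/non-right block, and of concretization, and to observe that every object in the statement is determined by the single sequence of triples $W = f(u\cdot(\sigma,d)\cdot v)$. Write $W = t_{1}t_{2}\cdots t_{N}$ and, for each index $\ell$, let $\rho(\ell)\in\set{\lef,\mi,\ri}$ record whether the origin of $t_{\ell}$ lies in $u$, is the position of $(\sigma,d)$, or lies in $v$; this is exactly the abstract origin used when forming the three-part factored output $f(\underline{u}\mid\underline{(\sigma,d)}\mid\underline{v})$. Let $P=\set{\ell\in[1,N]:\rho(\ell)\ne\ri}$ and write $P$ as its decomposition into maximal intervals of consecutive integers, $P=[a_{1},b_{1}]\sqcup[a_{2},b_{2}]\sqcup\cdots$.

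First I would establish the \emph{outer} correspondence. Forming $f(u\cdot(\sigma,d)\mid\underline{v})$ replaces each $t_{\ell}$ with $\rho(\ell)=\ri$ by $(*,*,\ri)$ and merges consecutive copies, leaving every $t_{\ell}$ with $\ell\in P$ untouched; hence its $i$\tsc{th} left block is precisely the infix $t_{a_{i}}\cdots t_{b_{i}}$. Forming $f(\underline{u}\mid\underline{(\sigma,d)}\mid\underline{v})$ replaces every $t_{\ell}$ by $(*,*,\rho(\ell))$ and merges consecutive identical markers, and a non-right block is a maximal infix of markers none equal to $(*,*,\ri)$; so its $i$\tsc{th} non-right block corresponds to the same interval $[a_{i},b_{i}]$, now viewed through the region markers. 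Thus the two sides of the claim are indexed by the same intervals, and it remains to check that concretizing the $i$\tsc{th} non-right block reconstructs $t_{a_{i}}\cdots t_{b_{i}}$.

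Next I would analyse the inner structure of a fixed non-right block. On $[a_{i},b_{i}]$ the map $\rho$ takes only the values $\lef$ and $\mi$; partition $[a_{i},b_{i}]$ into its maximal sub-intervals on which $\rho$ is constant, say $[c_{1},d_{1}],[c_{2},d_{2}],\ldots$ read left to right, so that $t_{a_{i}}\cdots t_{b_{i}} = (t_{c_{1}}\cdots t_{d_{1}})(t_{c_{2}}\cdots t_{d_{2}})\cdots$. After marker replacement and merging, each of these sub-intervals collapses to a single marker, and these are exactly the left blocks (where $\rho=\lef$) and middle blocks (where $\rho=\mi$) occurring inside the $i$\tsc{th} non-right block. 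The key observation is that the $p$\tsc{th} left block occurring anywhere in $f(\underline{u}\mid\underline{(\sigma,d)}\mid\underline{v})$ corresponds to the $p$\tsc{th} maximal interval of indices with $\rho=\lef$, and that the $p$\tsc{th} left block of $f(u\mid\underline{(\sigma,d)v})$ is the infix of $W$ over exactly that maximal interval (since forming $f(u\mid\underline{(\sigma,d)v})$ retains the triples with $\rho=\lef$ and merges the rest); so by Definition~\ref{def:factoredOutputs} the concretization of that left block is the corresponding $t_{c_{j}}\cdots t_{d_{j}}$. The symmetric statement holds for middle blocks via $f(\underline{u}\mid(\sigma,d)\mid\underline{v})$. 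Concatenating the concretizations of the left and middle blocks inside the $i$\tsc{th} non-right block in the order they occur therefore yields $(t_{c_{1}}\cdots t_{d_{1}})(t_{c_{2}}\cdots t_{d_{2}})\cdots = t_{a_{i}}\cdots t_{b_{i}}$, the $i$\tsc{th} left block of $f(u\cdot(\sigma,d)\mid\underline{v})$.

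The proof is entirely a chase through the definitions; the only point needing care is bookkeeping the merging of consecutive equal markers and verifying that a left (or middle) block occurring inside the $i$\tsc{th} non-right block has the same global index in the two-part and three-part factored outputs, so that the concretization map is applied to matching blocks. I expect that bookkeeping, rather than any genuine difficulty, to be the main obstacle; it is handled by the interval decomposition above, together with the remark that a maximal $\lef$-interval can never be adjacent to another maximal $\lef$-interval, so the maximal $\lef$-sub-intervals of the various $[a_{i},b_{i}]$ are exactly the maximal $\lef$-intervals of $[1,N]$ (and likewise for $\mi$).
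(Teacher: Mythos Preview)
Your proposal is correct and is precisely the definitional chase the paper has in mind: the paper states that the proposition ``is a direct consequence of the definitions'' and gives no further argument, while you have spelled out that chase carefully, including the one nontrivial bookkeeping point (that a maximal $\lef$- or $\mi$-sub-interval of a non-right block is also globally maximal, so local and global block indices agree).
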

For the transduction $f$ from Example~\ref{ex:identityOrReverse}, the
first left block of $f((a,d_1)(b,d_2) \mid \underline{(c,d_3)})$ is
$(b,d_2,2)(a,d_1,1)$, which is the concretization of
$(*,*,\mi)(*,*,\lef)$, the first non-right block of
$f(\underline{(a,d_1)} \mid \underline{(b,d_2)} \mid
\underline{(c,d_3)})$.

From Proposition~\ref{prop:leftBlNRBlock}, we deduce that the
$i$\tsc{th} left block of $f(u \cdot (\sigma,d) \mid
\underline{E(u\cdot (\sigma,d))^{-1}(v)})$ is the concretization of
the $i$\tsc{th} non-right block of $f(\underline{u} \mid
\underline{(\sigma,d)} \mid \underline{E(u\cdot
(\sigma,d))^{-1}(v)})$. The concretizations come from the left blocks
of $f(u \mid \underline{(\sigma,d) \cdot E(u\cdot
(\sigma,d))^{-1}(v)})$ and the middle blocks of $f(\underline{u} \mid
(\sigma,d) \mid \underline{E(u\cdot (\sigma,d))^{-1}(v)})$. In the
absence of data values, the above two statements would be as follows:
The $i$\tsc{th} left block of $f(u \cdot \sigma \mid \underline{v})$
is the concretization of the $i$\tsc{th} non-right block of
$f(\underline{u} \mid \underline{\sigma} \mid \underline{v})$. The
concretizations come from the left blocks of $f(u \mid
\underline{\sigma \cdot v})$ and the middle blocks of $f(\underline{u}
\mid \sigma \mid \underline{v})$. This technique of incrementally
computing factored outputs was introduced in \cite{Bojanczyk2014} for
SSTs. In SSTs, $f(u \mid \underline{\sigma \cdot v})$ would have been
computed as $f(u \mid \underline{v'})$ when $u$ was read, where $v'$ is
some word that influences prefixes in the same way as $\sigma
\cdot v$. But in SSRTs, only $f(u \mid \underline{E(u)^{-1}(v')})$ would
have been computed for various $v'$; what we need is $f(u \mid
\underline{(\sigma,d) \cdot E(u\cdot (\sigma,d))^{-1}(v)})$. We work
around this by proving that a $v'$ can be computed such that $f(u \mid
\underline{(\sigma,d) \cdot E(u\cdot (\sigma,d))^{-1}(v)}) = f(u \mid
\underline{E(u)^{-1}(v')})$. This needs some technical work, which
follows next.

SSRTs will keep left blocks in
variables, so we need a bound on the number of blocks.
\begin{lemma}
    \label{lem:finiteLeftBlocks}
    Suppose $f$ is a transduction that is invariant under permutations
    and has linear blow up and $E$ is an equalizing scheme
    such that $\requiv$ has finite index. There
    is a bound $B \in \Nat$ such that for all data words $u,v$, the
    number of left blocks in $f(u \mid \underline{v})$ is at most $B$.
\end{lemma}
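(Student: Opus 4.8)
The plan is to observe that the number of left blocks in $f(u \mid \underline{v})$ is squeezed between two bounds: a crude one coming from linear blow up that depends on $|v|$, and the fact that $f(u\mid\underline{v})$ only depends on $v$ through its $\requiv$-class once the prefix is equalized — so finiteness of the index of $\requiv$ lets us replace $v$ by one of finitely many representatives.

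\textbf{Step 1 (crude bound from linear blow up).} Let $K$ be the linear blow up constant for $f$. In $f(uv)$, color each output triple \emph{left} or \emph{right} according to whether its origin lies in $u$ or in $v$; a left block of $f(u\mid\underline{v})$ is exactly a maximal run of left triples in $f(uv)$. Each of the $|v|$ positions of $v$ is the origin of at most $K$ output triples, so $f(uv)$ has at most $K|v|$ right triples, and a sequence containing at most $K|v|$ right triples has at most $K|v|+1$ maximal runs of left triples. Hence $f(u\mid\underline{v})$ has at most $K|v|+1$ left blocks — useful but still dependent on $|v|$.

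\textbf{Step 2 (reduce to an equalized prefix and a bounded suffix).} First I would use permutation invariance to move to the equalized prefix $u_q = E(u)(u)$: since $f$ is invariant under permutations, $f(uv)=E(u)^{-1}\!\bigl(f(u_q\cdot E(u)(v))\bigr)$, and applying a permutation to data values changes neither the letters nor the origins, so $f(u\mid\underline{v})$ and $f(u_q\mid\underline{E(u)(v)})$ have the same number of left blocks. Now fix a finite set $V$ containing one representative of each $\requiv$-class (it exists precisely because $\requiv$ has finite index), and let $w\in V$ represent the class of $E(u)(v)$. By the definition of $\requiv$ instantiated at the data word $u$, $f(u_q\mid\underline{E(u)(v)})=f(u_q\mid\underline{w})$, so $f(u\mid\underline{v})$ has the same number of left blocks as $f(u_q\mid\underline{w})$.

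\textbf{Step 3 (conclude).} Applying Step 1 to the pair $(u_q,w)$ bounds the number of left blocks of $f(u_q\mid\underline{w})$ by $K|w|+1$, so $B:=K\cdot\max_{w\in V}|w|+1$ works uniformly for all $u,v$. The only point that needs care is Step 2: one must route the argument through the equalizing scheme so that the prefix has the form $u_q$ before invoking $\requiv$ (whose definition quantifies only over equalized prefixes), and verify that the permutation relating $f(uv)$ and $f(u_q\cdot E(u)(v))$ leaves the block structure intact. Everything else is elementary counting, so I do not anticipate a genuine obstacle; the hypotheses used are exactly permutation invariance, linear blow up, and finiteness of the index of $\requiv$, as in the statement.
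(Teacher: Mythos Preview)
Your proposal is correct and uses the same three ingredients as the paper's proof: permutation invariance to pass to the equalized prefix $u_q$, the finite index of $\requiv$ to replace the suffix by one of finitely many representatives, and linear blow up to bound the number of right triples (and hence the number of left blocks) once the suffix is fixed. The only cosmetic difference is that the paper phrases this as a proof by contradiction with a pigeonhole on the $\requiv$-classes of the $E(u_i)(v_i)$, whereas you give the direct constructive bound $B = K\cdot\max_{w\in V}|w|+1$; the underlying argument is the same.
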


\begin{definition}
    \label{def:middleBlocks}
    Suppose $\ifl_f(u_q)=\delta_m \cdots \delta_1$, $\delta_0 \in D
    \setminus \set{\delta_{m}, \ldots, \delta_{1}}$, $\eta \in
    \set{\delta_{0}, \ldots, \delta_m}$ and $\sigma
	\in \Sigma$. We say that a permutation $\pi$
\emph{tracks influencing values} on $u_q\cdot (\sigma,\eta)$ if
$\pi(\delta_{i})$ is the $i$\tsc{th} $f$-influencing value
    in $u_q \cdot (\sigma,\eta)$ for all $i \in [1,|\ifl_f(u_q \cdot (\sigma,\eta))|]$.
\end{definition}
    Lemma~\ref{lem:iflValsMonotonic} implies that for $i \ge 2$ in the
above definition, $\pi(\delta_i) \in \set{\delta_{m}, \ldots,
\delta_{1}}$ and $\pi(\delta_1) \in \set{\delta_{m}, \ldots,
\delta_{0}}$. We can infer from Lemma~\ref{lem:rightCongruences} that
if $u \fequiv u'$ and $\pi$ tracks influencing values on
$E(u')(u')\cdot (\sigma,\eta)$, then it also tracks influencing values
on $u_q \cdot (\sigma,\eta)$.

\begin{lemma}
    \label{lem:computeLeftBlocks} 
Suppose $f$ is a transduction that is invariant under permutations and
without data peeking, $u,u',v$ are data words, $\sigma \in
\Sigma$, $\ifl_{f}(u)=d_{m} \cdots d_{1}$, $d_0 \in D \setminus
\set{d_{m}, \ldots, d_{1}}$, $\delta_0 \in D \setminus
\set{\delta_{m}, \ldots, \delta_{1}}$, $(d,\eta) \in
\set{(d_{i},\delta_{i}) \mid i \in [0,  m]}$, $\pi$ tracks influencing
values on $u_q\cdot (\sigma,\eta)$ and $u \fequiv u'$.
Then
$f(u \mid \underline{(\sigma,d) \cdot E(u\cdot (\sigma,d))^{-1}(v)}) =
f(u \mid \underline{E(u)^{-1}((\sigma,\eta)\cdot\pi(v))})$.
If $(d,\eta) \in
\set{(d_{i},\delta_{i}) \mid i \in [1,  m]}$, then
$f(\underline{u} \mid (\sigma,d) \mid \underline{E(u\cdot
(\sigma,d))^{-1}(v)}) = E(u)^{-1}(f_{z}(\underline{u'_q} \mid
(\sigma,\eta) \mid \underline{\pi(v)}))$, where $z=|u|-|u'|$.  If
$(d,\eta)=(d_0,\delta_0)$, then $f(\underline{u} \mid (\sigma,d) \mid
\underline{E(u\cdot (\sigma,d))^{-1}(v)}) = E(u)^{-1} \pcomp
\pi'(f_{z}(\underline{u'_q} \mid (\sigma,\eta) \mid
\underline{\pi(v)}))$, where $\pi'$ is the permutation that
interchanges $\delta_0$ and $E(u)(d_0)$ and doesn't change any other
data value ($\pcomp$ denotes composition of permutations).
\end{lemma}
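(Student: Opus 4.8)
The plan is to derive all three equalities by unfolding Definition~\ref{def:factoredOutputs}, using invariance under permutations to move freely between a data word and its equalization, and exploiting the defining property of the equalizing scheme. The one algebraic fact that drives everything is that, since $\ifl_f(u)=d_m\cdots d_1$ and $\ifl_f(u_q)=\delta_m\cdots\delta_1$, the permutation $E(u)$ satisfies $E(u)(d_k)=\delta_k$ for every $k\in[1,m]$; hence $E(u)(d)=\eta$ exactly when $(d,\eta)=(d_i,\delta_i)$ with $i\ge 1$, while for $i=0$ the value $E(u)(d_0)$ need not be $\delta_0$ but is necessarily outside $\set{\delta_1,\ldots,\delta_m}$ (because $E(u)$ is a bijection whose only preimages of $\delta_1,\ldots,\delta_m$ are $d_1,\ldots,d_m$, and $d_0\notin\set{d_1,\ldots,d_m}$). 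This residual mismatch is precisely what the transposition $\pi'$ in the third equality repairs. Throughout I will invoke the following basic stability property (among those collected in Section~\ref{app:Fundamental}): a factored output of the form $f(\underline w\mid x)$ or $f(w\mid\underline x)$ is unchanged if one applies to the ignored part a permutation that fixes all $f$-influencing values of $w$, and, by absence of data peeking, is likewise unchanged if one applies to $w$ a permutation fixing the $f$-influencing values of $w$ (non-influencing values of $w$ are non-memorable, so they cannot be ``seen'' in the retained triples).

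For the first equality, write $u''=u\cdot(\sigma,d)$, $w_1=(\sigma,d)\cdot E(u'')^{-1}(v)$ and $w_2=E(u)^{-1}((\sigma,\eta)\cdot\pi(v))$. Both sides are factored outputs of $f$ on a word whose left part is $u$, so applying $E(u)$ and invariance reduces the goal to $f(u_q\mid\underline{(\sigma,E(u)(d))\cdot\rho(v)})=f(u_q\mid\underline{(\sigma,\eta)\cdot\pi(v)})$, where $\rho=E(u)\circ E(u'')^{-1}$. Now $\rho(u''_q)=u_q\cdot(\sigma,E(u)(d))$, and because the $k$-th $f$-influencing value of $u''_q$ is $\delta_k$ while $\pi$ tracks influencing values on $u_q\cdot(\sigma,\eta)$, one checks that $\rho$ and $\pi$ agree on the $f$-influencing values of $u''_q$; hence $\rho^{-1}\pi$ fixes those values, and the stability property gives $f(u''_q\mid\underline v)=f(u''_q\mid\underline{\rho^{-1}\pi(v)})$. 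Pushing $\rho$ and $E(u)^{-1}$ back through closes the case $i\ge 1$ (where the two prefixes literally coincide). For $i=0$ the same computation applies verbatim: the only discrepancy, $E(u)(d_0)$ versus $\delta_0=\eta$, sits at a position inside the ignored suffix of $f(u\mid\underline{\cdot})$, so stability absorbs it; one only has to note that $d_0$ may itself become $f$-influencing in $u''$ — this is allowed, since Lemma~\ref{lem:iflValsMonotonic} does not apply when the freshly read value equals $d_0$ — and that this possibility does not disturb the argument.

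For the second and third equalities I would proceed in two stages. First, replace $u$ by the $\fequiv$-equivalent $u'$: since $u_q\fequiv u'_q$ (both are equalizations of $\fequiv$-related words) with a witnessing permutation that fixes every $\delta_k$, appending $(\sigma,\eta)$ — which, being one of $\delta_0,\ldots,\delta_m$, plays the same role on both sides — preserves this by the second part of Lemma~\ref{lem:rightCongruences}; combined with the first clause of Definition~\ref{def:fEquivalence} and the elementary identities relating the three-part outputs $f(\underline a\mid b\mid\underline c)$ to the two-part ones (Section~\ref{app:Fundamental}), this transfers the computation from $u$ to $f_z(\underline{u'_q}\mid(\sigma,\eta)\mid\underline{\cdot})$ with $z=|u|-|u'|$. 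Second, run the same equalization bookkeeping as in the first equality, now keeping the middle block: its triples have origin at the position of $(\sigma,d)$, so by absence of data peeking their data values lie in $u\cdot(\sigma,d)$ and are moved by $E(u)^{-1}$, producing the prefactor $E(u)^{-1}$. When $(d,\eta)=(d_0,\delta_0)$, the middle block may in addition carry the value $E(u)(d_0)$ (this happens precisely when $d_0$ occurs in $u$); applying $\pi'$, the transposition of $\delta_0$ and $E(u)(d_0)$, to the factored output and pushing it through invariance perturbs only non-influencing data values of the left part (since $\pi'$ fixes $\delta_1,\ldots,\delta_m$), which stability renders harmless, and it converts the $E(u)(d_0)$-carrying middle block into the $\delta_0$-carrying one — giving the composite prefactor $E(u)^{-1}\pcomp\pi'$.

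I expect the main obstacle to be the second stage above: transferring all statements from the two-part factored outputs $f(\underline u\mid v)$ (in terms of which $\fequiv$, $\requiv$ and Lemma~\ref{lem:rightCongruences} are phrased) to the three-part outputs $f(\underline u\mid\mathrm{mid}\mid\underline v)$, while simultaneously juggling the three permutations $E(u)$, $\pi$, $\pi'$ and the offset $z$ and keeping exact track — via the absence of data peeking — of which data values can occur in the middle block and hence which permutation acts on them. This is also where the split between $i\ge 1$ and $i=0$ becomes genuinely necessary rather than cosmetic. The remainder is careful but routine unfolding of Definition~\ref{def:factoredOutputs} and the basic lemmas of Section~\ref{app:Fundamental}.
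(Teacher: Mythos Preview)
Your proposal is correct and follows essentially the same route as the paper: the central computation is that the composite $\rho=E(u)\pcomp E(u\cdot(\sigma,d))^{-1}$ agrees with the tracking permutation $\pi$ on all $f$-influencing values of the equalized extended word (the paper spells this out position by position, then invokes Lemma~\ref{lem:nonPreIflPermutable}/Lemma~\ref{lem:nonSufIflPermutableSecond} and Lemma~\ref{lem:fEqSameEffect} exactly as you outline), with the transposition $\pi'$ in the $i=0$ case absorbing the residual mismatch $E(u)(d_0)\ne\delta_0$. The only cosmetic difference is order: the paper does the equalization bookkeeping first and then passes from $u_q$ to $u'_q$ via Lemma~\ref{lem:fEqSameEffect}, whereas you propose the reverse; both work.
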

The left blocks of $f(u \mid \underline{(\sigma,d) \cdot E(u\cdot
(\sigma,d))^{-1}(v)})$ are hence equal to those of the factored output $f(u \mid
\underline{E(u)^{-1}((\sigma,\eta)\cdot\pi(v))})$, which would have
been be stored as $f(u \mid
\underline{E(u)^{-1}(v')})$ in one of the data word variables when $u$
was read, where $v' \requiv (\sigma,\eta)\cdot\pi(v)$.

Suppose $v_1,v_2 \in V$ and $v'\requiv (\sigma,\eta)\cdot\pi(v_1)
\requiv (\sigma,\eta)\cdot\pi(v_2)$. The computation of $f(u \cdot
(\sigma,d) \mid \underline{E(u \cdot (\sigma,d))^{-1}(v_1)})$ requires the
left blocks of $f(u \mid \underline{E(u)^{-1}(v')})$ and the
computation of 
$f(u \cdot (\sigma,d) \mid \underline{E(u\cdot
(\sigma,d))^{-1}(v_2)})$ also requires the
left blocks of $f(u \mid \underline{E(u)^{-1}(v')})$. The
SSRT would have stored $f(u\mid \underline{E(u)^{-1}(v')})$ in a data
word variable and now it is needed for two computations. But in SSRTs,
the contents of one data word variable cannot be used in two
computations, since SSRTs are copyless. This problem is solved in
\cite{Bojanczyk2014} for SSTs using a two way transducer model
equivalent to SSTs. In this two way model, the suffix can be read and
there is no need to perform computations for multiple suffixes. We
cannot use that technique here, since there are no known two way
models equivalent to SSRTs.

We solve this problem by not performing the two computations
immediately. Instead, we remember the fact that there is a multiple
dependency on a single data word variable. The actual computation is
delayed until the SSRT reads more symbols from the input and gathers
enough information about the suffix to discard all but one of the
dependencies. Suppose we have delayed computing $f(u \cdot (\sigma,d)
\mid \underline{E(u\cdot (\sigma,d))^{-1}(v_1)})$ due to some
dependency. After reading the next symbol, $f(u \cdot (\sigma,d) \mid
\underline{E(u\cdot (\sigma,d))^{-1}(v_1)})$ itself might be needed
for multiple computations. We keep track of such nested dependencies
in a tree data structure called dependency tree. Dependency trees can
grow unboundedly, but if $\requiv$ has finite index, it can be shown
that some parts can be discarded from time to time to keep their size
bounded. We store such reduced dependency trees as part of the control
states of the SSRT. The details of this construction constitute the
rest of this section. 

For a transduction $f$, let $B$ be the maximum of the bounds on the
number of left blocks shown in Lemma~\ref{lem:finiteLeftBlocks} and
the number of middle blocks in factored outputs of the form
$f(\underline{u} \mid (\sigma,d) \mid \underline{v})$. Let $(\Sigma\times
D)^{*}/\requiv$ be the set of equivalence classes of $\requiv$, let
$\hat{X}=\set{\bv{\theta,i} \mid \theta \in ((\Sigma\times
D)^{*}/\requiv)^*, 1 \le i \le B^{2}+B}$ and for $\theta \in
((\Sigma\times D)^{*}/\requiv)^*$, let $X_{\theta}=\set{\bv{\theta,i}
\mid 1 \le i \le B^{2}+B}$.  We denote by $\theta\drlst$ the sequence
obtained from $\theta$ by removing the right most equivalence class.
We use a set $\pbl=\set{P_{1}, \ldots, P_{B}}$ of parent references in
the following definition. We use a finite subset of $\hat{X}$ as
data word variables to construct SSRTs.
\begin{definition}
    \label{def:anticipateTree}
    Suppose $f$ is a transduction and $E$ is an equalizing scheme for
    $f$. A \emph{dependency tree} $T$ is a tuple $(\Theta,\pref,\bl)$,
    where the set of nodes $\Theta$ is a prefix closed finite subset of
    $((\Sigma\times D)^{*}/\requiv)^*$ and $\pref,\bl$ are labeling
    functions. The root is $\epsilon$ and if $\theta \in \Theta
\setminus \set{\epsilon}$, its parent is
    $\theta\drlst$.  The labeling functions are $\pref:\Theta \to
    (\Sigma\times D)^{*}/\fequiv$ and $\bl: \Theta \times [1,B] \to
    (\hat{X} \cup \pbl)^{*}$. We call $\bl(\theta,i)$ a \emph{block
    description}. The dependency tree is said to be \emph{reduced} if the
    following conditions are satisfied:
    \begin{itemize}
        \item every sequence $\theta$ in $\Theta$ has length that is bounded
	by $|(\Sigma\times D)^{*}/\requiv|+1$,
        \item $\pref$ labels all the leaves with a single equivalence
            class of $\fequiv$,
\item for every equivalence class $[v]_{f}^{E}$, there is exactly one
leaf $\theta$ such that the last equivalence class in $\theta$ is $[v]_{f}^{E}$,
        \item $\bl(\theta,i)\in (X_{\theta}\cup \pbl)^{*}$ and is of
            length at most $2B+1$ for all $\theta \in \Theta$ and $i
            \in [1,B]$ and
	\item for all $\theta \in \Theta$, each element of
	$X_{\theta}\cup \pbl$ occurs at most once in $\set{\bl(\theta,i) \mid
	1 \le i \le B}$.
\end{itemize}
\end{definition}

If $\fequiv$ and $\requiv$ have finite indices, there are finitely
many possible reduced dependency trees. Suppose $\theta=\theta'
\cdot [v]_{f}^{E}$ is in $\Theta$, $\pref(\theta)=[u]_{f}$ and
$\bl(\theta,1)=P_{1}\bv{\theta,1}P_{2}$. The intended meaning is that
there is a data word $u'$ that has been read by a SSRT and $u' \fequiv
u$.  The block description $\bl(\theta,1)=P_{1}\bv{\theta,1}P_{2}$ is
a template for assembling the first left block of $f(u'\mid
\underline{E(u')^{-1}(v)})$ from smaller blocks: take the first left
block in the parent node $\theta'$ ($P_1$ refers to the first left
block of the factored output assembled in the parent node), append to
it the contents of the data word variable $\bv{\theta,1}$, then append
the second left block in the parent node $\theta'$.  Intuitively, if
$u'=u'' \cdot (\sigma,d)$, then the first non-right block of
$f(\underline{u''} \mid \underline{(\sigma,d)} \mid
\underline{E(u')^{-1}}(v))$ is $(*,*,\lef)(*,*,\mi)(*,*,\lef)$ and
$P_{1}$ refers to the concretization of the first left block
$(*,*,\lef)$, $\bv{\theta,1}$ contains the concretization of the first
middle block $(*,*,\mi)$ and so on. The first left block in the parent
node $\theta'$ itself may consist of some parent references and the
contents of some other data word variables. This ``unrolling'' is
formalized below.
\begin{definition}
    \label{def:anticipateUnroll}
    Suppose $T$ is a dependency tree with set of nodes $\Theta$.
    The function $\ur: \Theta \times (\hat{X} \cup \pbl)^{*}
    \to \hat{X}^{*}$ is defined as follows. For $\theta \in
    \Theta$ and $\mu \in (\hat{X} \cup \pbl)^{*}$, $\ur(\theta,\mu)$
    is obtained from $\mu$ by replacing every occurrence of a parent
    reference $P_{i}$ by $\ur(\theta\drlst,\bl(\theta\drlst,i))$
    (replace by $\epsilon$ if $\theta=\epsilon$) for all $i$.
\end{definition}
Intuitively, an occurrence of $P_{i}$ in $\mu$ refers to the
$i$\tsc{th} left block in the parent node. If the current node is
$\theta$, the parent node is $\theta\drlst$, so we unroll $\mu$ by
inductively unrolling the $i$\tsc{th} left block of $\theta$'s parent,
which is given by $\ur(\theta\drlst,\bl(\theta\drlst,i))$. We are
interested in dependency trees that allow to compute all factored
outputs of the form $f(u \mid \underline{E(u)^{-1}(v)})$ by unrolling
appropriate leaves.  For convenience, we assume that
$f(\epsilon)=\epsilon$. Let $T_{\bot}=(\set{\epsilon},
\pref_{\epsilon}, \bl_{\epsilon})$, where
$\pref_{\epsilon}(\epsilon)=[\epsilon]_{f}$ and
$\bl_{\epsilon}(\epsilon,i)=\epsilon$ for all $i \in [1,B]$.
\begin{definition}
    \label{def:completeAntTree}
    Suppose $f$ is a transduction, $\val$ is a valuation assigning a
    data word to every element of $\hat{X}$ and $T$ is a
    dependency tree. The pair $(T,\val)$ is complete for a data word
    $u$ if $u=\epsilon$ and $T=T_{\bot}$, or $u \ne \epsilon$ and the
    following conditions are satisfied: for every equivalence class
    $[v]_{f}^{E}$, there exists a \emph{leaf} node
    $\theta=\theta'\cdot [v]_{f}^{E}$ such that
    $\pref(\theta)=[u]_{f}$ and for every $i$, the $i$\tsc{th} left
    block of $f(u \mid \underline{E(u)^{-1}(v)})$ is
    $\val(\ur(\theta,\bl(\theta,i)))$.
\end{definition}
We construct SSRTs that will have dependency trees in its states,
which will be complete for the data word read so far. As more symbols
of the input data word are read, the dependency tree and the
valuation for $\hat{X}$ are updated as defined next.
\begin{definition}
    \label{def:extCompAntTree}
    Suppose $f$ is a transduction, $E$ is an equalizing scheme
    and $T$ is either $T_\bot$ or a reduced dependency tree in which $\pref$ labels all
the leaves with $[u]_f$ for some data word $u$. Suppose $\ifl_{f}(u)=d_{m} \cdots d_{1}$,
    $d_0 \in D \setminus \set{d_{m}, \ldots, d_{1}}$, $\delta_0 \in D \setminus
    \set{\delta_{m}, \ldots, \delta_{1}}$, $(d,\eta) \in
    \set{(d_{i},\delta_{i}) \mid i \in [0,  m]}$
    and $\sigma \in \Sigma$. Let $\pi$ be a permutation tracking
influencing values on $u_q\cdot (\sigma,\eta)$ as defined
    in Definition~\ref{def:middleBlocks}.
    For every equivalence class $[v]_{f}^{E}$, there is a leaf node
    $\theta_{v}=\theta'\cdot [(\sigma,\eta)\cdot \pi(v)]_{f}^{E}$
    (or $\theta_{v}=\epsilon$, the root of the trivial dependency
    tree in case $u = \epsilon$).
    Let $u'$ be an arbitrary data word in the equivalence class
    $[u]_{f}$. The $(\sigma,\eta)$ extension of $T$ is defined to be
the tree obtained from $T$ as
    follows: for every equivalence class $[v]_{f}^{E}$, create a new leaf
    $\theta=\theta_{v} \cdot [v]_{f}^{E}$ (with $\theta_v$ as parent)
	and set $\pref(\theta)=[u'_q \cdot (\sigma,\eta)]_{f}$. For every $i
	\in [1,B]$, let $z$ be the $i$\tsc{th} non-right block in
	$f(\underline{u'_q}\mid \underline{(\sigma,\eta)}
	\mid \underline{\pi(v)} )$ ($z$ is a sequence of left and
middle blocks). Let $z'$ be obtained from $z$ by replacing
	$j$\tsc{th} left block with $P_{j}$ and $k$\tsc{th} middle
	block with $\bv{\theta,k}$ for all $j,k$. Set $\bl(\theta,i)$ to be $z'$. If
	there are internal nodes (nodes that are neither leaves nor
        the root) of this extended tree which do not have any
	of the newly added leaves as descendants, remove such nodes. The
	resulting tree $T'$ is the $(\sigma,\eta)$ extension of $T$.
Suppose $\val$ is a valuation for $\hat{X}$ such that $(T, \val)$ is
complete for $u$. The
	$(\sigma,d)$ extension $\val'$ of $\val$ is defined to be the
valuation obtained from $\val$ by
	setting $\val'( \bv{\theta,k})$ to be the $k$\tsc{th} middle block of
	$f(\underline{u} \mid (\sigma,d) \mid \underline{E(u\cdot
	(\sigma,d))^{-1}(v)})$ for every newly added leaf
	$\theta=\theta_v \cdot [v]_{f}^{E}$ and
	every $k \in [1,B]$. For all other variables, $\val'$
        coincides with $\val$. We call $(T', \val')$ the
	$(\sigma,d)$ extension of $(T, \val)$. 
\end{definition}
If some internal nodes are removed as described in
Definition~\ref{def:extCompAntTree}, it means that some dependencies have
vanished due to the extension. For a newly added leaf $\theta$, every element of
	$X_{\theta}\cup \pbl$ occurs at most once in $\set{\bl(\theta,i) \mid
	1 \le i \le B}$.
\begin{lemma}
    \label{lem:extAntTreeInv}
    If $(T,\val)$ is complete for some data word $u$ and $(T',\val')$
    is the $(\sigma,d)$ extension of $(T,\val)$, then $(T',\val')$
    is complete for $u \cdot (\sigma,d)$.
\end{lemma}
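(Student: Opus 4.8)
The plan is to establish the two clauses of Definition~\ref{def:completeAntTree} for $u\cdot(\sigma,d)$ directly at the leaves that the $(\sigma,d)$ extension newly creates. So fix an equivalence class $[v]_{f}^{E}$ of $\requiv$; by Definition~\ref{def:extCompAntTree} the extension creates exactly one new leaf for it, $\theta=\theta_{v}\cdot[v]_{f}^{E}$, where $\theta_{v}$ is the leaf of $T$ whose last component is $[(\sigma,\eta)\cdot\pi(v)]_{f}^{E}$ --- unique because $T$ is $T_{\bot}$ or a reduced dependency tree and so has at most one leaf per $\requiv$-class. Since the extension drops only internal nodes without newly-added descendants, $\theta$ and all its ancestors survive, so it suffices to verify the two clauses at $\theta$.

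\emph{The label.} One has $\pref(\theta)=[u'_{q}\cdot(\sigma,\eta)]_{f}$ for an arbitrary $u'\in[u]_{f}$, and this must be shown equal to $[u\cdot(\sigma,d)]_{f}$. As $u'\fequiv u$ and $u'_{q}=E(u')(u')$ is a permutation of $u'$, also $u'_{q}\fequiv u$; the equalizing scheme gives $\ifl_{f}(u'_{q})=\delta_{m}\cdots\delta_{1}$, while $d_{0}$ (resp.~$\delta_{0}$) is not $f$-influencing in $u$ (resp.~$u'_{q}$). Writing $(d,\eta)=(d_{i},\delta_{i})$ for the relevant $i\in[0,m]$, Lemma~\ref{lem:rightCongruences}(\ref{rc:fEqPreserved}) applied to $u$ and $u'_{q}$ yields $u\cdot(\sigma,d)\fequiv u'_{q}\cdot(\sigma,\eta)$, which is the first clause.

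\emph{The blocks.} Put $\bar{w}=E(u\cdot(\sigma,d))^{-1}(v)$. First I would invoke Proposition~\ref{prop:leftBlNRBlock}: the $i$\tsc{th} left block of $f(u\cdot(\sigma,d)\mid\underline{\bar{w}})$ is the concretization of the $i$\tsc{th} non-right block of $f(\underline{u}\mid\underline{(\sigma,d)}\mid\underline{\bar{w}})$, i.e.\ the interleaved concatenation of the concretizations of the left and middle blocks occurring in that non-right block. The concretization of the $j$\tsc{th} left block there is the $j$\tsc{th} left block of $f(u\mid\underline{(\sigma,d)\cdot\bar{w}})$, which by the first identity of Lemma~\ref{lem:computeLeftBlocks} equals the $j$\tsc{th} left block of $f(u\mid\underline{E(u)^{-1}((\sigma,\eta)\cdot\pi(v))})$, and by completeness of $(T,\val)$ for $u$ applied to the $\requiv$-class of $(\sigma,\eta)\cdot\pi(v)$ (whose leaf is $\theta_{v}$, with $\pref(\theta_{v})=[u]_{f}$) this is $\val(\ur(\theta_{v},\bl(\theta_{v},j)))$; the concretization of the $k$\tsc{th} middle block there is the $k$\tsc{th} middle block of $f(\underline{u}\mid(\sigma,d)\mid\underline{\bar{w}})$, which is $\val'(\bv{\theta,k})$ by the definition of the $(\sigma,d)$ extension of $\val$. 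Next I would note that $f(\underline{u}\mid\underline{(\sigma,d)}\mid\underline{\bar{w}})=f(\underline{u'_{q}}\mid\underline{(\sigma,\eta)}\mid\underline{\pi(v)})$: by the middle-block identities of Lemma~\ref{lem:computeLeftBlocks}, $f(\underline{u}\mid(\sigma,d)\mid\underline{\bar{w}})$ is obtained from $f_{|u|-|u'|}(\underline{u'_{q}}\mid(\sigma,\eta)\mid\underline{\pi(v)})$ by a permutation of data values ($E(u)^{-1}$, composed with the swap $\pi'$ when $(d,\eta)=(d_{0},\delta_{0})$), and after abstracting the middle parts the right-hand side consists only of triples of the form $(*,*,\cdot)$, on which permutations and offsets act trivially. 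Hence the $i$\tsc{th} non-right block of $f(\underline{u}\mid\underline{(\sigma,d)}\mid\underline{\bar{w}})$ carries exactly the interleaving of globally-numbered left blocks $j$ and middle blocks $k$ of the $i$\tsc{th} non-right block $z$ of $f(\underline{u'_{q}}\mid\underline{(\sigma,\eta)}\mid\underline{\pi(v)})$ --- the object from which $\bl(\theta,i)$ is built by replacing each $j$\tsc{th} left block by $P_{j}$ and each $k$\tsc{th} middle block by $\bv{\theta,k}$. Since $\ur(\theta,\bl(\theta,i))$ replaces each $P_{j}$ by $\ur(\theta\drlst,\bl(\theta\drlst,j))=\ur(\theta_{v},\bl(\theta_{v},j))$ and keeps each $\bv{\theta,k}$, and $\val'$ agrees with $\val$ on every variable of $\ur(\theta_{v},\bl(\theta_{v},j))$ (these are indexed by prefixes of $\theta_{v}$, untouched by the extension), evaluating $\ur(\theta,\bl(\theta,i))$ under $\val'$ reproduces the interleaved concatenation above, i.e.\ the $i$\tsc{th} left block of $f(u\cdot(\sigma,d)\mid\underline{\bar{w}})$. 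This is the second clause.

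\emph{Base case and main obstacle.} The case $u=\epsilon$ is handled separately: then $T=T_{\bot}$, only $i=0$ occurs, $\theta_{v}$ is the root, there are no left blocks or parent references, and the argument collapses to the definition of $\val'(\bv{\theta,k})$. The hard part will be the index matching in the third paragraph --- checking that the left-block index $j$ and middle-block index $k$ recorded in $\bl(\theta,i)$ (computed from the equalized $f(\underline{u'_{q}}\mid\underline{(\sigma,\eta)}\mid\underline{\pi(v)})$) are the same as for the concrete $f(\underline{u}\mid\underline{(\sigma,d)}\mid\underline{\bar{w}})$, and that unrolling $P_{j}$ at $\theta$ actually reaches the $j$\tsc{th} left block stored at $\theta_{v}$ --- so that the block-numbering conventions, the permutations $E(u)^{-1},\pi,\pi'$, and the offset $|u|-|u'|$ all line up; this is precisely what Lemma~\ref{lem:computeLeftBlocks} and Proposition~\ref{prop:leftBlNRBlock} are designed to supply. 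The remaining bookkeeping (survival of $\theta$ and its ancestors under node removal, and $j,k\le B$ so that $P_{j},\bv{\theta,k}$ exist) is routine.
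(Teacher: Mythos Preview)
Your proposal is correct and follows essentially the same route as the paper's proof: both fix a leaf $\theta=\theta_v\cdot[v]_f^E$, obtain $\pref(\theta)=[u\cdot(\sigma,d)]_f$ via Lemma~\ref{lem:rightCongruences}, and then assemble the $i$\tsc{th} left block from parent left blocks (via completeness of $(T,\val)$ and the first identity of Lemma~\ref{lem:computeLeftBlocks}) and newly stored middle blocks (via the definition of $\val'$). The only stylistic difference is that you make the block-index matching explicit --- invoking Proposition~\ref{prop:leftBlNRBlock} and the observation that a fully abstracted output (all triples of the form $(*,*,\cdot)$) is unchanged by permutations and offsets --- whereas the paper compresses this into the single remark that the two factored outputs are isomorphic (being related by a permutation from Lemma~\ref{lem:computeLeftBlocks}) and therefore share the same non-right block structure.
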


If $(T,\val)$ is complete for $u$ and $(T',\val')$ is the
$(\sigma,d)$ extension of $(T,\val)$, then the data word
$\val'(\bv{\theta,k})$ is the $k$\tsc{th} middle block of
$f(\underline{u} \mid (\sigma,d) \mid \underline{E( u \cdot
(\sigma,d))^{-1}(v)})$. We call $\bv{\theta,k}$ a
\emph{new middle block variable} and refer to it
later for defining variable updates in transitions of SSRTs. The
tree $T'$ may not be reduced since it may contain branches that
are too long.  Next we see how to eliminate long branches.

\begin{definition}
    \label{def:antTreeShort}
    Suppose $T$ is a dependency tree. A shortening of $T$ is
    obtained from $T$ as follows: let $\theta$ be an internal node
    that has only one child. Make the child of $\theta$ a child of
    $\theta$'s parent, bypassing and removing the original node
    $\theta$. Any descendant $\theta \cdot \theta'$ of $\theta$ in $T$
    is now identified by $\theta\drlst \cdot \theta'$. Set
    $\pref(\theta\drlst \cdot \theta')$ to be $\pref(\theta\cdot
    \theta')$, the label given by $\pref$ for the original descendant
    $\theta\cdot \theta'$ in $T$. Suppose $\theta \cdot [v]_{f}^{E}$
    is the only child of $\theta$ in $T$. For every $i \in [1,B]$, set
    $\bl(\theta\drlst \cdot [v]_{f}^{E},i)=\mu$, where $\mu$ is
    obtained from $\bl(\theta\cdot [v]_{f}^{E},i)$ by replacing every
    occurrence of $P_{j}$ by $\bl(\theta,j)$. For strict descendants
    $\theta\drlst \cdot [v]_{f}^{E} \cdot \theta'$ of $\theta\drlst \cdot
    [v]_{f}^{E}$ and for every $i \in [1,B]$, set $\bl(\theta\drlst \cdot
    [v]_{f}^{E} \cdot \theta',i) = \bl(\theta\cdot [v]_{f}^{E} \cdot
    \theta',i)$.
\end{definition}
Intuitively, $\theta$ has only one child, so only
one factored output is dependent on the factored output stored in
$\theta$ (all but one of the dependencies have vanished). Therefore,
we can remove $\theta$ and pass on the information stored there to its
only child. This is accomplished by replacing any occurrence of $P_j$
in a block description of the child by $\bl(\theta,j)$.
Figure~\ref{fig:antTreeShort} shows an example, where $\theta_1$ is
the only child of $\theta$. So $\theta$ is removed, $\theta_1$ becomes
$\theta_2$ and a child of $\theta\drlst$.
\begin{figure}[!htp]
    \begin{center}
        \begin{tikzpicture}[>=stealth]
            \node[state, label=180:$\theta\drlst$] (a1) at (0cm,0cm) {};
            \node[state, label=180:$\theta$] (a2) at ([yshift=-1cm]a1) {};
            \node[state, label=180:${\theta_{1}=\theta\cdot [v]_{f}^{E}}$] (a3) at ([yshift=-1cm]a2) {};

            \draw (a1) -- (a2) -- (a3);

            \node (a4) at ([yshift=-0.7cm]a3) {$\bl(\theta_{1},1)=$};
            \node[rectangle split, rectangle split parts=3, rectangle
split horizontal=true, draw=gray!40] (a5) at ([yshift=-0.5cm]a4)
{$P_{1}$\nodepart{two}$\bv{\theta_{1},1}$\nodepart{three}$P_{2}$};
            \node at ([xshift=1.5cm, yshift=0.2cm]a2) {$\bl(\theta,1)=P_{1}\bv{\theta,1}$};
            \node at ([xshift=1.5cm, yshift=-0.2cm]a2) {$\bl(\theta,2)=P_{2}\bv{\theta,2}$};
            \draw[dotted] (a3) -- (a4);

            \node[state, label=0:$\theta\drlst$] (b1) at ([xshift=6cm, yshift=-0.5cm]a1) {};
            \node[state, label=0:${\theta_{2}=\theta\drlst\cdot [v]_{f}^{E}}$] (b2) at ([yshift=-1cm]b1) {};

            \draw (b1) -- (b2);
            \node (b3) at ([yshift=-0.7cm]b2) {$\bl(\theta_{2},1)=$};
            \node[rectangle split, rectangle split parts=3, rectangle
split horizontal=true, draw=gray!40] (b4) at ([yshift=-0.5cm]b3)
{$P_{1}\bv{\theta,1}$\nodepart{two}$
\bv{\theta_{1},1}$\nodepart{three}$P_{2}\bv{\theta,2}$};
            \draw[dotted] (b2) -- (b3);
        \end{tikzpicture}
    \end{center}
    \caption{A dependency tree (left) and its shortening (right)}
    \label{fig:antTreeShort}
\end{figure}
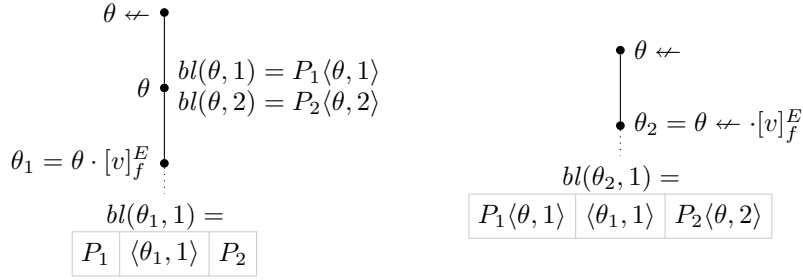
\begin{lemma}
  \label{lem:antTreeShortInv}
  If $(T,\val)$ is complete for a data word $u$ and $T'$ is a
shortening of $T$, then $(T',\val)$ is also complete for $u$.
\end{lemma}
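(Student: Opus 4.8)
The plan is to reduce the statement to a single equality between the unrolling functions $\ur$ of $T$ and of $T'$, and then to prove that equality by induction along the branch of $T$ that the shortening modifies. First I would observe that, as a tree, a shortening is merely a relabelling: the chosen internal node $\theta$ is deleted together with the edge to its unique child $\theta\cdot[v]_{f}^{E}$, every descendant $\theta\cdot\theta'$ of $\theta$ is renamed $\theta\drlst\cdot\theta'$, and nothing else moves. By Definition~\ref{def:antTreeShort}, up to this relabelling $\pref$ is unchanged at every node, and the only $\bl$ label that changes is that of the former unique child: $\bl_{T'}(\theta\drlst\cdot[v]_{f}^{E},i)$ is $\bl_{T}(\theta\cdot[v]_{f}^{E},i)$ with each parent reference $P_{j}$ replaced by $\bl_{T}(\theta,j)$. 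This relabelling is a bijection from the nodes of $T$ to those of $T'$ sending leaves to leaves (since $\theta$, being internal, is not a leaf) and preserving the last equivalence class in the sequence naming a leaf (the part that changes, $\theta\cdot[v]_{f}^{E}\mapsto\theta\drlst\cdot[v]_{f}^{E}$, is a prefix; the suffix is untouched). Hence, by Definition~\ref{def:completeAntTree}, showing that $(T',\val)$ is complete for $u$ amounts to proving, for every leaf $\theta^{\star}$ of $T$ with image $\psi$ in $T'$ and every $i\in[1,B]$,
\[
  \ur_{T'}\bigl(\psi,\bl_{T'}(\psi,i)\bigr)\;=\;\ur_{T}\bigl(\theta^{\star},\bl_{T}(\theta^{\star},i)\bigr);
\]
applying $\val$ to both sides and using completeness of $(T,\val)$ then identifies both with the $i$\tsc{th} left block of $f(u\mid\underline{E(u)^{-1}(v)})$, where $[v]_{f}^{E}$ is the common last equivalence class of $\psi$ and $\theta^{\star}$, and $\pref$-preservation takes care of the condition $\pref(\psi)=[u]_{f}$.

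For a leaf $\theta^{\star}$ that is neither $\theta\cdot[v]_{f}^{E}$ nor one of its descendants, the path from the root to $\theta^{\star}$ does not pass through $\theta$ (whose only child is $\theta\cdot[v]_{f}^{E}$), so $\psi=\theta^{\star}$ and all nodes on this path together with their $\bl$ labels are identical in $T$ and $T'$; hence $\ur_{T'}(\theta^{\star},\cdot)=\ur_{T}(\theta^{\star},\cdot)$ and the displayed equality is trivial. The content of the lemma lies in the branch through $\theta\cdot[v]_{f}^{E}$. There I would first establish the base equality
\[
  \ur_{T'}\bigl(\theta\drlst\cdot[v]_{f}^{E},\,\bl_{T'}(\theta\drlst\cdot[v]_{f}^{E},i)\bigr)\;=\;\ur_{T}\bigl(\theta\cdot[v]_{f}^{E},\,\bl_{T}(\theta\cdot[v]_{f}^{E},i)\bigr).
\]
Unfolding the left side: $\bl_{T'}(\theta\drlst\cdot[v]_{f}^{E},i)$ is $\bl_{T}(\theta\cdot[v]_{f}^{E},i)$ with each $P_{j}$ replaced by $\bl_{T}(\theta,j)$; then $\ur_{T'}$ replaces every parent reference still present — necessarily one inside an inserted copy of some $\bl_{T}(\theta,j)$ — according to $\ur_{T'}(\theta\drlst,\bl_{T'}(\theta\drlst,\cdot))$, which equals $\ur_{T}(\theta\drlst,\bl_{T}(\theta\drlst,\cdot))$ since $\theta\drlst$ and all its ancestors are untouched by the shortening. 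Replacing the parent references in $\bl_{T}(\theta,j)$ via $\ur_{T}(\theta\drlst,\bl_{T}(\theta\drlst,\cdot))$ is, by the defining clause of $\ur$ (recall $\theta\drlst$ is the parent of $\theta$), exactly $\ur_{T}(\theta,\bl_{T}(\theta,j))$; so the left side equals $\bl_{T}(\theta\cdot[v]_{f}^{E},i)$ with each $P_{j}$ replaced by $\ur_{T}(\theta,\bl_{T}(\theta,j))$, i.e.\ the right side. From this, an induction on $|\theta'|\ge 1$ shows $\ur_{T'}(\theta\drlst\cdot[v]_{f}^{E}\cdot\theta',\mu)=\ur_{T}(\theta\cdot[v]_{f}^{E}\cdot\theta',\mu)$ for every strict descendant $\theta\cdot[v]_{f}^{E}\cdot\theta'$ of $\theta\cdot[v]_{f}^{E}$ and every $\mu\in(\hat{X}\cup\pbl)^{*}$: the base $|\theta'|=1$ is the previous paragraph's computation relayed through one $\ur$ step, and the inductive step does the same, using that $\bl$ is unchanged on these strict descendants. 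Taking $\mu=\bl_{T'}(\psi,i)=\bl_{T}(\theta^{\star},i)$ for each leaf $\psi$ on this branch (and the base equality itself when $\theta^{\star}=\theta\cdot[v]_{f}^{E}$ is a leaf) yields the equality demanded in the reduction; the degenerate cases are immediate, since $u=\epsilon$ forces $T=T_{\bot}$, which has no internal node with a child and so admits no shortening.

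The main obstacle is the second displayed equality: correctly matching ``delete one level of the tree'' against ``pre-substitute that level's block descriptions into its child'', i.e.\ tracking the nested parent references through the recursive definition of $\ur$. Everything else — the bijection on nodes, preservation of $\pref$ and of last equivalence classes, and the untouched subtrees — is routine bookkeeping from Definitions~\ref{def:anticipateUnroll}, \ref{def:completeAntTree} and~\ref{def:antTreeShort}.
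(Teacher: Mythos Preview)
Your proposal is correct and follows essentially the same approach as the paper's proof: both reduce the claim to the key equality $\ur_{T'}(\theta\drlst\cdot[v]_{f}^{E},\bl_{T'}(\theta\drlst\cdot[v]_{f}^{E},i))=\ur_{T}(\theta\cdot[v]_{f}^{E},\bl_{T}(\theta\cdot[v]_{f}^{E},i))$ and establish it by unwinding one level of the recursion in Definition~\ref{def:anticipateUnroll}. Your write-up is in fact more explicit than the paper's, which proves only this key equality and asserts without further detail that the unrolling at every leaf is then unchanged; you spell out the trivial untouched-branch case and the induction down to strict descendants that the paper leaves implicit.
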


Note that the valuation $\val$ need not be changed to maintain
completeness of $(T',\val)$. Hence, any new middle block variable
will continue to store some middle block as before. Shortening will reduce
the lengths of paths in the tree; still the resulting tree may not be reduced,
since some node $\theta$ may have a block description $\bl(\theta,i)$ that is
too long and/or contains variables not in $X_{\theta}$. Next we explain how to
resolve this.

In a block description $\bl(\theta,i)$, a non-parent block is any infix
$\bl(\theta,i)[j,k]$ such that 1)$j=1$ or the $(j-1)\tsc{th}$ element
of $\bl(\theta,i)$ is a parent reference, 2)$k=|\bl(\theta,i)|$ or the
$(k+1)$\tsc{th} element of $\bl(\theta,i)$ is a parent reference and
3) for every $k'\in [j,k]$, the $k'$\tsc{th} element of
$\bl(\theta,i)$ is not a parent reference. Intuitively, a non-parent
block of $\bl(\theta,i)$ is a maximal infix consisting of elements of
$\hat{X}$ only.
\begin{definition}
    \label{def:antTreeTrim}
    Suppose $T$ is a dependency tree and $\val$ is a valuation for
    $X$. The trimming of $T$ is obtained from $T$ by performing the
    following for every node $\theta$: enumerate the set $\set{z \mid z
    \text{ is a non-parent block in }\bl(\theta,i), 1 \le i \le B}$ as $z_{1},z_{2}, \ldots,
    z_{r}$, choosing the order arbitrarily.
    If $\bl(\theta,i)$ for some $i$ contains $z_{j}$ for some $j$, replace $z_{j}$
    by $\bv{\theta,j}$. Perform such replacements for all $i$ and $j$.
	The trimming $\val'$ of $\val$ is obtained from $\val$ by setting
	$\val'(\bv{\theta,j})=\val(z_{j})$ for all $j$ and
	$\val'(\bv{\theta',k})=\epsilon$ for all $\bv{\theta',k}$ occurring in
	any $z_{j}$. For elements of $\hat{X}$ that neither occur in any $z_{j}$
	nor replace any $z_{j}$, $\val$ and $\val'$ coincide.
\end{definition}
For example, 
$\bl(\theta_{2},1)=P_{1}
\bv{\theta,1}\bv{\theta_{1},1}P_{2}\bv{\theta,2}$ in
Figure~\ref{fig:antTreeShort} is replaced by $P_{1}
\bv{\theta_{2},1}P_{2}\bv{\theta_{2},2}$. In the new valuation, we
have
$\val'(\bv{\theta_{2},1})=\val(\bv{\theta,1})\cdot
\val(\bv{\theta_{1},1})$, $\val'(\bv{\theta_{2},2})=\val(\theta,2)$
and
$\val'(\bv{\theta,1})=\val'(\bv{\theta_{1},1})=\val'(\bv{\theta,2})=\epsilon$.
The following result follows directly from definitions.
\begin{proposition}
    \label{prop:antTreeTrimInv}
    If $(T,\val)$ is complete for a data word $u$, then so is the
    trimming $(T',\val')$.
\end{proposition}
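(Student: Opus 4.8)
The plan is to exploit the fact that completeness of a pair $(T,\val)$ for $u$ (Definition~\ref{def:completeAntTree}) depends only on three things: the shape of the tree, the $\pref$-labels of the leaves, and the data word obtained from each block description by first unrolling it (Definition~\ref{def:anticipateUnroll}) and then applying $\val$. Trimming leaves the first two untouched and, I claim, also leaves the third unchanged. Write $\bl,\ur$ for the block-description function and the unrolling map of $T$, and $\bl',\ur'$ for those of $T'$, the trimming of $T$.

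First I would dispatch the trivial case $u=\epsilon$: then $T=T_\bot$, all of whose block descriptions are $\epsilon$, so there are no non-parent blocks, trimming does nothing, and $(T',\val')=(T_\bot,\val)$ is complete for $\epsilon$ by definition. Assume $u\ne\epsilon$. Trimming (Definition~\ref{def:antTreeTrim}) rewrites only $\bl$ and $\val$; it changes neither the node set $\Theta$, nor the parent map, nor $\pref$. Hence the leaves of $T'$ are exactly those of $T$, each with the same $\pref$-label, so for a given equivalence class $[v]_{f}^{E}$ the leaf $\theta=\theta'\cdot [v]_{f}^{E}$ with $\pref(\theta)=[u]_{f}$ furnished by completeness of $(T,\val)$ is still an appropriate leaf of $T'$. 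It therefore suffices to prove, for every node $\theta$ and every $i\in[1,B]$, the identity
$$\val'(\ur'(\theta,\bl'(\theta,i))) = \val(\ur(\theta,\bl(\theta,i))),$$
since then the $i$\tsc{th} left block of $f(u\mid \underline{E(u)^{-1}(v)})$, which equals $\val(\ur(\theta,\bl(\theta,i)))$ by completeness of $(T,\val)$, also equals $\val'(\ur'(\theta,\bl'(\theta,i)))$.

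I would prove this identity by induction on the depth of $\theta$, which matches the recursion defining $\ur$ (it only ever descends to $\theta\drlst$, and $\Theta$ is common to $T$ and $T'$). Decompose $\bl(\theta,i)=w_0 P_{k_1} w_1 \cdots P_{k_s} w_s$ with each $w_\ell\in\hat{X}^{*}$ a (possibly empty) non-parent block. Unrolling in $T$ replaces each $P_{k_\ell}$ by $\ur(\theta\drlst,\bl(\theta\drlst,k_\ell))$ (by $\epsilon$ at the root), so $\val(\ur(\theta,\bl(\theta,i)))$ is the concatenation of the $\val(w_\ell)$ interleaved with the $\val(\ur(\theta\drlst,\bl(\theta\drlst,k_\ell)))$. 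By Definition~\ref{def:antTreeTrim}, $\bl'(\theta,i)$ replaces each nonempty $w_\ell$ --- one of the enumerated non-parent blocks, say $z_{j_\ell}$ --- by the fresh variable $\bv{\theta,j_\ell}$ with $\val'(\bv{\theta,j_\ell})=\val(z_{j_\ell})=\val(w_\ell)$, and for parent references $\ur'$ recurses into $\bl'(\theta\drlst,k_\ell)$, where the induction hypothesis gives $\val'(\ur'(\theta\drlst,\bl'(\theta\drlst,k_\ell)))=\val(\ur(\theta\drlst,\bl(\theta\drlst,k_\ell)))$. Concatenating these equal factors in the same order yields the displayed identity, hence completeness of $(T',\val')$ for $u$.

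The one step where I expect genuine care to be needed is the equality $\val'(\bv{\theta,j_\ell})=\val(z_{j_\ell})$, which relies on the replacement variables $\bv{\theta,j}$ being truly fresh: not among the variables $\bv{\theta',k}$ that occur inside some $z_j$ and are reset to $\epsilon$, and not already occurring in a block description of $\theta$ --- otherwise the two clauses defining $\val'$ would conflict. To secure this I would use the slack in the index range ($\bv{\theta,\cdot}$ ranges over $1,\dots,B^2+B$) against the at most $B$ block descriptions of $\theta$, each of bounded length, so that enough indices remain unused, together with the fact that variable sets $X_\theta$ for distinct $\theta$ are disjoint, so the per-node rewrites do not interfere (and for the non-reduced trees to which trimming is applied, only finitely many stray variables appear, each handled identically). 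Once this bookkeeping is verified, the proposition indeed follows directly from the definitions, as claimed.
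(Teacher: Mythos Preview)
Your proposal is correct and is essentially a careful unpacking of what the paper asserts without proof: the paper simply says ``The following result follows directly from definitions'' and gives no argument. Your induction on the depth of $\theta$ to establish $\val'(\ur'(\theta,\bl'(\theta,i)))=\val(\ur(\theta,\bl(\theta,i)))$ is exactly the verification the paper leaves implicit, and the decomposition into non-parent blocks interleaved with parent references is the right way to see it.

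Your closing caveat about the freshness of the replacement variables $\bv{\theta,j}$ is a legitimate concern about whether Definition~\ref{def:antTreeTrim} is well-posed, but note that it is orthogonal to the proposition itself and the paper does not address it either. In practice trimming is only ever applied (via Lemma~\ref{lem:antTreeExtShortTrimRed}) to trees produced by extending and then shortening a reduced tree, where the block descriptions at each node $\theta$ contain variables from $X_{\theta'}$ for nodes $\theta'$ that have been bypassed and hence removed; the targets $\bv{\theta,j}\in X_\theta$ are therefore disjoint from the variables being absorbed, so no conflict arises. You are right that this bookkeeping is what makes the definition sound, even if the paper treats it as evident.
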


States of the SSRT we construct will have reduced dependency trees.
The following result is helpful in defining the SSRT transitions,
where we have to say how to obtain a new tree from an old one.
\begin{lemma}
    \label{lem:antTreeExtShortTrimRed}
    Suppose $T$ is a reduced dependency tree or $T_\bot$, $T_{1}$ is the
    $(\sigma,\eta)$ extension of $T$ for some $(\sigma,\eta) \in
    \Sigma \times \set{\delta_0,\delta_1, \ldots}$, $T_{2}$ is obtained from $T_{1}$ by shortening
    it as much as possible and $T_{3}$ is the trimming of
    $T_{2}$. Then $T_{3}$ is a reduced dependency tree.
\end{lemma}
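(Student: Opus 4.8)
\emph{Proof strategy.}
The plan is to verify, for $T_3$, the five defining conditions of a reduced dependency tree in Definition~\ref{def:anticipateTree}, tracking how each condition is affected by the three operations in turn: passing to $T_1$ (the $(\sigma,\eta)$-extension), to $T_2$ (exhaustive shortening), and to $T_3$ (trimming). Throughout I would keep as a standing invariant the property
\[
(\star)\qquad\text{for every node }\theta\text{, each element of }\hat X\cup\pbl\text{ occurs at most once in }\bl(\theta,1)\cdots\bl(\theta,B),
\]
which is strictly stronger than condition~5 (the latter only concerns $X_\theta\cup\pbl$) and which is what actually survives the substitutions performed by shortening.

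First I would analyse $T_1$. By Definition~\ref{def:extCompAntTree} the leaves of $T_1$ are exactly the freshly created nodes $\theta_v\cdot[v]_f^E$, one per equivalence class $[v]_f^E$ of $\requiv$ (the old leaves of $T$ become internal or are pruned), and all of them carry the same $\pref$-label $[u'_q\cdot(\sigma,\eta)]_f$; so conditions~2 and~3 hold for $T_1$, and they will persist, since shortening only drops interior components of node sequences (never the last one) and trimming changes neither $\Theta$ nor $\pref$. Also, after pruning, $\Theta$ is exactly the set of new leaves together with their ancestors, hence prefix closed, so $T_1$ is a legitimate dependency tree. For the block description of a new leaf $\theta$, $\bl(\theta,i)$ is the image $z'$ of the $i$-th non-right block $z$ of $f(\underline{u'_q}\mid\underline{(\sigma,\eta)}\mid\underline{\pi(v)})$; writing $a_i,b_i$ for the numbers of left- and middle-blocks inside $z$, the fact that left- and middle-blocks alternate inside a non-right block gives $a_i+b_i\le 2\min(a_i,b_i)+1$, and the concretization correspondence together with Lemma~\ref{lem:finiteLeftBlocks} and the choice of $B$ gives $\sum_i a_i\le B$ and $\sum_i b_i\le B$, whence $|z'|=a_i+b_i\le 2B+1$. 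Since each left (resp.\ middle) block lies in a unique non-right block, each $P_j$ and each $\bv{\theta,k}$ occurs in exactly one $\bl(\theta,i)$, so conditions~4 and~5, and in fact $(\star)$, hold for the new leaves; for the surviving old nodes the block descriptions are copied verbatim from $T$, so $(\star)$ and conditions~4,~5 hold there too. Only condition~1 (the depth bound) may fail for $T_1$, the tree having grown by one level.

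Next I would treat shortening. A short induction on shortening steps shows that $(\star)$ is preserved: when an internal node $\theta$ with unique child $c=\theta\cdot[v]_f^E$ is removed, the new block descriptions of the re-identified child arise from $\bl(c,\cdot)$ by substituting $\bl(\theta,j)$ for each (at most one) occurrence of $P_j$; since in the current tree the strings $\bl(\theta,1),\dots,\bl(\theta,B)$ have no repeated symbol even jointly, their alphabets are disjoint from the $X_c$-symbols already present, and each $P_j$ is replaced in at most one $\bl(c,i)$, no symbol can occur twice after the substitution. (The membership constraint $\bl(\theta,i)\in(X_\theta\cup\pbl)^*$ and the length bound $2B+1$ are the conditions shortening is allowed to break, by pulling in ``foreign'' variables from absorbed ancestors and by concatenation.) Re-parenting is consistent with $\drlst$, so prefix closedness and conditions~2,~3 persist; once $T_2$ is shortened as far as possible, every internal non-root node has at least two children, and combined with there being exactly $|(\Sigma\times D)^*/\requiv|$ leaves (one per $\requiv$-class), a routine counting argument bounds the depth of $T_2$ by $|(\Sigma\times D)^*/\requiv|+1$, settling condition~1 permanently.

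Finally I would handle trimming, applied once to every node. Fix a node $\theta$ of $T_2$. By $(\star)$ each $\bl(\theta,i)$ contains at most $B$ parent references, hence at most $B+1$ non-parent blocks, hence at most $B(B+1)=B^2+B$ in total, so the fresh names $\bv{\theta,1},\dots,\bv{\theta,r}$ used by the trimming genuinely lie in $X_\theta$. After trimming, $\bl(\theta,i)$ alternates parent references with single $X_\theta$-variables, so it lies in $(X_\theta\cup\pbl)^*$ and has length at most $2B+1$: this is condition~4. For condition~5, $(\star)$ implies that non-parent blocks coming from different $\bl(\theta,i)$'s are distinct strings (disjoint, non-empty sets of $\hat X$-symbols) and that any non-parent block occurs only once inside its own description, so each $z_j$ sits in a unique $\bl(\theta,i)$ and each $\bv{\theta,j}$, and each $P_j$, occurs at most once across $\bl(\theta,1)\cdots\bl(\theta,B)$; since trimming leaves $\Theta$ and $\pref$ alone, conditions~1,~2,~3 also survive, and the case $T=T_{\bot}$ is subsumed (it merely starts from the trivial tree). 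Assembling the three steps, $T_3$ satisfies all five conditions and is a reduced dependency tree. I expect the main technical obstacle to be precisely this shortening--trimming interplay: showing both that $(\star)$ survives the substitutions performed by shortening and that trimming has exactly the right number, $B^2+B$, of fresh variable names available to restore the length and membership constraints; the depth bound, by comparison, is the easy part.
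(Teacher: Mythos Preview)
Your approach is correct and matches the paper's much terser proof: verify each of the five conditions of Definition~\ref{def:anticipateTree} by tracking it through extension, shortening, and trimming, and bound the depth after exhaustive shortening by counting leaves against internal nodes that now all have at least two children. Your invariant $(\star)$ makes explicit what the paper only asserts in passing, namely that each parent reference still occurs at most once per node after shortening.

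There is, however, one small logical gap in your inductive preservation of $(\star)$. You claim that in a shortening step the $\hat X$-alphabet of $\bl(\theta,\cdot)$ is ``disjoint from the $X_c$-symbols already present'' in $\bl(c,\cdot)$, and you present this as a consequence of $(\star)$ at~$\theta$. It is not: $(\star)$ is a per-node statement and says nothing about two different nodes sharing an $\hat X$-symbol, and after earlier shortening steps neither $\bl(\theta,\cdot)$ nor $\bl(c,\cdot)$ need be confined to $X_\theta$ or $X_c$ any more. The clean fix is to strengthen $(\star)$ to the \emph{global} statement that every element of $\hat X$ occurs at most once across all block descriptions of all nodes of the current tree. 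This holds in $T_1$ (there $\bl(\alpha,\cdot)\in(X_\alpha\cup\pbl)^*$ for every node $\alpha$, and the sets $X_\alpha$ are pairwise disjoint), and it is clearly preserved by each shortening step, since the $\hat X$-symbols of the removed node were used nowhere else and are simply transferred to its unique child. The global invariant then yields both the per-node $(\star)$ and the cross-node disjointness you actually use; the rest of your argument goes through unchanged.
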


We will now extend the SSRT constructed in Construction~\ref{const:SSRTIflVal}
to transform input data words to output data words with origin
information. For any data word with origin information $w$, let
$\dropOrig(w)$ be the data word obtained from $w$ by discarding the third
component in every triple.
\begin{construction}
    \label{const:SSRTTransd}
    Suppose $f$ is a transduction satisfying all the conditions in
    Theorem~\ref{thm:mainResult}. Let $I$ be the maximum number of
    $f$-influencing values in any data word and let $B$ be the maximum
    number of blocks in any factored output of the form $f(u | \mid
\underline{v})$ or $f(\underline{u} \mid v \mid \underline{w})$. Consider a SSRT with set
    of registers $R=\set{R_{1}, \ldots, R_{I}}$ and data word
    variables $X=\set{\bv{\theta,i} \mid \theta \in ( (\Sigma \times
    D)^{*}/\requiv)^{*}, |\theta|\le |(\Sigma \times
    D)^{*}/\requiv|+1, i \in [1,B^{2}+B]}$. Every state is a triple
    $([u]_{f},\ptr,T)$ where $u$ is some data word, $T$ is a
	reduced dependency tree or $T_\bot$
	such that $\pref$ labels every leaf in $T$ with $[u]_f$ and
    $\ptr:[1,|\ifl_f(u)|]\to R$ is a pointer function. The initial
    state is $([\epsilon]_{f},\ptr_{\bot},T_{\bot})$. Let $\delta_0
\notin \set{\delta_{|\ifl_f(u)|},\ldots,\delta_{1}}$ be an arbitrary data
value. For every $T$ and for every transition $(([u]_f,\ptr), \sigma,\phi,
([u_q\cdot (\sigma,\delta_i)]_f, \ptr'), R',\update_\bot)$ given in
Construction~\ref{const:SSRTIflVal}, we will have
the following transition: $(([u]_f,\ptr, T), \sigma,\phi,
([u_q\cdot (\sigma,\delta_i)]_f, \ptr', T'), R',\update)$.
    Let $T_{1}$ be the $(\sigma,\delta_i)$ extension of $T$ and let
    $T_{2}$ be obtained from $T_{1}$ by shortening it as much as
    possible. $T'$ is defined to be the trimming of $T_{2}$.
    We define the update function $\update$ using an intermediate
function $\update_1$ and an arbitrary data word $u' \in [u]_f$.
    For every data word variable
    $\bv{\theta,i}$ that is not a new middle block variable in $T_1$, set
    $\update_{1}(\bv{\theta,i})=\bv{\theta,i}$. For every new middle block
    variable $\bv{\theta,k}$, say $\theta=\theta_{v} \cdot
    [v]_{f}^{E}$. Set $\update_{1}(\bv{\theta,k})=\dropOrig(z)$, where $z$ is obtained
    from the $k$\tsc{th} middle block of $f(\underline{E(u')(u')} \mid
    (\sigma,\delta_i) \mid \underline{\pi(v)})$ by replacing every occurrence
    of $\delta_{j}$ by $\ptr(j)$ for all $j \in [1,|\ifl_f(u)|]$ and replacing every occurrence
    of $\delta_0$ by $\curr$.  Here, $\pi$ is a permutation tracking
influencing values in $E(u')(u') \cdot (\sigma,\delta_i)$ as given in
    Definition~\ref{def:middleBlocks}. Next we define the function $\update$.
    While trimming $T_{2}$, suppose a non-parent block $z_{j}$ in a
    node $\theta$ was replaced by a data word variable
    $\bv{\theta,j}$. Define $\update(\bv{\theta,j})=\update_{1}(z_{j})$. For every
    data word variable $\bv{\theta_{1},k}$ occurring in $z_{j}$, define
    $\update(\bv{\theta_{1},k})=\epsilon$. For all other data word variables
    $\bv{\theta_{2},k'}$, define
    $\update(\bv{\theta_{2},k'})=\update_{1}(\bv{\theta_{2},k'})$.
    The output function $O$ is defined as follows: for every state
    $([u]_{f},\ptr,T)$, $O(([u]_{f},\ptr,T))=\ur(\theta,\bl(\theta,1))
    \cdot \cdots \cdot \ur(\theta,\bl(\theta,B))$ where $\theta$ is the leaf of
    $T$ such that $\theta=\theta'\cdot [\epsilon]_{f}^{E}$ ends in the
    equivalence class $[\epsilon]_{f}^{E}$.
\end{construction}
Lemma~\ref{lem:antTreeExtShortTrimRed} implies that if $T$ is $T_\bot$
or a reduced dependency tree, then so is $T'$. It is routine to
verify that this SSRT is deterministic and copyless.

\begin{lemma}
    \label{lem:SSRTTransdInv}
    Let the SSRT constructed in Construction~\ref{const:SSRTTransd} be $S$.
    After reading a data word $u$, $S$ reaches the configuration
    $( ([u]_{f},\ptr,T),\val, |u|)$ such that $\ptr(i)$ is the
    $i$\tsc{th} $f$-influencing value in $u$ and $(T,\val)$ is
    complete for $u$.
\end{lemma}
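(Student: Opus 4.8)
The proof is by induction on $|u|$. The base case $u=\epsilon$ is immediate: the SSRT starts in the configuration $(([\epsilon]_f,\ptr_\bot,T_\bot),\val_\epsilon,0)$, and by definition $(T_\bot,\val_\epsilon)$ is complete for $\epsilon$ and $\ptr_\bot$ trivially records the (empty) sequence of $f$-influencing values. For the inductive step, suppose after reading $u$ the SSRT $S$ is in $(([u]_f,\ptr,T),\val,|u|)$ with $\ptr(i)$ the $i$\tsc{th} $f$-influencing value of $u$ and $(T,\val)$ complete for $u$; we read one more symbol $(\sigma,d)$ and must check the invariant for $u\cdot(\sigma,d)$.

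First I would establish that the transition taken is the one the construction prescribes. Since $d$ equals the $i$\tsc{th} $f$-influencing value of $u$ for exactly one $i\in[1,|\ifl_f(u)|]$ (or is not $f$-influencing in $u$, which we take as $i=0$), the guard $\phi$ of Construction~\ref{const:SSRTIflVal} that is satisfied by $\val$ and $d$ is exactly the one indexed by this $i$; determinism guarantees uniqueness. Hence the transition $(([u]_f,\ptr,T),\sigma,\phi,([u_q\cdot(\sigma,\delta_i)]_f,\ptr',T'),R',\update)$ of Construction~\ref{const:SSRTTransd} is the one that fires. For the control-state and pointer components, I would invoke Lemma~\ref{lem:SSRTIflValInv} (equivalently, the argument behind it, using Lemma~\ref{lem:rightCongruences} to see that $[u_q\cdot(\sigma,\delta_i)]_f=[u\cdot(\sigma,d)]_f$ and that $\ptr'$ correctly tracks the $f$-influencing values of $u\cdot(\sigma,d)$); this part carries over verbatim since Construction~\ref{const:SSRTTransd} copies those transitions unchanged.

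The substantive part is the completeness of $(T',\val')$ for $u\cdot(\sigma,d)$, where $\val'$ is the valuation obtained by applying $\update$. I would argue in three stages, mirroring the three tree operations: (i) Let $T_1$ be the $(\sigma,\delta_i)$-extension of $T$ and $\val_1$ the $(\sigma,d)$-extension of $\val$; Lemma~\ref{lem:extAntTreeInv} gives that $(T_1,\val_1)$ is complete for $u\cdot(\sigma,d)$. Here I must check that the effect of $\update_1$ on the data word variables matches the abstract $(\sigma,d)$-extension of the valuation: for a new middle block variable $\bv{\theta,k}$, $\update_1(\bv{\theta,k})$ substitutes $\ptr(j)$ for $\delta_j$ and $\curr$ for $\delta_0$ into the $k$\tsc{th} middle block of $f(\underline{E(u')(u')}\mid(\sigma,\delta_i)\mid\underline{\pi(v)})$; by the inductive hypothesis $\val(\ptr(j))$ is the $j$\tsc{th} $f$-influencing value of $u$, and by Lemma~\ref{lem:computeLeftBlocks} (the middle-block clause), after applying $E(u)^{-1}$ this yields exactly the $k$\tsc{th} middle block of $f(\underline{u}\mid(\sigma,d)\mid\underline{E(u\cdot(\sigma,d))^{-1}(v)})$ — using invariance under permutations and, in the $i=0$ subcase, the extra conjugation by $\pi'$ which the register assignment $R'$ realizes. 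This translation between the abstract middle blocks of Definition~\ref{def:extCompAntTree} and the concrete register-substituted contents produced by $\update_1$ is the main obstacle: it is where the permutations $E(u)$, $\pi$, $\pi'$ all interact, and one must be careful that origins are handled correctly (via $\dropOrig$, since registers store only data values while origins in the output come from $\curr$ at the current input position). (ii) Let $T_2$ be $T_1$ shortened as much as possible; Lemma~\ref{lem:antTreeShortInv} gives that $(T_2,\val_1)$ is still complete, with no change to the valuation. (iii) Let $T_3=T'$ be the trimming of $T_2$ and $\val'$ the corresponding trimmed valuation obtained by $\update$; Proposition~\ref{prop:antTreeTrimInv} gives completeness of $(T',\val')$. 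I must check that the $\update$ defined in the construction is precisely the composition of "$\update_1$ then trim": the construction sets $\update(\bv{\theta,j})=\update_1(z_j)$ for each trimmed non-parent block $z_j$, zeroes the variables absorbed into $z_j$, and acts as $\update_1$ elsewhere — which is exactly the definition of the trimming of $\val_1$ in Definition~\ref{def:antTreeTrim} composed with $\update_1$. Finally Lemma~\ref{lem:antTreeExtShortTrimRed} confirms $T'$ is a reduced dependency tree (or $T_\bot$), so the target state is a legitimate state of $S$, completing the induction.
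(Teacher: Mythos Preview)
Your proposal is correct and follows essentially the same approach as the paper: induction on $|u|$, the pointer component via Lemma~\ref{lem:SSRTIflValInv}, and completeness of $(T',\val')$ via the three-stage extension/shortening/trimming pipeline using Lemma~\ref{lem:extAntTreeInv}, Lemma~\ref{lem:antTreeShortInv}, and Proposition~\ref{prop:antTreeTrimInv}, with Lemma~\ref{lem:computeLeftBlocks} bridging the abstract middle blocks of Definition~\ref{def:extCompAntTree} and the register-substituted output of $\update_1$. Two small points worth tightening: first, to argue that substituting $\ptr(j)$ for $\delta_j$ and $\curr$ for $\delta_0$ in the $k$\tsc{th} middle block of $f_z(\underline{u'_q}\mid(\sigma,\eta)\mid\underline{\pi(v)})$ is an \emph{exhaustive} substitution, the paper invokes Lemma~\ref{lem:currOrIfl} (only $\eta$ and the $f$-memorable values of $u'_q$, hence only $\delta_0,\ldots,\delta_m$, can appear there); second, your remark that ``the register assignment $R'$ realizes'' the conjugation by $\pi'$ in the $i=0$ subcase is slightly off --- $R'$ stores $d$ for \emph{future} steps, whereas the present middle-block computation handles $\pi'$ via the substitution of $\curr$ for $\delta_0$ in $\update_1$, since $E(u)^{-1}\pcomp\pi'(\delta_0)=d$.
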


\begin{proof}[Proof of reverse direction of
    Theorem~\ref{thm:mainResult}]
    Let $f$ be a transduction that satisfies all the properties
    stated in Theorem~\ref{thm:mainResult}.
    We infer from Lemma~\ref{lem:SSRTTransdInv} that
    the SSRT $S$ constructed in Construction~\ref{const:SSRTTransd}
satisfies the following property.
    After reading a data word $u$, $S$ reaches the configuration
    $( ([u]_{f},\ptr,T),\val, |u|)$ such that $\ptr(i)$ is the
    $i$\tsc{th} $f$-influencing value in $u$ and $(T,\val)$ is
    complete for $u$.
We define the output function of the SSRT such that
$\semantics{S}(u)=\val(\ur(\theta,\bl(\theta,1)) \cdot
    \cdots \cdot \ur(\theta,\bl(\theta,B)))$, where $\theta=\theta'
    \cdot [\epsilon]_{f}^{E}$ is the leaf of $T$ ending with
    $[\epsilon]_{f}^{E}$. Since $(T,\val)$ is complete for $u$, we
    infer that
    $\val(\ur(\theta,\bl(\theta,1)) \cdot \cdots \cdot
    \ur(\theta,\bl(\theta,B)))$ is the concatenation of the left
    blocks of $f(u \mid \underline{E(u)^{-1}(\epsilon)})=f(u)$. Hence,
    the SSRT $S$ implements the transduction $f$.
\end{proof}

\section{Properties of Transductions Implemented by SSRTs}
\label{sec:propTransdSSRT}
In this section, we prove the forward direction of our main result
(Theorem~\ref{thm:mainResult}).

For a valuation $\val$ and permutation
$\pi$, we denote by $\pi(\val)$ the valuation that assigns
$\pi(\val(r))$ to every register $r$ and $\pi(\val(x))$ to every data
word variable $x$.
 The following two results easily follow from
definitions.

\begin{proposition}
    \label{prop:SSRTInvPerm}
    Suppose a SSRT $S$ reaches a configuration $(q,\val,n)$ after
    reading a data word $u$. If $\pi$ is any permutation, then
    $S$ reaches the configuration $(q,\pi(\val),n)$ after reading
    $\pi(u)$.
\end{proposition}

\begin{proposition}
    \label{prop:SSRTTransdInvPerm}
    If a SSRT $S$ implements a transduction $f$, then $f$ is invariant
    under permutations and is without data peeking.
\end{proposition}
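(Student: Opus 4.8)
The plan is to derive both conclusions directly from the operational semantics of SSRTs, using Proposition~\ref{prop:SSRTInvPerm} for the permutation-invariance part. For invariance under permutations, fix a SSRT $S$ implementing $f$, a data word $u$ and a permutation $\pi$. Let $S$ reach configuration $(q,\val,n)$ after reading $u$, so that $n=|u|$ and, since $S$ implements $f$, the output $\semantics{S}(u)=O(q)[\,\cdots\,]$ equals $f(u)$ (in particular $O(q)$ is defined). By Proposition~\ref{prop:SSRTInvPerm}, after reading $\pi(u)$ the transducer reaches $(q,\pi(\val),n)$. The remaining task is to check that substituting $\pi(\val)$ into the output template $O(q)$, together with the last data value $\pi(d)$ of $\pi(u)$ in place of the last data value $d$ of $u$, yields exactly $\pi$ applied componentwise to $\semantics{S}(u)$: each occurrence of a variable $x$ contributes $\pi(\val(x))$ instead of $\val(x)$; each $(\gamma,\curr)$ contributes $(\gamma,\pi(d),n)$ instead of $(\gamma,d,n)$; and each $(\gamma,r)$ contributes $(\gamma,\pi(\val(r)),n)$ instead of $(\gamma,\val(r),n)$. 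In every case the letter $\gamma$ and origin $n$ are untouched and only the data value is transformed by $\pi$, so the resulting word is $\pi(\semantics{S}(u))=\pi(f(u))$, and since $S$ reaches the same state $q$ we also have $\semantics{S}(\pi(u))=f(\pi(u))$. Hence $f(\pi(u))=\pi(f(u))$, i.e.\ $f$ is invariant under permutations.

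For the absence of data peeking, I would argue that in any reachable valuation $\val$, every data value occurring in $\val(x)$ for a data word variable $x$, and every data value $\val(r)$ for a register $r$, has already appeared in the input data word read so far. This is proved by induction on the length of the input. The base case is the initial configuration, where every register is undefined and every variable holds $\epsilon$. For the inductive step, consider a transition reading $(\sigma,d)$ at step $i+1$: registers in $R'$ are assigned the value $d$, which is the data value at position $i+1$; and each variable $\val'(x)$ is built from $\update(x)$ by plugging in the old contents $\val(y)$ (whose data values are, by induction, among positions $1,\ldots,i$), the new triples $(\gamma,d,i+1)$ (data value $d$ at position $i+1$), and triples $(\gamma,\val(r),i+1)$ (data value $\val(r)$ already among positions $1,\ldots,i$ by induction), the latter carrying origin $i+1$. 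So every triple placed in a variable has a data value that occurs at some input position at or before its origin. The final output $\semantics{S}(u)=O(q)[\,\cdots\,]$ is assembled the same way from variables and from $(\gamma,\curr)$, $(\gamma,r)$ expressions with origin $n=|u|$, so each output triple $(\gamma,d',o)$ has some input position $j\le o$ with $\data(u,j)=d'$. This is exactly the no-data-peeking condition.

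I do not expect any serious obstacle here: both parts are straightforward bookkeeping on the substitution mechanism in the definition of $\semantics{S}$ and on the update semantics, with Proposition~\ref{prop:SSRTInvPerm} doing the real work for the first part. The only mildly delicate point is to be careful that when computing $\semantics{S}(\pi(u))$ one uses the last data value of $\pi(u)$, namely $\pi(d)$, and to observe that the substitution operation commutes with applying $\pi$ precisely because $\pi$ acts only on data values and leaves letters and origins fixed; once this is spelled out, the equality $\semantics{S}(\pi(u))=\pi(\semantics{S}(u))$ is immediate. Accordingly I would keep the proof short, essentially as a one-paragraph verification for each of the two claims.
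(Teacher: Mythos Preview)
Your proposal is correct and matches the paper's approach: the paper states that Proposition~\ref{prop:SSRTTransdInvPerm} (together with Proposition~\ref{prop:SSRTInvPerm}) ``easily follow[s] from definitions'' and gives no further detail, so your plan---using Proposition~\ref{prop:SSRTInvPerm} for the permutation part and a direct induction on the run for the no-data-peeking part---is exactly what is intended. One small refinement for the second half: the invariant you should carry through the induction is the per-triple statement ``every triple $(\gamma,d,o)$ stored in any variable satisfies $d=\data(u,j)$ for some $j\le o$'' (together with ``every defined register value occurs in the prefix read so far''), rather than the slightly weaker ``every data value in $\val(x)$ has appeared so far''; you do reach the correct per-triple conclusion, but stating the invariant this way from the outset makes the inductive step cleaner.
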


After a SSRT reads a data word, data values that are not stored in any
of the registers will not influence the rest of the operations.

\begin{lemma}
    \label{lem:ifValsInRegisters}
    Suppose a SSRT $S$ implements the transduction $f$. Any data value $d$ that
	is $f$-influencing in some data word $u$ will be stored in one of the registers
	of $S$ after reading $u$.
\end{lemma}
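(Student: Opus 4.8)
The plan is to argue by contraposition: I would assume that $d$ is $f$-influencing in $u$ but that after $S$ reads $u$ it reaches a configuration $(q,\val,|u|)$ in which no register holds $d$, and then derive that $d$ is in fact neither $f$-memorable nor $f$-vulnerable in $u$, contradicting Definition~\ref{def:iflSeq}. The key observation is that $d$'s ``fate'' in the rest of the computation is determined entirely by the configuration $(q,\val,|u|)$: since $S$ is deterministic, for any continuation $u'v$ the output $\semantics{S}(u\cdot u'\cdot v)$ (and hence each of the factored outputs $f(\underline u\mid v)$, $f(\underline u\mid u'\mid\underline v)$, $f(u\cdot u'\mid\underline{v})$, etc.) is a function of $q$, $\val$, and $u'v$ alone. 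So if I replace $d$ everywhere in $u$ by a safe replacement $d'$ that also does not occur in $val$'s registers nor in any of the relevant suffixes, the configuration changes only in the data-word-variable contents, and those changes are confined to occurrences of $d$ \emph{inside the variable contents}, not the registers or the control state.

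First I would make the notion of ``$d$ not in the registers'' robust: pick a safe replacement $d'$ for $d$ in $u$ (and in $u\cdot u'\cdot v$ for whatever $u',v$ are under consideration) that is genuinely fresh --- not appearing in $u$, not in $\val(r)$ for any $r\in R$, and not in $u'v$. By Proposition~\ref{prop:SSRTInvPerm}, reading $\pi(u)$ leads to $(q,\pi(\val),|u|)$ for the transposition $\pi$ swapping $d$ and $d'$; since $\pi$ fixes every register value (none equals $d$ or $d'$) and fixes the control state, the \emph{only} difference between the configuration after $u$ and the configuration after $\pi(u)=u[d/d']$ is that data-word variables may have had some internal occurrences of $d$ renamed to $d'$. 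Crucially, from that point on both runs execute exactly the same sequence of transitions on any common continuation $u'v$, because transitions are guarded only by (in)equalities with register contents, which are identical in the two configurations. Hence every output triple produced while reading $u'v$ is identical in the two runs, and every output triple whose origin lies in $u$ differs only possibly in its data value ($d$ versus $d'$), never in its letter or origin.

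Now I would translate this into the factored-output statements. For $f$-memorability: the factored output $f(\underline{u}\mid v)$ replaces every triple with origin in $u$ by $(*,*,\lef)$, so the per-triple difference between the two runs (which only affects data values of origin-in-$u$ triples) is completely erased; therefore $f(\underline{u[d/d']}\mid v)=f(\underline u\mid v)$ for all $v$, so $d$ is not $f$-memorable in $u$. For $f$-vulnerability: here we look at $f(u\cdot u'\mid\underline{v[d/d']})$ versus $f(u\cdot u'\mid\underline v)$, where $d$ does not occur in $u'$ and $d'$ is a safe replacement for $d$ in $u\cdot u'\cdot v$. In the factored output $f(u\cdot u'\mid\underline v)$ the triples with origin in the underlined suffix $v$ are replaced by $(*,*,\ri)$, so only triples with origin in $u\cdot u'$ survive with their data values. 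Reading $v$ versus $v[d/d']$ from the common configuration reached after $u\cdot u'$: since $d$ occurs in $u\cdot u'$ only inside $u$ (not in $u'$), and by the freshness argument $d$ is not in any register after $u\cdot u'$, swapping $d$ with the fresh $d'$ in the suffix again only renames internal variable occurrences and data values of origin-in-$u$ triples, never affecting letters, origins, or which transitions fire. The surviving (origin-in-$u\cdot u'$) triples therefore agree, giving $f(u\cdot u'\mid\underline{v[d/d']})=f(u\cdot u'\mid\underline v)$, so $d$ is not $f$-vulnerable in $u$ either.

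The main obstacle I anticipate is the bookkeeping around choosing $d'$ simultaneously fresh for all the quantities in play --- $u$, the register contents of the reached configuration, and the arbitrary $u',v$ from the definitions of memorable/vulnerable --- and verifying carefully that the guard (in)equalities in every transition of the continuation evaluate identically in the original and renamed runs (this is where copylessness and the determinism condition on $\Delta$ are used, together with Proposition~\ref{prop:SSRTInvPerm}). Once the ``two runs execute identical transition sequences and differ only in internal data values of variable contents'' invariant is nailed down, the reduction of each of the two cases to an equality of factored outputs is routine, since factored outputs are precisely designed to blank out exactly the positions whose data values might have changed.
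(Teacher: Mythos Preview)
Your plan is correct and matches the paper's proof: argue by contraposition, use the transposition swapping $d$ and $d'$ together with Proposition~\ref{prop:SSRTInvPerm} to see that the register contents---and hence every guard evaluation and the entire transition sequence on the continuation---are unchanged, so the factored outputs agree because any difference is confined to positions that get abstracted away. Two small corrections: you do not need extra freshness conditions on $d'$ (any safe replacement is automatically absent from the relevant word and therefore from every register, since registers can only store values that have been read) and copylessness plays no role here; also, in the vulnerability case the triples whose data values may differ between the two runs have origin in $v$, not in $u$---it is precisely those that are blanked to $(*,*,\ri)$.
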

 Now we identify data words
after reading which, a SSRT reaches similar cofigurations.
\begin{definition}
    \label{def:sequiv}
    For a SSRT $S$, we define a binary relation $\sequiv$ on data words as
    follows: $u_{1} \sequiv u_{2}$ if they satisfy the following
    conditions. Suppose $f$ is the transduction implemented by $S$, which
    reaches the configuration $(q_{1},\val_{1}, |u_1|)$ after reading $u_{1}$ and
    reaches $(q_{2},\val_{2}, |u_2|)$ after reading $u_{2}$.
    \begin{enumerate}
        \item $q_{1}=q_{2}$,
        \item for any two registers $r_{1},r_{2}$,
            we have $\val_{1}(r_{1})=\val_{1}(r_{2})$ iff
            $\val_{2}(r_{1})=\val_{2}(r_{2})$,
        \item\label{sequiv:ifl} for any register $r$, $\val_{1}(r)$ is the $i$\tsc{th}
            $f$-suffix influencing value (resp.~$f$-prefix influencing
            value) in $u_{1}$ iff $\val_{2}(r)$ is the $i$\tsc{th}
            $f$-suffix influencing value (resp.~$f$-prefix influencing
            value) in $u_{2}$,
        \item\label{sequiv:varEmpty} for any data word variable $x$, we have
            $\val_{1}(x)=\epsilon$ iff $\val_{2}(x)=\epsilon$ and
        \item\label{sequiv:varArrEquiv} for any two subsets
            $X_{1},X_{2} \subseteq X$ and any arrangements
            $\chi_{1},\chi_{2}$ of $X_{1},X_{2}$ respectively,
            $\val_{1}(\chi_{1})=\val_{1}(\chi_{2})$ iff
            $\val_{2}(\chi_{1})=\val_{2}(\chi_{2})$.
    \end{enumerate}
\end{definition}
An arrangement of a finite set $X_{1}$ is a sequence in $X_{1}^{*}$ in
which every element of $X_{1}$ occurs exactly once. It is routine to
verify that $\sequiv$ is an equivalence relation of finite index.

Suppose a SSRT $S$ reads a data word $u$, reaches the configuration
$(q,\val, |u|)$ and from there, continues to read a data word $v$. For
some data word variable $x \in X$, if $\val(x)$ is some data word $z$,
then none of the transitions executed while reading $v$ will split $z$
--- it might be appended or prepended with other data words and may be
moved to other variables but never split. Suppose $X=\set{x_{1},
\ldots, x_{m}}$. The transitions executed while reading $v$ can
arrange $\val(x_{1}), \ldots, \val(x_{m})$ in various ways, possibly
inserting other data words (whose origin is in $v$) in between. Hence,
any left block of $\semantics{S}(u \mid \underline{v})$ is
$\val(\chi)$, where $\chi$ is some arrangement of some subset
$X'\subseteq X$.  The following result is shown by proving that
$\sequiv$ refines $\fequiv$. The most difficult part of this proof is
to prove that if $u_1 \sequiv u_2$, then there exists a permutation
$\pi$ such that for all data words $u,v_1,v_2$, $f(u_1 \cdot u \mid
\underline{v_1}) = f(u_1 \cdot u \mid \underline{v_2})$ iff
$f(\pi(u_2) \cdot u \mid \underline{v_1}) = f(\pi(u_2) \cdot u \mid
\underline{v_2})$. The idea is to show that if $f(u_1 \cdot u \mid
\underline{v_1}) \ne f(u_1 \cdot u \mid \underline{v_2})$, then for
some arrangements $\chi_1,\chi_2$ of some subsets $X_1,X_2 \subseteq
X$, $\val_1(\chi_1) \ne \val_1(\chi_2)$ ($\val_1$ (resp.~$\val_2$) is
the valuation reached by $S$ after reading $u_1$ (resp.~$u_2$)).
Since $u_1 \sequiv u_2$, this implies that $\val_2(\chi_1) \ne
\val_2(\chi_2)$, which in turn implies that $f(\pi(u_2) \cdot u \mid
\underline{v_1}) \ne f(\pi(u_2) \cdot u \mid \underline{v_2})$.
\begin{lemma}
    \label{lem:fEquivFiniteIndex}
    If a SSRT $S$ implements a transduction $f$, then $\fequiv$ has
    finite index.
\end{lemma}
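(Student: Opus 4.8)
The plan is to show that the equivalence relation $\sequiv$ of Definition~\ref{def:sequiv}, which was already noted to have finite index, refines $\fequiv$; since a refinement of finite index forces the coarser relation to have finite index as well, this suffices. So fix $u_1 \sequiv u_2$ and let $(q,\val_1,|u_1|)$, $(q,\val_2,|u_2|)$ be the configurations $S$ reaches after reading $u_1$, $u_2$ respectively (same state $q$ by condition~1 of Definition~\ref{def:sequiv}). I must produce a permutation $\pi$ witnessing $u_2 \fequiv u_1$, i.e.\ verifying the three clauses of Definition~\ref{def:fEquivalence}. The natural choice is a permutation $\pi$ that sends $\val_2(r)$ to $\val_1(r)$ for each register $r$ (this is well defined on the stored data values by condition~2, it respects which stored values are memorable/vulnerable and their freshness order by condition~3, hence will handle the $\aifl$ clause), extended to a bijection of $D$ in an arbitrary origin-respecting way on the remaining values; $f$ being invariant under permutations (Proposition~\ref{prop:SSRTTransdInvPerm}) is used throughout to move $\pi$ across $f$. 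By Proposition~\ref{prop:SSRTInvPerm}, after reading $\pi(u_2)$ the transducer is in configuration $(q,\pi(\val_2),|u_2|)$, and by Lemma~\ref{lem:ifValsInRegisters} every $f$-influencing value of $u_2$ sits in some register, so $\pi$ maps $\ifl_f(u_2)$ onto $\ifl_f(u_1)$ position by position, giving the second clause $\aifl_f(\pi(u_2)) = \aifl_f(u_1)$; the first clause (the factored outputs agree up to the length offset) follows from the fact that from state $q$ the future behaviour of $S$ on any suffix $v$ depends only on $q$ and on the valuation, and $\pi(\val_2)$ and $\val_1$ agree on registers and on which variables are empty.

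The heart of the argument is the third clause: for all $u,v_1,v_2$, $f(u_1\cdot u\mid \underline{v_1}) = f(u_1\cdot u\mid \underline{v_2})$ iff $f(\pi(u_2)\cdot u\mid \underline{v_1}) = f(\pi(u_2)\cdot u\mid \underline{v_2})$. Here I use the structural observation recorded just before the statement: while $S$ reads $u$ and then $v_j$, no data word that was already sitting in a variable after reading $u_1$ (resp.\ $\pi(u_2)$) is ever split --- it is only appended, prepended, and moved --- so every left block of $\semantics{S}(u_1\cdot u\mid \underline{v_j}) = f(u_1\cdot u\mid \underline{v_j})$ has the form $\val_1'(\chi)$ for some arrangement $\chi$ of some subset $X'\subseteq X$, where $\val_1'$ is the valuation reached after reading $u_1\cdot u$. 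Since $u_1\sequiv u_2$ and $\sequiv$ is (as one checks by the same non-splitting argument, and this is where the bulk of the work lies) preserved by reading a common further word $u$, we get $u_1\cdot u \sequiv u_2 \cdot u$, hence condition~5 of Definition~\ref{def:sequiv} gives $\val_1'(\chi_1) = \val_1'(\chi_2)$ iff $\val_2'(\chi_1) = \val_2'(\chi_2)$ for all arrangements of all subsets. So if $f(u_1\cdot u\mid \underline{v_1}) \ne f(u_1\cdot u\mid \underline{v_2})$, the two factored outputs differ either in their block structure (number of blocks, which letters/origins up to renaming appear) --- which is a function of the sequence of transitions taken and hence only of the common suffix behaviour from state $q$, so it is the same on both sides --- or in the actual contents of some block, which then exhibits a pair of arrangements $\chi_1,\chi_2$ with $\val_1'(\chi_1)\ne\val_1'(\chi_2)$, forcing $\val_2'(\chi_1)\ne\val_2'(\chi_2)$ and thus $f(\pi(u_2)\cdot u\mid \underline{v_1}) \ne f(\pi(u_2)\cdot u\mid \underline{v_2})$; applying $\pi^{-1}$ and invariance under permutations handles the direction in which $\pi$ is applied. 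The symmetric direction is identical.

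The main obstacle is the bookkeeping in this last step: making precise the claim that ``a data word stored in a variable is never split'' and that therefore every left block of $f(u_1 \cdot u \mid \underline{v})$ is genuinely an arrangement $\val'(\chi)$ of variable contents plus fresh material with origin in $uv$, and then disentangling ``the factored outputs differ because of block structure'' (which depends only on $q$, hence agrees on the two sides) from ``they differ because of block contents'' (which is exactly what condition~5 of $\sequiv$ controls). One also has to be careful that the arrangements $\chi_1,\chi_2$ witnessing a content-mismatch can be chosen to range over genuine subsets of $X$ with each variable used at most once --- this is where copylessness of $S$ is essential --- and that passing from $u_1 \sequiv u_2$ to $u_1 \cdot u \sequiv u_2 \cdot u$ really does go through for each of the five conditions, in particular condition~3 about suffix/prefix-influencing values, which needs Lemma~\ref{lem:iflValsMonotonic} and Lemma~\ref{lem:ifValsInRegisters}. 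Once that is in place, the finite-index conclusion is immediate since $\sequiv$ has finite index and refines $\fequiv$.
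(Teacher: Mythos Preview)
Your overall strategy --- show that $\sequiv$ refines $\fequiv$ --- is exactly the paper's, and your treatment of the first two clauses of Definition~\ref{def:fEquivalence} is essentially identical to the paper's (same choice of $\pi$, same use of Proposition~\ref{prop:SSRTInvPerm} and Lemma~\ref{lem:ifValsInRegisters}).

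The gap is in the third clause. You propose to prove it by showing that $\sequiv$ is a right congruence, i.e.\ $u_1\sequiv u_2$ implies $u_1\cdot u\sequiv u_2\cdot u$. This is false. The relation $\sequiv$ records only the \emph{equality pattern} among register contents, not the actual values; if $\val_1(r)=d$ and $\val_2(r)=d'\neq d$, then reading the single symbol $(\sigma,d)$ fires the $r^{=}$ branch from $(q,\val_1)$ but the $r^{\neq}$ branch from $(q,\val_2)$, landing in different states. So condition~1 of Definition~\ref{def:sequiv} already fails for $u_1\cdot(\sigma,d)$ versus $u_2\cdot(\sigma,d)$. Lemmas~\ref{lem:iflValsMonotonic} and~\ref{lem:ifValsInRegisters} cannot help here; the problem is about which transition fires, not about which values are influencing. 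You could try to repair this by replacing $u_2\cdot u$ with $\pi(u_2)\cdot u$ (so that the register contents now literally coincide and the same transitions fire), but then you would still have to verify condition~\ref{sequiv:varArrEquiv} for the valuations $\val_1'$ and $\pi(\val_2)'$ reached after reading the extra $u$, and condition~\ref{sequiv:ifl} for $u_1\cdot u$ versus $\pi(u_2)\cdot u$; neither is automatic, and the latter is in fact not needed for the third clause at all.

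The paper sidesteps the right-congruence question entirely. Instead of moving to the valuation after $u_1\cdot u$, it stays at the original valuation $\val_1$ and uses Lemma~\ref{lem:facOpEqualityFromParts} (contrapositive) to split $f(u_1\cdot u\mid\underline{v_1})\neq f(u_1\cdot u\mid\underline{v_2})$ into two cases: either $f(u_1\mid\underline{u\cdot v_1})\neq f(u_1\mid\underline{u\cdot v_2})$, whose left blocks are arrangements $\val_1(\chi)$ at the \emph{original} valuation (so condition~\ref{sequiv:varArrEquiv} of $\sequiv$ applies directly, no congruence needed); or $f(\underline{u_1}\mid u\mid\underline{v_1})\neq f(\underline{u_1}\mid u\mid\underline{v_2})$, which transfers to $\pi(u_2)$ via the already-established first clause and point~\ref{rch:rightToAbstract} of Lemma~\ref{lem:rightCongruenceHelper}. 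This decomposition is what lets the paper use condition~\ref{sequiv:varArrEquiv} without ever extending $\sequiv$ along $u$.
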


\begin{proof}[Proof of forward direction of Theorem~\ref{thm:mainResult}]
Suppose $f$ is the transduction implemented by a SSRT $S$.
Lemma~\ref{lem:fEquivFiniteIndex} implies that $\fequiv$ has finite
index.
Proposition~\ref{prop:SSRTTransdInvPerm} implies that $f$ is invariant under
permutations and is without data peeking. The output of $S$ on any input
is the concatenation of the data words stored in some variables in $S$
and constantly many symbols coming from the output finction of $S$.
The contents of data word variables are generated by transitions when
reading input symbols and each transition can write only constantly
many symbols into any data word variable after reading one input
symbol. After some content is written into a data word variable, it is
never duplicated into multiple copies since the transitions of $S$ are
copyless. Hence, any input position can be the origin of only
constantly many output positions. Hence, $f$ has linear blow up.
\end{proof}

\section{Future Work}
One direction to explore is whether there is a notion of minimal
canonical SSRT and if a given SSRT can be reduced to an equivalent
minimal one.  Adding a linear order on the data domain, logical
characterization of SSRTs and studying two way transducer models
with data are some more interesting studies.

Using nominal automata, techniques for
finite alphabets can often be elegantly carried over to infinite
alphabets, as done in \cite{MSSKS2017}, for example. It would be
interesting to see if the same can be done for streaming transducers over
infinite alphabets. Using concepts from the theory of  nominal
automata, recent work \cite{BS2020} has shown that an atom extension
of streaming string transducers is equivalent to a certain class of two way
transducers.  This model of transducers is a restriction of SSRTs and
is robust like regular languages over finite alphabets. It would also
be interesting to see how can techniques in this extended abstract
be simplified to work on the transducer model presented in
\cite{BS2020}.

\bibliography{biblio}
%
%
\appendix
\onecolumn
\section{Fundamental Properties of Transductions}
\label{app:Fundamental}
The following result says that if a transduction is invariant under
permutations, then so are all its factored outputs.
\begin{lemma}
    \label{lem:factoredOutputInvariant}
    Suppose $f$ is a transduction that is invariant under
    permutations, $u,v,w$ are data words, $\pi$ is any permutation and
    $z$ is any integer.
    Then $\pi( f_{z}( \underline{u} \mid v)) = f_{z}(\underline{\pi(u)} \mid
    \pi(v))$, $\pi( f_{z}(u \mid \underline{v})) = f_{z}( \pi(u) \mid
    \underline{\pi(v)})$ and $\pi(f_{z}(\underline{u} \mid v \mid
    \underline{w})) = f_{z}(\underline{\pi(u)} \mid \pi(v) \mid
    \underline{\pi(w)})$.
\end{lemma}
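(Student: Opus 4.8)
The plan is to prove the three identities uniformly, by viewing each factored output as the result of applying to $f(uv)$ (or to $f(uvw)$) a short sequence of syntactic rewriting steps --- selecting the triples whose origin lies in an underlined part and turning them into a placeholder, collapsing maximal runs of that placeholder, and finally shifting the surviving origins by $z$ --- and showing that every one of these steps commutes with the application of a permutation $\pi$. The only genuinely transduction-specific ingredient is the permutation invariance of $f$: since $\pi$ acts letterwise we have $\pi(uv)=\pi(u)\,\pi(v)$ and $\pi(uvw)=\pi(u)\,\pi(v)\,\pi(w)$, so invariance gives $f(\pi(u)\,\pi(v))=f(\pi(uv))=\pi(f(uv))$ and likewise with three factors.

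First I would record how $\pi$ acts on a data word with origin information over $\Gamma$: it sends each triple $(\gamma,d,o)$ to $(\gamma,\pi(d),o)$ and, having no data value to change, fixes each placeholder triple $(*,*,\lef)$, $(*,*,\mi)$, $(*,*,\ri)$; moreover $|\pi(w)|=|w|$ for every data word $w$. Because abstract origin in Definition~\ref{def:factoredOutputs} depends only on the position $o$ and the lengths of the factors, and these lengths are preserved by $\pi$, the triple $(\gamma,d,o)$ of $f(uv)$ and the corresponding triple $(\gamma,\pi(d),o)$ of $\pi(f(uv))=f(\pi(u)\,\pi(v))$ have the same abstract origin. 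Hence the first rewriting step --- replace a triple by the appropriate placeholder exactly when its abstract origin names an underlined part --- selects corresponding triples in $f(uv)$ and in $f(\pi(u)\,\pi(v))$, and since $\pi$ fixes the placeholder it commutes with this step. The second step, collapsing each maximal run of a given placeholder to a single copy, depends only on which positions are placeholder positions and on their ordering, both of which $\pi$ leaves untouched, so it commutes with $\pi$ as well.

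Finally, the offset step replaces each surviving triple $(\gamma,d,o)$ by $(\gamma,d,o+z)$ and leaves placeholder triples alone; as $\pi$ modifies data values only and never origins, offsetting by $z$ and applying $\pi$ commute. Composing the three commuting steps and using $\pi(f(uv))=f(\pi(u)\,\pi(v))$ at the base yields
\[
  \pi\bigl(f_z(\underline{u}\mid v)\bigr)=f_z\bigl(\underline{\pi(u)}\mid \pi(v)\bigr).
\]
The identity for $f_z(u\mid\underline{v})$ is the same computation with the other part underlined (placeholder $(*,*,\ri)$ in place of $(*,*,\lef)$), and the three-factor identity $\pi(f_z(\underline{u}\mid v\mid\underline{w}))=f_z(\underline{\pi(u)}\mid \pi(v)\mid\underline{\pi(w)})$ follows by running the argument on $f(uvw)$, replacing $u$-origin and $w$-origin triples by $(*,*,\lef)$ and $(*,*,\ri)$ respectively. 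I expect no real obstacle here: the proof is pure bookkeeping, and the single point to be careful about is exactly the observation above --- that each rewriting step is determined by combinatorial data (positions, letters, factor lengths, placeholder symbols) rather than by the data values themselves, which is precisely what makes it invisible to $\pi$.
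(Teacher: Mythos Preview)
Your proposal is correct and follows essentially the same approach as the paper's proof: both start from $\pi(f(uv))=f(\pi(u)\,\pi(v))$ and argue that the factoring operations (replacing triples by placeholders based on origin, collapsing runs, offsetting) commute with $\pi$ because they depend only on origins and positions, not on data values. The paper carries this out by explicitly tracking what happens to the $i$-th triple on each side, whereas you phrase it as a composition of commuting rewriting steps, but the argument is the same.
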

\begin{proof}
    From the invariance of $f$ under permutations, we have $f(\pi(u)
    \cdot \pi(v)) = \pi( f( u \cdot v))$. Adding $z$ to every triple
    on both sides, we get
    \begin{align*}
        f_{z}(\pi(u) \cdot \pi(v)) &= \pi( f_{z}( u \cdot v)) \enspace .
    \end{align*}
    For every $i \in [1,|f_{z}(\pi(u) \cdot \pi(v))|]$, we perform the
    following on the LHS of the above equation: let $(\gamma,d,o)$ be
    the $i$\textsuperscript{th} triple in the LHS; if $o-z \in [1,|u|]$,
    replace the triple by $(*,*,\lef)$. After performing this change
    for every $i$, merge consecutive occurrences of $(*,*,\lef)$ into
    a single triple $(*,*,\lef)$. At the end, we get
    $f_{z}(\underline{\pi(u)} \mid \pi(v))$.

    Now perform exactly the same operations not on the RHS $\pi(f_z (u
    \cdot v))$, but on $f_{z}(u \cdot v)$. The $i$\textsuperscript{th}
    triple will be $(\gamma,\pi^{-1}(d),o)$ and it changes to
    $(*,*,\lef)$ iff the $i$\textsuperscript{th} triple $(\gamma,d,o)$
    in the LHS changed to $(*,*,\lef)$. Now, if we merge consecutive
    occurrences of $(*,*,\lef)$ into a single triple $(*,*,\lef)$, we
    get $f_{z}(\underline{u} \mid v)$. If we now apply the permutation
    $\pi$ to this, we get $\pi(f_{z}(\underline{u} \mid v))$, but we also
    get exactly the same sequence of triples we got from LHS after the
    changes, which is $f_{z}(\underline{\pi(u)} \mid \pi(v))$. Hence,
    $f_{z}(\underline{\pi(u)} \mid \pi(v)) = \pi(f_{z}(\underline{u} \mid
    v))$. The proofs of the other two equalities are similar.
\end{proof}

The following result says that the influencing values of a data word
are affected by a permutation as expected.
\begin{lemma}
    \label{lem:iflInvUnderPerm}
    If $f$ is a transduction that is invariant under permutations
    and $u$ is a data word, then for any permutation
    $\pi$, $\aifl_{f}(\pi(u)) = \pi(\aifl_{f}(u))$.
\end{lemma}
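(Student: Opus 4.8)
The plan is to reduce the statement to three facts about a fixed permutation $\pi$: (i) $d$ is $f$-memorable in $u$ iff $\pi(d)$ is $f$-memorable in $\pi(u)$; (ii) $d$ is $f$-vulnerable in $u$ iff $\pi(d)$ is $f$-vulnerable in $\pi(u)$; and (iii) $d_{1}$ is fresher than $d_{2}$ in $u$ iff $\pi(d_{1})$ is fresher than $\pi(d_{2})$ in $\pi(u)$. Granting these, since $\aifl_{f}(u)$ is just the list of $f$-influencing (i.e.\ $f$-memorable or $f$-vulnerable) values of $u$ ordered by freshness, tagged with their types $\iflsp,\ifls,\iflp$ (which are themselves defined purely from memorability and vulnerability), facts (i)--(iii) force $\aifl_{f}(\pi(u))$ to list exactly $\pi(d_{m}),\ldots,\pi(d_{1})$ in the same order and with the same types whenever $\aifl_{f}(u)=(d_{m},t(d_{m}))\cdots(d_{1},t(d_{1}))$. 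That sequence is exactly $\pi(\aifl_{f}(u))$, where $\pi$ is understood to act on the data-value component of each pair.

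Fact (iii) is immediate: $\pi$ does not move positions, so the last occurrence of $d$ in $u$ sits at the same position as the last occurrence of $\pi(d)$ in $\pi(u)$, and freshness is defined solely in terms of the positions of last occurrences.

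For (i), suppose $d$ is $f$-memorable in $u$, witnessed by a data word $v$ and a safe replacement $d'$ for $d$ in $u$ with $f(\underline{u[d/d']}\mid v)\ne f(\underline{u}\mid v)$. First I would record the elementary identity $\pi(u[d/d'])=\pi(u)[\pi(d)/\pi(d')]$: a position carrying $d$ ends up carrying $\pi(d')$ under both operations, and any other position carries $\pi$ of its original value, which is $\ne\pi(d)$ and hence not substituted. Applying $\pi$ to the relation $u[d/d']\iso u$ and using that $\iso$ is preserved by permutations then yields $\pi(u)[\pi(d)/\pi(d')]\iso\pi(u)$, i.e.\ $\pi(d')$ is a safe replacement for $\pi(d)$ in $\pi(u)$. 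Next, apply $\pi$ to both sides of $f(\underline{u[d/d']}\mid v)\ne f(\underline{u}\mid v)$; injectivity of $\pi$ preserves the inequality, and Lemma~\ref{lem:factoredOutputInvariant} (with offset $0$) rewrites it as $f(\underline{\pi(u)[\pi(d)/\pi(d')]}\mid\pi(v))\ne f(\underline{\pi(u)}\mid\pi(v))$, so $\pi(v)$ and $\pi(d')$ witness that $\pi(d)$ is $f$-memorable in $\pi(u)$. The converse direction is the same argument run with $\pi^{-1}$ in place of $\pi$.

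Fact (ii) is handled in exactly the same way, now using the suffix-side instance $\pi(f(u\mid\underline{v}))=f(\pi(u)\mid\underline{\pi(v)})$ of Lemma~\ref{lem:factoredOutputInvariant}: a witness $u',v,d'$ for $d$ being $f$-vulnerable in $u$ maps to $\pi(u'),\pi(v),\pi(d')$ for $\pi(d)$ in $\pi(u)$; that $\pi(d)$ does not occur in $\pi(u')$ is clear, that $\pi(d')$ is a safe replacement for $\pi(d)$ in $\pi(u\cdot u'\cdot v)=\pi(u)\cdot\pi(u')\cdot\pi(v)$ follows as in (i), and the inequality $f(\pi(u)\cdot\pi(u')\mid\underline{\pi(v)[\pi(d)/\pi(d')]})\ne f(\pi(u)\cdot\pi(u')\mid\underline{\pi(v)})$ transfers by Lemma~\ref{lem:factoredOutputInvariant}. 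I do not expect a genuine obstacle here: the proof is routine once one is careful that $\pi$ commutes with the substitution $[\,\cdot/\cdot\,]$ and with $\iso$, and that the type labels are determined by memorability and vulnerability alone, so that (i) and (ii) already pin them down.
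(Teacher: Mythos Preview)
Your proposal is correct and follows essentially the same approach as the paper: transport a witness $(v,d')$ for memorability (resp.\ $(u',v,d')$ for vulnerability) via $\pi$ and appeal to Lemma~\ref{lem:factoredOutputInvariant}, with the converse direction handled by $\pi^{-1}$. The only cosmetic difference is that the paper phrases the reduction position-wise (``the data value at position $j$ of $u$ is $f$-memorable iff the data value at position $j$ of $\pi(u)$ is''), which folds your freshness fact~(iii) into the formulation, whereas you state it separately.
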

\begin{proof}
    It is sufficient to prove that for any position $j$ of $u$, the
data value in the $j$\textsuperscript{th} position of $u$ is a
$f$-memorable value in $u$ iff the data value in the
$j$\textsuperscript{th} position of $\pi(u)$ is a $f$-memorable
value in $\pi(u)$ and similarly for $f$-vulnerable
values. Indeed, suppose $d$ is the data value in the
$j$\textsuperscript{th} position of $u$ and it is a $f$-memorable
value in $u$. By Definition~\ref{def:fMemorable},
there exists a data word $v$ and a data value $d'$ that is a safe
replacement for $d$ in $u$ such that $f(\underline{u[d/d']}
\mid v) \ne f(\underline{u} \mid v)$. The data value at
$j$\textsuperscript{th} position of $\pi(u)$ is $\pi(d)$ and the word
$\pi(v)$ and the data value $\pi(d')$ witnesses that $\pi(d)$ is a
$f$-memorable in $\pi(u)$. Indeed, if $f(\underline{u[d/d']}
\mid v) \ne f(\underline{u} \mid v)$, then
Lemma~\ref{lem:factoredOutputInvariant} implies that $f(\underline{\pi(u)[\pi(d)/\pi(d')]} \mid
\pi(v)) \ne f(\underline{\pi(u)} \mid \pi(v))$. The converse direction of
the proof is symmetric, using the permutation $\pi^{-1}$.

Suppose $d$ is the data value in the $j$\textsuperscript{th} position
of $u$ and it is a $f$-vulnerable value in $u$. By
Definition~\ref{def:fMemorable}, there exist data words $u',v$
and a data value $d'$ that is a safe replacement for $d$ in $u\cdot u'
\cdot v$ such that $d$ doesn't occur in $u'$ and $f(u \cdot u' \mid
\underline{v}) \ne f(u \cdot u' \mid \underline{v[d/d']})$. The data
value at $j$\textsuperscript{th} position of $\pi(u)$ is $\pi(d)$ and
the words $\pi(u'),\pi(v)$ and the data value $\pi(d')$ witnesses
that $\pi(d)$ is a $f$-vulnerable in $\pi(u)$. Indeed, since
$f(u \cdot u' \mid \underline{v}) \ne f(u \cdot u' \mid
\underline{v[d/d']})$, Lemma~\ref{lem:factoredOutputInvariant} implies
that $f(\pi(u) \cdot \pi(u') \mid \underline{\pi(v)}) \ne f(\pi(u)
\cdot \pi(u') \mid \underline{\pi(v)[\pi(d)/\pi(d')]})$. The converse
direction of the proof is symmetric, using the permutation $\pi^{-1}$.
\end{proof}

A data value that does not occur in a data word can not influence how
it is transformed.
\begin{lemma}
    \label{lem:prefIflOccurs}
    Suppose $f$ is a transduction that is invariant under permutations
    and without data peeking and a data value $d$ is $f$-vulnerable
    in a data word $u$. Then $d$ occurs in $u$.
\end{lemma}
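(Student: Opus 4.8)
If $f$ is invariant under permutations and without data peeking, and $d$ is $f$-vulnerable in $u$, then $d$ occurs in $u$.

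The plan is to argue by contradiction: assume $d$ does not occur in $u$ and show that the strict inequality required by $f$-vulnerability cannot hold, so $d$ must in fact occur in $u$. Fix witnesses $u',v,d'$, so that $d$ does not occur in $u'$, $d'$ is a safe replacement for $d$ in $u\cdot u'\cdot v$, and $f(u\cdot u' \mid \underline{v[d/d']}) \ne f(u\cdot u' \mid \underline{v})$. First I would record some immediate consequences. Since $v[d/d']=v$ whenever $d=d'$ or $d$ does not occur in $v$, and either case would defeat the inequality, we must have $d\ne d'$ and $d$ genuinely occurs in $v$; together with the assumption that $d$ occurs in neither $u$ nor $u'$, this gives: $d$ occurs in $v$ but not in $u\cdot u'$.

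The key structural step is then to show that $d'$ occurs \emph{nowhere} in $u\cdot u'\cdot v$. If $d'$ appeared at some position $j$, then since $d$ occurs at some position $i$ of $v$ and $d\ne d'$, positions $i$ and $j$ hold distinct values in $u\cdot u'\cdot v$ but the same value $d'$ in $(u\cdot u'\cdot v)[d/d']$, contradicting $(u\cdot u'\cdot v)[d/d'] \iso u\cdot u'\cdot v$, which is exactly the safe-replacement hypothesis. Let $\pi$ be the transposition that swaps $d$ and $d'$ and fixes every other data value. Because $d'$ does not occur in $v$ we have $\pi(v)=v[d/d']$, and because neither $d$ nor $d'$ occurs in $u\cdot u'$ we have $\pi(u\cdot u')=u\cdot u'$. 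At this point the replacement $[d/d']$ has been converted into a genuine permutation, which is what lets the permutation machinery take over.

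The rest follows by pushing $\pi$ through the factored output. The crucial observation is that $\pi$ fixes $f(u\cdot u' \mid \underline{v})$ triple by triple: each retained triple $(\gamma,e,o)$ of this factored output has origin $o$ inside the left part $u\cdot u'$, so by \emph{without data peeking} the value $e$ already occurs at some position $\le o$ of $u\cdot u'\cdot v$, necessarily a position inside $u\cdot u'$, hence $e\in\data(u\cdot u',*)$ and in particular $e\ne d,d'$; the abstraction markers $(*,*,\ri)$ carry no data value and are fixed as well. Therefore $\pi(f(u\cdot u' \mid \underline{v})) = f(u\cdot u' \mid \underline{v})$. On the other hand, Lemma~\ref{lem:factoredOutputInvariant} with $z=0$ gives $\pi(f(u\cdot u' \mid \underline{v})) = f(\pi(u\cdot u') \mid \underline{\pi(v)}) = f(u\cdot u' \mid \underline{v[d/d']})$. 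Chaining the two equalities yields $f(u\cdot u' \mid \underline{v[d/d']}) = f(u\cdot u' \mid \underline{v})$, contradicting the choice of witnesses; hence $d$ occurs in $u$. I expect the main obstacle to be the freshness step, namely realizing that the safe-replacement hypothesis, once combined with the fact (forced by the strict inequality) that $d$ actually occurs in $v$, makes $d'$ fresh for the whole of $u\cdot u'\cdot v$; everything after that is routine, using only invariance under permutations via Lemma~\ref{lem:factoredOutputInvariant} and the confinement of data values guaranteed by \emph{without data peeking}.
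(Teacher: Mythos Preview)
Your proof is correct and follows essentially the same route as the paper: assume $d\notin\data(u,*)$, take the transposition $\pi$ swapping $d$ and $d'$, use that $\pi$ fixes $u\cdot u'$ and sends $v$ to $v[d/d']$, and combine Lemma~\ref{lem:factoredOutputInvariant} with the no-data-peeking property to conclude $f(u\cdot u'\mid\underline{v})=f(u\cdot u'\mid\underline{v[d/d']})$. You are in fact slightly more careful than the paper in justifying that $d'$ is fresh for all of $u\cdot u'\cdot v$ (by first forcing $d$ to occur in $v$ via the strict inequality), a point the paper leaves implicit.
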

\begin{proof}
    Suppose $d$ does not occur in $u$. We will prove that $d$ is not
    $f$-vulnerable in $u$. Let $u',v$ be any data words such
    that $d$ does not occur in $u'$. Suppose $d'$ is a safe
    replacement for $d$ in $u\cdot u' \cdot v$. Let $\pi$ be the
    permutation that interchanges $d$ and $d'$ and does not change any
    other value. Neither $d$ nor $d'$ occurs in $u \cdot u'$, so
    $\pi(u \cdot u')=u \cdot u'$. The data value $d'$ does not occur
    in $v$, so $\pi(v)=v[d/d']$. Since $f$ is without data peeking, only
    data values in occurring in $u\cdot u'$ occur $f(u\cdot u' \mid
    \underline{v})$, so neither $d$ nor $d'$ occur in $f(u\cdot u' \mid
    \underline{v})$, so $\pi(f(u\cdot u' \mid
    \underline{v}))=f(u\cdot u' \mid
    \underline{v})$. Since $f$ is invariant under
    permutations, we infer from
    Lemma~\ref{lem:factoredOutputInvariant} that $\pi(f(u\cdot u' \mid
    \underline{v}))=f(\pi(u \cdot u') \mid \underline{\pi(v)})$. This
    implies that $f(u\cdot u' \mid
    \underline{v})=f(u\cdot u' \mid
    \underline{v[d/d']})$.  Hence, $d$ is not $f$-vulnerable
    in $u$.
\end{proof}

Data values in a prefix can be permuted without changing the way it
affects suffixes, as long as we don't change the influencing
values.
\begin{lemma}
    \label{lem:nonSufIflPermutable} Suppose $f$ is a transduction that
    is invariant under permutations, $u,v$ are data words and $\pi$
    is any permutation that is identity on the set of data values that
    are $f$-influencing in $u$. Then $f(\underline{ \pi(u)} \mid v) =
    f(\underline{u} \mid v)$ and $\aifl_{f}(u) = \aifl_{f}(\pi(u))$.
\end{lemma}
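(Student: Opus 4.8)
The plan is to reduce the statement to a chain of \emph{single renamings} of non-influencing data values to fresh ones, each of which is handled directly by the definition of $f$-memorable. Since $u$ and $v$ are finite, the way $\pi$ affects $f(\underline{u}\mid v)$ and $\aifl_f(u)$ depends only on the action of $\pi$ on $\data(u,*)$. Let $c_1,\dots,c_p$ enumerate those values in $\data(u,*)$ that are not $f$-influencing in $u$; by hypothesis $\pi$ is the identity on all the remaining (influencing) values of $u$. Pick fresh, pairwise distinct data values $e_1,\dots,e_p$, ``fresh'' meaning distinct from every data value occurring in $u$, in $v$ or in $\pi(u)$, and distinct from every $f$-influencing value of $u$. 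One then checks that $\pi(u)$ is obtained from $u$ by the composition of transpositions $(e_1\ \pi(c_1))\cdots(e_p\ \pi(c_p))\circ(c_1\ e_1)\cdots(c_p\ e_p)$: the first block of transpositions renames each $c_j$ to the fresh value $e_j$, the second renames each $e_j$ to $\pi(c_j)$ (using injectivity of $\pi$ and freshness of the $e_j$ to see that these act as plain renamings). No general finite-support reduction of $\pi$ is needed — this decomposition is tailored to the finite word $u$.

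The key single-step fact I would isolate is: if $d$ is not $f$-memorable in a data word $w$, and $d'$ occurs neither in $w$ nor among the $f$-influencing values of $w$, then $f(\underline{w[d/d']}\mid v)=f(\underline{w}\mid v)$ and $\aifl_f(w[d/d'])=\aifl_f(w)$. The first equality is immediate from Definition~\ref{def:fMemorable}: a value not occurring in $w$ is always a safe replacement, and ``not $f$-memorable'' gives equality of the factored outputs for \emph{every} suffix, in particular for $v$. For the second equality, note that $w[d/d']=\tau(w)$ for the transposition $\tau=(d\ d')$, so Lemma~\ref{lem:iflInvUnderPerm} gives $\aifl_f(\tau(w))=\tau(\aifl_f(w))$; since $\tau$ fixes every $f$-influencing value of $w$ (none of them equals $d$, which is non-influencing, nor $d'$, by the choice of $d'$), we get $\tau(\aifl_f(w))=\aifl_f(w)$. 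In particular the set of $f$-influencing values, and their types, are unchanged by the renaming.

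Finally I would apply this single-step fact along the chain of transpositions above, with $w$ ranging over the successive intermediate words, carrying an induction hypothesis that the $f$-influencing values of the current word coincide with those of $u$. This invariant is exactly what guarantees, at each step, that the value being moved (a $c_j$ in the first block, an $e_j$ in the second) is non-$f$-memorable in the current word, and that the target ($e_j$, resp.~$\pi(c_j)$, which is non-influencing because $\pi$ fixes the influencing values and is injective) avoids the current word's occurring values and its influencing values, so the transposition acts as a replacement of the form $w[d/d']$. Composing all the resulting equalities yields $f(\underline{\pi(u)}\mid v)=f(\underline{u}\mid v)$ and $\aifl_f(\pi(u))=\aifl_f(u)$. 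I expect the only real work to be the bookkeeping in this last step — checking, step by step, that the moved value is non-influencing and the target is fresh with respect to the current word — which is routine but somewhat fiddly; everything conceptual is already contained in Lemma~\ref{lem:iflInvUnderPerm} and the definition of $f$-memorable.
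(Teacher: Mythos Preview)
Your two-phase chain of single-value replacements is exactly the paper's strategy: first send each non-influencing value $c_j$ occurring in $u$ to a fresh $e_j$, then send each $e_j$ to $\pi(c_j)$, and use the definition of $f$-memorable at each step for the factored-output equality. Where you differ is in the bookkeeping for $\aifl_f$. The paper invokes Lemma~\ref{lem:iflValuesPreserved} (a separate result with its own lengthy case analysis) at each step to conclude that the set of $f$-influencing values is unchanged. You instead observe that each replacement $w\mapsto w[d/d']$ is the transposition $\tau=(d\ d')$ applied to $w$, and use Lemma~\ref{lem:iflInvUnderPerm} to get $\aifl_f(\tau(w))=\tau(\aifl_f(w))=\aifl_f(w)$ directly. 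That is a genuine simplification: it replaces the appeal to Lemma~\ref{lem:iflValuesPreserved} by a one-line equivariance argument.

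One small wording slip to fix: in your single-step fact you assume only that $d$ is not $f$-\emph{memorable} in $w$, but your argument for $\aifl_f(w[d/d'])=\aifl_f(w)$ needs $d$ (as well as $d'$) to be non-$f$-\emph{influencing} in $w$; otherwise $\tau=(d\ d')$ could move an entry of $\aifl_f(w)$. This is harmless in your intended application---under your induction invariant the value being moved is always non-influencing in the current word, not merely non-memorable---but you should state the hypothesis of the single-step fact that way.
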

\begin{proof}
    Let $\set{d_{1}, \ldots, d_{n}}$ be the set of all data values
    occurring in $u$ that are not $f$-influencing in $u$. Let $d_{1}',
    \ldots, d_{n}'$ be safe replacements for $d_{1}, \ldots, d_{n}$
    respectively in $u$, such that $\set{d_{1}', \ldots, d_{n}'} \cap
    (\set{d_{1}, \ldots, d_{n}} \cup \set{\pi(d_{1}), \ldots,
    \pi(d_{n})}) = \emptyset$. Since $d_{1}$ is not $f$-memorable in
    $u$, we have $f(\underline{u[d_{1}/d_{1}']} \mid v) =
    f(\underline{u} \mid v)$. Since $d_{2}$ is not $f$-influencing in
    $u$, we infer from Lemma~\ref{lem:iflValuesPreserved} that
    $d_{2}$ is not $f$-influencing in $u[d_{1}/d_{1}']$. Hence,
    $f(\underline{u[d_{1}/d_{1}'][d_{2}/d_{2}']} \mid v) =
    f(\underline{u[d_{1}/d_{1}']} \mid v) = f(\underline{u}\mid v)$.
    Also from Lemma~\ref{lem:iflValuesPreserved}, we infer that
    $d_{1}'$ is not $f$-influencing in $u[d_{1}/d_{1}']$ (put $e=d_{1}'$ in
    Lemma~\ref{lem:iflValuesPreserved} to see this).  Similarly,
    neither $d_{1}'$ nor $d_{2}'$ are $f$-influencing in
    $u[d_{1}/d_{1}'][d_{2}/d_{2}']$. On the other hand,  we infer from
Lemma~\ref{lem:iflValuesPreserved} that all the data values that
    are $f$-memorable (resp.~$f$-vulnerable) in $u$ are also $f$-memorable (resp.~$f$-vulnerable) in
    $u[d_{1}/d_{1}'][d_{2}/d_{2}']$. This reasoning can be routinely
    extended to an induction on $i$ to infer that
    $f(\underline{u[d_{1}/d_{1}', \ldots, d_{i}/d_{i}']} \mid v) =
    f(\underline{u} \mid v)$ and $d_{1}',\ldots, d_{i}'$ are not
    $f$-influencing in $u[d_{1}/d_{1}', \ldots, d_{i}/d_{i}']$. Hence,
    $f(\underline{u[d_{1}/d_{1}', \ldots, d_{n}/d_{n}']} \mid v) =
    f(\underline{u} \mid v)$. In addition, all the data values that are $f$-memorable (resp.~$f$-vulnerable) in $u$ are also
    $f$-memorable (resp.~$f$-vulnerable) in $u[d_{1}/d_{1}', \ldots, d_{n}/d_{n}']$.

    Now we prove that $\pi(d_{1}), \ldots, \pi(d_{n})$ are safe
    replacements for $d_{1}', \ldots, d_{n}'$ in $u[d_{1}/d_{1}', \ldots,
    d_{n}/d_{n}']$. We know that $\data(u[d_{1}/d_{1}', \ldots,
    d_{n}/d_{n}'], *)=\set{d_{1}', \ldots, d_{n}'} \cup \set{d \mid d
    \text{ is }f\text{-influencing in }u}$. We have $\set{\pi(d_{1}), \ldots, \pi(d_{n})} \cap
    \set{d_{1}',\ldots, d_{n}'} = \emptyset$ by choice. Since $\pi$ is
    identity on $\set{d \mid d
    \text{ is }f\text{-influencing in }u}$ and $d_{1}, \ldots, d_{n}$ are not
    $f$-influencing in $u$, we have $\set{\pi(d_{1}), \ldots, \pi(d_{n})} \cap \set{d \mid d
    \text{ is }f\text{-influencing in }u} = \emptyset$. This proves that
    $\pi(d_{1}), \ldots, \pi(d_{n})$ are safe replacements for
    $d_{1}', \ldots, d_{n}'$ in $u[d_{1}/d_{1}', \ldots,
    d_{n}/d_{n}']$.

    As we did in the first 
    paragraph of this proof, we conclude that
    $f(\underline{u[d_{1}/d_{1}', \ldots,
    d_{n}/d_{n}'][d_{1}'/\pi(d_{1}), \ldots, d_{n}'/\pi(d_{n})]} \mid
    v) = f(\underline{u[d_{1}/d_{1}', \ldots,
    d_{n}/d_{n}']} = f(\underline{u} \mid v)$. Since $u[d_{1}/d_{1}', \ldots,
    d_{n}/d_{n}'][d_{1}'/\pi(d_{1}), \ldots, d_{n}'/\pi(d_{n})] =
    u[d_{1}/\pi(d_{1}), \ldots, d_{n}/\pi(d_{n})] = \pi(u)$, we infer
    that $f(\underline{\pi(u)} \mid v) = f(\underline{u} \mid v)$. In
    addition, $\pi(d_{1}), \ldots, \pi(d_{n})$ are not $f$-influencing in
    $\pi(u)$ and all the values that are $f$-memorable (resp.~$f$-vulnerable) in $u$ are also
    $f$-memorable (resp.~$f$-vulnerable) in $\pi(u)$.
    Hence, $\aifl(\pi(u)) = \aifl(u)$.
\end{proof}

Data values in a suffix can be permuted without changing the way it
affects prefixes, as long as we don't change the prefix influencing
values.
\begin{lemma}
    \label{lem:nonPreIflPermutable}
    Suppose $f$ is a transduction that is invariant under permutations
    and without data peeking, $u,v$ are data data words and $\pi$ is any
    permutation that is identity on the set of data values that are
    $f$-vulnerable in $u$. Then $f(u \mid \underline{\pi(v)})
    = f(u \mid \underline{v})$.
\end{lemma}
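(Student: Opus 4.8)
The plan is to mirror the proof of Lemma~\ref{lem:nonSufIflPermutable}: we will push $\pi$ through the suffix $v$ as a composition of safe replacements, one for each data value of $v$ that is not $f$-vulnerable in $u$. The one genuinely new difficulty compared with the prefix case is that here $u$ is fixed by the statement, so we cannot first rename its data values to fresh ones; consequently an image $\pi(d)$ may already occur in $u$, and the ``obvious'' safe replacement inside $u\cdot v$ need not exist. We will get around this by running that one replacement backwards.

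First I would set up the bookkeeping. Let $d_{1},\dots,d_{n}$ be the data values occurring in $v$ that are \emph{not} $f$-vulnerable in $u$. Every remaining data value of $v$ is $f$-vulnerable in $u$, hence occurs in $u$ by Lemma~\ref{lem:prefIflOccurs}, and is fixed by $\pi$ by hypothesis; therefore $\pi(v)=v[d_{1}/\pi(d_{1}),\dots,d_{n}/\pi(d_{n})]$. Since $\pi$ is a permutation that is the identity on the set of $f$-vulnerable values of $u$, it restricts to a bijection of the complement of that set, so each $\pi(d_{i})$ is again not $f$-vulnerable in $u$ and the values $\pi(d_{1}),\dots,\pi(d_{n})$ are pairwise distinct. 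Finally, choose data values $d_{1}',\dots,d_{n}'$ that are pairwise distinct, do not occur in $u$ or $v$, and differ from every $\pi(d_{j})$; by Lemma~\ref{lem:prefIflOccurs} each $d_{i}'$, being absent from $u$, is not $f$-vulnerable in $u$.

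The argument then proceeds in two rounds of substitutions, each an induction on the number of substitutions performed. In the first round I would show $f(u\mid\underline{v[d_{1}/d_{1}',\dots,d_{i}/d_{i}']})=f(u\mid\underline{v})$ for all $i$: in the inductive step $d_{i+1}$ is not $f$-vulnerable in $u$ and $d_{i+1}'$ is a fresh value, hence a safe replacement for $d_{i+1}$ in the concatenation of $u$ with the current suffix, so by the definition of $f$-vulnerable (taking the middle word to be $\epsilon$) the factored output does not change. Writing $v'=v[d_{1}/d_{1}',\dots,d_{n}/d_{n}']$ this gives $f(u\mid\underline{v'})=f(u\mid\underline{v})$. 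In the second round I would replace the fresh names back, one at a time, by their intended values $\pi(d_{i})$, again by induction. To pass from a current suffix $w$ to $w[d_{i+1}'/\pi(d_{i+1})]$ there are two cases. If $\pi(d_{i+1})$ does not occur in $u$, then $\pi(d_{i+1})$ is a safe replacement for $d_{i+1}'$ in $u\cdot w$ (it is absent from $u$ and, being distinct from all remaining $d_{j}'$, from the already-placed $\pi(d_{j})$, and from the untouched $f$-vulnerable values of $v$, it is absent from $w$ as well), so non-vulnerability of $d_{i+1}'$ in $u$ gives the desired equality. If instead $\pi(d_{i+1})$ does occur in $u$, I would apply non-vulnerability of $\pi(d_{i+1})$ in $u$ (it lies in the complement of the vulnerable set) to the suffix $w[d_{i+1}'/\pi(d_{i+1})]$: since $\pi(d_{i+1})$ does not occur in $w$, the name $d_{i+1}'$ is a fresh value for $u\cdot w[d_{i+1}'/\pi(d_{i+1})]$ and $w[d_{i+1}'/\pi(d_{i+1})][\pi(d_{i+1})/d_{i+1}']=w$, so $f(u\mid\underline{w[d_{i+1}'/\pi(d_{i+1})]})=f(u\mid\underline{w})$. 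After all $n$ steps, and using that the $d_{i}'$ are fresh so the substitutions do not interfere with one another, the suffix has become $v[d_{1}/\pi(d_{1}),\dots,d_{n}/\pi(d_{n})]=\pi(v)$, which is the claim.

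The main obstacle is the case distinction in the second round. Away from it, the proof is the routine ``write a permutation as a product of safe replacements'' manipulation already carried out for Lemma~\ref{lem:nonSufIflPermutable}. The point that needs care is the observation that when a target value $\pi(d_{i})$ collides with a data value already present in the fixed prefix $u$, one must realise the corresponding safe replacement \emph{from} $\pi(d_{i})$ \emph{to} the fresh name $d_{i}'$ rather than in the natural direction, which is legitimate precisely because $\pi(d_{i})$ is itself not $f$-vulnerable in $u$.
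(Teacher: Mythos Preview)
Your proposal is correct and follows essentially the same two-round substitution strategy as the paper: first replace the non-vulnerable data values of $v$ by fresh names $d_i'$, then replace the fresh names by the targets $\pi(d_i)$. The one difference is that in the second round you split into cases according to whether $\pi(d_{i+1})$ occurs in $u$, using the forward replacement (non-vulnerability of $d_{i+1}'$) in the first case and the backward replacement (non-vulnerability of $\pi(d_{i+1})$) in the second. The paper dispenses with this case split and uses the backward argument uniformly: since $\pi(d_{i+1})$ never occurs in the current suffix $w$ (by the same distinctness reasons you list), one always has $w[d_{i+1}'/\pi(d_{i+1})][\pi(d_{i+1})/d_{i+1}']=w$, and the non-vulnerability of $\pi(d_{i+1})$ in $u$ gives the equality regardless of whether $\pi(d_{i+1})$ occurs in $u$. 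Your case split is therefore harmless but unnecessary; otherwise the proofs coincide.
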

\begin{proof}
  Let $\set{d_{1}, \ldots, d_{n}}$ be the set of all data values
  occurring in $v$ that are not $f$-vulnerable in $u$. Let
  $d_{1}', \ldots, d_{n}'$ be safe replacements for $d_{1}, \ldots,
  d_{n}$ respectively in $u \cdot v$, such that $\set{d_{1}', \ldots,
  d_{n}'} \cap (\set{d_{1}, \ldots, d_{n}} \cup \set{\pi(d_{1}),
  \ldots, \pi(d_{n})}) = \emptyset$. Since $d_{1}$ is not $f$-vulnerable
  in $u$, we have $f(u \mid \underline{v[d_{1}/d_{1}']}) =
  f(u \mid \underline{v})$. Since $d_{2}$ is not $f$-vulnerable
  in $u$, we have $f(u \mid
  \underline{v[d_{1}/d_{1}'][d_{2}/d_{2}']}) = f(u \mid
  \underline{v[d_{1}/d_{1}']}) = f(u \mid \underline{v})$. The same
  reasoning can be used in an induction to conclude that $f(u \mid
  \underline{v[d_{1}/d_{1}', d_{2}/d_{2}', \ldots, d_{n}/d_{n}']}) =
  f(u \mid \underline{v})$.
  
  Now we will prove that $\pi(d_{1}),
  \ldots, \pi(d_{n})$ are safe replacements for $d_{1}',\ldots,
  d_{n}'$ respectively in $v[d_{1}/d_{1}', d_{2}/d_{2}', \ldots,
  d_{n}/d_{n}']$. We have $\data(v[d_{1}/d_{1}', \ldots,
  d_{n}/d_{n}'], *)=\set{d_{1}', \ldots, d_{n}'} \cup \set{d \mid d
  \text{ is }f\text{-vulnerable in }u}$. We have
  $\set{\pi(d_{1}), \ldots, \pi(d_{n})} \cap \set{d_{1}',\ldots,
  d_{n}'} = \emptyset$ by choice. Since $\pi$ is identity on $\set{d
  \mid d \text{ is }f\text{-vulnerable in }u}$ and $d_{1},
  \ldots, d_{n}$ are not $f$-vulnerable in $u$, we have
  $\set{\pi(d_{1}), \ldots, \pi(d_{n})} \cap \set{d \mid d
  \text{ is }f\text{-vulnerable in }u} = \emptyset$.  This
  proves that $\pi(d_{1}), \ldots, \pi(d_{n})$ are safe replacements
  for $d_{1}', \ldots, d_{n}'$ in $v[d_{1}/d_{1}', \ldots,
  d_{n}/d_{n}']$.
  
  Now we claim that $f(u \mid
  \underline{v[d_{1}/d_{1}', d_{2}/d_{2}', \ldots,
  d_{n}/d_{n}'][d_{1}'/\pi(d_{1})]}) = f(u \mid
  \underline{v[d_{1}/d_{1}', d_{2}/d_{2}', \ldots, d_{n}/d_{n}']})$.
  Suppose not, i.e., $f(u \mid
  \underline{v[d_{1}/d_{1}', d_{2}/d_{2}', \ldots,
  d_{n}/d_{n}'][d_{1}'/\pi(d_{1})]}) \ne f(u \mid
  \underline{v[d_{1}/d_{1}', d_{2}/d_{2}', \ldots, d_{n}/d_{n}']})$.
  This can be written equivalently as $f(u \mid
  \underline{v[d_{1}/d_{1}', d_{2}/d_{2}', \ldots,
  d_{n}/d_{n}'][d_{1}'/\pi(d_{1})]}) \ne f(u \mid
  \underline{v[d_{1}/d_{1}', d_{2}/d_{2}', \ldots,
  d_{n}/d_{n}'][d_{1}'/\pi(d_{1})][\pi(d_{1})/d_{1}']})$. Then we
  infer from Definition~\ref{def:fMemorable} that $\pi(d_{1})$ is
  $f$-vulnerable in $u$, which contradicts the hypothesis that
  $\pi$ is identity on all values that are $f$-vulnerable in
  $u$. Hence, $f(u \mid \underline{v[d_{1}/d_{1}', d_{2}/d_{2}',
  \ldots, d_{n}/d_{n}'][d_{1}'/\pi(d_{1})]}) = f(u \mid
  \underline{v[d_{1}/d_{1}', d_{2}/d_{2}', \ldots, d_{n}/d_{n}']})$.
  
  Similar
  reasoning can then be used to infer that $f(u \mid
  \underline{v[d_{1}/d_{1}', d_{2}/d_{2}', \ldots,
  d_{n}/d_{n}'][d_{1}'/\pi(d_{1}), \ldots, d_{n}'/\pi(d_{n})]}) = f(u
  \mid \underline{v[d_{1}/d_{1}', d_{2}/d_{2}', \ldots,
  d_{n}/d_{n}']}) = f(u \mid \underline{v})$. Hence, $f(u \mid
  \underline{\pi(v)}) = f(u \mid \underline{v})$.
\end{proof}

If two factored outputs are equal, factoring out the same word from the
same positions of the inputs will not destroy the equality.
\begin{lemma}
    \label{lem:rightCongruenceHelper}
    Suppose $f$ is a transduction, $u, u_{1}, u_{2},v, v_{1}, v_{2}$
    are data words, $\sigma \in \Sigma$, $d$ is a data value and
    $z=|u_{1}|-|u_{2}|$.
    \begin{enumerate}
        \item\label{rch:rightToAbstract} If $f(\underline{u_{1}} \mid u \cdot v) =
            f_{z}(\underline{u_{2}} \mid u \cdot v)$, then
            $f(\underline{u_{1}} \mid u \mid \underline{v}) =
            f_{z}(\underline{u_{2}} \mid u \mid \underline{v})$.
        \item\label{rch:middleToLeft} If $f(\underline{u_{1}} \mid u \cdot v) =
            f_{z}(\underline{u_{2}} \mid u \cdot v)$, then
            $f(\underline{u_{1} \cdot u} \mid v) =
            f_{z}(\underline{u_{2} \cdot u} \mid v)$.
        \item\label{rch:middleToRight} If $f(u \cdot v \mid \underline{v_{1}}) = f(u \cdot v
            \mid \underline{v_{2}})$, then $f(u \mid \underline{v
            \cdot v_{1}}) = f(u \mid \underline{v \cdot v_{2}})$.
        \item\label{rch:leftToAbstract} If $f(u \cdot v \mid \underline{v_{1}}) = f(u \cdot v
            \mid \underline{v_{2}})$, then $f(\underline{u} \mid v
            \mid \underline{v_{1}}) = f(\underline{u} \mid v \mid
            \underline{v_{2}})$.
    \end{enumerate}
\end{lemma}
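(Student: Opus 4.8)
The four statements have a common shape: in each, the two factored outputs of the conclusion are obtained from the two factored outputs of the hypothesis by one and the same ``coarsening'' operation, so the claimed equality follows by applying that operation to both sides of the hypothesis. The plan is to isolate this operation, check that it behaves as expected --- including with respect to the offset appearing in $f_{z}$ --- and then read off the four parts.

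The basic device is as follows. Given a data word with origin information $s$, a set $A$ of positions and a wildcard symbol $\alpha\in\set{\lef,\mi,\ri}$, let $\kappa_{A,\alpha}(s)$ be the result of replacing in $s$ every triple whose origin lies in $A$ by $(*,*,\alpha)$ and then collapsing each maximal run of consecutive $(*,*,\alpha)$ to a single $(*,*,\alpha)$. By Definition~\ref{def:factoredOutputs}, every factored output occurring in the lemma is of the form $\kappa_{A_{1},\alpha_{1}}\circ\cdots\circ\kappa_{A_{k},\alpha_{k}}\big(f(\mathrm{input})\big)$, the $A_{i}$ being pairwise disjoint (the parts being abstracted) and the $\alpha_{i}$ pairwise distinct, possibly followed by the \emph{offset} map that adds a fixed $z$ to the origin of every genuine triple and fixes wildcards. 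I would first record two routine facts. First: for disjoint $A,B$ and $\alpha\neq\beta$, the operators $\kappa_{A,\alpha}$ and $\kappa_{B,\beta}$ commute, and the value of a composition $\kappa_{A_{1},\alpha_{1}}\circ\cdots\circ\kappa_{A_{k},\alpha_{k}}(s)$ depends only on which of the $A_{i}$ each position of $s$ lies in, because collapsing a run of one wildcard never interacts with a different wildcard or with a genuine triple; consequently, if $F'$ is obtained from a factoring $F$ by abstracting one additional kept part, occupying a position range $C$, with a wildcard symbol $\alpha$ consistent with $F'$, then $F'(s)=\kappa_{C,\alpha}\big(F(s)\big)$. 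Second: if $C$ is the range of origins that some kept part occupies \emph{inside} a factored output of the form $f_{z}(\dots)$, then $\kappa_{C,\alpha}$ applied to $f_{z}(\dots)$ equals $f_{z}$ of the factored output that abstracts that part too (with the same offset $z$) --- because the triples newly turned into $(*,*,\alpha)$ lose their origins, so whether they had already been offset is immaterial, and the triples that still survive are exactly those that the coarser $f_{z}(\dots)$ offsets.

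Now the four parts, easiest first. For part~\ref{rch:leftToAbstract}, $f(\underline u\mid v\mid\underline{v_{1}})=\kappa_{[1,|u|],\lef}\big(f(u\cdot v\mid\underline{v_{1}})\big)$ and likewise with $v_{2}$ for $v_{1}$; since $u$ occupies positions $[1,|u|]$ in $u\cdot v\cdot v_{1}$ and in $u\cdot v\cdot v_{2}$ alike, this is one and the same operator, and applying it to the hypothesis gives the conclusion. Part~\ref{rch:middleToRight} is identical with $\kappa_{(|u|,|u|+|v|],\ri}$, which additionally abstracts the $v$-part of the suffix; the operator is common to both sides because the prefix $u\cdot v$ is. Parts~\ref{rch:rightToAbstract} and~\ref{rch:middleToLeft} use $\kappa_{(|u_{1}|+|u|,|u_{1}|+|u|+|v|],\ri}$ and $\kappa_{(|u_{1}|,|u_{1}|+|u|],\lef}$ respectively, but now the hypothesis equates $f(\underline{u_{1}}\mid u\cdot v)$ with $f_{z}(\underline{u_{2}}\mid u\cdot v)$, where $z=|u_{1}|-|u_{2}|$, so the two factored outputs differ in length structure. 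The choice of $z$ is exactly what makes the still-visible origins of $f_{z}(\underline{u_{2}}\mid u\cdot v)$ --- those of the $u$-part and of the $v$-part --- occupy the intervals $(|u_{1}|,|u_{1}|+|u|]$ and $(|u_{1}|+|u|,|u_{1}|+|u|+|v|]$, exactly as inside $f(\underline{u_{1}}\mid u\cdot v)$; hence the coarsening operator is again literally the same on the two sides, and the second fact above (combined with the first) identifies its value on $f_{z}(\underline{u_{2}}\mid u\cdot v)$ as $f_{z}(\underline{u_{2}}\mid u\mid\underline v)$, respectively $f_{z}(\underline{u_{2}\cdot u}\mid v)$ --- in the latter case also using $|u_{1}\cdot u|-|u_{2}\cdot u|=z$. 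Applying the operator to both sides of the hypothesis yields the conclusion in all four cases.

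I expect the only genuine work to be the second fact and the bookkeeping of position ranges in parts~\ref{rch:rightToAbstract} and~\ref{rch:middleToLeft}: one must verify carefully that ``offset then abstract'' agrees with ``abstract then offset the survivors'', and track how the $u$- and $v$-ranges move under the offset so that the coarsening operator is seen to coincide on the two sides of the hypothesis. The first fact, though intuitively obvious, still needs a short but tedious verification that iterated replace-and-collapse operations are order-insensitive. Degenerate cases --- where any of $u,v,v_{1},v_{2},u_{1},u_{2}$ is $\epsilon$ --- need no separate treatment, since $\kappa_{\emptyset,\alpha}$ is the identity and collapsing an empty run is vacuous.
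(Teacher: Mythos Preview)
Your proposal is correct and follows essentially the same approach as the paper: both apply a common ``replace-and-collapse'' coarsening operator to the two sides of the hypothesis and verify that it yields the two sides of the conclusion, with the key bookkeeping being that the offset $z=|u_{1}|-|u_{2}|$ makes the displayed origin ranges of the $u$- and $v$-parts coincide on the two sides. The paper spells out one case (part~\ref{rch:rightToAbstract}) concretely and declares the others similar, whereas you package the argument uniformly via the $\kappa_{A,\alpha}$ formalism; the content is the same.
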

\begin{proof}
    We prove the first statement. Others are similar. We have the
    following equality from the hypothesis.
    \begin{align*}
      f(\underline{u_{1}} \mid u \cdot v) &=
    f_{z}(\underline{u_{2}} \mid u \cdot v)  
    \end{align*}
    For every $i \in [1,|f(\underline{u_{1}} \mid u \cdot v)|]$,
    we perform the following on the LHS of the above equation: let
    $(\gamma,d,o)$ be the $i$\textsuperscript{th} triple in the LHS;
    if $o > |u_{1}|+|u|$, replace the triple by $(*,*,\ri)$
    (the origin of such a triple is in $v$).
    Otherwise, don't change the triple. After
    performing this change for every $i$, merge consecutive
    occurrences of $(*,*,\ri)$ into a single triple $(*,*,\ri)$. At
    the end, we get $f(\underline{u_{1}}\mid u \mid \underline{v})$.

    Now perform exactly the same operations on the RHS
    $f_{z}(\underline{u_{2}} \mid u \cdot v)$. The
    $i$\textsuperscript{th} triple $(\gamma,d,o)$ will change to
    $(*,*,\ri)$ (resp.~will not change) iff the
    $i$\textsuperscript{th} triple $(\gamma,d,o)$ in the LHS changed
    to $(*,*,\ri)$ (resp.~did not change). Note that if
    $o>|u_{1}|+|u|$, $o-z>|u_{2}|+|u|$. Hence, the triples that change to
    $(*,*,\ri)$ in
    the RHS are precisely the triples whose origin is in $v$. Now, if we merge
    consecutive occurrences of $(*,*,\ri)$ into a single triple
    $(*,*,\ri)$, we get $f_{z}(\underline{u_{2}} \mid u \mid
    \underline{v})$.  This is also the same sequence of triples we got from LHS
    after the changes, which is $f(\underline{u_{1}} \mid u \mid
    \underline{v})$. Hence, $f(\underline{u_{1}} \mid u \mid
    \underline{v}) =
    f_{z}(\underline{u_{2}} \mid u \mid \underline{v})$.
\end{proof}

\begin{lemma}
    \label{lem:nonSufIflPermutableSecond}
    Suppose $f$ is a transduction that is invariant under
    permutations, $u,v,w$ are data words and $\pi,\pi' \in \Pi$ are
    permutations on the data domain $D$. If $\pi$ and $\pi'$ coincide
    on those data values that are $f$-influencing in $u\cdot v$, then
    $\pi(f(\underline{u} \mid v \mid
    \underline{w}))=f(\underline{\pi(u)} \mid \pi(v) \mid \underline{\pi'(w)})$.
\end{lemma}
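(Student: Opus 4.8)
The plan is to obtain the identity by chaining four results already proved: the invariance of factored outputs under permutations (Lemma~\ref{lem:factoredOutputInvariant}), the compatibility of influencing values with permutations (Lemma~\ref{lem:iflInvUnderPerm}), the fact that a suffix may be permuted without affecting a prefix provided the prefix's $f$-vulnerable values are fixed (Lemma~\ref{lem:nonPreIflPermutable}), and one of the ``factoring out'' identities (Lemma~\ref{lem:rightCongruenceHelper}). The point is that, after peeling off one permutation with Lemma~\ref{lem:factoredOutputInvariant}, what remains is exactly a statement about changing a suffix by a permutation that leaves the relevant prefix-influencing values untouched.

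First I would apply Lemma~\ref{lem:factoredOutputInvariant} with $z=0$ to rewrite the left-hand side as $\pi(f(\underline{u} \mid v \mid \underline{w})) = f(\underline{\pi(u)} \mid \pi(v) \mid \underline{\pi(w)})$. Since $\pi(u)\cdot\pi(v) = \pi(u\cdot v)$, it then suffices to prove $f(\underline{\pi(u)} \mid \pi(v) \mid \underline{\pi(w)}) = f(\underline{\pi(u)} \mid \pi(v) \mid \underline{\pi'(w)})$, i.e.\ that replacing the suffix $\pi(w)$ by $\pi'(w)$ does not change the middle-factored output taken relative to the prefix $\pi(u)\cdot\pi(v)$.

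To see this, set $\rho = \pi' \circ \pi^{-1}$, so that $\rho(\pi(w)) = \pi'(w)$. For every data value $d$ that is $f$-influencing in $u\cdot v$ we have $\pi(d)=\pi'(d)$ by hypothesis, hence $\rho(\pi(d)) = \pi'(\pi^{-1}(\pi(d))) = \pi'(d) = \pi(d)$; thus $\rho$ is the identity on $\set{\pi(d) \mid d \text{ is } f\text{-influencing in } u\cdot v}$. By Lemma~\ref{lem:iflInvUnderPerm}, $\aifl_{f}(\pi(u\cdot v)) = \pi(\aifl_{f}(u\cdot v))$, so the $f$-vulnerable values of $\pi(u)\cdot\pi(v)=\pi(u\cdot v)$ are precisely the $\pi$-images of the $f$-vulnerable values of $u\cdot v$, and these are among the $\pi$-images of the $f$-influencing values of $u\cdot v$. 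Hence $\rho$ is the identity on the set of data values that are $f$-vulnerable in $\pi(u)\cdot\pi(v)$. Now Lemma~\ref{lem:nonPreIflPermutable}, applied with prefix $\pi(u)\cdot\pi(v)$, suffix $\pi(w)$, and permutation $\rho$, yields $f(\pi(u)\cdot\pi(v) \mid \underline{\pi'(w)}) = f(\pi(u)\cdot\pi(v) \mid \underline{\rho(\pi(w))}) = f(\pi(u)\cdot\pi(v) \mid \underline{\pi(w)})$. (Although Lemma~\ref{lem:nonPreIflPermutable} is stated with the additional hypothesis of no data peeking, this poses no problem since the lemma is only ever invoked for such $f$, and in fact its proof uses only invariance under permutations.)

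Finally, from this equality of factored outputs of the form $f(u'' \mid \underline{v'})$ with $u'' = \pi(u)\cdot\pi(v)$, I would invoke Lemma~\ref{lem:rightCongruenceHelper}, item~\ref{rch:leftToAbstract}, instantiated with $u \mapsto \pi(u)$, $v \mapsto \pi(v)$, $v_1 \mapsto \pi(w)$ and $v_2 \mapsto \pi'(w)$, to factor $\pi(u)$ out of the prefix and conclude $f(\underline{\pi(u)} \mid \pi(v) \mid \underline{\pi(w)}) = f(\underline{\pi(u)} \mid \pi(v) \mid \underline{\pi'(w)})$. Combined with the first step this is the claim. The only genuinely delicate point is the transfer carried out in the third paragraph: turning ``$\pi$ and $\pi'$ agree on the $f$-influencing values of $u\cdot v$'' into ``$\rho$ fixes the $f$-vulnerable values of $\pi(u\cdot v)$'', which rests on Lemma~\ref{lem:iflInvUnderPerm} together with the inclusion of $f$-vulnerable values among $f$-influencing values; the remaining steps are routine applications of earlier lemmas.
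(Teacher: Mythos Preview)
Your proposal is correct and follows essentially the same route as the paper: apply Lemma~\ref{lem:factoredOutputInvariant} to reduce to showing $f(\underline{\pi(u)}\mid\pi(v)\mid\underline{\pi(w)})=f(\underline{\pi(u)}\mid\pi(v)\mid\underline{\pi'(w)})$, obtain this from Lemma~\ref{lem:nonPreIflPermutable} at the level of $f(\pi(u\cdot v)\mid\underline{\,\cdot\,})$, and then use point~\ref{rch:leftToAbstract} of Lemma~\ref{lem:rightCongruenceHelper} to factor $\pi(u)$ back out. The only difference is that you make explicit the auxiliary permutation $\rho=\pi'\circ\pi^{-1}$ and the appeal to Lemma~\ref{lem:iflInvUnderPerm} needed to see that $\rho$ fixes the $f$-vulnerable values of $\pi(u\cdot v)$, whereas the paper compresses this step into a single sentence.
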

\begin{proof}
    Since $\pi$ and $\pi'$ coincide on those data values that are
    $f$-influencing in $u\cdot v$, we infer from
    Lemma~\ref{lem:nonPreIflPermutable} that $f(\pi(u\cdot v) \mid
    \underline{\pi(w)})=f(\pi(u \cdot v) \mid \underline{\pi'(w)})$.
    From point~\ref{rch:leftToAbstract} of
    Lemma~\ref{lem:rightCongruenceHelper}, we conclude that
    $f(\underline{\pi(u)} \mid \pi(v) \mid
    \underline{\pi(w)})=f(\underline{\pi(u)} \mid \pi(v) \mid
    \underline{\pi'(w)})$. We have from
    Lemma~\ref{lem:factoredOutputInvariant} that
    $\pi(f(\underline{u} \mid v \mid \underline{w})) =
    f(\underline{\pi(u)} \mid \pi(v) \mid \underline{\pi(w)})$.
    Combining the last two equalities, we get the result.
\end{proof}

The following result is in some sense the converse of points
\eqref{rch:middleToRight} and \eqref{rch:leftToAbstract} in
Lemma~\ref{lem:rightCongruenceHelper}.
\begin{lemma}
    \label{lem:facOpEqualityFromParts}
    Let $f$ be a transduction and $u,v,w_{1},w_{2}$ be data words. If
    $f(\underline{u}\mid v \mid
    \underline{w_{1}})=f(\underline{u} \mid v \mid
    \underline{w_{2}})$ and $f(u \mid \underline{vw_{1}}) = f(u \mid
    \underline{vw_{2}})$, then $f(uv \mid \underline{w_{1}}) =
    f(uv \mid \underline{w_{2}})$.
\end{lemma}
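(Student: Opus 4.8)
The plan is to show that the data word $f(uv\mid\underline w)$ is \emph{determined}, uniformly in $w$, by the pair of data words $\bigl(f(\underline u\mid v\mid\underline w),\, f(u\mid\underline{vw})\bigr)$. Granting this, since both components of the pair agree for $w=w_1$ and for $w=w_2$ by hypothesis, the reconstructed words $f(uv\mid\underline{w_1})$ and $f(uv\mid\underline{w_2})$ coincide, which is exactly the claim.

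Fix $w$ and partition $f(uvw)$ into maximal \emph{blocks}, i.e.\ maximal infixes of triples all having origin in the same one of the three regions $u$, $v$, $w$. By maximality, consecutive blocks lie in different regions; call this observation $(\dagger)$. Reading Definition~\ref{def:factoredOutputs} against this block decomposition, I would verify the following. First, $f(\underline u\mid v\mid\underline w)$ is obtained from $f(uvw)$ by turning each $u$-block into $(*,*,\lef)$ and each $w$-block into $(*,*,\ri)$, the prescribed merging of consecutive $(*,*,\lef)$'s and of consecutive $(*,*,\ri)$'s being vacuous by $(\dagger)$; so it exhibits the whole sequence of block regions together with the contents of every $v$-block. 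Second, $f(u\mid\underline{vw})$ is obtained from $f(uvw)$ by collapsing every maximal run of $v$- and $w$-blocks into a single abstract triple while leaving the $u$-blocks intact; by $(\dagger)$ its maximal runs of genuine triples are exactly the $u$-blocks, in left-to-right order, so it exhibits the contents of every $u$-block. Third, $f(uv\mid\underline w)$ is obtained from $f(uvw)$ by turning each $w$-block into $(*,*,\ri)$ and leaving the $u$- and $v$-blocks intact (again no merging, by $(\dagger)$). Hence the reconstruction: from the first component read off the region sequence and the $v$-block contents; from the second component read off the $u$-block contents in order; then rebuild the sequence placing, in each slot, the recorded $v$-content, the next recorded $u$-content, or $(*,*,\ri)$, according as the slot is a $v$-, $u$-, or $w$-block. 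This procedure involves no arbitrary choices, so it defines a function of the pair, and by the three verifications its value is exactly $f(uv\mid\underline w)$. Instantiating at $w=w_1$ and $w=w_2$ and invoking the two hypotheses $f(\underline u\mid v\mid\underline{w_1})=f(\underline u\mid v\mid\underline{w_2})$ and $f(u\mid\underline{vw_1})=f(u\mid\underline{vw_2})$ then yields $f(uv\mid\underline{w_1})=f(uv\mid\underline{w_2})$.

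The conceptual content is thus small; the work lies entirely in the three verifications, that is, in checking that no spurious merging of abstract triples occurs in the first two factored outputs and that maximal genuine-triple runs correspond bijectively and order-preservingly to $v$-blocks in the first output and to $u$-blocks in the second. Each of these is a consequence of $(\dagger)$, but cashing it out requires a careful and somewhat tedious case analysis on adjacent blocks, plus a separate (trivial) inspection of degenerate situations such as $f(uvw)=\epsilon$ or the absence of blocks of some region. I expect this bookkeeping, rather than any genuine difficulty, to be the main obstacle.
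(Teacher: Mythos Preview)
Your proposal is correct and takes a genuinely different route from the paper. The paper proves the contrapositive: it assumes $f(uv\mid\underline{w_1})\ne f(uv\mid\underline{w_2})$, locates the first position of disagreement, and in six cases (depending on whether that position carries $(*,*,\ri)$, a $u$-origin triple, or a $v$-origin triple on each side) shows that one of the two hypotheses must fail. Your argument instead proves the direct implication by exhibiting $f(uv\mid\underline w)$ as a deterministic function of the pair $\bigl(f(\underline u\mid v\mid\underline w),\,f(u\mid\underline{vw})\bigr)$, via the block reconstruction you describe. The two approaches are dual in spirit---your determinism claim is exactly the contrapositive of ``every difference in the output is visible in at least one component''---but your packaging is cleaner: once $(\dagger)$ is stated, the three verifications are uniform and the conclusion is immediate, whereas the paper's case split has to rediscover the same structural facts six times. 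The paper's approach, on the other hand, avoids having to argue that the reconstruction procedure is well-defined and really outputs $f(uv\mid\underline w)$; it only needs local reasoning at a single position. Either is fine for a lemma of this weight.
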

\begin{proof}
    The number of occurrences of the triple $(*,*,\ri)$ is the same in
    $f(\underline{u} \mid v \mid \underline{w_{1}})$ and $f(uv \mid
    \underline{w_{1}})$. The number of occurrences of the triple
    $(*,*,\ri)$ is the same in $f(\underline{u} \mid v \mid
    \underline{w_{2}})$ and $f(uv \mid \underline{w_{2}})$.
    Suppose $f(uv \mid \underline{w_{1}}) \ne f(uv \mid
    \underline{w_{2}})$. If the number of occurrences of the triple
    $(*,*,\ri)$ are different in $f(uv \mid \underline{w_{1}})$ and
    $f(uv \mid \underline{w_{2}})$, then the number of occurrences of the triple
    $(*,*,\ri)$ are different in $f(\underline{u}\mid v \mid
    \underline{w_{1}})$ and $f(\underline{u}\mid v \mid
    \underline{w_{2}})$ and we are done. So assume that the number of
    occurrences of the triple $(*,*,\ri)$ is the same in $f(uv \mid
    \underline{w_{1}})$ and $f(uv \mid \underline{w_{2}})$. Let
    $i$ be the first position where $f(uv \mid
    \underline{w_{1}})$ and $f(uv \mid \underline{w_{2}})$ differ.

    Case 1: at position $i$, $f(uv \mid \underline{w_{1}})$ contains
    $(*,*,\ri)$ and $f(uv \mid \underline{w_{2}})$ contains a triple
    whose origin is in $u$ or $v$. If the
    $i$\textsuperscript{th} triple in $f(uv \mid \underline{w_{2}})$
    has origin in $u$, there will be a position in $f(u \mid
    \underline{vw_{2}})$ that will have a triple whose origin is in
    $u$ and the same position in $f(u \mid \underline{vw_{1}})$ will
    have $(*,*,\ri)$ and we are done. If the
    $i$\textsuperscript{th} triple in $f(uv \mid \underline{w_{2}})$
    has origin in $v$, there will be a position in $f(\underline{u} \mid
    v \mid \underline{w_{2}})$ that will have a triple whose origin is in
    $v$ and the same position in $f(\underline{u} \mid v \mid \underline{w_{1}})$ will
    have $(*,*,\ri)$ and we are done.

    Case 2: at position $i$, $f(uv \mid \underline{w_{2}})$ contains
    $(*,*,\ri)$ and $f(uv \mid \underline{w_{1}})$ contains a triple
    whose origin is in $u$ or $v$. This can be handled similarly as
    above, with the role of $w_{1}$ and $w_{2}$ interchanged.

    Case 3: at position $i$, $f(uv \mid \underline{w_{1}})$ contains a
    triple whose origin is in $u$ and $f(uv \mid \underline{w_{2}})$
    contains a triple whose origin is in $v$. In this case,
    $f(\underline{u} \mid v \mid \underline{w_{1}})$ will have a
    position with the triple $(*,*,\lef)$ and the same position in
    $f(\underline{u} \mid v \mid \underline{w_{2}})$ will have a
    triple whose origin is in $v$ and we are done.

    Case 4: at position $i$, $f(uv \mid \underline{w_{1}})$ contains a
    triple whose origin is in $v$ and $f(uv \mid \underline{w_{2}})$
    contains a triple whose origin is in $u$. This case can be handled
    similarly as above.

    Case 5: at position $i$, both $f(uv \mid \underline{w_{1}})$ and
    $f(uv \mid \underline{w_{2}})$ has triples whose origin is in $u$
    but the contents are different. In this case, there will be a
    position where $f(u \mid \underline{vw_{1}})$ and $f(u \mid
    \underline{vw_{2}})$ differ and we are done.

    Case 6: at position $i$, both $f(uv \mid \underline{w_{1}})$ and
    $f(uv \mid \underline{w_{2}})$ has triples whose origin is in $v$
    but the contents are different. In this case, there will be a
    position where $f(\underline{u} \mid v \mid \underline{w_{1}})$
    and $f(\underline{u} \mid v \mid \underline{w_{2}})$ differ and we
    are done.
\end{proof}

The following result makes it easier to compute certain factored
outputs.
\begin{lemma}
    \label{lem:currOrIfl}
    Suppose $f$ is a transduction without data peeking, $u,v$ are data
    words, $\sigma \in \Sigma$ and $d \in D$. The data values occurring
    in $f(\underline{u} \mid (\sigma,d) \mid \underline{v})$ are
    either $d$ or those that are $f$-memorable in $u$.
\end{lemma}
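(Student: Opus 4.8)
The plan is to separate the claim into a straightforward consequence of the data-peeking hypothesis and a refinement that invokes the definition of $f$-memorable values together with Lemma~\ref{lem:rightCongruenceHelper}. Write $w=u\cdot(\sigma,d)\cdot v$, so that the middle part $(\sigma,d)$ occupies the single position $|u|+1$ of $w$. Every triple of $f(\underline{u}\mid(\sigma,d)\mid\underline{v})$ that carries an actual data value is a middle triple, hence comes from a triple of $f(w)$ with origin exactly $|u|+1$; since $f$ is without data peeking, the data value of such a triple occurs at some position $\le |u|+1$ of $w$, so it is either $d$ or a data value occurring in $u$. What remains is to rule out a data value $d'$ that occurs in $u$, is distinct from $d$, and is \emph{not} $f$-memorable in $u$.

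So I would assume, for a contradiction, that such a $d'$ occurs in $f(\underline{u}\mid(\sigma,d)\mid\underline{v})$. Choose a data value $d''$ that does not occur in $u$ and is different from $d$; then $u[d'/d'']\iso u$, so $d''$ is a safe replacement for $d'$ in $u$. Since $d'$ is not $f$-memorable in $u$, applying the definition with the suffix $(\sigma,d)\cdot v$ yields $f(\underline{u[d'/d'']}\mid(\sigma,d)\cdot v)=f(\underline{u}\mid(\sigma,d)\cdot v)$. Now I invoke point~\eqref{rch:rightToAbstract} of Lemma~\ref{lem:rightCongruenceHelper} with $z=|u|-|u[d'/d'']|=0$, factoring the single middle letter $(\sigma,d)$ out of $(\sigma,d)\cdot v$, to obtain
\[
 f(\underline{u}\mid(\sigma,d)\mid\underline{v})=f(\underline{u[d'/d'']}\mid(\sigma,d)\mid\underline{v}).
\]

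To conclude, observe that in $u[d'/d'']\cdot(\sigma,d)\cdot v$ no position $\le |u|+1$ carries $d'$: it has been erased throughout $u[d'/d'']$, and position $|u|+1$ carries $d\ne d'$. Hence, by data peeking, no middle triple of $f(u[d'/d'']\cdot(\sigma,d)\cdot v)$ carries $d'$, so $d'$ does not occur in the right-hand side of the displayed equality — contradicting the fact that it occurs in the left-hand side. Therefore $d'$ must be $f$-memorable in $u$, which proves the lemma.

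The proof is short, and I expect the only delicate points to be the bookkeeping of which positions belong to which of the three parts — so that ``middle triple'' is consistently identified with origin $|u|+1$ both before and after the substitution — and the routine verification that a value fresh for $u$ is a safe replacement. The one genuine idea, and hence the step most worth emphasising, is to notice that non-memorability is stated in terms of factored outputs of the shape $f(\underline{u}\mid\cdot)$, while Lemma~\ref{lem:rightCongruenceHelper}\eqref{rch:rightToAbstract} is exactly the tool that converts this into a statement about $f(\underline{u}\mid(\sigma,d)\mid\underline{\cdot})$, where the data-peeking bound on the middle origin can finally be applied.
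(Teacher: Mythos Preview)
Your proof is correct and follows essentially the same route as the paper's. The paper argues directly: it takes $e\ne d$ occurring in the three-part factored output, observes that $e$ cannot occur in $f(\underline{u[e/e']}\mid(\sigma,d)\mid\underline{v})$ by data peeking, hence the two three-part outputs differ, and then applies the contrapositive of Lemma~\ref{lem:rightCongruenceHelper}\eqref{rch:rightToAbstract} to obtain $f(\underline{u[e/e']}\mid(\sigma,d)\cdot v)\ne f(\underline{u}\mid(\sigma,d)\cdot v)$, certifying memorability. You run the same argument in contrapositive form, using the forward direction of the same lemma point; the key idea and the single nontrivial step are identical.
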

\begin{proof}
    From the hypothesis that $f$ is without data peeking, we infer that
    the data values occurring in $f(\underline{u} \mid (\sigma,d) \mid
    \underline{v})$ are either $d$ or those that occur in $u$.
    Suppose a data value $e \ne d$ occurs in $f(\underline{u} \mid
    (\sigma,d) \mid \underline{v})$. Let $e'$ be a safe replacement
    for $e$ in $u$. We have $f(\underline{u[e/e']} \mid (\sigma,d)
    \mid \underline{v}) \ne f(\underline{u} \mid (\sigma,d) \mid
    \underline{v})$, since $e$ cannot occur in $f(\underline{u[e/e']}
    \mid (\sigma,d) \mid \underline{v})$ but it does occur in
    $f(\underline{u} \mid (\sigma,d) \mid \underline{v})$. Applying
    the contrapositive of point~\ref{rch:rightToAbstract} in
    Lemma~\ref{lem:rightCongruenceHelper} to the above inequality, we
    infer that $f(\underline{u[e/e']}\mid (\sigma,d) \cdot
    v) \ne f(\underline{u}\mid (\sigma,d) \cdot v)$. According to
    Definition~\ref{def:fMemorable}, this certifies that $e$ is
    $f$-memorable in $u$.
\end{proof}

The following result uses the binary relation $\fequiv$ from
Definition~\ref{def:fEquivalence} and equalizing schemes from
Definition~\ref{def:equalScheme}.
\begin{lemma}
    \label{lem:fEqSameEffect}
    Suppose $f$ is a transduction that is invariant under
    permutations, $E$ is an equalizing scheme for $f$ and $u,u',v,w$
    are data words. If $u \fequiv u'$, then $f(\underline{E(u)(u)}
    \mid v \mid \underline{w}) = f_z(\underline{E(u')(u')}\mid v
    \mid \underline{w})$, where $z = |u| - |u'|$.
\end{lemma}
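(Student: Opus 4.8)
Write $u_q$ for $E(u)(u)$ and $u'_q$ for $E(u')(u')$; since permutations preserve word length, $|u_q|=|u|$ and $|u'_q|=|u'|$, so $z=|u|-|u'|=|u_q|-|u'_q|$. The plan is to first show $u_q \fequiv u'_q$, extract from Definition~\ref{def:fEquivalence} a witnessing permutation $\pi$, argue that $\pi$ fixes every $f$-influencing value of $u'_q$ so that $\pi(u'_q)$ can be replaced by $u'_q$ in the first clause of that definition, and finally pass from the two-way factored outputs appearing in $\fequiv$ to the three-way ones in the statement via Lemma~\ref{lem:rightCongruenceHelper}.

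For every data word $a$ and permutation $\rho$ we have $\rho(a)\fequiv a$, and $\fequiv$ is an equivalence relation by Lemma~\ref{lem:fEquivalence} (using that $f$ is invariant under permutations); hence from $u\fequiv u'$ we get $u_q\fequiv u\fequiv u'\fequiv u'_q$, so $u_q\fequiv u'_q$. Let $\pi$ be a permutation witnessing this: by Definition~\ref{def:fEquivalence}, $\lambda v'.\,f_z(\underline{\pi(u'_q)}\mid v')=\lambda v'.\,f(\underline{u_q}\mid v')$ and $\aifl_{f}(\pi(u'_q))=\aifl_{f}(u_q)$. By the defining property of the equalizing scheme $E$, the $i$\textsuperscript{th} $f$-influencing value of $u_q$ (resp.\ of $u'_q$) is $\delta_i$, and $u_q\fequiv u'_q$ forces $|\ifl_{f}(u_q)|=|\ifl_{f}(u'_q)|$, call it $m$, so $\ifl_{f}(u_q)=\ifl_{f}(u'_q)=\delta_m\cdots\delta_1$. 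By Lemma~\ref{lem:iflInvUnderPerm}, $\aifl_{f}(\pi(u'_q))=\pi(\aifl_{f}(u'_q))$; comparing first components of $\pi(\aifl_{f}(u'_q))=\aifl_{f}(u_q)$ position by position yields $\pi(\delta_i)=\delta_i$ for all $i\in[1,m]$, i.e.\ $\pi$ is the identity on the set $\set{\delta_1,\ldots,\delta_m}$ of $f$-influencing values of $u'_q$.

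Lemma~\ref{lem:nonSufIflPermutable} now gives $f(\underline{\pi(u'_q)}\mid v')=f(\underline{u'_q}\mid v')$ for every data word $v'$; applying the purely syntactic offset $f_z(\cdot)$ to both sides of this equality of triple sequences yields $f_z(\underline{\pi(u'_q)}\mid v')=f_z(\underline{u'_q}\mid v')$. Combining with the first clause above, $f(\underline{u_q}\mid v')=f_z(\underline{u'_q}\mid v')$ for every data word $v'$. Instantiating $v':=v\cdot w$ and applying point~\ref{rch:rightToAbstract} of Lemma~\ref{lem:rightCongruenceHelper} with $u_1:=u_q$, $u_2:=u'_q$, its ``$u$'' taken to be $v$ and its ``$v$'' taken to be $w$ (its offset being then $|u_q|-|u'_q|=z$), we conclude $f(\underline{u_q}\mid v\mid \underline{w})=f_z(\underline{u'_q}\mid v\mid \underline{w})$, which is the claim.

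The main obstacle is the step pinning down $\pi$ on the influencing values: the permutation handed over by $u_q\fequiv u'_q$ is a priori arbitrary, and it is precisely the equalizing scheme --- which forces both $u_q$ and $u'_q$ to carry the canonical influencing values $\delta_m\cdots\delta_1$ --- together with Lemma~\ref{lem:iflInvUnderPerm} that collapses $\pi$ to the identity on those values, unlocking Lemma~\ref{lem:nonSufIflPermutable}. The remaining offset bookkeeping (that $|u|-|u'|$, $|u_q|-|u'_q|$, and the implicit offset of point~\ref{rch:rightToAbstract} of Lemma~\ref{lem:rightCongruenceHelper} all agree) is routine once one notes that permutations preserve word length.
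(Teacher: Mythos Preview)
Your proof is correct and follows essentially the same approach as the paper's: establish $u_q\fequiv u'_q$ by transitivity, extract the witnessing permutation $\pi$, use the equalizing scheme to show $\pi$ fixes the canonical influencing values $\delta_1,\ldots,\delta_m$, invoke Lemma~\ref{lem:nonSufIflPermutable} to drop $\pi$, and finish with point~\ref{rch:rightToAbstract} of Lemma~\ref{lem:rightCongruenceHelper}. The only cosmetic difference is that the paper applies Lemma~\ref{lem:nonSufIflPermutable} with $\pi^{-1}$ acting on $\pi(u'_q)$ (writing $\pi^{-1}\pcomp\pi(u'_q)=u'_q$), whereas you apply it directly with $\pi$ acting on $u'_q$; your variant is slightly more direct but the argument is the same.
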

\begin{proof}
    Since $E(u)(u) \iso u$, we have $E(u)(u) \fequiv u$. So we infer
    that $E(u)(u) \fequiv u \fequiv u' \fequiv E(u')(u')$.
    Since $\fequiv$ is transitive, $E(u)(u) \fequiv E(u')(u')$. So we
    infer from Definition~\ref{def:fEquivalence} that there exists a
    permutation $\pi$ such that
    $\pi(\aifl_{f}(E(u')(u')))=\aifl_{f}(E(u)(u))$ and
    $f(\underline{E(u)(u)} \mid v\cdot w) =
    f_{z}(\underline{\pi(E(u')(u'))} \mid v \cdot w)$. Since $u
    \fequiv u'$, we infer from Definition~\ref{def:fEquivalence} and
    Definition~\ref{def:equalScheme} that
    $\aifl_{f}(E(u')(u'))=\aifl_{f}(E(u)(u))$, so $\pi$ (and hence
    $\pi^{-1}$) is identity on those data values that are
    $f$-influencing in $E(u')(u')$. Hence we infer from
    Lemma~\ref{lem:nonSufIflPermutable} that
    $f_{z}(\underline{\pi(E(u')(u'))} \mid v \cdot
    w)=f_{z}(\underline{\pi^{-1}\pcomp\pi(E(u')(u'))} \mid v \cdot
    w)=f_{z}(\underline{E(u')(u')} \mid v \cdot w)$. Hence,
    $f(\underline{E(u)(u)} \mid v \cdot w) =
    f_z(\underline{E(u')(u')} \mid v \cdot w)$. We infer from
    point~\ref{rch:rightToAbstract} of
    Lemma~\ref{lem:rightCongruenceHelper} that $f(\underline{E(u)(u)}
    \mid v \mid \underline{w}) =
    f_z(\underline{E(u')(u')} \mid v \mid \underline{w})$.
\end{proof}

Suppose a SSRT is at a configuration and reads a data word running a
sequence of transitions. If a permutation is applied to the
configuration  and the data word, then the new data word is read by
the SSRT starting from the new configuration running the same sequence
of transitions. This is formalized in the following result.
\begin{lemma}
    \label{lem:permToMatchTrans}
    Suppose $S$ is a SSRT, the set of registers $R$ is partitioned
    into two parts $R_{1},R_{2}$ and $(q,\val_{1},n_1), (q,\val_{2}, n_2)$ are
    configurations satisfying the following properties:
    \begin{itemize}
        \item $\val_{1}$ and $\val_{2}$ coincide on $R_{1}$,
        \item for every $r_{1},r_{2}\in R$,
            $\val_{1}(r_{1})=\val_{1}(r_{2})$ iff
            $\val_{2}(r_{1})=\val_{2}(r_{2})$ and
        \item $\set{\val_{1}(r) \mid r \in R_{1}} \cap
            \set{\val_{1}(r) \mid r \in R_{2}} = \emptyset =
            \set{\val_{2}(r) \mid r \in R_{1}} \cap \set{\val_{2}(r)
            \mid r \in R_{2}}$.
    \end{itemize}
    There exists a permutation $\pi$ that is identity on
    $\set{\val_{1}(r) \mid r \in R_{1}}$ such that for any data word
    $v$, the sequence of transitions executed when reading $v$ from
    $(q,\val_{1})$ is same as the sequence executed when reading $\pi(v)$ from
    $(q,\val_{2})$.
\end{lemma}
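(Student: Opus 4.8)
The plan is to manufacture, from the three hypotheses, a single permutation $\pi$ of $D$ that carries the register part of $\val_{1}$ to that of $\val_{2}$ and fixes every data value stored in $R_{1}$, and then to read off the statement from the permutation-invariance of SSRT transitions. First note that, taking $r_{1}=r_{2}$ in the second hypothesis, $\val_{1}(r)$ is defined iff $\val_{2}(r)$ is defined for every $r \in R$. Define a partial map $g$ on the finite set $\set{\val_{1}(r) \mid r \in R,\ \val_{1}(r) \text{ defined}}$ by $g(\val_{1}(r)) = \val_{2}(r)$. The general case of the second hypothesis makes $g$ well defined and injective with image $\set{\val_{2}(r) \mid r \in R}$; the first hypothesis makes $g$ the identity on $\set{\val_{1}(r) \mid r \in R_{1}}$ (there $\val_{1}$ and $\val_{2}$ agree); and the third hypothesis, together with the first two, keeps the values associated to $R_{1}$ disjoint from those associated to $R_{2}$ on both sides, so no clash arises in the definition of $g$ or of its inverse. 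Since $D$ is infinite while the domain and image of $g$ are finite of equal size, $g$ extends to a bijection $\pi \colon D \to D$. By construction $\pi$ is the identity on $\set{\val_{1}(r) \mid r \in R_{1}}$ and $\pi(\val_{1}(r)) = \val_{2}(r)$ for every register $r$, so $\pi(\val_{1})$ and $\val_{2}$ agree on all registers (they may differ on data word variables, which will not matter).

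Next I would record a local version of Proposition~\ref{prop:SSRTInvPerm}: for every configuration $(q,\val,n)$, every data word $w$ and every permutation $\rho$, reading $w$ from $(q,\val,n)$ and reading $\rho(w)$ from $(q,\rho(\val),n)$ execute exactly the same sequence of transitions, and after each step the states reached are equal and the valuations reached are related by $\rho$. This is proved by induction on $|w|$, just as Proposition~\ref{prop:SSRTInvPerm}. The input letters of $w$ and $\rho(w)$ at any position agree, since $\rho$ only renames data values. A guard in $\Phi$ is a Boolean combination of atoms $r^{=}$, $r^{\ne}$, so $(\val,d)$ satisfies it iff $(\rho(\val),\rho(d))$ does. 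A register is overwritten only by the current data value, so the register part of the valuation obtained after reading $\rho(d)$ is $\rho$ applied to the register part of the valuation obtained after reading $d$; and the contents of data word variables, together with the counter $n$, play no role in which transition is enabled. As $S$ is deterministic, at each step ``the'' transition that fires is determined by the current state, the current register valuation and the current input letter, so the two runs fire identical transitions throughout.

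Finally, instantiating the local proposition with $\val = \val_{1}$, $w = v$ and $\rho = \pi$ gives that reading $v$ from $(q,\val_{1})$ and reading $\pi(v)$ from $(q,\pi(\val_{1}))$ execute the same sequence of transitions. Since $\pi(\val_{1})$ and $\val_{2}$ agree on all registers and data word variables are irrelevant to transitions, reading $\pi(v)$ from $(q,\pi(\val_{1}))$ and reading $\pi(v)$ from $(q,\val_{2})$ execute the same sequence of transitions. Composing the two, reading $v$ from $(q,\val_{1})$ and reading $\pi(v)$ from $(q,\val_{2})$ execute the same sequence of transitions, as required.

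I expect the only delicate point to be the first paragraph: checking that the three hypotheses are exactly what is needed to promote the partial correspondence $\val_{1}(r) \leftrightarrow \val_{2}(r)$ to a genuine bijection of $D$ that moreover fixes the $R_{1}$-values, while keeping track of the fact that some registers may be undefined. The second and third paragraphs are a routine adaptation of the argument already used in the paper to prove Proposition~\ref{prop:SSRTInvPerm}.
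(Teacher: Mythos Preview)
Your proposal is correct and follows essentially the same approach as the paper: construct a permutation $\pi$ carrying $\val_{1}$'s register contents to $\val_{2}$'s while fixing the $R_{1}$-values, then argue by induction on $|v|$ that the transition sequences coincide. The paper's proof is terser (it asserts the existence of $\pi$ without checking the hypotheses and compresses the induction into one line), whereas you spell out why the three hypotheses are exactly what is needed to make $\pi$ well defined and you factor the induction through a local version of Proposition~\ref{prop:SSRTInvPerm}; but the underlying argument is the same.
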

\begin{proof}
    Let $\pi$ be a permutation that is identity on $\set{\val_{1}(r)
    \mid r \in R_{1}}$ such that for every $r_{2} \in R_{2}$,
    $\pi(\val_{1}(r_{2}))=\val_{2}(r_{2})$. For every register
    $r$ and every position $i$ of $v$, $\val_{1}(r)=\data(v,i)$ iff
    $\val_{2}(r)=\data(\pi(v),i)$. The result follows by a routine
    induction on $|v|$.
\end{proof}

The next result says that if two strings belong to the same
equivalence class of $\fequiv$, then they can be equalized by an
equalizing scheme after which both will be transformed similarly by any
suffix. It uses the binary relation $\sequiv$ and the concept of
arrangements of elements of a set from
Section~\ref{sec:propTransdSSRT}.
\begin{lemma}
    \label{lem:dontCareNonIfl}
    Suppose $S$ is a SSRT implementing a transduction $f$, $u_{1}
    \sequiv u_{2}$, $S$ reaches the configuration $(q_{1},\val_{1}, |u_1|)$
    after reading $E(u_{1})(u_{1})$ and reaches $(q_{2},\val_{2}, |u_2|)$
    after reading $E(u_{2})(u_{2})$. For any data word $v$ and any
    $i$, if the $i$\tsc{th} left block of $f(E(u_{1})(u_{1})\mid
    \underline{v})$ is $\val_{1}(\chi)$ where $\chi$ is some
    arrangement of some subset $X'\subseteq X$, then the $i$\tsc{th}
    left block of $f(E(u_{2})(u_{2})\mid \underline{v})$ is
    $\val_{2}(\chi)$.
\end{lemma}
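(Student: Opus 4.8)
The plan is to transport the hypothesis $u_{1}\sequiv u_{2}$ through the equalizing permutations, reduce the claim to a statement about $S$ reading $v$ from two configurations that share a state, and then apply Lemma~\ref{lem:permToMatchTrans}. Since $E(u_{i})(u_{i})\iso u_{i}$, Proposition~\ref{prop:SSRTInvPerm} shows that the configuration $(q_{i},\val_{i},|u_{i}|)$ reached after reading $E(u_{i})(u_{i})$ equals $(q_{i}',E(u_{i})(\val_{i}'),|u_{i}|)$, where $(q_{i}',\val_{i}')$ is the configuration reached after reading $u_{i}$; hence $u_{1}\sequiv u_{2}$ gives $q_{1}=q_{1}'=q_{2}'=q_{2}=:q$. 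Note that $f$ is invariant under permutations and without data peeking by Proposition~\ref{prop:SSRTTransdInvPerm}. Using Lemma~\ref{lem:ifValsInRegisters} with condition~\ref{sequiv:ifl} of Definition~\ref{def:sequiv} one checks $|\ifl_{f}(u_{1})|=|\ifl_{f}(u_{2})|=:m$, and by the defining property of the equalizing scheme together with Lemma~\ref{lem:iflInvUnderPerm} the $f$-influencing values of both $E(u_{1})(u_{1})$ and $E(u_{2})(u_{2})$ are precisely $\delta_{1},\dots,\delta_{m}$. Transporting the register-equality condition (item~2) and condition~\ref{sequiv:ifl} of Definition~\ref{def:sequiv} through the injections $E(u_{1}),E(u_{2})$ yields: $\val_{1}(r_{1})=\val_{1}(r_{2})$ iff $\val_{2}(r_{1})=\val_{2}(r_{2})$ for all $r_{1},r_{2}\in R$, and whenever a register $r$ holds the $i$\tsc{th} of $\delta_{1},\dots,\delta_{m}$ under $\val_{1}$ it also holds it under $\val_{2}$.

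Next I would invoke Lemma~\ref{lem:permToMatchTrans} with $R_{1}=\set{r\in R\mid \val_{1}(r)\in\set{\delta_{1},\dots,\delta_{m}}}$ and $R_{2}=R\setminus R_{1}$. By the previous paragraph $R_{1}=\set{r\mid \val_{2}(r)\in\set{\delta_{1},\dots,\delta_{m}}}$, $\val_{1}$ and $\val_{2}$ coincide on $R_{1}$ (both send a register holding the $i$\tsc{th} influencing value to $\delta_{i}$), the register-equality patterns of $\val_{1}$ and $\val_{2}$ agree, and the two disjointness conditions hold by construction of $R_{1}$. This gives a permutation $\pi$, the identity on $\set{\delta_{1},\dots,\delta_{m}}$, such that for every data word $v$ the transition sequence $S$ runs reading $v$ from $(q,\val_{1})$ equals the one it runs reading $\pi(v)$ from $(q,\val_{2})$. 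Fix $v$ and $i$ as in the statement, let $\tau$ be this common sequence and $q_{f}$ the state it leads to from $q$ (the same in both runs). By the discussion preceding Lemma~\ref{lem:fEquivFiniteIndex} --- copylessness ensures the content of a data word variable is never split, only moved, prepended or appended, as $S$ reads further input --- each left block of $f(w\mid\underline{v})$, for $w\in\set{E(u_{1})(u_{1}),E(u_{2})(u_{2})}$, is a run of the form $\val(\chi)$ with $\chi$ an arrangement of a subset of $X$, and the list of arrangements occurring as the successive left blocks is determined by $\tau$, $O(q_{f})$ and the set of empty data word variables; and this last set is the same for $\val_{1}$ and $\val_{2}$ by condition~\ref{sequiv:varEmpty} of Definition~\ref{def:sequiv} (transported through $E(u_{1}),E(u_{2})$, which preserve emptiness). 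Hence if the $i$\tsc{th} left block of $f(E(u_{1})(u_{1})\mid\underline{v})$ is $\val_{1}(\chi)$, then the $i$\tsc{th} left block of $f(E(u_{2})(u_{2})\mid\underline{\pi(v)})$ is $\val_{2}(\chi)$ for the same $\chi$. (If $S$ blocks on $v$, or $O(q_{f})$ is undefined, the hypothesis is vacuous; the case $v=\epsilon$ needs only the remark that, by the shape of the hypothesis, the run realizing the block consists of variables alone and so is the same in both runs.)

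To finish, $\pi$ is the identity on $\set{\delta_{1},\dots,\delta_{m}}$ and hence on every $f$-vulnerable value of $E(u_{2})(u_{2})$, so Lemma~\ref{lem:nonPreIflPermutable} gives $f(E(u_{2})(u_{2})\mid\underline{\pi(v)})=f(E(u_{2})(u_{2})\mid\underline{v})$; thus the $i$\tsc{th} left blocks of these two factored outputs coincide and the $i$\tsc{th} left block of $f(E(u_{2})(u_{2})\mid\underline{v})$ is $\val_{2}(\chi)$, as required. The main obstacle I anticipate is the step hidden in the second paragraph: turning the informal ``copyless variable contents are never split'' picture into a precise statement that the combinatorial shape of the left-block decomposition --- which variables, in which order, form each left block --- is a function of $\tau$, $O(q_{f})$ and the set of empty variables alone, insensitive to the actual, permuted data held in $S$; the rest is bookkeeping with permutations.
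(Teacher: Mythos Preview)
Your proposal is correct and follows essentially the same approach as the paper's proof: both reduce to Lemma~\ref{lem:permToMatchTrans} (with $R_{1}$ the registers holding $\delta_{1},\dots,\delta_{m}$) and then invoke Lemma~\ref{lem:nonPreIflPermutable} to pass from $\pi(v)$ back to $v$. The only organizational difference is that the paper argues by contradiction---assuming the $i$\tsc{th} left block of $f(E(u_{2})(u_{2})\mid\underline{v})$ is $\val_{2}(\chi')\ne\val_{2}(\chi)$ and deriving a violation of Lemma~\ref{lem:nonPreIflPermutable}---whereas you argue directly; your version is arguably cleaner and you are more explicit about verifying the hypotheses of Lemma~\ref{lem:permToMatchTrans} and the role of condition~\ref{sequiv:varEmpty}. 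The informal step you flag (that the arrangement $\chi$ realizing each left block is determined by the transition sequence, the output function, and the set of empty variables) is exactly the step the paper also leaves implicit.
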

\begin{proof}
    Since $u_{1} \sequiv u_{2}$, $E(u_{1})(u_{1}) \sequiv
    E(u_2)(u_{2})$, so $q_{1}=q_{2}$, say $q_{1}=q_{2}=q$. For any
    $i$, the $i$\tsc{th} $f$-influencing value is $\delta_{i}$ in both
    $E(u_{1})(u_{1})$ and $E(u_{2})(u_{2})$. From
    condition~\ref{sequiv:ifl} of Definition~\ref{def:sequiv}, we
    infer that $\val_{1}$ and $\val_{2}$ coincide on all the registers
    that store $f$-influencing values. Suppose for the sake of
    contradiction that for some data word $v$ and some $i$, the
    $i$\tsc{th} left block of $f(E(u_{1})(u_{1})\mid \underline{v})$
    is $\val_{1}(\chi)$ and the $i$\tsc{th} left block of
    $f(E(u_{2})(u_{2})\mid \underline{v})$ is $\val_{2}(\chi') \ne
    \val_{2}(\chi)$. This means that while reading $v$ from
    $(q,\val_{2})$, the sequence of transitions is different from the
    sequence when reading $v$ from $(q,\val_{1})$. This difference is
    due to the difference between $\val_{1}$ and $\val_{2}$ in
    registers that don't store $f$-influencing values. Hence, we infer
    from Lemma~\ref{lem:permToMatchTrans} that there
    exists a permutation $\pi$ that is identity on $f$-influencing
    values such that the sequence of transitions executed when reading
    $v$ from $(q,\val_{1})$ is the same sequence executed when reading
    $\pi(v)$ from $(q,\val_{2})$. Hence, the $i$\tsc{th} left block of
    $f(E(u_{2})(u_{2})\mid \underline{\pi(v)})$ is $\val_{2}(\chi)$, which
    is different from the $i$\tsc{th} left block of
    $f(E(u_{2})(u_{2})\mid \underline{v})$, which is
    $\val_{2}(\chi')$. Since $f$ is invariant under permutations and
    without data peeking (from
    Proposition~\ref{prop:SSRTTransdInvPerm}), this contradicts
    Lemma~\ref{lem:nonPreIflPermutable}.
\end{proof}

\section{Proofs of Results in Section~\ref{sec:equivalences-abs}}
\label{app:Equivalences}
\begin{proof}[Proof of Lemma~\ref{lem:fEquivalence}]
    We have $u \fequiv u$ for all $u$, since the identity permutation
    satisfies all the conditions of Definition~\ref{def:fEquivalence}.
    Hence, $\fequiv$ is reflexive.

    Suppose $u_{1} \fequiv u_{2}$ and there exists a permutation $\pi$
    satisfying all the conditions of
    Definition~\ref{def:fEquivalence}. We have $\aifl_{f}(\pi(u_{2})) =
    \aifl_{f}(u_{1})$ and applying the permutation $\pi^{-1}$ on both
    sides gives us $\pi^{-1}(\aifl_{f}(\pi(u_{2}))) =
    \pi^{-1}(\aifl_{f}(u_{1}))$. Since $f$ is invariant under
    permutations, we infer from Lemma~\ref{lem:iflInvUnderPerm} that
    $\aifl_{f}(u_{2}) = \aifl_{f}(\pi^{-1}(u_{1}))$. For any $v$, we
    have $f_{z}(\underline{\pi(u_{2})} \mid \pi(v)) =
    f(\underline{u_{1}} \mid \pi(v))$, where $z=|u_1|-|u_2|$. Applying $\pi^{-1}$ on both
    sides and using Lemma~\ref{lem:factoredOutputInvariant}, we get
    $f_{z}(\underline{u_{2}} \mid v) = f(\pi^{-1}(u_{1}) \mid v)$ for
    any $v$. Hence, $\lambda v. f(\underline{u_{2}} \mid v) = \lambda
    v. f_{-z}(\underline{\pi^{-1}(u_{1})} \mid v)$. For all data
    words $u, v_{1}, v_{2}$, we have $f(u_{1} \cdot \pi(u) \mid
    \underline{\pi(v_{1})}) =
    f(u_{1} \cdot \pi(u) \mid \underline{\pi(v_{2})})$ iff
    $f(\pi(u_{2}) \cdot \pi(u) \mid
    \underline{\pi(v_{1})}) = f(\pi(u_{2}) \cdot \pi(u) \mid
    \underline{\pi(v_{2})})$. Applying $\pi^{-1}$ on both sides of
    both the equalities and using
    Lemma~\ref{lem:factoredOutputInvariant}, we get
    $f(\pi^{-1}(u_{1}) \cdot u \mid \underline{v_{1}}) = f(\pi^{-1}(u_{1}) \cdot u
    \mid \underline{v_{2}})$ iff $f(u_{2} \cdot u \mid \underline{v_{1}}) =
    f(u_{2} \cdot u \mid \underline{v_{2}})$. Hence, $\pi^{-1}$ satisfies all
    the conditions of Definition~\ref{def:fEquivalence}, so
    $u_{2} \fequiv u_{1}$, so $\fequiv$ is symmetric.

    Suppose $u_{1} \fequiv u_{2}$ and there exists a permutation $\pi$
    satisfying all the conditions of
    Definition~\ref{def:fEquivalence}. Suppose $u_{2} \fequiv u_{3}$
    and there exists a permutation $\pi'$ satisfying all the
    conditions of Definition~\ref{def:fEquivalence}. Let $\pi \pcomp
    \pi'$ be the composition of $\pi$ and $\pi'$  ($\pi \pcomp
    \pi'(u) = \pi(\pi'(u))$ for all u). It is routine
    verify the following equalities: $\ifl_{f}(\pi \pcomp
    \pi'(u_{3})) = \ifl_{f}(u_{1})$, $\lambda v. f_{z+z'}(\underline{\pi \pcomp
    \pi'(u_{3}}) \mid v) = f(\underline{u_{1}} \mid v)$ where
    $z=|u_{1}| - |u_{2}|$ and $z' = |u_{2}| - |u_{3}|$ and for
    all data words $u, v_{1}, v_{2}$, $f(u_{1} \cdot u \mid \underline{v_{1}}) =
    f(u_{1} \cdot u \mid \underline{v_{2}})$ iff $f(\pi\pcomp \pi'(u_{3}) \cdot u \mid
    \underline{v_{1}}) = f(\pi\pcomp \pi'(u_{3}) \cdot u \mid
    \underline{v_{2}})$. Hence $\fequiv$ is transitive.
\end{proof}

\iffalse
\begin{proof}[Proof of Lemma~\ref{lem:equalizerUnimportant}]
    Let $\delta_{1} \delta_{2} \cdots $ be a sequence of data values
    such that for every data word $u$ and every $i$, the
    $i$\textsuperscript{th} $f$-influencing data value of
    $E_{1}(u)(u)$ is $\delta_{i}$. Let $\eta_{1} \eta_{2} \cdots $ be
    a sequence of data values such that for every data word $u$ and
    every $i$, the $i$\textsuperscript{th} $f$-influencing data value
    of $E_{2}(u)(u)$ is $\eta_{i}$. Let $\pi$ be a permutation such
    that $\pi(\delta_{1} \delta_{2} \cdots) = \eta_{1} \eta_{2}
    \cdots$. Let $\set{v_{1}, \ldots, v_{m}}$ be a set of data words
    such that no two of them are in the same equivalence class of
    $\equiv_{f}^{E_{1}}$. We will show that no two data words in
    $\set{\pi(v_{1}), \ldots, \pi(v_{m})}$ are in the same equivalence
    class of $\equiv_{f}^{E_{2}}$.

    For all $i,j \in \set{1, \ldots, m}$ with $i \ne j$, let
    $u_{ij}$ be a data word such that $f(E_{1}(u_{ij})(u_{ij}) \mid
    \underline{v_{i}}) \ne f(E_{1}(u_{ij})(u_{ij}) \mid
    \underline{v_{j}})$. Applying the permutation $E_{2}(u_{ij}) \cdot
    E_{1}^{-1}(u_{ij})$ to both sides and using
    Lemma~\ref{lem:factoredOutputInvariant}, we infer that $f(E_{2}(u_{ij})(u_{ij}) \mid
    \underline{E_{2}(u_{ij}) \cdot
    E_{1}^{-1}(u_{ij})(v_{i})}) \ne f(E_{2}(u_{ij})(u_{ij}) \mid
    \underline{E_{2}(u_{ij}) \cdot
    E_{1}^{-1}(u_{ij})(v_{j})})$. Suppose
    $\ifl_{f}(E_{2}(u_{ij})(u_{ij})) = \eta_{1}\cdots \eta_{r}$. We
    will prove that there exist permutations $\pi_{i}, \pi_{j}$ such
    that they are identity
    on $\eta_{1} \cdots \eta_{r}$, $\pi_{i} \pcomp E_{2}(u_{ij})
    \pcomp E_{1}^{-1}(u_{ij})(v_{i}) = \pi(v_{i})$ and $\pi_{j} \pcomp
    E_{2}(u_{ij}) \pcomp
    E_{1}^{-1}(u_{ij})(v_{j}) = \pi(v_{j})$. Then, using
    Lemma~\ref{lem:nonPreIflPermutable}, we get
    \begin{align*}
      f(E_{2}(u_{ij})(u_{ij}) \mid \pi(v_{i})) & = f(E_{2}(u_{ij})(u_{ij}) \mid \pi_{i} \cdot E_{2}(u_{ij}) \cdot
    E_{1}^{-1}(u_{ij})(v_{i}))\\
    & = f(E_{2}(u_{ij})(u_{ij}) \mid E_{2}(u_{ij}) \cdot
    E_{1}^{-1}(u_{ij})(v_{i})) \\
    & \ne f(E_{2}(u_{ij})(u_{ij}) \mid
    \underline{E_{2}(u_{ij}) \cdot
    E_{1}^{-1}(u_{ij})(v_{j})})\\
    &=f(E_{2}(u_{ij})(u_{ij}) \mid
    \underline{\pi_{j} \cdot E_{2}(u_{ij}) \cdot
    E_{1}^{-1}(u_{ij})(v_{j})})\\
    &=f(E_{2}(u_{ij})(u_{ij}) \mid
    \underline{\pi(v_{j})}) \enspace .
    \end{align*}
    Hence, no two data words in $\set{\pi(v_{1}), \ldots, \pi(v_{m})}$
    are in the same equivalence class of $\equiv_{f}^{E_{2}}$.

    Now we will prove that there exists a permutation $\pi_{i}$ such
    that it is identity on $\eta_{1} \cdots \eta_{r}$ and $\pi_{i}
    \cdot E_{2}(u_{ij}) \cdot E_{1}^{-1}(u_{ij})(v_{i}) = \pi(v_{i})$.
    Let $\ifl_{f}(u_{ij}) = d_{1}\cdots d_{r}$. For all $i \in \set{1,
    \ldots, r}$, $\pi: \delta_{i} \mapsto \eta_{i}$ and
    $E_{1}^{-1}(u_{ij}): \delta_{i} \mapsto d_{i}$,
    $E_{2}(u_{ij})(u_{ij}): d_{i} \mapsto \eta_{i}$.  Define $\pi_{i}$
    such that $\pi_{i}: \eta_{i} \mapsto \eta_{i}$.  For $\delta
    \notin \set{\delta_{1}, \ldots, \delta_{r}}$, suppose $\pi: \delta
    \mapsto \eta$, $E_{1}^{-1}(u_{ij}): \delta \mapsto d$ and
    $E_{2}(u_{ij}): d \mapsto \eta'$. Define $\pi_{i}$ such that
    $\pi_{i}: \eta' \mapsto \eta$. Now $\pi_{i}$ is identity on
    $\eta_{1} \cdots \eta_{r}$ and $\pi_{i} \cdot E_{2}(u_{ij}) \cdot
    E_{1}^{-1}(u_{ij})(v_{i}) = \pi(v_{i})$. The existence of
    $\pi_{j}$ can be proved similarly.
\end{proof}
\fi

\section{Technical Details and Proofs of Results in Section~\ref{sec:constSSRT-abs}}
\label{app:ConstSSRT}
\subsection{Recognizing Influencing Values}
\label{sec:recIflVals}
\begin{proof}[Proof of Lemma~\ref{lem:iflValsMonotonic}]
    Suppose $d$ is $f$-memorable in $u\cdot (\sigma,e)$.
    There exists a data value $d'$ that is a safe replacement for
    $d$ in $u\cdot(\sigma,e)$ and a data word $v$ such that the next inequality is
    true.
    \begin{align*}
        f( \underline{(u \cdot (\sigma,e))[d/d']} \mid v) &\ne
        f(\underline{u \cdot (\sigma,e)} \mid v)\\
        f( \underline{u[d/d'] \cdot (\sigma,e)} \mid v) &\ne
        f(\underline{u \cdot (\sigma,e)} \mid v) && [d \ne e]\\
        f(\underline{u[d/d']} \mid (\sigma,e)\cdot v) &\ne
        f(\underline{u} \mid (\sigma,e)\cdot v)&&
        [\text{contrapositive of
        Lemma~\ref{lem:rightCongruenceHelper},
    point~\ref{rch:middleToLeft}}]
    \end{align*}
    The last inequality above shows that $d$ is $f$-memorable
    in $u$.

    Suppose $d$ is $f$-vulnerable in $u \cdot (\sigma,e)$.
    Then there exist data words $u',v$ and a data value $d'$ such that
    $d$ doesn't occur in $u'$, $d'$ is a  safe replacement for $d$ in
    $u \cdot (\sigma,e) \cdot u'\cdot v$ and $f(u\cdot (\sigma,e)
    \cdot u' \mid \underline{v[d/d']}) \ne f(u\cdot (\sigma,e) \cdot
    u' \mid \underline{v})$.  Since $d$ doesn't occur in $u'$ and $d
    \ne e$, $d$ doesn't occur in $(\sigma,e)\cdot u'$. We observe that
    $f(u\cdot ((\sigma,e) \cdot u') \mid \underline{v[d/d']}) \ne
    f(u\cdot ((\sigma,e) \cdot u') \mid \underline{v})$ to conclude
    that $d$ is $f$-vulnerable in $u$.
\end{proof}

\begin{proof}[Proof of Lemma~\ref{lem:SSRTIflValInv}]
By induction on $|u|$. The base case with $|u|=0$ is trivial. As
induction hypothesis, suppose
that after reading a data word $u$, the SSRT reaches the configuration
$( ([u]_f, \ptr), \val, |u|)$ such that $\val(\ptr(i))$ is the
$i$\tsc{th} $f$-influencing value in $u$ for all $i \in
[1,m]$, where $m=|\ifl_f(u)|$.
Suppose the SSRT reads $(\sigma,d) \in \Sigma \times D$ next. We give
the proof for the case where $d$ is not $f$-influencing in $u$ and it
is $f$-influencing in $u \cdot (\sigma,d)$. The other cases are
similar. Let $m'$ be the number of $f$-influencing values in
$E(u)(u)\cdot(\sigma,\delta_0)$. We infer from
Lemma~\ref{lem:rightCongruences} that $\delta_0$ is $f$-influencing in
$E(u)(u)\cdot(\sigma,\delta_0)$. We prove that the transition from
$([u]_f,\ptr)$ corresponding to $i=0$ in
Construction~\ref{const:SSRTIflVal} can be executed. We infer from
Lemma~\ref{lem:rightCongruences} that $u \cdot (\sigma,d) \fequiv
E(u)(u)\cdot (\sigma,\delta_0)$ so $[u \cdot (\sigma,d)]_f=[E(u)(u)
\cdot (\sigma,\delta_0)]_f$, the next state of the SSRT. The condition
$\phi=\bigwedge_{j=1}^{j=m}\ptr(j)^{\ne}$ is satisfied since $d$ is not
$f$-influencing in $u$ and for all $j \in [1,m]$, $\val(\ptr(j))$ is
the $j$\tsc{th} $f$-influencing value in $u$, which is not equal to
$d$. We infer from Lemma~\ref{lem:rightCongruences} that $u \cdot
(\sigma,d)$ has $m'$ $f$-influencing values. For every $j\in [1,m]$,
$\delta_j$ is $f$-influencing in $E(u)(u) \cdot (\sigma,\delta_0)$ iff
the $j$\tsc{th} $f$-influencing value in $u$ (which is assigned to
$\ptr(j)$ by $\val$) is $f$-influencing in $u \cdot (\sigma,d)$.
Since
$\delta_0$ is the $1$\tsc{st} $f$-influencing value in $E(u)(u)\cdot
(\sigma,\delta_0)$, $\ptr'(1)=r_{\reu}$ as given in
Construction~\ref{const:SSRTIflVal}. Since $r_\reu$ is the first
register in the set $R \setminus \set{\ptr(l) \mid 1 \le l \le m,
\delta_l \text{ is } f\text{-influencing in } E(u)(u)\cdot
(\sigma,\delta_0)}$, $r_{\reu}$ is the first
register that is not holding a data value that is $f$-influencing in
$u$ and in $u \cdot (\sigma,d)$. Since $R'=\set{r_{\reu}}$, the transition of
the SSRT changes the valuation to $\val'$ such that $\val'(r_{\reu})=d$. So $\val'(\ptr'(1)) = \val'(r_{\reu})=d$, the first
$f$-influencing value in $u\cdot (\sigma,d)$. Suppose $j \in [2,m']$
and the $j$\tsc{th} $f$-influencing value in
$E(u)(u)\cdot (\sigma,\delta_0)$ is $\delta_k$, the $k$\tsc{th} $f$-influencing value
in $E(u)(u)$ (this will be true for some $k$, by Lemma~\ref{lem:iflValsMonotonic}).
Since $R=\set{r_{\reu}}$, $\val$ and $\val'$ coincide on all registers except
$r_{\reu}$.  Since $r_{\reu}$ is the first
register in the set $R \setminus \set{\ptr(l) \mid 1 \le l \le m,
\delta_l \text{ is } f\text{-influencing in } E(u)(u)\cdot
(\sigma,\delta_0)}$, $r_{\reu} \ne \ptr(k)$ and $\val$
and $\val'$ coincide on $\ptr(k)$. Hence, $\val'(\ptr(k)) =
\val(\ptr(k))$. Since the $j$\tsc{th} $f$-influencing value in
$E(u)(u)\cdot (\sigma,\delta_0)$ is $\delta_k$, the $k$\tsc{th} $f$-influencing value
in $E(u)(u)$, we infer from Lemma~\ref{lem:rightCongruences} that the
$j$\tsc{th} $f$-influencing value in $u\cdot (\sigma,d)$ is the
$k$\tsc{th} $f$-influencing value in $u$. Hence,
$\val'(\ptr'(j))=\val'(\ptr(k))=\val(\ptr(k))$, which is the
$k$\tsc{th} $f$-influencing value in $u$ and the $j$\tsc{th}
$f$-influencing value in $u \cdot (\sigma,d)$. The first equality
above follows since $\ptr'(j) = \ptr(k)$ as given in
Construction~\ref{const:SSRTIflVal}.
\end{proof}

\subsection{Computing Transduction Outputs}
\label{sec:computeTransdOp}
\begin{lemma}
\label{lem:rEquivDist}
Suppose $f$ is a transduction that is invariant under permutations and
without data peeking and $E_1,E_2$ are equalizing schemes. Suppose a
set $V=\set{v_1,v_2, \ldots}$ has the following property: for any
$i\ne j$, there exists $u_{i,j}$ such that $f(E_1(u_{i,j})(u_{i,j}) \mid
\underline{v_i}) \ne f(E_1(u_{i,j})(u_{i,j}) \mid \underline{v_j})$.
Then there exists a set $V'=\set{v_1',v_2',\ldots}$ of the same
cardinality as $V$ such that for any $i \ne j$, $f(E_2(u_{i,j})(u_{i,j})
\mid \underline{v_i'}) \ne f(E_2(u_{i,j})(u_{i,j}) \mid
\underline{v_j'})$. For any $i \ne j$, the same $u_{i,j}$ works for
both $V$ and $V'$; we use the equalizing scheme $E_1$ for $V$ and $E_2$
for $V'$.
\end{lemma}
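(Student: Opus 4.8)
The plan is to move all the witnesses simultaneously by a single permutation that identifies the two ``standard sequences'' produced by $E_1$ and $E_2$. Let $\delta_1\delta_2\cdots$ (resp.\ $\eta_1\eta_2\cdots$) be a sequence of data values such that for every data word $u$ and every $k\in[1,|\ifl_{f}(u)|]$ the $k$\tsc{th} $f$-influencing value of $E_1(u)(u)$ (resp.\ of $E_2(u)(u)$) is $\delta_k$ (resp.\ $\eta_k$), as guaranteed by Definition~\ref{def:equalScheme}. Fix once and for all a permutation $\pi$ with $\pi(\delta_k)=\eta_k$ for all $k$, and put $v_i'=\pi(v_i)$ for every $i$; set $V'=\set{v_1',v_2',\ldots}$. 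Since the hypothesis forces the $v_i$ to be pairwise distinct and $\pi$ is injective, $V'$ has the same cardinality as $V$.

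Now fix $i\ne j$ and abbreviate $u=u_{i,j}$, $r=|\ifl_{f}(u)|$, and $\rho=E_2(u)\pcomp E_1(u)^{-1}$. Then $\rho(E_1(u)(u))=E_2(u)(u)$, and Lemma~\ref{lem:iflInvUnderPerm} together with Definition~\ref{def:equalScheme} gives $\rho(\delta_k)=\eta_k$ for all $k\in[1,r]$: if $d_k$ denotes the $k$\tsc{th} $f$-influencing value of $u$, then $\delta_k=E_1(u)(d_k)$ and $\eta_k=E_2(u)(d_k)$, so $\rho(\delta_k)=E_2(u)(E_1(u)^{-1}(\delta_k))=E_2(u)(d_k)=\eta_k$. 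Applying $\rho$ to both sides of $f(E_1(u)(u)\mid\underline{v_i})\ne f(E_1(u)(u)\mid\underline{v_j})$ and using Lemma~\ref{lem:factoredOutputInvariant} (with offset $0$) yields
\[
  f\big(E_2(u)(u)\mid\underline{\rho(v_i)}\big)\ \ne\ f\big(E_2(u)(u)\mid\underline{\rho(v_j)}\big).
\]
To finish, set $\pi'=\pi\pcomp\rho^{-1}$, so that $\pi'(\rho(v_i))=\pi(v_i)$ and $\pi'(\rho(v_j))=\pi(v_j)$, while $\pi'(\eta_k)=\pi(\rho^{-1}(\eta_k))=\pi(\delta_k)=\eta_k$ for every $k\in[1,r]$. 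The $f$-influencing values of $E_2(u)(u)$ are exactly $\eta_1,\ldots,\eta_r$ (again by Lemma~\ref{lem:iflInvUnderPerm} and Definition~\ref{def:equalScheme}), hence the $f$-vulnerable values of $E_2(u)(u)$ form a subset of $\set{\eta_1,\ldots,\eta_r}$ on which $\pi'$ is the identity. Lemma~\ref{lem:nonPreIflPermutable}, applied with prefix $E_2(u)(u)$ and suffixes $\rho(v_i)$ and $\rho(v_j)$, then gives $f(E_2(u)(u)\mid\underline{\pi(v_i)})=f(E_2(u)(u)\mid\underline{\rho(v_i)})$ and $f(E_2(u)(u)\mid\underline{\pi(v_j)})=f(E_2(u)(u)\mid\underline{\rho(v_j)})$. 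Combined with the displayed inequality, this is precisely $f(E_2(u_{i,j})(u_{i,j})\mid\underline{v_i'})\ne f(E_2(u_{i,j})(u_{i,j})\mid\underline{v_j'})$, as required.

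The only delicate point is the permutation bookkeeping in the second paragraph: one must check that the single global permutation $\pi$ agrees with the prefix-dependent permutation $\rho$ on exactly the data values that Lemma~\ref{lem:nonPreIflPermutable} is sensitive to, namely the $f$-vulnerable (a fortiori $f$-influencing) values of the equalized prefix $E_2(u_{i,j})(u_{i,j})$. This is where Lemma~\ref{lem:iflInvUnderPerm} and the defining property of equalizing schemes enter; the rest is a mechanical use of the permutation-invariance of factored outputs. The argument mirrors the one showing that the index of $\requiv$ is independent of the chosen equalizing scheme.
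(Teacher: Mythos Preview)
Your proof is correct and follows essentially the same approach as the paper's: define $V'=\{\pi(v_i)\}$ for a global permutation $\pi$ sending the $\delta_k$ to the $\eta_k$, push the original inequality through $\rho=E_2(u_{i,j})\pcomp E_1(u_{i,j})^{-1}$ via Lemma~\ref{lem:factoredOutputInvariant}, and then correct from $\rho(v_i)$ to $\pi(v_i)$ using Lemma~\ref{lem:nonPreIflPermutable} with a permutation that fixes $\eta_1,\ldots,\eta_r$. Your presentation is in fact slightly cleaner than the paper's, which constructs two separate correction permutations $\pi_i$ and $\pi_j$ that, by its own construction, depend only on $u_{i,j}$ and hence coincide with your single $\pi'=\pi\pcomp\rho^{-1}$.
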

\begin{proof}
    Let $\delta_{1} \delta_{2} \cdots $ be the sequence of data values
    such that for every data word $u$ and every $i \in [1,|\ifl_f(u)|]$, the
    $i$\textsuperscript{th} $f$-influencing data value of
    $E_{1}(u)(u)$ is $\delta_{i}$. Let $\eta_{1} \eta_{2} \cdots $ be
    the sequence of data values such that for every data word $u$ and
    every $i \in [1,|\ifl_f(u)|]$, the $i$\textsuperscript{th} $f$-influencing data value
    of $E_{2}(u)(u)$ is $\eta_{i}$. Let $\pi$ be a permutation such
    that $\pi(\delta_{1} \delta_{2} \cdots) = \eta_{1} \eta_{2}
    \cdots$. Let $V'=\set{\pi(v_{i}) \mid v_i\in V}$. We will show
that $V'$ satisfies the condition of the lemma.

    For any $i \ne j$, let
    $u_{i,j}$ be a data word such that $f(E_{1}(u_{i,j})(u_{i,j}) \mid
    \underline{v_{i}}) \ne f(E_{1}(u_{i,j})(u_{i,j}) \mid
    \underline{v_{j}})$. Applying the permutation $E_{2}(u_{i,j}) \cdot
    E_{1}^{-1}(u_{i,j})$ to both sides and using
    Lemma~\ref{lem:factoredOutputInvariant}, we infer that
$f(E_{2}(u_{i,j})(u_{i,j}) \mid
    \underline{E_{2}(u_{i,j}) \cdot
    E_{1}^{-1}(u_{i,j})(v_{i})}) \ne f(E_{2}(u_{i,j})(u_{i,j}) \mid
    \underline{E_{2}(u_{i,j}) \cdot
    E_{1}^{-1}(u_{i,j})(v_{j})})$. Suppose
    $\ifl_{f}(E_{2}(u_{i,j})(u_{i,j})) = \eta_{1}\cdots \eta_{r}$. We
    will prove that there exist permutations $\pi_{i}, \pi_{j}$ such
    that they are identity
    on $\eta_{1} \cdots \eta_{r}$, $\pi_{i} \pcomp E_{2}(u_{i,j})
    \pcomp E_{1}^{-1}(u_{i,j})(v_{i}) = \pi(v_{i})$ and $\pi_{j} \pcomp
    E_{2}(u_{i,j}) \pcomp
    E_{1}^{-1}(u_{i,j})(v_{j}) = \pi(v_{j})$. Then, using
    Lemma~\ref{lem:nonPreIflPermutable}, we get
    \begin{align*}
      f(E_{2}(u_{i,j})(u_{i,j}) \mid \pi(v_{i})) & =
f(E_{2}(u_{i,j})(u_{i,j}) \mid \pi_{i} \cdot E_{2}(u_{i,j}) \cdot
    E_{1}^{-1}(u_{i,j})(v_{i}))\\
    & = f(E_{2}(u_{i,j})(u_{i,j}) \mid E_{2}(u_{i,j}) \cdot
    E_{1}^{-1}(u_{i,j})(v_{i})) \\
    & \ne f(E_{2}(u_{i,j})(u_{i,j}) \mid
    \underline{E_{2}(u_{i,j}) \cdot
    E_{1}^{-1}(u_{i,j})(v_{j})})\\
    &=f(E_{2}(u_{i,j})(u_{i,j}) \mid
    \underline{\pi_{j} \cdot E_{2}(u_{i,j}) \cdot
    E_{1}^{-1}(u_{i,j})(v_{j})})\\
    &=f(E_{2}(u_{i,j})(u_{i,j}) \mid
    \underline{\pi(v_{j})}) \enspace .
    \end{align*}

    Now we will prove that there exists a permutation $\pi_{i}$ such
    that it is identity on $\eta_{1} \cdots \eta_{r}$ and $\pi_{i}
    \cdot E_{2}(u_{i,j}) \cdot E_{1}^{-1}(u_{i,j})(v_{i}) = \pi(v_{i})$.
    Let $\ifl_{f}(u_{i,j}) = d_{1}\cdots d_{r}$. For all $i \in \set{1,
    \ldots, r}$, $\pi: \delta_{i} \mapsto \eta_{i}$ and
    $E_{1}^{-1}(u_{i,j}): \delta_{i} \mapsto d_{i}$,
    $E_{2}(u_{i,j})(u_{i,j}): d_{i} \mapsto \eta_{i}$.  Define $\pi_{i}$
    such that $\pi_{i}: \eta_{i} \mapsto \eta_{i}$.  For $\delta
    \notin \set{\delta_{1}, \ldots, \delta_{r}}$, suppose $\pi: \delta
    \mapsto \eta$, $E_{1}^{-1}(u_{i,j}): \delta \mapsto d$ and
    $E_{2}(u_{i,j}): d \mapsto \eta'$. Define $\pi_{i}$ such that
    $\pi_{i}: \eta' \mapsto \eta$. Now $\pi_{i}$ is identity on
    $\eta_{1} \cdots \eta_{r}$ and $\pi_{i} \cdot E_{2}(u_{i,j}) \cdot
    E_{1}^{-1}(u_{i,j})(v_{i}) = \pi(v_{i})$. The existence of
    $\pi_{j}$ can be proved similarly.
\end{proof}

\begin{proof}[Proof of Lemma~\ref{lem:finREquiv}]
Suppose for the sake of contradiction that $\requiv$ has infinite
index. Then, there is an infinite set $\set{v_i}_{i \ge 1}$ of data
words such that for any $j \ne k$, there exists a data word $u_{k,j}$
such that $f(E(u_{k,j})(u_{k,j}) \mid \underline{v_k}) \ne
f(E(u_{k,j})(u_{k,j})
\mid \underline{v_j})$. Let us say that a set $U$ of data words covers
a set $V \subseteq \set{v_i}_{i \ge 1}$ using $E$ if for every $v,v' \in V$,
there exists $E(u)(u) \in U$ such that $f(E(u)(u)\mid \underline{v})
\ne f(E(u)(u)\mid \underline{v'})$. Since $\fequiv$ has finite index,
at least one equivalence class of $\fequiv$ (say $U$) covers an
infinite subset (say $V$) of $\set{v_i}_{i \ge 1}$.

Now we build another equalizing scheme $E'$ as follows. Fix an
arbitrary data word $u \in U$. We infer from
Definition~\ref{def:fEquivalence} that for every $v \in U \setminus
\set{u}$, there exists a permutation $\pi_v$ such that
$\aifl_f(\pi(v))=\aifl_f(u)$. Let $E'$ be an equalizing scheme such
that $E'(u)$ is the identity permutation and for all $v \in U
\setminus \set{u}$, $E'(v)=\pi_v$. From Lemma~\ref{lem:rEquivDist}, we
infer that there exists an infinite set $V'$ that is covered by $U$
using $E'$.

We claim that for any $v_i' \ne v_j' \in V'$,
$f(u \mid \underline{v_i'}) \ne f(u \mid \underline{v_j'})$. Since $U$
covers $V'$ using $E'$, we infer that there exists a data word
$E'(u_{i,j})(u_{i,j}) \in U$ such that $f(E'(u_{i,j})(u_{i,j}) \mid
\underline{v_i'}) \ne f(E'(u_{i,j})(u_{i,j}) \mid
\underline{v_j'})$. Since $u \fequiv u_{i,j} \fequiv
E'(u_{i,j})(u_{i,j})$, we infer from the third condition of
Definition~\ref{def:fEquivalence} that there exists a permutation
$\pi_{i,j}$ such that $f(\pi_{i,j}(u_{i,j}) \mid \underline{v_i'}) \ne
f(\pi_{i,j}(u_{i,j}) \mid \underline{v_j'})$ iff $f(u\mid
\underline{v_i'}) \ne f(u \mid \underline{v_j'})$. Since we chose the
equalizing scheme $E'$ such that $\pi_{i,j}=E'(u_{i,j})$ and
$f(E'(u_{i,j})(u_{i,j}) \mid \underline{v_i'}) \ne
f(E'(u_{i,j})(u_{i,j}) \mid \underline{v_j'})$, we conclude that
$f(u\mid \underline{v_i'}) \ne f(u \mid \underline{v_j'})$, proving
the claim.

Now, $\set{f(u \mid \underline{v'}) \mid v' \in V'}$ is an infinite
set. Since there is no data peeking in $f$, $f(u \mid \underline{v'})$
contains data values only from $u$ for any $v' \in V'$. Hence, the
only way $\set{f(u \mid \underline{v'}) \mid v' \in V'}$ can be
infinite is that there is no bound on the length of the factored
outputs in that set. Since there are a fixed number of positions in $u$,
this contradicts the fact that $f$ has linear blow up. Hence, $\requiv$
has finite index.
\end{proof}

\begin{proof}[Proof of Lemma~\ref{lem:finiteLeftBlocks}]
    Suppose for the sake of contradiction that there is no such bound
    $B$. Then there is an infinite family of pairs of data words
    $(u_{1},v_{1}), (u_{2},v_{2}), \ldots$ such that for all
    $i \ge 1$, $f(u_{i} \mid \underline{v_{i}})$ has at least
    $i$ left blocks. Applying any permutation to $f(u_{i} \mid
    \underline{v_{i}})$ will not change the number of left blocks.
    From Lemma~\ref{lem:factoredOutputInvariant}, we infer that for
    all $i \ge 1$, $f(E(u_{i})(u_{i}) \mid
    \underline{E(u_{i})(v_{i})})$ has at least $i$ left blocks. Since
    $\requiv$ has finite index, there is at least one equivalence
    class of $\requiv$ that contains $E(u_{i})(v_{i})$ for infinitely
    many $i$. Let $v$ be a data word from this equivalence class. From
    the definition of $\requiv$ (Definition~\ref{def:rEquivalence}), we
    infer that for infinitely many $i$, $f(E(u_{i})(u_{i}) \mid
    \underline{v})$ has at least $i$ left blocks. Hence, for
    infinitely many $i$, $f(\underline{E(u_{i})(u_{i})} \mid v)$ has
    at least $(i-1)$ right blocks. Triples in the right blocks have
    origin in $v$. Since the number of positions in
    $v$ is bounded, this contradicts the hypothesis that $f$ has
    linear blow up.
\end{proof}

\begin{proof}[Proof of Lemma~\ref{lem:computeLeftBlocks}]
    Since $E(u\cdot (\sigma,d))^{-1}(v)$ and $E(u)^{-1}(\pi(v))$
    are obtained from applying different permutations to $v$, they are
    isomorphic. We will prove that for all $j$ and $i \ge 2$, if the
    $j$\tsc{th} position of $E(u\cdot (\sigma,d))^{-1}(v)$ contains
    the $i$\tsc{th} $f$-influencing value of $u\cdot (\sigma,d)$,
    then the same is contained in the $j$\tsc{th} position of
    $E(u)^{-1}(\pi(v))$.
    \begin{enumerate}
        \item The $j$\tsc{th} position of $E(u\cdot
            (\sigma,d))^{-1}(v)$ contains the $i$\tsc{th}
            $f$-influencing value of $u\cdot (\sigma,d)$.
        \item\label{leftBlocks:one} Hence, the $j$\tsc{th} position of
            $v$ contains $\delta_{i}$, by definition of equalizing
            schemes (Definition~\ref{def:equalScheme}).
        \item For $i \ge 2$,  the $i$\tsc{th} $f$-influencing value of
            $u\cdot (\sigma,d)$ is among $\set{d_{m}, \ldots,
            d_{1}}$, the $f$-influencing values in $u$, by
            Lemma~\ref{lem:iflValsMonotonic}.
        \item\label{leftBlocks:four} Say $d_{k}$ is the $i$\tsc{th} $f$-influencing value of
            $u\cdot (\sigma,d)$. Then $\delta_{k}$ is the $i$\tsc{th}
            $f$-influencing value of $E(u')(u')\cdot (\sigma,\eta)$,
            by Lemma~\ref{lem:rightCongruences}.
        \item\label{leftBlocks:two} The permutation $\pi$ maps $\delta_{i}$ to
            $\delta_{k}$, by
            Definition~\ref{def:middleBlocks}.
        \item\label{leftBlocks:three} The permutation $E(u)^{-1}$ maps $\delta_{k}$ to
            $d_{k}$, by definition of equalizing schemes
            (Definition~\ref{def:equalScheme}).
        \item The $j$\tsc{th} position of $E(u)^{-1}(\pi(v))$ contains
            $d_{k}$, by points \eqref{leftBlocks:one},
            \eqref{leftBlocks:two} and \eqref{leftBlocks:three} above.
        \item By point \eqref{leftBlocks:four} above, $d_{k}$ is the
            $i$\tsc{th} $f$-influencing value of $u\cdot (\sigma,d)$,
            so the $j$\tsc{th} position of $E(u)^{-1}(\pi(v))$
            contains the $i$\tsc{th} $f$-influencing value of $u\cdot
            (\sigma,d)$.
    \end{enumerate}
    Suppose $(d,\eta)\in\set{(d_{i},\delta_{i}) \mid i \in \set{1,
    \ldots, m}}$ or $(d,\eta)=(d_0,\delta_0)$ and the first $f$-influencing
    value in $u \cdot (\sigma,d)$ is among $\set{d_{m}, \ldots,
    d_{1}}$. Then we can put $i\ge 1$
    in the above reasoning to infer that for all $j$ and $i \ge 1$, if
    the $j$\tsc{th} position of $E(u\cdot (\sigma,d))^{-1}(v)$
    contains the $i$\tsc{th} $f$-influencing value of $u\cdot
    (\sigma,d)$, then the same is contained in the $j$\tsc{th}
    position of $E(u)^{-1}(\pi(v))$. Hence, we get the following
    equality.
    \begin{align*}
        f(u \cdot (\sigma,d) \mid \underline{ E(u \cdot
            (\sigma,d))^{-1}(v)}) &= f(u \cdot (\sigma,d) \mid \underline{ E(u
                )^{-1}(\pi(v))}) &&
                [\text{Lemma~\ref{lem:nonPreIflPermutable}}]\\
        f(u \mid \underline{(\sigma,d)  \cdot E(u \cdot
            (\sigma,d))^{-1}(v)}) &= f(u \mid \underline{(\sigma,d) \cdot E(u
                )^{-1}(\pi(v))}) &&
                [\text{Lemma~\ref{lem:rightCongruenceHelper},
                point~\ref{rch:middleToLeft}}]\\
        f(u \mid \underline{(\sigma,d)  \cdot E(u \cdot
            (\sigma,d))^{-1}(v)}) &= f(u \mid \underline{E(u
                )^{-1}((\sigma,\eta) \cdot \pi(v))}) &&
                [E(u)(d)=\eta]\\
    \end{align*}

    Suppose $(d,\eta)=(d_0,\delta_0)$ and the first $f$-influencing
    value in $u \cdot (\sigma,d)$ is $d$. Then $\pi$ maps
    $\delta_{1}$ to $\eta=\delta_0$. Let $\pi'$ be the permutation that
    interchanges $E(u)^{-1}(\eta)$ and $d$ and doesn't change any
    other value. For all $j$ and $i \ge 1$, if
    the $j$\tsc{th} position of $E(u\cdot (\sigma,d))^{-1}(v)$
    contains the $i$\tsc{th} $f$-influencing value of $u\cdot
    (\sigma,d)$, then the same is contained in the $j$\tsc{th}
    position of $\pi'\pcomp E(u)^{-1}(\pi(v))$. Hence, we get the
    following equality.
    \begin{align*}
        f(u \cdot (\sigma,d) \mid \underline{ E(u \cdot
            (\sigma,d))^{-1}(v)}) &= f(u \cdot (\sigma,d) \mid
            \underline{ \pi' \pcomp E(u)^{-1}(\pi(v))}) &&
                [\text{Lemma~\ref{lem:nonSufIflPermutable}}]\\
        f(u \mid \underline{(\sigma,d)  \cdot E(u \cdot
            (\sigma,d))^{-1}(v)}) &= f(u \mid \underline{(\sigma,d)
\cdot \pi' \pcomp E(u
                )^{-1}(\pi(v))}) &&
                [\text{Lemma~\ref{lem:rightCongruenceHelper},
                point~\ref{rch:middleToLeft}}]
    \end{align*}
    Since $\eta=\delta_0$ does not occur in
    $\set{\delta_{m},\ldots,\delta_{1}}$, $E(u)^{-1}(\eta)$ does not
    occur in $\set{d_{m}, \ldots, d_{1}}$, the $f$-influencing data
    values in $u$. Since $d$ also does not occur in $\set{d_{m},
    \ldots, d_{1}}$, $\pi'$ only interchanges two data values that are
    not $f$-influencing in $u$ and doesn't change any other value. So
    we infer from Lemma~\ref{lem:nonPreIflPermutable} that $f(u \mid
    \underline{(\sigma,d) \cdot \pi' \pcomp E(u)^{-1}(\pi(v))}) =
    f(u \mid \underline{\pi'((\sigma,d)) \cdot \pi' \pcomp \pi'
    \pcomp E(u)^{-1}(\pi(v))}) = f(u \mid
    \underline{E(u)^{-1}((\sigma,\eta)) \cdot
    E(u)^{-1}(\pi(v))}) = f(u \mid
    \underline{E(u)^{-1}((\sigma,\eta) \cdot
    \pi(v)}))$. Combining this with the equality above, we get
    $f(u \mid \underline{(\sigma,d)  \cdot E(u \cdot
        (\sigma,d))^{-1}(v)})=f(u \mid
    \underline{E(u)^{-1}((\sigma,\eta) \cdot
    \pi(v)}))$.

    Now we will prove the statements about $f(\underline{u} \mid
(\sigma,d) \mid \underline{E(u\cdot (\sigma,d))^{-1}(v)})$.
    Let $g$ be a function such that for $i \ge 2$, the $i$\tsc{th}
$f$-influencing value in $E(u)(u)\cdot (\sigma,\eta)$ is
$\delta_{g(i)}$.

    Case 1: $(d,\eta) \in \set{(d_{i},\delta_{i}) \mid i \in
    [1, m]}$. Let $\ifl_{f}(E(u\cdot (\sigma,d))(u\cdot
    (\sigma,d)))=\delta_{r} \cdots \delta_{1}$. We will first prove
    that $E(u) \pcomp E(u \cdot (\sigma,d))^{-1}$ coincides with
    $\pi$ on $\delta_{r}, \ldots, \delta_{1}$. For $i \ge 2$, $E(u
    \cdot (\sigma,d))^{-1}(\delta_{i})$ is the $i$\tsc{th}
    $f$-influencing value in $u \cdot (\sigma,d)$ and we infer from
    Lemma~\ref{lem:rightCongruences} that the $i$\tsc{th}
    $f$-influencing value in $u \cdot (\sigma,d)$ is $d_{g(i)}$,
    the $g(i)$\tsc{th} $f$-influencing value in $u$
    (since the $i$\tsc{th} $f$-influencing value in $E(u)(u)\cdot
    (\sigma,\eta)$ is $\delta_{g(i)}$, the $g(i)$\tsc{th}
    $f$-influencing value in $E(u)(u)$). By
    Definition~\ref{def:equalScheme}, $E(u)$ maps $d_{g(i)}$ to
    $\delta_{g(i)}$. Hence, for $i \ge 2$, $E(u)
    \pcomp E(u \cdot (\sigma,d))^{-1}$ maps $\delta_{i}$ to
    $\delta_{g(i)}$, which is exactly what $\pi$ does to $\delta_{i}$.
    Say the first $f$-influencing value in $E(u')(u')\cdot(\sigma,\eta)$ is
    $\delta_{j}$. We infer from Lemma~\ref{lem:rightCongruences} that
    the first $f$-influencing value in $u \cdot (\sigma,d)$ is
    $d_{j}$. Hence, $E(u) \pcomp E(u \cdot (\sigma,d))^{-1}$ maps
    $\delta_{1}$ to $\delta_{j}$, which is exactly what $\pi$ does to
    $\delta_{1}$. Hence, $E(u) \pcomp E(u \cdot (\sigma,d))^{-1}$
    coincides with $\pi$ on $\delta_{r}, \ldots, \delta_{1}$, the
    $f$-influencing values of $E(u\cdot (\sigma,d))(u \cdot
    (\sigma,d))$.

    \begin{align*}
        & E(u)(f(\underline{u} \mid (\sigma,d) \mid \underline{E(u \cdot
            (\sigma,d))^{-1}(v)}))\\
            &= E(u) \pcomp E(u \cdot (\sigma,d))^{-1} \pcomp E(u \cdot
            (\sigma,d))(f(\underline{u} \mid (\sigma,d) \mid \underline{E(u \cdot
            (\sigma,d))^{-1}(v)}))\\
            &=E(u) \pcomp E(u \cdot
            (\sigma,d))^{-1}(f(\underline{E(u \cdot (\sigma,d))(u)} \mid
            E(u \cdot (\sigma,d))(\sigma,d)) \mid \underline{v})
            && [\text{Lemma~\ref{lem:factoredOutputInvariant}}]\\
            &=f(\underline{E(u)(u)} \mid E(u)(\sigma,d) \mid
            \underline{\pi(v)} ) &&
            [\text{Lemma~\ref{lem:nonSufIflPermutableSecond}}]\\
            &=f_{z}(\underline{E(u')(u')} \mid (\sigma,\eta) \mid
            \underline{\pi(v)}) && [\text{Lemma~\ref{lem:fEqSameEffect}}]
    \end{align*}
    In the last inequality above, apart from
    Lemma~\ref{lem:fEqSameEffect}, we also use the fact that
    $E(u)(d)=E(u)(d_{i})=\delta_{i}=\eta$. So we get
    $E(u)(f(\underline{u} \mid (\sigma,d) \mid \underline{E(u \cdot
    (\sigma,d))^{-1}(v)}))=f_{z}(\underline{E(u')(u')} \mid
    (\sigma,\eta) \mid \underline{\pi(v)})$, concluding the proof
    for this case.

    Case 2: $(d,\eta)=(d_0,\delta_0)$. Let $\pi_{1}$ be any permutation
    satisfying the following conditions:
    \begin{itemize}
        \item For $i \ge 2$,
            $\pi_{1}(\delta_{i})=\pi(\delta_{i})$,
        \item if the first $f$-influencing value in
            $E(u')(u')\cdot(\sigma,\eta)$ is $\delta_{j}$ for some $j
\ge 1$, then
            $\pi_{1}(\delta_{1})=\pi(\delta_{1})$ and
        \item if the first $f$-influencing value in
            $E(u')(u')\cdot(\sigma,\eta)$ is $\eta=\delta_0$, then
            $\pi_{1}(\delta_{1})=E(u)(d)=E(u)(d_0)$.
    \end{itemize}
    As seen in case 1, $E(u) \cdot E(u
    \cdot (\sigma,d))^{-1}$ coincides with $\pi_{1}$ on $\delta_{r},
    \ldots, \delta_{2}$. If the first $f$-influencing value in
    $E(u)(u)\cdot(\sigma,\eta)$ is $\delta_{j}$ for some $j \ge 1$, then again as in case 1,
    $E(u) \cdot E(u \cdot (\sigma,d))^{-1}$ coincides with
    $\pi_{1}$ on $\delta_{1}$. If the first $f$-influencing value in
    $E(u)(u)\cdot(\sigma,\eta)$ is $\eta$, we infer from
    Lemma~\ref{lem:rightCongruences} that the first $f$-influencing
    value in $u \cdot (\sigma,d)$ is $d$, so $E(u \cdot
    (\sigma,d))^{-1}$ maps $\delta_{1}$ to $d$. In this case,
    $\pi_{1}(\delta_{1})=E(u)(d)$, so $E(u) \cdot E(u \cdot
    (\sigma,d))^{-1}$ coincides with $\pi_{1}$ on $\delta_{1}$. So
    $E(u) \cdot E(u \cdot (\sigma,d))^{-1}$ coincides with
    $\pi_{1}$ on $\delta_{r}, \ldots, \delta_{1}$. Hence, similar to
    case 1, we get $E(u)(f(\underline{u} \mid (\sigma,d) \mid \underline{E(u \cdot
        (\sigma,d))^{-1}(v)}))=f_{z}(\underline{E(u')(u')} \mid
    E(u)(\sigma,d) \mid \underline{\pi_{1}(v)} )$.

    Recall that $\delta_0$ is a data value that is not $f$-influencing in
    $E(u')(u')$ and does not occur in $\set{\delta_{m}, \ldots, \delta_{1}}$. Let
    $\pi'$ be the permutation that interchanges $\delta_0$ and $E(u)(d)$ and
    doesn't change any other value. Since $d$ is not $f$-influencing in
    $u$, $E(u)(d)$ does not occur in $\set{\delta_{m}, \ldots,
    \delta_{1}}$. Since the $f$-influencing values of $E(u')(u')$ are
    $\delta_{m}, \ldots, \delta_{1}$ and neither $\delta_0$ nor $E(u)(d)$ occur
    in $\set{\delta_{m}, \ldots, \delta_{1}}$, we get the following:
    \begin{align*}
        f(\underline{E(u')(u')} \mid (\sigma,\delta_0) \cdot
        \pi'\pcomp\pi_{1}(v)) &=f(\underline{\pi'\pcomp E(u')(u')} \mid
(\sigma,\delta_0) \cdot
        \pi'\pcomp\pi_{1}(v))
        &&[\text{Lemma~\ref{lem:nonSufIflPermutable}}]\\
        f(\underline{E(u')(u')} \mid (\sigma,\delta_0) \mid
        \underline{\pi'\pcomp\pi_{1}(v))}
        &=f(\underline{\pi'\pcomp E(u')(u')} \mid (\sigma,\delta_0) \mid
        \underline{\pi'\pcomp\pi_{1}(v)})
        &&[\text{Lemma~\ref{lem:rightCongruenceHelper},
        point~\ref{rch:rightToAbstract}}]\\
        E(u)(f(\underline{u} \mid (\sigma,d) \mid \underline{E(u \cdot
            (\sigma,d))^{-1}(v)})) &=f_{z}(\underline{E(u')(u')} \mid
                E(u)(\sigma,d) \mid
                \underline{\pi_{1}(v)})\\
        \pi' \pcomp E(u)(f(\underline{u} \mid (\sigma,d) \mid \underline{E(u \cdot
                    (\sigma,d))^{-1}(v)}))
                    &=\pi'(f_{z}(\underline{E(u')(u')} \mid
                    E(u)(\sigma,d) \mid
                    \underline{\pi_{1}(v)})) && [\text{apply } \pi'
                    \text{ on both sides}]\\
                    &=f_{z}(\underline{\pi'\pcomp E(u')(u')} \mid
                    (\sigma,\delta_0)
                    \mid \underline{\pi' \pcomp \pi_{1}(v)}) &&
                    [\text{Lemma~\ref{lem:factoredOutputInvariant}}]\\
                    &=f_{z}(\underline{E(u')(u')} \mid
(\sigma,\delta_0)
                    \mid \underline{\pi' \pcomp \pi_{1}(v)}) &&
                    [\text{second equality above}]
    \end{align*}
    For $i \in \set{1, \ldots, r}$, $\pi$ maps $\delta_{i}$ to the
    $i$\tsc{th} $f$-influencing value in $E(u')(u')\cdot
    (\sigma,\eta)$ by definition. We will prove that $\pi'\pcomp
    \pi_{1}$ does exactly the same on $\delta_{1}, \ldots,
    \delta_{r}$. For $i \ge 2$, $\pi$ maps $\delta_{i}$ to the
    $i$\tsc{th} $f$-influencing value in $E(u')(u')\cdot
    (\sigma,\eta)$, which is among $\delta_{m}, \ldots,
    \delta_{1}$. By definition, $\pi_{1}$ also maps $\delta_{i}$ to the
    $i$\tsc{th} $f$-influencing value in $E(u')(u')\cdot
    (\sigma,\eta)$, and $\pi'$ doesn't change this value, since
    neither $E(u)(d)$ nor $\delta_0$ are among $\delta_{m}, \ldots,
    \delta_{1}$. The permutation $\pi$ maps $\delta_{1}$ to the
    first $f$-influencing value in $E(u')(u')\cdot
    (\sigma,\eta)$. If this first $f$-influencing value is
    $\delta_{j}$ for  some $j \ge 1$, then, by definition, $\pi_{1}$ also maps $\delta_{1}$ to the
    first $f$-influencing value in $E(u')(u')\cdot
    (\sigma,\eta)$, and $\pi'$ doesn't change this value, since
    neither $E(u)(d)$ nor $\delta_0$ are among $\delta_{m}, \ldots,
    \delta_{1}$. If the first $f$-influencing value in $E(u')(u')\cdot
    (\sigma,\eta)$ is $\eta=\delta_0$, then $\pi$ maps $\delta_{1}$ to
    $\delta_0$. By definition, $\pi_{1}$ maps $\delta_{1}$ to $E(u)(d)$ and
    $\pi'$ maps $E(u)(d)$ to $\delta_0$. Hence, $\pi' \pcomp \pi_{1}$ maps
    $\delta_{1}$ to $\delta_0$. Therefore, for $i \in \set{1, \ldots, r}$,
    both $\pi$ and $\pi'\pcomp \pi_{1}$ map $\delta_{i}$ to the
    $i$\tsc{th} $f$-influencing value in $E(u')(u')\cdot
    (\sigma,\eta)$. Hence, we can apply
    Lemma~\ref{lem:nonPreIflPermutable} to get the next equality.
    \begin{align*}
        f(E(u')(u')\cdot (\sigma,\delta_0) \mid
        \underline{\pi'\pcomp \pi_{1}(v)}) &= f(E(u')(u')\cdot
(\sigma,\delta_0) \mid
        \underline{\pi(v)})\\
        f(\underline{E(u')(u')}\mid (\sigma,\delta_0) \mid
        \underline{\pi'\pcomp \pi_{1}(v)}) &=
        f(\underline{E(u')(u')}\mid (\sigma,\delta_0) \mid
        \underline{\pi(v)}) &&
        [\text{Lemma~\ref{lem:rightCongruenceHelper},
        point~\ref{rch:leftToAbstract}}]
    \end{align*}
    Hence $\pi' \pcomp E(u)(f(\underline{u} \mid (\sigma,d) \mid
    \underline{E(u
\cdot(\sigma,d))^{-1}(v)}))=f_z(\underline{E(u')(u')}\mid
(\sigma,\delta_0)
    \mid \underline{\pi(v)})$, concluding the proof for this case.
\end{proof}

\subsection{Dependency Trees}
\label{sec:dependencyTrees}
\begin{proof}[Proof of Lemma~\ref{lem:extAntTreeInv}]
    Suppose $[v]_{f}^{E}$ is an equivalence class and
    $\theta_{v},\theta$ are as explained in
    Definition~\ref{def:extCompAntTree}. If $d$ is the $i$\tsc{th} $f$-influencing value in $u$ for some $i
\ge 1$, let $\eta=\delta_i$ and let $\eta=\delta_0$ otherwise. Let
$u'$ be an arbitrary data word in $[u]_f$. We have from
    Lemma~\ref{lem:rightCongruences} that $u \cdot
    (\sigma,d)\fequiv E(u')(u')\cdot (\sigma,\eta)$, so
    $\pref(\theta)=[E(u')(u')\cdot(\sigma,\eta)]_{f}=[u \cdot
    (\sigma,d)]_{f}$ as required. We have from
    Lemma~\ref{lem:computeLeftBlocks} that $f(\underline{u} \mid
    (\sigma,d) \mid \underline{E(u \cdot (\sigma,d))^{-1}(v)})$
    is equal to either $E(u)^{-1}(f_{z}(\underline{E(u')(u')} \mid
    (\sigma,\eta) \mid \underline{\pi(v)}))$ or $E(u)^{-1}\pcomp
    \pi'(f_{z}(\underline{E(u')(u')} \mid (\sigma,\eta) \mid
    \underline{\pi(v)}))$. Hence, $f(\underline{u} \mid (\sigma,d)
    \mid \underline{E(u \cdot (\sigma,d))^{-1}(v)})$ and
    $f_{z}(\underline{E(u')(u')} \mid (\sigma,\eta) \mid
    \underline{\pi(v)})$ are isomorphic. Hence, the $i$\tsc{th} left block of
    $f(u \cdot(\sigma,d) \mid \underline{E(u \cdot
    (\sigma,d))^{-1}(v)})$ is the concretization of $z$, the
    $i$\tsc{th} non-right block of
    $f(\underline{E(u')(u')} \mid
    \underline{(\sigma,\eta)} \mid \underline{\pi(v)})$, as
    defined in Definition~\ref{def:extCompAntTree}. We will prove that
    $\val'(\ur(\theta,\bl(\theta,i)))$ is the concretization of $z$,
    which is sufficient to complete the proof.

    Indeed, $\val'(\ur(\theta,\bl(\theta,i))) = \val'(\ur(\theta,z'))$, where
    $z'$ is obtained from $z$ by replacing $j$\tsc{th} left block by $P_{j}$ and
    $k$\tsc{th} middle block by $\bv{\theta,k}$. Since we set $\val'(\bv{\theta,k})$ to be
    the $k$\tsc{th} middle block of $f(\underline{u} \mid
    (\sigma,d) \mid \underline{E(u\cdot (\sigma,d))^{-1}(v)})$,
    $\val'(\ur(\theta,\bl(\theta,i)))$ correctly concretizes the
    middle blocks. Since $\ur(\theta,P_{j}) =
    \ur(\theta_{v},\bl(\theta_v,j))$ and $\theta_{v}$ is a node in the
    original tree $T$, we infer that $\val(\ur(\theta_{v},\bl(\theta_v,j)))$
    is the $j$\tsc{th} left block of $f(u \mid
    \underline{E(u)^{-1}((\sigma,\eta)\cdot \pi(v))})$. Since $\val$
    and $\val'$ differ only in the variables $\bv{\theta,k}$ where
    $\theta$ is newly introduced, we infer that
    $\val'(\ur(\theta_{v},\bl(\theta_v,j)))=\val(\ur(\theta_{v},\bl(\theta_v,j)))$
	is the $j$\tsc{th} left block of $f(u \mid
	\underline{E(u)^{-1}((\sigma,\eta)\cdot\pi(v))})$. From
	Lemma~\ref{lem:computeLeftBlocks}, we infer that the $j$\tsc{th} left
	block of $f(u \mid \underline{E(u)^{-1}((\sigma,\eta)\cdot\pi(v))})$
	is equal to the $j$\tsc{th} left block of $f(u \mid
	\underline{(\sigma,d)\cdot E(u \cdot (\sigma,d))^{-1}(v)})$.
	Hence, $\val'(\ur(\theta,\bl(\theta,j)))$ correctly concretizes the
	left blocks.
\end{proof}
\begin{proof}[Proof of Lemma~\ref{lem:antTreeShortInv}]
  Suppose $T'$ is obtained from $T$ by removing a node $\theta$ and
  making the only child of $\theta$ a child of $\theta$'s parent. If
  the only child of $\theta$ is $\theta \cdot [v]_{f}^{E}$, we will
  prove that for all $i \in [1,B]$, $\ur(\theta\drlst \cdot
  [v]_{f}^{E},\bl(\theta\drlst \cdot [v]_{f}^{E},i))=\ur(\theta \cdot
  [v]_{f}^{E}, \bl(\theta \cdot [v]_{f}^{E},i))$.  This will imply
  that the unrolling of any block description in any leaf remains
  unchanged due to the shortening, so the lemma will be proved. First
  we will prove that
  $\ur(\theta\drlst\cdot[v]_{f}^{E},\bl(\theta,j))=\ur(\theta,\bl(\theta,j))$.
  Indeed, both are obtained from $\bl(\theta,j)$ by replacing every
  occurrence of $P_{k}$ by $\ur(\theta\drlst,\bl(\theta\drlst,k))$.

  We get $\ur(\theta\cdot [v]_{f}^{E},\bl(\theta\cdot [v]_{f}^{E},i))$
  from $\bl(\theta\cdot [v]_{f}^{E},i)$ by replacing every occurrence
  of $P_{j}$ by $\ur(\theta,\bl(\theta,j))$. We will prove that we also get $\ur(\theta\drlst\cdot
  [v]_{f}^{E},\bl(\theta\drlst\cdot [v]_{f}^{E},i))$ from
  $\bl(\theta\cdot [v]_{f}^{E},i)$ by replacing every occurrence
  of $P_{j}$ by
  $\ur(\theta,\bl(\theta,j))$, which is sufficient to prove the lemma.

  Recall that
  $\bl(\theta\drlst\cdot [v]_{f}^{E},i)$ is obtained from
  $\bl(\theta\cdot [v]_{f}^{E},i)$ by replacing every occurrence of
  $P_{j}$ by $\bl(\theta,j)$, as given in Definition~\ref{def:antTreeShort}. Hence, we get $\ur(\theta\drlst\cdot
  [v]_{f}^{E},\bl(\theta\drlst\cdot [v]_{f}^{E},i))$ from
  $\bl(\theta\cdot [v]_{f}^{E},i)$ by first replacing every occurrence
  of $P_{j}$ by $\bl(\theta,j)$, which is then replaced by
  $\ur(\theta\drlst\cdot[v]_{f}^{E},\bl(\theta,j))=
  \ur(\theta,\bl(\theta,j))$. Hence, for all $i \in [1,B]$,
  $\ur(\theta\drlst \cdot [v]_{f}^{E},\bl(\theta\drlst \cdot
  [v]_{f}^{E},i))=\ur(\theta \cdot [v]_{f}^{E}, \bl(\theta \cdot
  [v]_{f}^{E},i))$.
\end{proof}

\begin{proof}[Proof of Lemma~\ref{lem:antTreeExtShortTrimRed}]
    Suppose all leaves in $T$ are labeled with $[u]_{f}$ by $\pref$.
    Then all leaves in $T_{1}$ (and hence in $T_{2}$ and $T_{3}$) are
    labeled by $[u \cdot (\sigma,\eta)]_{f}$. All paths in $T_{2}$
    (and hence in $T_{3}$) are of length at most $|(\Sigma \times
    D)^{*}/\requiv|+1$: if there are longer paths, there will be at
    least $|(\Sigma \times D)^{*}/\requiv|+1$ leaves since each
    internal node has at least two children. However, this is not
    possible since $T_{2}$ has only one leaf for every equivalence
    class of $\requiv$. In $T_{3}$, for any node $\theta$ and any $i
    \in [1,B]$, $\bl(\theta,i)$ will only contain elements from
    $X_{\theta}$ and $\pbl$, as ensured in the trimming process in
    Definition~\ref{def:antTreeTrim}. There are at most $B$ parent
    references, each of which occurs at most once in $\bl(\theta,i)$
    for at most one $i \in [1,B]$. Since every non-parent block is
    replaced by a data word variable in the trimming process, each
    $\bl(\theta,i)$ is of length at most $2B+1$. Each $\bl(\theta,i)$
    has at most $(B+1)$ data word variables and $i \in [1,B]$, so at
    most $(B^{2}+B)$ data word variables are sufficient for the block
    descriptions in $\theta$.  Hence, $T_{3}$ is reduced.
\end{proof}

\begin{proof}[Proof of Lemma~\ref{lem:SSRTTransdInv}]
    Since $S$ is an extension of the SSRT constructed in
    Construction~\ref{const:SSRTIflVal}, the claim about the pointer function
    $\ptr$ comes from Lemma~\ref{lem:SSRTIflValInv}. For the
    We will prove that $(T,\val)$ is complete for $u$ by induction on $|u|$. For
the base case, $|u|=0$ and  we infer that $( ([\epsilon]_{f},\ptr_{\bot}, T_{\bot}),
\val_{\epsilon})$ is complete for $u=\epsilon$ by definition. We inductively
assume that after reading $u$, $S$ reaches the configuration $(
([u]_{f},\ptr,T),\val,|u|)$ such that $\val(\ptr(i))$ is the $i$\tsc{th} $f$-influencing
value in $u$ and $(T,\val)$ is complete for $u$.  Suppose the next symbol read
by the SSRT is $(\sigma,d)$ and $m=|\ifl_f(u)|$.

If $d$ is the $i$\tsc{th} $f$-influencing value in $u$ for some $i
\ge 1$, let $\eta=\delta_i$ and let $\eta=\delta_0$ otherwise. Let
$\pi$ be a permutation tracking influencing values on $E(u')(u') \cdot
(\sigma,\eta)$ as given in Definition~\ref{def:middleBlocks}.
Suppose $T_{1}$ is the $(\sigma,\eta)$ extension of $T$, $T_{2}$
is obtained from $T_{1}$ by shortening it as much as possible and
$T'$ is the trimming of $T_{2}$. Let $\update_{1}$ be the function as
defined in Construction~\ref{const:SSRTTransd}.  If $S$ had the transition
$( ([u]_{f},\ptr,T), \sigma, \phi, ([E(u')(u') \cdot
(\sigma,\eta)]_{f},\ptr',T_{1}),R',\update_{1})$, $S$ would read
$(\sigma,d)$ and reach the configuration $(([E(u')(u') \cdot
(\sigma,\eta)]_{f},\ptr',T_{1}), \val_1, |u|+1)$. We will prove that
$(T_{1},\val_{1})$ is complete for $u \cdot (\sigma,d)$. This can
be inferred from Lemma~\ref{lem:extAntTreeInv} if $\val_{1}$ is the
$(\sigma,d)$ extension of $(T, \val)$. This can be inferred if
$\val_{1}$ is obtained from $\val$ by setting
$\val_{1}(\bv{\theta,k})$ to the $k$\tsc{th} middle block of
$f(\underline{u}\mid (\sigma,d) \mid \underline{E(u \cdot
(\sigma,d))^{-1}(v)})$ for every leaf $\theta=\theta_{v} \cdot
[v]_{f}^{E}$ that is
newly added while extending $T$ to $T_{1}$. This can be inferred
from Lemma~\ref{lem:computeLeftBlocks} if $\val_{1}(\bv{\theta,k})$
is set to $z_{1}$, the $k$\tsc{th} middle block of
$E(u)^{-1}(f_{z}(\underline{E(u')(u')} \mid (\sigma,\eta) \mid
\underline{\pi(v)}))$ if $\eta=\delta_i$ for some $i \in [1,m]$ and $\val_{1}(\bv{\theta,k})$
is set to $z_{2}$, the $k$\tsc{th} middle block of $E(u)^{-1} \pcomp
\pi'(f_{z}(\underline{E(u')(u')} \mid (\sigma,\eta) \mid
\underline{\pi(v)}))$ if $\eta=\delta_{0}$, where $z=|u|-|u'|$ and
$\pi'$ is the permutation that interchanges $\delta_{0}$ and
$E(u)(d)$ and doesn't change any other value. From
the semantics of SSRTs, we infer that the third component in every
triple of $\val_{1}(\bv{\theta,k})$ is $|u|+1$, as required. Hence, it
remains to prove that
$\dropOrig(\val_{1}(\bv{\theta,k}))=\dropOrig(z_{1})$ if
$\eta=\delta_{i}$ for some $i \in [1,m]$ and
$\dropOrig(\val_{1}(\bv{\theta,k}))=\dropOrig(z_{2})$ if
$\eta=\delta_{0}$.

From Lemma~\ref{lem:currOrIfl}, we infer that all
data values in $f_{z}(\underline{E(u')(u')} \mid (\sigma,\eta) \mid
\underline{\pi(v)})$ are among $\set{\delta_{0}, \ldots,
\delta_{m}}$. Hence, we get $z_{1}$ and $z_{2}$ from the
$k$\tsc{th} middle block of $f_{z}(\underline{E(u')(u')} \mid (\sigma,\eta) \mid
\underline{\pi(v)})$ by replacing every occurrence of
$\delta_{j}$ for $j \in [1,m]$ by $E(u)^{-1}(\delta_{j})$ (which is the
$j$\tsc{th} $f$-influencing value in $u$) and replacing every
occurrence of $\delta_{0}$ by $E(u)^{-1}\pcomp\pi'(\delta_{0})$
(which is $d$). This exactly what the update function $\update_{1}$ does to
$\bv{\theta,k}$: it is set to the $k$\tsc{th} middle block of
$f_{z}(\underline{E(u')(u')} \mid (\sigma,\eta) \mid
\underline{\pi(v)})$ and every occurrence of $\delta_{j}$ is replaced by
$\ptr(j)$ (the transition of $S$ then replaces this with
$\val(\ptr(j))$, the $j$\tsc{th} $f$-influencing value in $u$) and
every occurrence of $\delta_{0}$ is replaced by $\curr$ (the
transition of $S$ then replaces this with $d$, the current data
value being read). Hence, $(T_{1},\val_{1})$ is complete for $u \cdot
(\sigma,d)$.

Since $T_2$ is obtained from $T_1$ by shortening it as much as
possible, we infer from Lemma~\ref{lem:antTreeShortInv} that
$(T_2,\val_1)$ is complete for $u \cdot (\sigma,d)$. The actual
transition in $S$ is $( ([u]_{f},\ptr,T), \sigma, \phi, ([E(u')(u')
\cdot (\sigma,\eta)]_{f},\ptr',T'),R',\update)$. After reading $(\sigma,d)$,
$S$ goes to the configuration $( ([u \cdot
(\sigma,\eta)]_{f},\ptr',T'),\val', |u|+1)$ where $\val'$ is the trimming of
$\val_{1}$ (due to the way $\update$ is defined from $\update_{1}$). Since $T'$ is
the trimming of $T_2$, we conclude from
Proposition~\ref{prop:antTreeTrimInv} that $(T',\val')$ is complete
for $u \cdot (\sigma,d)$.
\end{proof}

\section{Technical Details and Proofs of Results in Section~\ref{sec:propTransdSSRT}}
\label{app:TransdSSRT}
\begin{proof}[Proof of Lemma~\ref{lem:ifValsInRegisters}]
    Suppose a data value $d$ is not stored in any of the registers
    after reading $u$. We will prove that $d$ is neither $f$-memorable
    nor $f$-vulnerable in $u$. To prove that
    $d$ is not $f$-memorable in $u$, we will show that for
    any data word $v$ and any safe replacement $d'$ for $d$ in
    $u$, $f(\underline{u[d/d']} \mid v)=f(\underline{u}\mid v)$.
    Indeed, let $\pi$ be the permutation that interchanges $d$ and
    $d'$ and that doesn't change any other value. We have
    $u[d/d']=\pi(u)$. Suppose $S$ reaches the configuration
    $(q,\val)$ after reading $u$. We infer from
    Lemma~\ref{prop:SSRTInvPerm} that $S$ reaches the configuration
    $(q,\pi(\val))$ after reading $\pi(u)$. Since $d$ is not stored in
    any of the registers under the valuation $\val$, $\pi(\val)$
    coincides with $\val$ on all registers. Hence, if $S$ executes a
    sequence of transitions reading a data word $v$ from the
    configuration $(q,\val)$, the same sequence of transitions are
    executed reading $v$ from $(q,\pi(\val))$. Since
    $f(\underline{u[d/d']} \mid v)$ and $f(\underline{u}\mid v)$
    depends only on the sequence of transitions that are executed
    while reading $v$, we infer that $f(\underline{u[d/d']} \mid
    v)=f(\underline{u}\mid v)$.

    Next we will prove that if a data value $d$ is not stored in any
    of the registers after reading $u$, then $d$ is not $f$-vulnerable
    in $u$. Let $u',v$ be data words and $d'$ be a data
    value such that $d$ doesn't occur in $u'$.  Since $d$ is not
    stored in any of the registers after reading $u$ and $d$ doesn't
    occur in $u'$, $d$ is not stored in any of the registers after
    reading $u\cdot u'$. Suppose $d'$ is a safe replacement for $d$ in
    $u\cdot u'\cdot v$. Then $d'$ doesn't occur in $u\cdot u'$ so
    neither $d'$ nor $d$ is stored in any of the registers after reading
    $u \cdot u'$. Since $d'$ doesn't occur in $v$, $v \iso v[d/d']$.
    Hence the SSRT executes the same sequence of transitions for
    reading $u\cdot u'\cdot v$ and for $u \cdot u' \cdot v[d/d']$.
    Hence, the only difference between $f(u\cdot u'\cdot v)$ and
    $f(u\cdot u'\cdot v[d/d'])$ is that at some positions whose origin
    is not in $u \cdot u'$, the first one may contain $d$ and the
    second one may contain $d'$. Since such positions are abstracted
    out, $f(u\cdot u'\mid \underline{v[d/d']}) = f(u\cdot u'\mid
    \underline{v})$. Hence,
    $d$ is not $f$-vulnerable in $u$.
\end{proof}

\begin{proof}[Proof of Lemma~\ref{lem:fEquivFiniteIndex}]
    We will prove that $\sequiv$ refines $\fequiv$. Suppose
    $u_{1},u_{2}$ are data words such that $u_{1}\sequiv u_{2}$ and
    $S$ reaches the configurations $(q,\val_{1})$, $(q,\val_{2})$
    after reading $u_{1}$,$u_{2}$ respectively. Let $\pi$ be a
    permutation such that for every register $r$, $\pi(\val_{2}(r)) =
    \val_{1}(r)$. We can verify by a routine induction on $|u_{2}|$
    that after reading $\pi(u_{2})$, $S$ reaches the configuration
    $(q,\pi(\val_{2}))$.  We infer from
    Lemma~\ref{lem:ifValsInRegisters} that all $f$-influencing values
    of $u_{1}$ are stored in registers in the configuration
    $(q,\val_{1})$ and all $f$-influencing values of $\pi(u_{2})$ are
    stored in registers in the configuration $(q,\pi(\val_{2}))$. The
    valuations $\pi(\val_{2})$ and $\val_{1}$ coincide on all the
    registers. Hence, we can infer from condition~\ref{sequiv:ifl}
    of Definition~\ref{def:sequiv} that
    $\aifl_{f}(\pi(u_{2}))=\aifl_{f}(u_{1})$.
    
    Since $\pi(\val_{2})$ and $\val_{1}$ coincide on all the
    registers, for any data word $v$, the sequence of transitions
    executed when reading $v$ from the configuration $(q,\val_{1})$
    and from $(q,\pi(\val_{2}))$ are the same. Hence,
    $f_{z}(\underline{\pi(u_{2})} \mid v)=f(\underline{u_{1}}\mid v)$, where
    $z=|u_{1}|-|u_{2}|$.

    Let $u, v_{1},v_{2}$ be data words. To finish the proof, we have
    to show that $f(u_{1}\cdot u \mid \underline{v_{1}}) = f(u_{1}
    \cdot u \mid \underline{v_{2}})$ iff $f(\pi(u_{2})\cdot u \mid
    \underline{v_{1}}) = f(\pi(u_{2}) \cdot u \mid
    \underline{v_{2}})$. Any left factor of $f(u_{1} \mid \underline{u
    \cdot v_{1}})$ is of the form $\val_{1}(\chi)$, where
    $\chi_{1}$ is some arrangement of some subset $X_{1}\subseteq X$.
    Since $\val_{1}$ and $\pi(\val_{2})$ coincide on all the registers
    and $\val_{1}(x)=\epsilon$ iff $\pi(\val_{2})(x)=\epsilon$ for all
    data word variables $x \in X$ (by condition~\ref{sequiv:varEmpty}
    of Definition~\ref{def:sequiv}), it can be routinely verified that
    $f(u_{1} \mid \underline{u \cdot v_{1}})$ and $f(\pi(u_{2})\mid
    \underline{u \cdot v_{1}})$ have the same number of left blocks
    and right blocks. If the $i$\tsc{th} left block of $f(u_{1} \mid
    \underline{u \cdot v_{1}})$ is $\val_{1}(\chi)$, then the
    $i$\tsc{th} left block of $f(\pi(u_{2}) \mid
    \underline{u \cdot v_{1}})$ is $\pi(\val_{2})(\chi)$. We will
    assume that $f(u_{1}\cdot u \mid \underline{v_{1}}) \ne f(u_{1}
    \cdot u \mid \underline{v_{2}})$ and show that $f(\pi(u_{2})\cdot u \mid
    \underline{v_{1}}) \ne f(\pi(u_{2}) \cdot u \mid
    \underline{v_{2}})$. The proof of the converse direction is
    symmetric. It is sufficient to prove that either $f(\pi(u_{2})
    \mid \underline{u \cdot v_{1}}) \ne f(\pi(u_{2}) \mid \underline{u
    \cdot v_{2}})$ or $f(\underline{\pi(u_{2})} \mid u \mid
    \underline{v_{1}}) \ne f(\underline{\pi(u_{2})} \mid u \mid
    \underline{v_{2}})$; we can infer from the contrapositive of
    point~\ref{rch:middleToRight} or point~\ref{rch:leftToAbstract} of
    Lemma~\ref{lem:rightCongruenceHelper} respectively that
    $f(\pi(u_{2})\cdot u \mid \underline{v_{1}}) \ne f(\pi(u_{2})
    \cdot u \mid \underline{v_{2}})$. Since $f(u_{1}\cdot u \mid
    \underline{v_{1}}) \ne f(u_{1} \cdot u \mid \underline{v_{2}})$,
    we infer from the contrapositive of
    Lemma~\ref{lem:facOpEqualityFromParts} that either $f(u_{1} \mid
    \underline{u \cdot v_{1}}) \ne f(u_{1} \mid \underline{u \cdot
    v_{2}})$ or $f(\underline{u_{1}} \mid u \mid \underline{v_{1}})
    \ne f(\underline{u_{1}} \mid u \mid \underline{v_{2}})$.
    
    Case 1: $f(u_{1} \mid \underline{u \cdot v_{1}}) \ne f(u_{1} \mid
    \underline{u \cdot v_{2}})$. If the number of left blocks in
    $f(u_{1}\mid \underline{u \cdot v_{1}})$ is different from the
    number of left blocks in $f(u_{1}\mid \underline{u \cdot v_{2}})$,
    then the number of left blocks in
    $f(\pi(u_{2})\mid \underline{u \cdot v_{1}})$ is different from the
    number of left blocks in $f(\pi(u_{2})\mid \underline{u \cdot
    v_{2}})$ and we are done. Suppose $f(u_{1}\mid \underline{u \cdot
    v_{1}})$ and $f(u_{1}\mid \underline{u \cdot v_{2}})$ have the
    same number of left blocks but the $i$\tsc{th} left blocks are
    different. Suppose the $i$\tsc{th} left block of $f(u_{1}\mid
    \underline{u \cdot v_{1}})$ is $\val_{1}(\chi_{1})$ and the
    $i$\tsc{th} left block of $f(u_{1}\mid \underline{u \cdot v_{2}})$
    is $\val_{1}(\chi_{2})$, where $\chi_{1},\chi_{2}$ are some
    arrangements of some subsets $X_{1},X_{2} \subseteq X$
    respectively. The $i$\tsc{th} left block of $f(\pi(u_{2})\mid
    \underline{u \cdot v_{1}})$ is $\pi(\val_{2})(\chi_{1})$ and the
    $i$\tsc{th} left block of $f(\pi(u_{2})\mid \underline{u \cdot v_{2}})$
    is $\pi(\val_{2})(\chi_{2})$. Since $\val_{1}(\chi_{1}) \ne
    \val_{1}(\chi_{2})$, we infer from
    condition~\ref{sequiv:varArrEquiv} of Definition~\ref{def:sequiv} that
    $\pi(\val_{2})(\chi_{1}) \ne \pi(\val_{2})(\chi_{2})$. Hence, the
    $i$\tsc{th} left blocks of $f(\pi(u_{2})\mid \underline{u \cdot
    v_{1}})$ and $f(\pi(u_{2})\mid \underline{u \cdot v_{2}})$ are
    different and we are done.

    Case 2: $f(\underline{u_{1}} \mid u \mid \underline{v_{1}}) \ne
    f(\underline{u_{1}} \mid u \mid \underline{v_{2}})$. As we have
    seen in the second paragraph of this proof,
    $f_{z}(\underline{\pi(u_{2})} \mid u \cdot
    v_{1})=f(\underline{u_{1}} \mid u \cdot v_{1})$ and
    $f_{z}(\underline{\pi(u_{2})} \mid u \cdot
    v_{2})=f(\underline{u_{1}} \mid u \cdot v_{2})$. We infer from
    point~\ref{rch:rightToAbstract} of
    Lemma~\ref{lem:rightCongruenceHelper} that
    $f_{z}(\underline{\pi(u_{2})} \mid u \mid
    \underline{v_{1}})=f(\underline{u_{1}} \mid u \mid
    \underline{v_{1}})$ and $f_{z}(\underline{\pi(u_{2})} \mid u \mid
    \underline{v_{2}})=f(\underline{u_{1}} \mid u \mid
    \underline{v_{2}})$. Since $f(\underline{u_{1}} \mid u \mid \underline{v_{1}}) \ne
    f(\underline{u_{1}} \mid u \mid \underline{v_{2}})$, $f_{z}(\underline{\pi(u_{2})} \mid u \mid
    \underline{v_{1}}) \ne f_{z}(\underline{\pi(u_{2})} \mid u \mid
    \underline{v_{2}})$, hence $f(\underline{\pi(u_{2})} \mid u \mid
    \underline{v_{1}}) \ne f(\underline{\pi(u_{2})} \mid u \mid
    \underline{v_{2}})$ and we are done.
\end{proof}

\section{Proofs with Lengthy Case Analyses}
\label{app:Lengthy}
\begin{lemma}
    \label{lem:iflValuesPreserved}
    Suppose $f$ is a transduction that is invariant under permutations
    and without data peeking, $u$ is a data word and $e$ is a data value.
    If $d$ is a data value that is not $f$-influencing in $u$ and $d'$
    is a safe replacement for $d$ in $u$, then $e$ is $f$-memorable
    (resp.~$f$-vulnerable) in $u$ iff $e$ is
    $f$-memorable (resp.~$f$-vulnerable) in
    $u[d/d']$.
\end{lemma}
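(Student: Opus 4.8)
The plan is to reduce the statement to a single transposition of data values and transport influencing values along it. First I would dispose of trivial cases: if $d=d'$ or $d$ does not occur in $u$ then $u[d/d']=u$ and there is nothing to prove, so assume $d\neq d'$ and that $d$ occurs in $u$. Since $d'$ is a safe replacement for $d$ in $u$, we have $u[d/d']\iso u$, and from this I would deduce that $d'$ does not occur in $u$: otherwise a position carrying $d'$ and a position carrying $d$ would be identified in $u[d/d']$ while remaining distinct in $u$, contradicting the isomorphism. Let $\pi$ be the transposition that interchanges $d$ and $d'$ and fixes every other data value, so that $\pi^{-1}=\pi$; since $d'$ is absent from $u$, we get $\pi(u)=u[d/d']$.

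Next I would apply Lemma~\ref{lem:iflInvUnderPerm} (which needs $f$ invariant under permutations), obtaining $\aifl_f(u[d/d'])=\aifl_f(\pi(u))=\pi(\aifl_f(u))$. Reading off values and types, this says that for every data value $e$, $e$ is $f$-memorable (resp.\ $f$-vulnerable) in $u[d/d']$ iff $\pi(e)$ is $f$-memorable (resp.\ $f$-vulnerable) in $u$; note that both the memorable and the vulnerable claims fall out of the same equation, since $\aifl_f$ records the full type. It therefore remains to show that, for the value $e$ of the statement, ``$e$ is $f$-memorable (resp.\ $f$-vulnerable) in $u$'' is equivalent to ``$\pi(e)$ is''. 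I would do this by a three-way case split. If $e\notin\{d,d'\}$ then $\pi(e)=e$ and there is nothing to do. If $e=d$ then $\pi(e)=d'$, and I would argue that both sides are false: $d$ is not $f$-influencing in $u$ by hypothesis, while $d'$ is not $f$-memorable in $u$ because it does not occur in $u$ (so $u[d'/d'']=u$ for any safe replacement $d''$) and not $f$-vulnerable in $u$ by Lemma~\ref{lem:prefIflOccurs} (which uses ``invariant under permutations'' and ``without data peeking''). The case $e=d'$ is symmetric, swapping the roles of $d$ and $d'$. Combining the three cases gives the lemma.

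I do not expect a genuine obstacle here: the crux is the one-line transport of $\aifl_f$ along $\pi$ via Lemma~\ref{lem:iflInvUnderPerm}, and the only point demanding a little care is the bookkeeping about safe replacements --- that a safe replacement for an occurring value must itself be absent from $u$, and that an absent value is automatically non-influencing (immediate for memorability, and a citation of Lemma~\ref{lem:prefIflOccurs} for vulnerability).
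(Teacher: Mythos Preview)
Your proof is correct and considerably shorter than the paper's. The paper proceeds by direct witness manipulation: for each of the four implications (memorable/vulnerable, forward/backward), it takes a witness $(v,e')$ (or $(u',v,e')$ in the vulnerable case) certifying that $e$ is influencing in $u$, introduces a fresh auxiliary value, and applies two or three explicit transpositions to transport this witness into one for $u[d/d']$; the converse directions are then reduced to the forward ones by first showing that $d'$ is non-influencing in $u[d/d']$ and writing $u = u[d/d'][d'/d]$. This occupies a couple of pages of case analysis in Section~\ref{app:Lengthy}.

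You instead factor everything through the single equation $\aifl_f(u[d/d']) = \pi(\aifl_f(u))$ from Lemma~\ref{lem:iflInvUnderPerm}, reducing the problem to checking that the transposition $\pi$ fixes the influencing set of $u$ --- which is immediate once one knows that neither $d$ nor $d'$ is influencing in $u$ (the first by hypothesis, the second because $d'$ does not occur in $u$, invoking Lemma~\ref{lem:prefIflOccurs} for the vulnerable half). There is no circularity: both Lemma~\ref{lem:iflInvUnderPerm} and Lemma~\ref{lem:prefIflOccurs} are proved using only Lemma~\ref{lem:factoredOutputInvariant}, independently of the present lemma. Your route is the cleaner one; the paper's hands-on argument buys nothing beyond self-containment at the level of this single lemma.
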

\begin{proof}
    The idea for the proof is the following. If a data value
    $e'$ and data word $v$ certify that $e$ is $f$-memorable
    in $u$, then some permutations can be applied on $e'$ and
    $v$ to certify that $e$ is $f$-memorable in $u[d/d']$.
    Similar strategies work for the converse direction and for
    $f$-vulnerable values.

    Suppose $e=d$. We have to prove that $d$ is not
    $f$-influencing in $u[d/d']$. Since $d$ doesn't occur in
    $u[d/d']$, we get $u[d/d'][d/d''] = u[d/d']$ for any data value
    $d''$. Hence, $f(\underline{u[d/d'][d/d'']} \mid v) =
    f(\underline{u[d/d']} \mid v)$ for all data words $v$, so $d$ is
    not $f$-memorable in $u[d/d']$. Since $d$ doesn't occur
    in $u[d/d']$, $d$ is not $f$-vulnerable in $u[d/d']$, as
    proved in Lemma~\ref{lem:prefIflOccurs}.

    Suppose $e \ne d$. First we will prove the statement about
    $f$-memorable data values. First we will assume that $e$
    is $f$-memorable in $u$ and prove that $e$ is $f$-memorable
    in $u[d/d']$. There exists a safe replacement $e'$ for
    $e$ in $u$ and a data word $v$ such that
    \begin{align}
        f(\underline{u[e/e']} \mid v) &\ne f(\underline{u} \mid v)
        \label{eq:eIsMemorable}
    \end{align}
    Let $e_{1} \not\in \data(u \cdot v, *) \cup
    \set{d,d',e,e'}$ be a fresh data value and $\pi_{1}$ be the
    permutation that interchanges $e'$ and $e_{1}$ and doesn't change
    any other data value. We apply $\pi_{1}$ to both sides of
    \eqref{eq:eIsMemorable} to get $\pi_{1}(f(\underline{u[e/e']} \mid
    v)) \ne \pi_{1}(f(\underline{u} \mid v))$. From
    Lemma~\ref{lem:factoredOutputInvariant}, we then infer that
    \begin{align}
        f(\underline{\pi_{1}(u[e/e'])} \mid \pi_{1}(v)) \ne
        f(\underline{\pi_{1}(u)} \mid \pi_{1}(v)) \enspace.
            \label{eq:eIsStillMemorable}
    \end{align}
    Since, $e'$ is a safe replacement for $e$ in $u$, $e'$ doesn't
    occur in $u$. Hence, $\pi_{1}(u[e/e']) = u[e/e_{1}]$ and
    $\pi_{1}(u) = u$. Using these in \eqref{eq:eIsStillMemorable}, we
    get
    \begin{align}
        f(\underline{u[e/e_{1}]} \mid \pi_{1}(v)) &\ne
        f(\underline{u} \mid \pi_{1}(v)) \enspace.
            \label{eq:eIsMemorableDueToe1}
    \end{align}
    Let $\pi_{2}$ be the permutation that interchanges $d$ and $d'$ and
    doesn't change any other data value. We apply $\pi_{2}$ to both sides
    of \eqref{eq:eIsMemorableDueToe1} to get
    $\pi_{2}(f(\underline{u[e/e_{1}]} \mid \pi_{1}(v))) \ne
    \pi_{2}(f(\underline{u} \mid \pi_{1}(v)))$. From
    Lemma~\ref{lem:factoredOutputInvariant}, we then infer that
    $f(\underline{\pi_{2}(u[e/e_{1}])} \mid \pi_{2}(\pi_{1}(v))) \ne
    f(\underline{\pi_{2}(u)} \mid \pi_{2}(\pi_{1}(v)))$. Since
    $d'$ is a safe replacement for $d$ in $u$, $d'$ doesn't occur in
    $u$. By choice, $d' \ne e_{1}$. Hence,
    $\pi_{2}(u[e/e_{1}])=u[e/e_{1}][d/d']=u[d/d'][e/e_{1}]$ and
    $\pi_{2}(u)=u[d/d']$. Using these in the last inequality, we get
    $f(\underline{u[d/d'][e/e_{1}]} \mid \pi_{2}(\pi_{1}(v))) \ne
    f(\underline{u[d/d']} \mid \pi_{2}(\pi_{1}(v)))$. This implies that
    $e$ is $f$-memorable in $u[d/d']$.

    For the converse direction, we will first prove that $d'$ is not
    $f$-memorable in $u[d/d']$. Suppose for the sake of
    contradiction that $d'$ is $f$-memorable in $u[d/d']$.
    Then there exists a data word $v$ and a data value $d''$ that is
    a safe replacement for $d'$ in $u[d/d']$ such that
    $f(\underline{u[d/d'][d'/d'']} \mid v) \ne f(\underline{u[d/d']}
    \mid v)$, so $f(\underline{u[d/d'']} \mid v) \ne
    f(\underline{u[d/d']} \mid v)$.  Now we apply the permutation $\pi_3$ that
    interchanges $d$ and $d''$ on both sides of this inequality and
    Lemma~\ref{lem:factoredOutputInvariant} implies that
    $f(\underline{u} \mid \pi_3(v)) \ne f(\underline{u[d/d']} \mid
    \pi_3(v))$. This shows that $d$ is $f$-memorable in $u$,
    a contradiction. Hence, $d'$ is not $f$-memorable in
    $u[d/d']$. Now, we have that $d'$ is not $f$-memorable in
    $u[d/d']$ and $d$ is a safe replacement for $d'$ in $u[d/d']$ and
    we have to prove that if $e$ is $f$-memorable in
    $u[d/d']$, then $e$ is $f$-memorable in $u$, which is
    same as $u[d/d'][d'/d]$. This is similar to proving that if $e$ is
    $f$-memorable in $u$, then $e$ is $f$-memorable
    in $u[d/d']$, which we have already proved.

    Next we will prove the statement about $f$-vulnerable data
    values. We have already proved the statement for $e=d$, so assume
    that $e \ne d$. First assume that $e$ doesn't occur in $u$. Then
    $e$ is not $f$-vulnerable in $u$. The value $e$ is also
    not $f$-vulnerable in $u[d/d']$ in the case where $d'\ne
    e$, since $e$ doesn't occur in $u[d/d']$. We will prove that $e$
    is not $f$-vulnerable in $u[d/e]$. Suppose for the sake of
    contradiction that $e$ is $f$-vulnerable in $u[d/e]$.
    There exist data words $u',v$ such that $e$ does not occur in
    $u'$  and there exists a data value $e'$ that
            is a safe replacement for $e$ in $u[d/e] \cdot u' \cdot v$ such
            that $f(u[d/e] \cdot u' \mid \underline{v}) \ne f(u[d/e] \cdot u' \mid
            \underline{v[e/e']})$.
    Now we apply the permutation $\pi$ that
    interchanges $d$ and $e$ on both sides of this inequality and
    Lemma~\ref{lem:factoredOutputInvariant} implies that $f(u
    \cdot\pi(u')\mid \underline{\pi(v)}) \ne f(u \cdot \pi(u') \mid
    \underline{\pi(v[e/e'])})$. We have $\pi(v[e/e']) = \pi(v)[d/e']$,
    so $f(u \cdot \pi(u') \mid \underline{\pi(v)}) \ne f(u \cdot
    \pi(u') \mid \underline{\pi(v)[d/e']})$. Since $e$ doesn't occur
    in $u'$, $d$ doesn't occur in $\pi(u')$. This implies that $d$ is
    $f$-vulnerable in $u$, a contradiction. So $e$ is not
    $f$-vulnerable in $u[d/e]$.
    
    Next we will assume that $e$ occurs in $u$. First we will assume
    that $e$ is $f$-vulnerable in $u$ and prove that $e$ is
    $f$-vulnerable in $u[d/d']$. Suppose that $e$ is
    $f$-vulnerable in $u$. So there
    exist data words $u',v$ such that $e$ doesn't occur in
    $u'$ and there exists a data value $e'$ that
            is a safe replacement for $e$ in $u \cdot u' \cdot v$ such
            that $f(u \cdot u' \mid \underline{v}) \ne f(u \cdot u' \mid
            \underline{v[e/e']})$.
    Let $e_{1} \not\in
    \data(u \cdot u' \cdot v, *) \cup \set{d,d',e,e'}$ be a fresh data
    value. The values $e',e_{1}$ don't occur in $u \cdot u' \cdot v$,
    so we can apply the permutation that interchanges $e'$ and $e_{1}$
    to both sides of the last inequality and
    Lemma~\ref{lem:factoredOutputInvariant} implies that $f(u \cdot u'
    \mid \underline{v}) \ne f(u \cdot u' \mid
    \underline{v[e/e_{1}]})$. Now we apply the permutation $\pi$ that
    interchanges $d$ and $d'$ to both sides of the last inequality and
    from Lemma~\ref{lem:factoredOutputInvariant}, we get that
    $f(u[d/d'] \cdot \pi(u')\mid \underline{\pi(v)}) \ne f(u[d/d']
    \cdot \pi(u') \mid \underline{\pi(v[e/e1])})$.  The value $d'$
    doesn't occur in $u$ (since $d'$ is a safe replacement for $d$ in
    $u$) but $e$ does, so $e\ne d'$.  We also have $d\ne e$, $d \ne
    e_{1}$ and $d'\ne e_{1}$, so $\set{d,d'} \cap \set{e,e_{1}} =
    \emptyset$.  Hence, $\pi(v[e/e_{1}]) = \pi(v)[e/e_{1}]$.  So we
    get $f(u[d/d'] \cdot \pi(u') \mid \underline{\pi(v)}) \ne
    f(u[d/d'] \cdot \pi(u') \mid \underline{\pi(v)[e/e1]})$,
    demonstrating that $e$ is a $f$-vulnerable value in
    $u[d/d']$ (note that since $e$ doesn't occur in $u'$, it doesn't
    occur in $\pi(u')$ also). Hence we have shown that
    when $e \ne d$, if $e$ is $f$-influencing in $u$, then $e$ is
    $f$-influencing in $u[d/d']$.

    For the converse direction, we will first prove that $d'$ is not
    $f$-vulnerable in $u[d/d']$. We have already proved that
    if $e$ doesn't occur in $u$, then $e$ is not $f$-vulnerable
    in $u[d/e]$. Since $d'$ doesn't occur in $u$, we can
    put $e=d'$ to conclude that $d'$ is not $f$-vulnerable in
    $u[d/d']$. Now, we have that $d'$ is not $f$-vulnerable in
    $u[d/d']$ and $d$ is a safe replacement for $d'$ in $u[d/d']$ and
    we have to prove that if $e$ is $f$-vulnerable in
    $u[d/d']$, then $e$ is $f$-vulnerable in $u$, which is
    same as $u[d/d'][d'/d]$. This is similar to proving that if $e$ is
    $f$-vulnerable in $u$, then $e$ is $f$-vulnerable
    in $u[d/d']$. Hence the proof is complete.
\end{proof}

\begin{lemma}
    \label{lem:allNewValuesSufIfl}
    Suppose $f$ is a transduction that is invariant under
    permutations, $\sigma \in \Sigma$ is a letter and $u$ is a data
    string.  If $d,e$ are data values, neither of which are
    $f$-influencing in $u$, then $d$ is $f$-memorable in $u
    \cdot (\sigma,d)$ iff $e$ is $f$-memorable in $u \cdot
    (\sigma,e)$. In addition, for any data value
    $\delta\notin\set{d,e}$, $\delta$ is $f$-memorable in
    $u \cdot (\sigma,d)$ iff $\delta$ is $f$-memorable in
    $u \cdot (\sigma,e)$.
\end{lemma}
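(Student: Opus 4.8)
The plan is to reduce, by a ``cleaning'' argument, to the case where neither $d$ nor $e$ occurs in $u$, where everything is settled by a single permutation. The cleaning step is the following elementary fact: if $a$ is not $f$-memorable in a data word $w$ and $\hat a$ is fresh, then $f(\underline{w\cdot(\sigma,x)}\mid v) = f(\underline{w[a/\hat a]\cdot(\sigma,x)}\mid v)$ for every data value $x$ and every data word $v$. This follows from Definition~\ref{def:fMemorable} applied with the safe replacement $\hat a$ (which gives $f(\underline{w[a/\hat a]}\mid w') = f(\underline{w}\mid w')$ for all $w'$), specialised to $w' = (\sigma,x)\cdot v$, followed by moving $(\sigma,x)$ out of the kept factor and into the abstracted left factor via point~\ref{rch:middleToLeft} of Lemma~\ref{lem:rightCongruenceHelper}. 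Since neither $d$ nor $e$ is $f$-\emph{influencing} in $u$, Lemma~\ref{lem:iflInvUnderPerm} (applied to a transposition exchanging one of them with a fresh value) shows $\aifl_f$ is unchanged by cleaning that value out, so the other stays non-influencing (hence non-memorable); iterating the cleaning step therefore produces a word $u_1$ containing neither $d$ nor $e$ with $f(\underline{u\cdot(\sigma,x)}\mid v) = f(\underline{u_1\cdot(\sigma,x)}\mid v)$ for all $x$ and $v$.

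For the first assertion I would next observe that $d$ is $f$-memorable in $u\cdot(\sigma,d)$ exactly when there are a fresh value $d'$ and a data word $v$ with $f(\underline{u\cdot(\sigma,d')}\mid v)\ne f(\underline{u\cdot(\sigma,d)}\mid v)$: the definition gives such a $d'$ with $f(\underline{u[d/d']\cdot(\sigma,d')}\mid v)\ne f(\underline{u\cdot(\sigma,d)}\mid v)$, and the cleaning identity rewrites the left side as $f(\underline{u\cdot(\sigma,d')}\mid v)$. Pushing both sides through $u_1$ turns this into a statement purely about the factored outputs $f(\underline{u_1\cdot(\sigma,x)}\mid v)$, in which now both $d$ and $d'$ are fresh with respect to $u_1$. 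By Lemma~\ref{lem:factoredOutputInvariant} any permutation fixing $u_1$ merely relabels; in particular all values fresh for $u_1$ are interchangeable at the last position, and applying the transposition swapping $d$ and $e$ (which fixes $u_1$) converts the condition, almost verbatim, into the statement that for some fresh $e'$ and $v$, $f(\underline{u_1\cdot(\sigma,e')}\mid v)\ne f(\underline{u_1\cdot(\sigma,e)}\mid v)$, and hence, cleaning back through $u$, into $e$ being $f$-memorable in $u\cdot(\sigma,e)$.

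For the ``in addition'' part I would argue the same way with $\delta$ in the role of the tested value: by interchangeability of safe replacements (again Lemma~\ref{lem:factoredOutputInvariant}) choose a safe replacement $\delta'$ for $\delta$ that is fresh and distinct from $d$ and $e$, and clean $d$ and $e$ out of $u[\delta/\delta']$ as well --- Lemma~\ref{lem:iflInvUnderPerm} still shows $d$ and $e$ remain non-influencing in $u[\delta/\delta']$ because $\{\delta,\delta'\}\cap\{d,e\}=\emptyset$, so the cleaning lemma applies. This yields that $\delta$ is $f$-memorable in $u\cdot(\sigma,d)$ iff $f(\underline{u_1[\delta/\delta']\cdot(\sigma,d)}\mid v)\ne f(\underline{u_1\cdot(\sigma,d)}\mid v)$ for some $v$; since $d$ and $e$ occur in neither $u_1$ nor $u_1[\delta/\delta']$, the transposition $(d\ e)$ fixes both words and carries this inequality to the one with last symbol $(\sigma,e)$, which unwinds to $\delta$ being $f$-memorable in $u\cdot(\sigma,e)$.

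The main obstacle --- and the reason this requires a careful (if routine) case analysis rather than a one-line permutation argument --- is that when $d$ or $e$ already occurs in $u$, the words $u\cdot(\sigma,d)$ and $u\cdot(\sigma,e)$ have different isomorphism types, so no single permutation relates them; all the content is in the cleaning lemma and in tracking which fresh values are safe at each stage. The subtlest point is checking that cleaning $d$ out of $u$ (and later out of $u[\delta/\delta']$) cannot make $e$ become $f$-memorable there; this is exactly where one uses that $d$ and $e$ are non-\emph{influencing}, not merely non-memorable, in $u$, through the permutation-invariance of $\aifl_f$ in Lemma~\ref{lem:iflInvUnderPerm}.
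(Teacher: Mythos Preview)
Your proposal is correct and takes a genuinely different organisational route from the paper's proof.

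The paper works directly with $u$: it pushes the transposition $\pi=(d\ e)$ through the factored outputs using Lemma~\ref{lem:nonSufIflPermutable} (which says any permutation fixing the $f$-influencing values of $u$ leaves $f(\underline{u}\mid\cdot)$ unchanged), together with Lemma~\ref{lem:factoredOutputInvariant} and Lemma~\ref{lem:rightCongruenceHelper}, in one long chain of (in)equalities. Your argument instead first \emph{cleans} $d$ and $e$ out of $u$, using nothing more than the negation of Definition~\ref{def:fMemorable} plus point~\ref{rch:middleToLeft} of Lemma~\ref{lem:rightCongruenceHelper}, to obtain a word $u_1$ in which the transposition $(d\ e)$ is the identity; the symmetry step is then a one-line application of Lemma~\ref{lem:factoredOutputInvariant}. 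This is more modular and actually bypasses Lemma~\ref{lem:nonSufIflPermutable} (and its dependency Lemma~\ref{lem:iflValuesPreserved}) entirely, at the cost of the extra cleaning pass. The paper's approach is more direct but less transparent about \emph{why} the two situations are symmetric.

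Two minor points of bookkeeping you gloss over but which are easily fixed: when you write ``$\aifl_f$ is unchanged by cleaning'', what Lemma~\ref{lem:iflInvUnderPerm} actually gives is $\aifl_f(u[d/\hat d])=(d\ \hat d)(\aifl_f(u))$, which is not literally unchanged but does imply $e$ stays non-influencing since $e\notin\{d,\hat d\}$; and when you apply the transposition $(d\ e)$ in the first assertion you should have already chosen the witness $d'$ with $d'\neq e$ (and $d'\neq\hat d,\hat e$), so that $\pi(d')=d'$ remains a safe replacement for $e$ in $u\cdot(\sigma,e)$ after cleaning back. You signal awareness of both issues (``almost verbatim'', ``choose a safe replacement $\delta'$ \ldots\ distinct from $d$ and $e$''), so this is presentation rather than substance.
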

\begin{proof}
    We will assume that $d$ is $f$-memorable in $u \cdot
    (\sigma,d)$ and prove that $e$ is $f$-memorable in $u \cdot
    (\sigma,e)$. The proof of the other direction is similar. Let
    $\pi$ be the permutation that interchanges $d$ and $e$ and doesn't
    change any other value. Since $d$ is $f$-memorable in
    $u\cdot (\sigma,d)$, there exist a data word $v$ and a data
    value $d'$ that is a safe replacement for $d$ in $u\cdot (\sigma,d)$ satisfying
    the next inequality. Let
    $\pi'$ be the permutation that interchanges $d'$ and $e$ and doesn't
    change any other value.
    \begin{align}
        f( \underline{(u\cdot (\sigma,d))[d/d']} \mid v) &\ne
        f( \underline{u\cdot (\sigma,d)} \mid v) \nonumber &&
        [\text{Definition~\ref{def:fMemorable}}]\\
        \pi(f( \underline{(u\cdot (\sigma,d))[d/d']} \mid v)) &\ne
        \pi(f( \underline{u\cdot (\sigma,d)} \mid v)) \nonumber &&
        [\text{apply }\pi\text{ to both sides}]\nonumber\\
        f( \underline{\pi((u\cdot (\sigma,d))[d/d'])} \mid \pi(v)) &\ne
        f( \underline{\pi(u\cdot (\sigma,d))} \mid \pi(v)) &&
        [\text{Lemma~\ref{lem:factoredOutputInvariant}}]\label{eq:sufIfl1}\\
        f(\underline{\pi(u)} \mid (\sigma,e)\cdot \pi(v)) &=
        f(\underline{u} \mid (\sigma,e) \cdot \pi(v)) \nonumber
        &&[\text{Lemma~\ref{lem:nonSufIflPermutable}}]\\
        f(\underline{\pi(u)\cdot (\sigma,e)} \mid \pi(v)) &=
        f(\underline{u \cdot (\sigma,e)} \mid \pi(v)) \nonumber
        &&[\text{Lemma~\ref{lem:rightCongruenceHelper},
        point~\ref{rch:middleToLeft}}]\\
        f(\underline{\pi(u\cdot (\sigma,d))} \mid \pi(v)) &=
        f(\underline{u \cdot (\sigma,e)} \mid \pi(v))
        \label{eq:sufIfl2}\\
        f(\underline{u} \mid (\sigma,d') \cdot \pi(v)) &=
        f(\underline{\pi(u)} \mid (\sigma,d') \cdot \pi(v)) \nonumber
        &&[\text{Lemma~\ref{lem:nonSufIflPermutable}}]\\
        f(\underline{\pi'(u)} \mid (\sigma,d') \cdot \pi(v)) &=
        f(\underline{\pi'\pcomp\pi(u)} \mid (\sigma,d') \cdot \pi(v)) \nonumber
        &&[\text{Lemma~\ref{lem:nonSufIflPermutable}}]\\
        f(\underline{u[e/d']} \mid (\sigma,d') \cdot \pi(v)) &=
        f(\underline{\pi(u[d/d'])} \mid (\sigma,d') \cdot \pi(v)) \nonumber
        &&[d' \notin \data(u,*)]\\
        f(\underline{(u \cdot (\sigma,e))[e/d']} \mid \pi(v)) &=
        f(\underline{\pi( (u \cdot (\sigma,d))[d/d'])} \mid \pi(v))
        &&[d' \notin \data(u,*)]\label{eq:sufIfl3}\\
        f(\underline{(u \cdot (\sigma,e))[e/d']} \mid \pi(v)) &\ne
        f(\underline{u \cdot (\sigma,e)} \mid \pi(v)) \nonumber
        &&[\text{\eqref{eq:sufIfl1},\eqref{eq:sufIfl2}, \eqref{eq:sufIfl3}}]
    \end{align}
    From the last inequality above, we conclude that $e$ is $f$-memorable
    in $u \cdot (\sigma,e)$.

    Next we will assume that $\delta$ is $f$-memorable in $u
    \cdot (\sigma,d)$ and prove that $\delta$ is $f$-memorable
    in $u \cdot (\sigma,e)$. The proof of the other
    direction is similar. Since $\delta$ is $f$-memorable in
    $u \cdot (\sigma,d)$, there exists a data value $\delta'$ that is
    safe for replacing $\delta$ in $u \cdot (\sigma,d)$ and a data
    word $v$ such that
    $f(\underline{(u\cdot(\sigma,d))[\delta/\delta']} \mid v) \ne
    f(\underline{u \cdot (\sigma,d)} \mid v)$. Let $\delta''$ be a
    data value that is a safe replacement for $\delta$ in $u \cdot
    (\sigma,d) \cdot (\sigma,e)$. Let $\pi_{1}$ be the permutation
    that interchanges $\delta'$ and $\delta''$ and doesn't change any
    other value. Let $\pi_{2}$ be the permutation that interchanges
    $\delta$ and $\delta''$ and doesn't change any other value.
    \begin{align}
        f(\underline{(u\cdot(\sigma,d))[\delta/\delta']} \mid v) &\ne
        f(\underline{u \cdot (\sigma,d)} \mid v) \nonumber\\
        \pi_{1}(f(\underline{(u\cdot(\sigma,d))[\delta/\delta']} \mid
        v)) &\ne
        \pi_{1}(f(\underline{u \cdot (\sigma,d)} \mid v)) &&
        [\text{apply } \pi_{1} \text{ on both sides}] \nonumber\\
        f(\underline{\pi_{1}((u\cdot(\sigma,d))[\delta/\delta'])} \mid
        \pi_{1}(v)) &\ne
        f(\underline{\pi_{1}(u \cdot (\sigma,d))} \mid \pi_{1}(v)) &&
[\text{Lemma~\ref{lem:factoredOutputInvariant}}]\nonumber\\
        f(\underline{(u\cdot(\sigma,d))[\delta/\delta'']} \mid
        \pi_{1}(v)) &\ne
        f(\underline{u \cdot (\sigma,d)} \mid \pi_{1}(v)) &&
        [\delta',\delta'' \notin \data(u\cdot (\sigma,d),*)]
        \nonumber\\
        \pi(f(\underline{(u\cdot(\sigma,d))[\delta/\delta'']} \mid
        \pi_{1}(v))) &\ne
        \pi(f(\underline{u \cdot (\sigma,d)} \mid \pi_{1}(v))) &&
        [\text{apply } \pi \text{ on both sides}]
        \nonumber\\
        f(\underline{\pi(u[\delta/\delta''])\cdot(\sigma,e)} \mid
        \pi\pcomp \pi_{1}(v)) &\ne
        f(\underline{\pi(u) \cdot (\sigma,e)} \mid \pi\pcomp
\pi_{1}(v)) &&[\text{Lemma~\ref{lem:factoredOutputInvariant}}]
        \label{eq:sufIfl4}\\
        f(\underline{u}\mid (\sigma,d) \cdot \pi_{1}(v)) &=
        f(\underline{\pi(u)} \mid (\sigma,d) \cdot (\pi_{1}(v)))
        \nonumber && [\text{Lemma~\ref{lem:nonSufIflPermutable}}]\\
        \pi(f(\underline{u}\mid (\sigma,d) \cdot \pi_{1}(v))) &=
        \pi(f(\underline{\pi(u)} \mid (\sigma,d) \cdot \pi_{1}(v)))
        \nonumber && [\text{apply } \pi \text{ on both sides}]\\
        f(\underline{\pi(u)}\mid (\sigma,e) \cdot \pi \pcomp \pi_{1}(v)) &=
        f(\underline{u} \mid (\sigma,e) \cdot \pi \pcomp \pi_{1}(v))
        &&[\text{Lemma~\ref{lem:factoredOutputInvariant}}]\nonumber\\
        f(\underline{\pi(u)  \cdot(\sigma,e)}\mid \pi \pcomp \pi_{1}(v)) &=
        f(\underline{u  \cdot (\sigma,e)} \mid\pi \pcomp \pi_{1}(v))
        \label{eq:sufIfl5} &&
        [\text{Lemma~\ref{lem:rightCongruenceHelper},
        point~\ref{rch:middleToLeft}}]\\
        d,e &\notin \data(\ifl_f(\pi_2(u)),
*)&&[\set{d,e}\cap\set{\delta,\delta''}=\emptyset, \text{
Lemma~\ref{lem:iflInvUnderPerm}}]\nonumber\\
        f(\underline{\pi_2(u)} \mid (\sigma,d)\cdot \pi_1(v)) &=
f(\underline{\pi\pcomp \pi_2(u)} \mid (\sigma,d)\cdot \pi_1(v)) &&
[\text{Lemma~\ref{lem:nonSufIflPermutable}}]\nonumber\\
        \pi(f(\underline{\pi_2(u)} \mid (\sigma,d)\cdot \pi_1(v))) &=
\pi(f(\underline{\pi\pcomp \pi_2(u)} \mid (\sigma,d)\cdot \pi_1(v))) &&
[\text{apply }\pi\text{ on both sides}]\nonumber\\
        f(\underline{\pi\pcomp \pi_2(u)} \mid (\sigma,e)\cdot \pi\pcomp \pi_1(v)) &=
f(\underline{\pi_2(u)} \mid (\sigma,e)\cdot \pi\pcomp \pi_1(v)) &&
[\text{Lemma~\ref{lem:factoredOutputInvariant}}]\nonumber\\
        f(\underline{\pi(u[\delta/\delta''])} \mid (\sigma,e)\cdot \pi\pcomp \pi_1(v)) &=
f(\underline{u[\delta/\delta'']} \mid (\sigma,e)\cdot \pi\pcomp \pi_1(v)) &&
[\delta'' \notin \data(u,*)]\nonumber\\
        f(\underline{\pi(u[\delta/\delta''])\cdot (\sigma,e)} \mid \pi\pcomp \pi_1(v)) &=
f(\underline{u[\delta/\delta'']\cdot (\sigma,e)} \mid \pi\pcomp \pi_1(v)) &&
[\text{Lemma~\ref{lem:rightCongruenceHelper},
point~\ref{rch:middleToLeft}}]\label{eq:sufIfl6}\\
f(\underline{u[\delta/\delta'']\cdot (\sigma,e)} \mid
\pi\pcomp \pi_1(v)) &\ne f(\underline{u  \cdot (\sigma,e)} \mid\pi \pcomp
\pi_{1}(v)) && [\eqref{eq:sufIfl4}, \eqref{eq:sufIfl5},
\eqref{eq:sufIfl6}]\nonumber\\
f(\underline{(u\cdot (\sigma,e))[\delta/\delta'']} \mid
\pi\pcomp \pi_1(v)) &\ne f(\underline{u  \cdot (\sigma,e)} \mid\pi \pcomp
\pi_{1}(v)) && [\delta\ne e]\nonumber
    \end{align}
The last inequality above certifies that $\delta$ is $f$-memorable
in $u \cdot (\sigma,e)$.
\end{proof}

\begin{lemma}
    \label{lem:allNewValuesPrefIfl}
    Suppose $f$ is a transduction that is invariant under
    permutations, $\sigma \in \Sigma$ is a letter and $u$ is a data
    string.  If $d,e$ are data values, neither of which are
    $f$-influencing in $u$, then $d$ is $f$-vulnerable in $u
    \cdot (\sigma,d)$ iff $e$ is $f$-vulnerable in $u \cdot
    (\sigma,e)$. In addition, for any data value
    $\delta\notin\set{d,e}$, $\delta$ is $f$-vulnerable in
    $u \cdot (\sigma,d)$ iff $\delta$ is $f$-vulnerable in
    $u \cdot (\sigma,e)$.
\end{lemma}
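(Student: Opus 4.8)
The plan is to follow the proof of Lemma~\ref{lem:allNewValuesSufIfl}, which settles the $f$-memorable version, and adapt it to the suffix-abstracted factored outputs $f(\cdot \mid \underline{\cdot})$ that occur in the definition of $f$-vulnerable. As there, it suffices to prove the ``only if'' part of the first equivalence: the ``if'' part follows by interchanging the roles of $d$ and $e$, and the displayed statement about a value $\delta \notin \set{d,e}$ is proved in the same way while tracking one extra data value.

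So assume $d$ is $f$-vulnerable in $u\cdot (\sigma,d)$ and fix witnesses: data words $u',v$ and a data value $d'$ with $d \notin \data(u',*)$, with $d'$ a safe replacement for $d$ in $u\cdot (\sigma,d)\cdot u'\cdot v$, and with $f(u\cdot (\sigma,d)\cdot u' \mid \underline{v[d/d']}) \ne f(u\cdot (\sigma,d)\cdot u' \mid \underline{v})$. First I would normalize the witness. Any permutation $\rho$ that is the identity on $\data(u,*)\cup\set{d}$ can be applied to the witnessing inequality via Lemma~\ref{lem:factoredOutputInvariant}, producing an equally good witness with $\rho(u'),\rho(v),\rho(d')$ in place of $u',v,d'$; one checks that $\rho(d')$ is then a safe replacement for $d$ in the new word and that $d \notin \data(\rho(u'),*)$. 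Using this freedom I would make $d'$ and all data values appearing in $u'$ or $v$ but not in $u$ pairwise distinct and fresh, in particular $d' \notin \data(u\cdot (\sigma,d)\cdot u'\cdot v,*)$ and $d' \ne e$. How much normalization is available depends on whether $d$ and $e$ already occur in $u$, which forces a case split.

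In the principal case, where neither $d$ nor $e$ occurs in $u$, the transposition $\pi=(d\ e)$ is the identity on $\data(u,*)$, so $\pi(u)=u$ and $\pi(u\cdot (\sigma,d))=u\cdot (\sigma,e)$. Applying $\pi$ to the normalized witness and invoking Lemma~\ref{lem:factoredOutputInvariant} gives $f(u\cdot (\sigma,e)\cdot \pi(u') \mid \underline{\pi(v)[e/d']}) \ne f(u\cdot (\sigma,e)\cdot \pi(u') \mid \underline{\pi(v)})$; since $d \notin \data(u',*)$ one gets $e \notin \data(\pi(u'),*)$, and $d'=\pi(d')$ is a safe replacement for $e$ in $u\cdot (\sigma,e)\cdot \pi(u')\cdot \pi(v)$ (apply $\pi$ to the corresponding statement for $d$), so this inequality is exactly a witness that $e$ is $f$-vulnerable in $u\cdot (\sigma,e)$. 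When $d$ and/or $e$ does occur in $u$, I would first reduce to the principal case by replacing the occurrences of $d$ (and of $e$) inside $u$ by fresh values: $d$ and $e$ are not $f$-influencing in $u$, so by an argument analogous to Lemma~\ref{lem:iflValuesPreserved} (built from permutations together with Lemma~\ref{lem:nonSufIflPermutable} and the congruence points of Lemma~\ref{lem:rightCongruenceHelper}) this replacement changes neither side of the desired equivalence; one then applies the principal case to the cleaned-up prefix and transports the conclusion back.

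The hard part is exactly this last reduction. In the $f$-memorable case of Lemma~\ref{lem:allNewValuesSufIfl} the whole prefix is abstracted, so Lemma~\ref{lem:nonSufIflPermutable} applies verbatim and the transposition $(d\ e)$ is absorbed for free; here the prefix $u$ stays visible, so moving the non-influencing values $d,e$ out of $u$ genuinely changes the word and must be justified separately, and one must additionally watch whether the freshly introduced $d$ (and the new $(\sigma,e)$ position) become $f$-influencing in the extended words. Once this bookkeeping is in place, each individual equality is a routine application of Lemma~\ref{lem:factoredOutputInvariant}, Lemma~\ref{lem:nonSufIflPermutable} and Lemma~\ref{lem:rightCongruenceHelper} (points \ref{rch:middleToLeft}, \ref{rch:middleToRight}, \ref{rch:leftToAbstract}), which is why the argument belongs in the appendix of lengthy case analyses.
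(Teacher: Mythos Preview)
Your principal case is fine, but the reduction you propose for the case $d\in\data(u,*)$ has a genuine gap. You want to argue that $d$ is $f$-vulnerable in $u\cdot(\sigma,d)$ iff $d$ is $f$-vulnerable in $u[d/d_1]\cdot(\sigma,d)$ for a fresh $d_1$, and you cite Lemma~\ref{lem:nonSufIflPermutable} and Lemma~\ref{lem:rightCongruenceHelper} as the tools. But Lemma~\ref{lem:nonSufIflPermutable} is about outputs of the form $f(\underline{u}\mid\cdot)$, with the prefix abstracted; in the definition of $f$-vulnerable the prefix $u$ is \emph{visible} in $f(u\cdot(\sigma,d)\cdot u'\mid\underline{v})$. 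The transposition $(d\ d_1)$ applied via Lemma~\ref{lem:factoredOutputInvariant} turns $u$ into $u[d/d_1]$ but simultaneously turns the appended $(\sigma,d)$ into $(\sigma,d_1)$, so you do not land in $u[d/d_1]\cdot(\sigma,d)$. Lemma~\ref{lem:iflValuesPreserved} does not help either: it replaces a non-influencing value in the \emph{whole} word, and here $d$ may well be $f$-influencing in $u\cdot(\sigma,d)$ (this is precisely what you are trying to decide). So the ``bookkeeping'' you defer is not routine with the lemmas you list.

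The paper avoids the reduction entirely. It applies the contrapositive of Lemma~\ref{lem:facOpEqualityFromParts} to the witnessing inequality $f(u\cdot(\sigma,d)\cdot u'\mid\underline{v[d/d']})\ne f(u\cdot(\sigma,d)\cdot u'\mid\underline{v})$, splitting it into two cases: either $f(\underline{u}\mid(\sigma,d)\cdot u'\mid\underline{\,\cdot\,})$ differs, and then Lemma~\ref{lem:nonSufIflPermutable} lets the transposition $(d\ e)$ act on the abstracted $u$; or $f(u\mid\underline{(\sigma,d)\cdot u'\cdot\,\cdot\,})$ differs, and then Lemma~\ref{lem:nonPreIflPermutable} lets $(d\ e)$ act on the abstracted suffix (neither $d$ nor $e$ is $f$-vulnerable in $u$). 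In both cases one pushes back with the contrapositives of points~\ref{rch:leftToAbstract} and~\ref{rch:middleToRight} of Lemma~\ref{lem:rightCongruenceHelper} to recover a witness that $e$ is $f$-vulnerable in $u\cdot(\sigma,e)$. The missing ingredient in your plan is precisely this decomposition together with Lemma~\ref{lem:nonPreIflPermutable}; once you have those, there is no need to clean $d$ or $e$ out of $u$ at all.
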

\begin{proof}
    We will assume that $d$ is $f$-vulnerable in $u \cdot
    (\sigma,d)$ and prove that $e$ is $f$-vulnerable in $u
    \cdot (\sigma,e)$. The proof of the other direction is similar.
    Let $\pi$ be the permutation that interchanges $d$ and $e$ and
    doesn't change any other value. Since $d$ is $f$-vulnerable
    in $u \cdot (\sigma,d)$, we infer from
    Definition~\ref{def:fMemorable} that there exist data words
    $u',v$ and a data value $d'$ such that $d$ doesn't occur in
    $u'$, $d'$ is a safe replacement for $d$ in
            $u\cdot (\sigma,d) \cdot
            u'\cdot v$ and $f(u\cdot (\sigma,d)\cdot u' \mid
            \underline{v[d/d']}) \ne f(u\cdot (\sigma,d)\cdot u' \mid \underline{v})$.
    Applying the contrapositive of Lemma~\ref{lem:facOpEqualityFromParts} to the
    above inequality, we infer that at least one of the
    following inequalities are true.
    \begin{align*}
        f(\underline{u} \mid (\sigma,d) \cdot u' \mid
        \underline{v[d/d']}) &\ne f(\underline{u} \mid (\sigma,d) \cdot u' \mid
        \underline{v})\\
        f(u \mid \underline{(\sigma,d) \cdot u' \cdot v[d/d']}) &\ne f(u
        \mid \underline{(\sigma,d) \cdot u' \cdot v})
    \end{align*}
    Each of the above inequalities is taken up in one of the following
    cases. Let $\pi$ be the permutation that interchanges $d$ and $e$
    and doesn't change any other value. Let $d''$ be a data value such
    that $d'' \notin \data(u\cdot u'\cdot v,*) \cup
    \set{d,e,d',\pi(d),\pi(d'), \pi(e), \pi(e')}$. Let $\pi'$  be the
    permutation that interchanges $d'$ and $d''$ and doesn't change
    any other value.

    Case 1:
    \begin{align}
        f(\underline{u} \mid (\sigma,d) \cdot u' \mid
        \underline{v[d/d']}) &\ne f(\underline{u} \mid (\sigma,d) \cdot u' \mid
        \underline{v})\nonumber\\
        \pi'(f(\underline{u} \mid (\sigma,d) \cdot u' \mid
        \underline{v[d/d']})) &\ne \pi'(f(\underline{u} \mid (\sigma,d) \cdot u' \mid
        \underline{v})) && [{\text{apply } \pi' \text{ to both sides}}]\nonumber \\
        f(\underline{u} \mid (\sigma,d) \cdot u' \mid
        \underline{v[d/d'']}) &\ne f(\underline{u} \mid (\sigma,d) \cdot u' \mid
        \underline{v})&&
        [\text{Lemma~\ref{lem:factoredOutputInvariant}, }d',d'' \notin
        \data(u\cdot u' \cdot (\sigma,d) \cdot v,*)]\label{eq:allNewValuesPrefIfl1}\\
        f(\underline{u} \mid (\sigma,d) \cdot u' \cdot v[d/d'']) &=
        f(\underline{\pi(u)} \mid (\sigma,d) \cdot u' \cdot v[d/d''])
        && [\text{Lemma~\ref{lem:nonSufIflPermutable}}] \nonumber\\
        f(\underline{u} \mid (\sigma,d) \cdot u' \mid
        \underline{v[d/d'']}) &=
        f(\underline{\pi(u)} \mid (\sigma,d) \cdot u' \mid
        \underline{v[d/d'']})
        && [\text{point~\ref{rch:rightToAbstract} of Lemma~\ref{lem:rightCongruenceHelper}}]
        \label{eq:allNewValuesPrefIfl2}\\
        f(\underline{u} \mid (\sigma,d) \cdot u' \cdot v) &=
        f(\underline{\pi(u)} \mid (\sigma,d) \cdot u' \cdot v)
        && [\text{Lemma~\ref{lem:nonSufIflPermutable}}]
        \nonumber\\
        f(\underline{u} \mid (\sigma,d) \cdot u' \mid \underline{v}) &=
        f(\underline{\pi(u)} \mid (\sigma,d) \cdot u' \mid
        \underline{v})
        && [\text{point~\ref{rch:rightToAbstract} of Lemma~\ref{lem:rightCongruenceHelper}}]
        \label{eq:allNewValuesPrefIfl3}\\
        f(\underline{\pi(u)} \mid (\sigma,d) \cdot u' \mid
        \underline{v[d/d'']}) &\ne f(\underline{\pi(u)} \mid (\sigma,d) \cdot u' \mid
        \underline{v}) &&
        [\text{\eqref{eq:allNewValuesPrefIfl1},
        \eqref{eq:allNewValuesPrefIfl2},
    \eqref{eq:allNewValuesPrefIfl3}}] \nonumber\\
        \pi(f(\underline{\pi(u)} \mid (\sigma,d) \cdot u' \mid
        \underline{v[d/d'']})) &\ne \pi(f(\underline{\pi(u)} \mid (\sigma,d) \cdot u' \mid
        \underline{v})) && [\text{apply } \pi \text{ on both sides}]
        \nonumber\\
        f(\underline{u} \mid (\sigma,e) \cdot \pi(u') \mid
        \underline{\pi(v)[e/d'']}) &\ne f(\underline{u} \mid
        (\sigma,e) \cdot \pi(u') \mid
        \underline{\pi(v)}) &&
        [\text{Lemma~\ref{lem:factoredOutputInvariant},
        }\pi(\pi(u))=u, \pi(v[d/d''])=\pi(v)[e/d'']]\nonumber\\
        f(u \cdot (\sigma,e) \cdot \pi(u') \mid
        \underline{\pi(v)[e/d'']}) &\ne f(u \cdot
        (\sigma,e) \cdot \pi(u') \mid
        \underline{\pi(v)}) &&
        [\text{contrapositive of
        Lemma~\ref{lem:rightCongruenceHelper},
    point~\ref{rch:leftToAbstract}}]\nonumber
    \end{align}

    Case 2:
    \begin{align*}
        f(u \mid \underline{(\sigma,d) \cdot u' \cdot
        v[d/d']}) &\ne f(u \mid \underline{(\sigma,d) \cdot u' \cdot
        v})\\
        \pi'(f(u \mid \underline{(\sigma,d) \cdot u' \cdot
        v[d/d']})) &\ne \pi'(f(u \mid \underline{(\sigma,d) \cdot u' \cdot
        v})) && [\text{apply } \pi' \text{ on both sides}]\\
        f(u \mid \underline{(\sigma,d) \cdot u' \cdot
        v[d/d'']}) &\ne f(u \mid \underline{(\sigma,d) \cdot u' \cdot
        v})&&
        [\text{Lemma~\ref{lem:factoredOutputInvariant}, }d',d'' \notin
        \data(u\cdot u' \cdot (\sigma,d) \cdot v,*)]\\
        f(u \mid \underline{\pi((\sigma,d) \cdot u' \cdot
        v[d/d''])}) &\ne f(u \mid \underline{\pi((\sigma,d) \cdot u' \cdot
        v)}) && [\text{Lemma~\ref{lem:nonPreIflPermutable}}]\\
        f(u \mid \underline{(\sigma,e) \cdot \pi(u') \cdot
        \pi(v)[e/d'']}) &\ne f(u \mid \underline{(\sigma,e) \cdot
            \pi(u') \cdot
        \pi(v)})\\
        f(u \cdot (\sigma,e) \cdot \pi(u') \mid
        \underline{\pi(v)[e/d'']}) &\ne f(u \cdot (\sigma,e) \cdot
            \pi(u') \mid
            \underline{\pi(v)})&&[\text{contrapositive of
            Lemma~\ref{lem:rightCongruenceHelper},
            point~\ref{rch:middleToRight}}]
    \end{align*}

    Since $d$ doesn't occur in $u'$, $e$ doesn't occur in
    $\pi(u')$. The last inequalities in each of the above cases certify that
    $e$ is $f$-vulnerable in $u \cdot (\sigma,e)$.

    Next we will assume that $\delta$ is $f$-vulnerable in
    $u \cdot (\sigma,d)$ and prove that $\delta$ is $f$-vulnerable
    in $u \cdot (\sigma,e)$. The proof of the other
    direction is similar.
    Since $\delta$ is $f$-vulnerable
    in $u \cdot (\sigma,d)$, we infer from
    Definition~\ref{def:fMemorable} that there exist data words
    $u',v$ and a data value $\delta'$ such that $\delta$ doesn't occur
    in $u'$, $\delta'$ is a safe replacement for $\delta$ in
            $u\cdot (\sigma,d) \cdot
            u'\cdot v$ and $f(u\cdot (\sigma,d)\cdot u' \mid
            \underline{v[\delta/\delta']}) \ne f(u\cdot (\sigma,d)\cdot u' \mid \underline{v})$.
    Applying the contrapositive of Lemma~\ref{lem:facOpEqualityFromParts} to the
    above inequality, we infer that at least one of the
    following inequalities are true.
    \begin{align*}
        f(\underline{u} \mid (\sigma,d) \cdot u' \mid
        \underline{v[\delta/\delta']}) &\ne f(\underline{u} \mid (\sigma,d) \cdot u' \mid
        \underline{v})\\
        f(u \mid \underline{(\sigma,d) \cdot u' \cdot v[\delta/\delta']}) &\ne f(u
        \mid \underline{(\sigma,d) \cdot u' \cdot v})
    \end{align*}
    Each of the above inequalities is taken up in one of the following
    cases. Let $\pi$ be the permutation that interchanges $d$ and $e$
    and doesn't change any other value. Let $\delta''$ be a data value such
    that $\delta'' \notin \data(u\cdot u'\cdot v,*) \cup
    \set{d,e,\delta'}$. Let $\pi'$  be the
    permutation that interchanges $\delta'$ and $\delta''$ and doesn't change
    any other value.

    Case 1:
    \begin{align}
        f(\underline{u} \mid (\sigma,d) \cdot u' \mid
        \underline{v[\delta/\delta']}) &\ne f(\underline{u} \mid (\sigma,d) \cdot u' \mid
        \underline{v})\nonumber\\
        \pi'(f(\underline{u} \mid (\sigma,d) \cdot u' \mid
        \underline{v[\delta/\delta']})) &\ne \pi'(f(\underline{u} \mid (\sigma,d) \cdot u' \mid
        \underline{v})) && [{\text{apply } \pi' \text{ to both sides}}]\nonumber \\
        f(\underline{u} \mid (\sigma,d) \cdot u' \mid
        \underline{v[\delta/\delta'']}) &\ne f(\underline{u} \mid (\sigma,d) \cdot u' \mid
        \underline{v})&&
        [\text{Lemma~\ref{lem:factoredOutputInvariant},
        }\delta',\delta'' \notin
        \data(u\cdot u' \cdot (\sigma,d) \cdot v,*)]\label{eq:allNewValuesPrefIfl7}\\
        f(\underline{u} \mid (\sigma,d) \cdot u' \cdot v[\delta/\delta'']) &=
        f(\underline{\pi(u)} \mid (\sigma,d) \cdot u' \cdot
        v[\delta/\delta''])
        && [\text{Lemma~\ref{lem:nonSufIflPermutable}}] \nonumber\\
        f(\underline{u} \mid (\sigma,d) \cdot u' \mid
        \underline{v[\delta/\delta'']}) &=
        f(\underline{\pi(u)} \mid (\sigma,d) \cdot u' \mid
        \underline{v[\delta/\delta'']})
        && [\text{point~\ref{rch:rightToAbstract} of Lemma~\ref{lem:rightCongruenceHelper}}]
        \label{eq:allNewValuesPrefIfl8}\\
        f(\underline{u} \mid (\sigma,d) \cdot u' \cdot v) &=
        f(\underline{\pi(u)} \mid (\sigma,d) \cdot u' \cdot v)
        && [\text{Lemma~\ref{lem:nonSufIflPermutable}}]
        \nonumber\\
        f(\underline{u} \mid (\sigma,d) \cdot u' \mid \underline{v}) &=
        f(\underline{\pi(u)} \mid (\sigma,d) \cdot u' \mid
        \underline{v})
        && [\text{point~\ref{rch:rightToAbstract} of Lemma~\ref{lem:rightCongruenceHelper}}]
        \label{eq:allNewValuesPrefIfl9}\\
        f(\underline{\pi(u)} \mid (\sigma,d) \cdot u' \mid
        \underline{v[\delta/\delta'']}) &\ne f(\underline{\pi(u)} \mid (\sigma,d) \cdot u' \mid
        \underline{v}) &&
        [\text{\eqref{eq:allNewValuesPrefIfl7},
        \eqref{eq:allNewValuesPrefIfl8},
    \eqref{eq:allNewValuesPrefIfl9}}] \nonumber\\
        \pi(f(\underline{\pi(u)} \mid (\sigma,d) \cdot u' \mid
        \underline{v[\delta/\delta'']})) &\ne \pi(f(\underline{\pi(u)} \mid (\sigma,d) \cdot u' \mid
        \underline{v})) && [\text{apply } \pi \text{ on both sides}]
        \nonumber\\
        f(\underline{u} \mid (\sigma,e) \cdot \pi(u') \mid
        \underline{\pi(v)[\delta/\delta'']}) &\ne f(\underline{u} \mid
        (\sigma,e) \cdot \pi(u') \mid
        \underline{\pi(v)}) &&
        [\text{Lemma~\ref{lem:factoredOutputInvariant},
        }\pi(\pi(u))=u, \set{d,e} \cap \set{\delta,\delta''}=\emptyset]\nonumber\\
        f(u \cdot (\sigma,e) \cdot \pi(u') \mid
        \underline{\pi(v)[\delta/\delta'']}) &\ne f(u \cdot
        (\sigma,e) \cdot \pi(u') \mid
        \underline{\pi(v)}) &&
        [\text{contrapositive of
        Lemma~\ref{lem:rightCongruenceHelper},
    point~\ref{rch:leftToAbstract}}]\nonumber
    \end{align}

    Case 2:
    \begin{align*}
        f(u \mid \underline{(\sigma,d) \cdot u' \cdot
        v[\delta/\delta']}) &\ne f(u \mid \underline{(\sigma,d) \cdot u' \cdot
        v})\\
        \pi'(f(u \mid \underline{(\sigma,d) \cdot u' \cdot
        v[\delta/\delta']})) &\ne \pi'(f(u \mid \underline{(\sigma,d) \cdot u' \cdot
        v})) && [\text{apply } \pi' \text{ on both sides}]\\
        f(u \mid \underline{(\sigma,d) \cdot u' \cdot
        v[\delta/\delta'']}) &\ne f(u \mid \underline{(\sigma,d) \cdot u' \cdot
        v})&&
        [\text{Lemma~\ref{lem:factoredOutputInvariant},
        }\delta',\delta'' \notin
        \data(u\cdot u' \cdot (\sigma,d) \cdot v,*)]\\
        f(u \mid \underline{\pi((\sigma,d) \cdot u' \cdot
        v[\delta/\delta''])}) &\ne f(u \mid \underline{\pi((\sigma,d) \cdot u' \cdot
        v)}) && [\text{Lemma~\ref{lem:nonPreIflPermutable}}]\\
        f(u \mid \underline{(\sigma,e) \cdot \pi(u') \cdot
        \pi(v)[\delta/\delta'']}) &\ne f(u \mid \underline{(\sigma,e) \cdot
            \pi(u') \cdot
        \pi(v)})\\
        f(u \cdot (\sigma,e) \cdot \pi(u') \mid
        \underline{\pi(v)[\delta/\delta'']}) &\ne f(u \cdot (\sigma,e) \cdot
            \pi(u') \mid
            \underline{\pi(v)})&&[\text{contrapositive of
            Lemma~\ref{lem:rightCongruenceHelper},
            point~\ref{rch:middleToRight}}]
    \end{align*}
    Since $\delta$ doesn't occur in $u'$, $\delta$ doesn't occur in
    $\pi(u')$. The last inequalities in each of the above cases certify that
    $\delta$ is $f$-vulnerable in $u \cdot (\sigma,e)$.
\end{proof}

\begin{proof}[Proof of Lemma~\ref{lem:rightCongruences}]
 Since $u_{1} \fequiv u_{2}$, there exists a
    permutation $\pi$ satisfying the conditions of
    Definition~\ref{def:fEquivalence}.  Let $z=|u_{1}| - |u_{2}|$.

    \textbf{Proof of 1.} Suppose $d_{1}^{i}$ is $f$-memorable
    in $u_{1} \cdot (\sigma,d_{1}^{j})$. There exist a data word $v$
    and a safe replacement $d'$ for $d_{1}^{i}$ in $u_{1}
    \cdot (\sigma,d_{1}^{j})$ such
    that $f(\underline{ (u_{1} \cdot
    (\sigma,d_{1}^{j}))[d_{1}^{i}/d']} \mid v) \ne f(\underline{
    u_{1} \cdot (\sigma,d_{1}^{j})} \mid v)$. Let $d''$ be a
    data value that is a safe replacement for $d_{1}^{i}$ in
    $u_{1}\cdot (\sigma,d_{1}^{j}) \cdot v \cdot
    \pi(u_{2})$. Let $\pi_{1}$ be the permutation that
    interchanges $d'$ and $d''$ and doesn't change
    any other value.  Let $\pi_{2}$ be the permutation that
    interchanges $d_{1}^{i}$ and $d''$ and doesn't change
    any other value.
    \begin{align}
        f(\underline{(u_{1} \cdot
        (\sigma,d_{1}^{j}))[d_{1}^{i}/d']} \mid v) &\ne
        f(\underline{u_{1} \cdot (\sigma,d_{1}^{j})} \mid v) \nonumber
        &&[\text{Definition~\ref{def:fMemorable}}]\\
        \pi_{1}(f(\underline{(u_{1} \cdot
        (\sigma,d_{1}^{j}))[d_{1}^{i}/d']} \mid v)) &\ne
        \pi_{1}(f(\underline{u_{1} \cdot (\sigma,d_{1}^{j})} \mid v)) \nonumber
        &&[\text{apply }\pi_{1}\text{ on both sides}]\\
        f(\underline{\pi_{1}((u_{1} \cdot
        (\sigma,d_{1}^{j}))[d_{1}^{i}/d'])} \mid \pi_{1}(v)) &\ne
        f(\underline{\pi_{1}(u_{1} \cdot (\sigma,d_{1}^{j}))} \mid
        \pi_{1}(v)) \nonumber
        &&[\text{Lemma~\ref{lem:factoredOutputInvariant}}]\\
        f(\underline{(u_{1} \cdot
        (\sigma,d_{1}^{j}))[d_{1}^{i}/d'']} \mid \pi_{1}(v)) &\ne
        f(\underline{u_{1} \cdot (\sigma,d_{1}^{j})} \mid
        \pi_{1}(v)) \label{eq:rightCongruences4}
        &&[\set{d',d''} \notin \data(u_{1} \cdot
        (\sigma,d_{1}^{j}), *)]\\
        f(\underline{u_{1}} \mid (\sigma,d_{1}^{j}) \cdot
        \pi_{1}(v)) &=
        f_{z}(\underline{\pi(u_{2})} \mid (\sigma,d_{1}^{j}) \cdot
        \pi_{1}(v))\nonumber && [\text{Definition~\ref{def:fEquivalence}}]\\
        f(\underline{u_{1} \cdot(\sigma,d_{1}^{j})} \mid
        \pi_{1}(v)) &=
        f_{z}(\underline{\pi(u_{2}) \cdot(\sigma,d_{1}^{j})} \mid 
        \pi_{1}(v))\label{eq:rightCongruences5} &&
        [\text{Lemma~\ref{lem:rightCongruenceHelper},
        point~\ref{rch:middleToLeft}}]\\
        f(\underline{u_{1}} \mid (\sigma,d_{1}^{j}) \cdot
        \pi_{2}^{-1}\pcomp \pi_{1}(v)) &= f_{z}(\underline{\pi(u_{2})} \mid (\sigma,d_{1}^{j}) \cdot
        \pi_{2}^{-1}\pcomp \pi_{1}(v)) \nonumber &&
        [\text{Definition~\ref{def:fEquivalence}}]\\
        \pi_{2}(f(\underline{u_{1}} \mid (\sigma,d_{1}^{j}) \cdot
        \pi_{2}^{-1}\pcomp \pi_{1}(v))) &= \pi_{2}(f_{z}(\underline{\pi(u_{2})} \mid (\sigma,d_{1}^{j}) \cdot
        \pi_{2}^{-1}\pcomp \pi_{1}(v))) \nonumber && [\text{apply
        }\pi_{2}\text{ on both sides}]\\
        f(\underline{\pi_{2}(u_{1})} \mid \pi_{2}((\sigma,d_{1}^{j}) \cdot
        \pi_{2}^{-1}\pcomp \pi_{1}(v))) &=
        f_{z}(\underline{\pi_{2}(\pi(u_{2}))} \mid \pi_{2}((\sigma,d_{1}^{j}) \cdot
        \pi_{2}^{-1}\pcomp \pi_{1}(v))) \nonumber &&
        [\text{Lemma~\ref{lem:factoredOutputInvariant}}]\\
        f(\underline{u_{1}[d_{1}^{i}/d'']} \mid
        (\sigma,d_{1}^{j})[d_{1}^{i}/d''] \cdot \pi_{1}(v)) &=
        f_{z}(\underline{\pi(u_{2})[d_{1}^{i}/d'']} \mid
        (\sigma,d_{1}^{j})[d_{1}^{i}/d''] \cdot \pi_{1}(v)) \nonumber &&
        [d'' \notin \data(u_{1}\cdot (\sigma,d_{1}^{j})\cdot
        \pi(u_{2}),*)]\\
        f(\underline{(u_{1}\cdot (\sigma,d_{1}^{j}))[d_{1}^{i}/d''] } \mid
        \pi_{1}(v)) &=
        f_{z}(\underline{(\pi(u_{2}) \cdot (\sigma,d_{1}^{j}))[d_{1}^{i}/d'']} \mid
        \pi_{1}(v)) \label{eq:rightCongruences6} &&
        [\text{Lemma~\ref{lem:rightCongruenceHelper},
        point~\ref{rch:middleToLeft}}]\\
        f_{z}(\underline{(\pi(u_{2}) \cdot (\sigma,d_{1}^{j}))[d_{1}^{i}/d'']} \mid
        \pi_{1}(v)) &\ne f_{z}(\underline{\pi(u_{2}) \cdot(\sigma,d_{1}^{j})} \mid 
        \pi_{1}(v)) \nonumber && [\text{\eqref{eq:rightCongruences4},
        \eqref{eq:rightCongruences5}, \eqref{eq:rightCongruences6}}]\\
        f(\underline{(\pi(u_{2}) \cdot (\sigma,d_{1}^{j}))[d_{1}^{i}/d'']} \mid
        \pi_{1}(v)) &\ne f(\underline{\pi(u_{2}) \cdot(\sigma,d_{1}^{j})} \mid 
        \pi_{1}(v)) \nonumber
    \end{align}
    Since $d''$ is a safe replacement for $d_{1}^{i}$ in
    $\pi(u_{2}) \cdot
    (\sigma,d_{1}^{j})$, the last inequality above certifies that
    $d_{1}^{i}$ is $f$-memorable in $\pi(u_{2}) \cdot
    (\sigma,d_{1}^{j})$. Since $\pi(u_{2}) \cdot
    (\sigma,d_{1}^{j})=\pi(u_{2} \cdot (\sigma,\pi^{-1}(d_{1}^{j})))$,
    we infer that $d_{1}^{i}$ is $f$-memorable in $\pi(u_{2}
    \cdot (\sigma,\pi^{-1}(d_{1}^{j})))$. From
    Lemma~\ref{lem:iflInvUnderPerm}, we infer that
    $\pi^{-1}(d_{1}^{i})$ is $f$-memorable in $u_{2} \cdot
    (\sigma,\pi^{-1}(d_{1}^{j}))$.

    Case 1: $(d_{1}^{j},d_{2}^{j}) \in \set{(d_{1}^{k},d_{2}^{k}) \mid 1
    \le k \le m}$. In this case, $\pi^{-1}(d_{1}^{j})=d_{2}^{j}$. So
    $\pi^{-1}(d_{1}^{i})$ is $f$-memorable in $u_{2} \cdot
    (\sigma,d_{2}^{j})$. Since $d_{1}^{i}$ is $f$-memorable
    in $u_{1} \cdot (\sigma,d_{1}^{j})$, we infer from
    Lemma~\ref{lem:iflValsMonotonic} that $d_{1}^{i}$ is $f$-memorable
    in $u_{1}$ or $d_{1}^{i}=d_{1}^{j}$. Either way,
    $d_{1}^{i}\in \set{d_{1}^{k} \mid 1 \le k \le m}$, so
    $(d_{1}^{i},d_{2}^{i}) \in \set{(d_{1}^{k},d_{2}^{k}) \mid 1
    \le k \le m}$. Hence $\pi^{-1}(d_{1}^{i})=d_{2}^{i}$, so
    $d_{2}^{i}$ is $f$-memorable in $u_{2} \cdot
    (\sigma,d_{2}^{j})$.

    Case 2: $(d_{1}^{j},d_{2}^{j})=(d_{1}^{0},d_{2}^{0})$. Since
    $d_{1}^{j}=d_{1}^{0}$ is not $f$-influencing in $u_{1}$,
    $\pi^{-1}(d_{1}^{j})$ is not $f$-influencing in $u_{2}$. From the
    hypothesis of this lemma, $d_{2}^{j}=d_{2}^{0}$ is not $f$-influencing
    in $u_{2}$. If $(d_{1}^{i},d_{2}^{i})\in
    \set{(d_{1}^{k},d_{2}^{k}) \mid 1 \le k \le m}$, then
    $\pi^{-1}(d_{1}^{i})=d_{2}^{i}$. So $d_{2}^{i}$ is $f$-memorable
    in $u_{2} \cdot (\sigma,\pi^{-1}(d_{1}^{j}))$.  From
    Lemma~\ref{lem:allNewValuesSufIfl}, we conclude that $d_{2}^{i}$
    is $f$-memorable in $u_{2} \cdot (\sigma,d_{2}^{j})$. The
    other possibility is that $(d_{1}^{i},d_{2}^{i})=(d_{1}^{0},d_{2}^{0}) =
    (d_{1}^{j},d_{2}^{j})$.  Since $d_{1}^{i}=d_{1}^{0}$ is not
    $f$-influencing in $u_{1}$, $\pi^{-1}(d_{1}^{i})$ is not
    $f$-influencing in $u_{2}$. Since
    $\pi^{-1}(d_{1}^{i})=\pi^{-1}(d_{1}^{j})$ is $f$-memorable
    in $u_{2} \cdot (\sigma,\pi^{-1}(d_{1}^{j}))$, from
    Lemma~\ref{lem:allNewValuesSufIfl}, we conclude that
    $d_{2}^{i}=d_{2}^{j}$ is $f$-memorable in $u_{2} \cdot
    (\sigma,d_{2}^{j})$. If $d_{2}^{i}$ is $f$-memorable in
    $u_{2} \cdot (\sigma,d_{2}^{j})$, we can prove that $d_{1}^{i}$ is
    $f$-memorable in $u_{1} \cdot (\sigma,d_{1}^{j})$ with a
    similar proof.

    Suppose $d_{1}^{i}$ is $f$-vulnerable in $u_{1} \cdot
    (\sigma,d_{1}^{j})$. We infer from Definition~\ref{def:fMemorable}
    that there exist data words $u',v$ and a data value $d'$
    such that $d_{1}^{i}$ doesn't occur in $u'$, $d'$ is a safe replacement for $d_{1}^{i}$ in
            $u_{1}\cdot (\sigma,d_{1}^{j}) \cdot
            u'\cdot v$ and $f(u_{1}\cdot (\sigma,d_{1}^{j})\cdot u' \mid
            \underline{v[d_{1}^{i}/d']}) \ne f(u_{1}\cdot
            (\sigma,d_{1}^{j})\cdot u' \mid \underline{v})$.
    Let $d''$ be a data value that is a safe replacement for
    $d_{1}^{i}$ in $u_{1}\cdot (\sigma,d_{1}^{j}) \cdot u' \cdot v \cdot
    \pi(u_{2})$. Let $\pi_{1}$ be the permutation that interchanges
    $d'$ and $d''$ and doesn't change any other value.

    \begin{align*}
        f(u_{1}\cdot (\sigma,d_{1}^{j})\cdot u' \mid
            \underline{v[d_{1}^{i}/d']}) &\ne f(u_{1}\cdot
            (\sigma,d_{1}^{j})\cdot u' \mid \underline{v})\\
            \pi_{1}(f(u_{1}\cdot (\sigma,d_{1}^{j})\cdot u' \mid
            \underline{v[d_{1}^{i}/d']})) &\ne \pi_{1}(f(u_{1}\cdot
            (\sigma,d_{1}^{j})\cdot u' \mid \underline{v}))
            &&[\text{apply }\pi_{1}\text{ on both sides}]\\
        f(u_{1}\cdot (\sigma,d_{1}^{j})\cdot u' \mid
            \underline{v[d_{1}^{i}/d'']}) &\ne f(u_{1}\cdot
            (\sigma,d_{1}^{j})\cdot u' \mid \underline{v}) &&
            [\text{Lemma~\ref{lem:factoredOutputInvariant},
            }d',d'' \notin \data(u_{1}\cdot
        (\sigma,d_{1}^{j})\cdot u'\cdot v,*)]\\
        f(\pi(u_{2})\cdot (\sigma,d_{1}^{j})\cdot u' \mid
            \underline{v[d_{1}^{i}/d'']}) &\ne f(\pi(u_{2})\cdot
            (\sigma,d_{1}^{j})\cdot u' \mid \underline{v}) &&
            [\text{last condition on }\pi\text{ in
            Definition~\ref{def:fEquivalence}}]
    \end{align*}
    The last inequality above implies that $d_{1}^{i}$
    is $f$-vulnerable in $\pi(u_{2}) \cdot (\sigma,d_{1}^{j})$.
    Since, $\pi(u_{2}) \cdot (\sigma,d_{1}^{j}) = \pi(u_{2} \cdot
    (\sigma,\pi^{-1}(d_{1}^{j})))$, $d_{1}^{i}$ is $f$-vulnerable
    in $\pi(u_{2} \cdot (\sigma,\pi^{-1}(d_{1}^{j})))$. From
    Lemma~\ref{lem:iflInvUnderPerm}, we infer that
    $\pi^{-1}(d_{1}^{i})$ is $f$-vulnerable in $u_{2} \cdot
    (\sigma,\pi^{-1}(d_{1}^{j}))$.
    
    Case 1: $(d_{1}^{j},d_{2}^{j}) \in
    \set{(d_{1}^{k},d_{2}^{k}) \mid 1 \le k \le m}$. In this case,
    $\pi^{-1}(d_{1}^{j})=d_{2}^{j}$ (since $\pi$ maps
    $\ifl_{f}(u_{2})$ to $\ifl_{f}(u_{1})$), so
    $\pi^{-1}(d_{1}^{i})$ is
    $f$-vulnerable in $u_{2} \cdot (\sigma,d_{2}^{j})$. Since
    $d_{1}^{i}$ is $f$-vulnerable in $u_{1} \cdot
    (\sigma,d_{1}^{j})$, we infer from Lemma~\ref{lem:iflValsMonotonic}
    that $d_{1}^{i}$ is $f$-vulnerable in $u_{1}$ or
    $d_{1}^{i}=d_{1}^{j}$. Either way, $d_{1}^{i} \in
    \set{d_{1}^{k} \mid 1 \le k \le m}$, so
    $(d_{1}^{i},d_{2}^{i}) \in \set{(d_{1}^{k},d_{2}^{k}) \mid 1 \le
    k \le m}$. Hence, $\pi^{-1}(d_{1}^{i})=d_{2}^{i}$, so
    $d_{2}^{i}$ is $f$-vulnerable in $u_{2} \cdot (\sigma,d_{2}^{j})$.

    Case 2: $(d_{1}^{j},d_{2}^{j})=(d_{1}^{0},d_{2}^{0})$. In this
    case, $d_{2}^{j}=d_{2}^{0}$ is not $f$-influencing in $u_{2}$, and
    $\pi^{-1}(d_{1}^{j})=\pi^{-1}(d_{1}^{0})$ is not $f$-influencing in
    $u_{2}$ (since $d_{1}^{0}$ is not $f$-influencing in $u_{1}$). If
    $(d_{1}^{i},d_{2}^{i})\in \set{(d_{1}^{k},d_{2}^{k}) \mid 1 \le
    k \le m}$, then $\pi^{-1}(d_{1}^{i})=d_{2}^{i}$. So
    $d_{2}^{i}$ is $f$-vulnerable in $u_{2} \cdot
    (\sigma,\pi^{-1}(d_{1}^{j}))$. From
    Lemma~\ref{lem:allNewValuesPrefIfl}, we infer that $d_{2}^{i}$ is
    $f$-vulnerable in $u_{2} \cdot (\sigma,d_{2}^{j})$. The
    other possibility is that
    $(d_{1}^{i},d_{2}^{i})=(d_{1}^{0},d_{2}^{0}) = (d_{1}^{j},d_{2}^{j})$.
    Since $d_{1}^{i}=d_{1}^{0}$ is not $f$-influencing in $u_{1}$,
    $\pi^{-1}(d_{1}^{i})$ is not $f$-influencing in $u_{2}$. Since
    $\pi^{-1}(d_{1}^{i})=\pi^{-1}(d_{1}^{j})$ is $f$-vulnerable
    in $u_{2} \cdot (\sigma,\pi^{-1}(d_{1}^{j}))$, from
    Lemma~\ref{lem:allNewValuesPrefIfl}, we conclude that
    $d_{2}^{i}=d_{2}^{j}$ is $f$-vulnerable
    in $u_{2} \cdot (\sigma,d_{2}^{j})$.
    If $d_{2}^{i}$ if
    $f$-vulnerable in $u_{2} \cdot (\sigma,d_{2}^{j})$, we can
    prove that $d_{1}^{i}$ is $f$-vulnerable in $u_{1} \cdot
    (\sigma,d_{1}^{j})$ with a similar proof.

    \textbf{Proof of 2.} Let $\pi'$ be the permutation that
    interchanges $d_{1}^{0}$ and $\pi(d_{2}^{0})$ and doesn't change any other
    value. To prove that $u_{1} \cdot (\sigma,d_{1}^{j}) \fequiv
    u_{2} \cdot (\sigma,d_{2}^{j})$, we will prove that the permutation
    $\pi' \pcomp \pi$ satisfies all the conditions of
    Definition~\ref{def:fEquivalence}. Note that
    $\pi'\pcomp \pi(d_{2}^{j})=d_{1}^{j}$. From
    Lemma~\ref{lem:iflValsMonotonic}, we infer that $f$-influencing
    values in $u_{1} \cdot (\sigma,d_{1}^{j})$ are among
    $\set{d_{1}^{k} \mid 1 \le k \le m} \cup \set{d_{1}^{j}}$ and that
    $f$-influencing values in $u_{2} \cdot (\sigma,d_{2}^{j})$ are
    among $\set{d_{2}^{k} \mid 1 \le k \le m} \cup \set{d_{2}^{j}}$. We
    infer from point 1 of this lemma that $d_{1}^{j}$ is $f$-memorable
    (resp.~$f$-vulnerable) in $u_{1} \cdot
    (\sigma,d_{1}^{j})$ iff $d_{2}^{j}$ is $f$-memorable
    (resp.~$f$-vulnerable) in $u_{2} \cdot (\sigma,d_{2}^{j})$.
    We also infer from point 1 of this lemma that for
    $(d_{1}^{i},d_{2}^{i}) \in \set{(d_{1}^{k},d_{2}^{k}) \mid 1 \le
    k \le m}$, $d_{1}^{i}$ is $f$-memorable
    (resp.~$f$-vulnerable) in $u_{1} \cdot (\sigma,d_{1}^{j})$
    iff $d_{2}^{i}$ is $f$-memorable (resp.~$f$-vulnerable) in $u_{2} \cdot (\sigma,d_{2}^{j})$. Since,
    $\pi'\pcomp\pi(d_{2}^{i})=d_{1}^{i}$ and
    $\pi'\pcomp\pi(d_{2}^{j})=d_{1}^{j}$, we infer that
    $\aifl_{f}(\pi'\pcomp\pi(u_{2}\cdot
    (\sigma,d_{2}^{j})))=\aifl_{f}(u_{1} \cdot (\sigma,d_{1}^{j}))$.

    Let $v$ be an arbitrary data word. Since $d_{2}^{0}$ is not
    $f$-influencing in $u_{2}$, $\pi(d_{2}^{0})$ is not $f$-influencing in
    $u_{1}$.
    \begin{align*}
        f_{z}(\underline{\pi(u_{2})} \mid (\sigma,d_{1}^{j}) \cdot v)
        &= f(\underline{u_{1}} \mid (\sigma,d_{1}^{j}) \cdot v) &&
        [\text{first condition on }\pi\text{ in
        Definition~\ref{def:fEquivalence}}]\\
        f_{z}(\underline{\pi' \pcomp \pi(u_{2})} \mid (\sigma,d_{1}^{j}) \cdot v)
        &= f(\underline{u_{1}} \mid (\sigma,d_{1}^{j}) \cdot v) &&
        [\text{Lemma~\ref{lem:nonSufIflPermutable}}]\\
        f_{z}(\underline{\pi' \pcomp \pi(u_{2}) \cdot (\sigma,d_{1}^{j})} \mid v)
        &= f(\underline{u_{1} \cdot (\sigma,d_{1}^{j})} \mid v) &&
        [\text{Lemma~\ref{lem:rightCongruenceHelper},
        point~\ref{rch:middleToLeft}}]\\
        f_{z}(\underline{\pi' \pcomp \pi(u_{2} \cdot (\sigma,d_{2}^{j}))} \mid v)
        &= f(\underline{u_{1} \cdot (\sigma,d_{1}^{j})} \mid v)
    \end{align*}
    Since the last inequality above holds for any data word $v$, it
    proves the first condition of Definition~\ref{def:fEquivalence}.

    For the last condition of Definition~\ref{def:fEquivalence},
    suppose $u, v_{1}, v_{2}$ are arbitrary data values and
    $f(u_{1} \cdot (\sigma,d_{1}^{j}) \cdot u \mid
    \underline{v_{1}}) = f(u_{1} \cdot (\sigma,d_{1}^{j}) \cdot u \mid
    \underline{v_{2}})$. Since, $u_{1}\fequiv u_{2}$ and $\pi$
    satisfies all the conditions of Definition~\ref{def:fEquivalence},
    we infer that $f(\pi(u_{2}) \cdot (\sigma,d_{1}^{j}) \cdot u \mid
    \underline{v_{1}}) = f(\pi(u_{2}) \cdot (\sigma,d_{1}^{j}) \cdot u \mid
    \underline{v_{2}})$.
    \begin{align}
        f(\pi(u_{2}) \cdot (\sigma,d_{1}^{j}) \cdot u \mid
    \underline{v_{1}}) &= f(\pi(u_{2}) \cdot (\sigma,d_{1}^{j}) \cdot u \mid
    \underline{v_{2}})\nonumber\\
        f(\pi(u_{2}) \mid \underline{(\sigma,d_{1}^{j}) \cdot u \cdot
    v_{1}}) &= f(\pi(u_{2}) \mid \underline{(\sigma,d_{1}^{j}) \cdot u
    \cdot v_{2}})&&[\text{Lemma~\ref{lem:rightCongruenceHelper},
    point~\ref{rch:middleToRight}}]\nonumber\\
        f(\pi(u_{2}) \mid \underline{\pi'((\sigma,d_{1}^{j}) \cdot u \cdot
    v_{1})}) &= f(\pi(u_{2}) \mid \underline{\pi'((\sigma,d_{1}^{j}) \cdot u
    \cdot v_{2})})&&[\text{Lemma~\ref{lem:nonPreIflPermutable},
    }\pi(d_{2}^{0}),d_{1}^{0} \notin \data(\aifl_{f}(\pi(u_{2}),*))]\nonumber\\
        \pi'(f(\pi(u_{2}) \mid \underline{\pi'((\sigma,d_{1}^{j}) \cdot u \cdot
    v_{1})})) &= \pi'(f(\pi(u_{2}) \mid \underline{\pi'((\sigma,d_{1}^{j}) \cdot u
    \cdot v_{2})}))&&[\text{apply }\pi' \text{ on both sides}]\nonumber\\
        f(\pi'\pcomp \pi(u_{2}) \mid \underline{(\sigma,d_{1}^{j}) \cdot u \cdot
    v_{1}}) &= f(\pi'\pcomp\pi(u_{2}) \mid \underline{(\sigma,d_{1}^{j}) \cdot u
    \cdot
    v_{2}})&&[\text{Lemma~\ref{lem:factoredOutputInvariant}}]\label{eq:rc1}\\
        f(\pi(u_{2}) \cdot (\sigma,d_{1}^{j}) \cdot u \mid
    \underline{v_{1}}) &= f(\pi(u_{2}) \cdot (\sigma,d_{1}^{j}) \cdot u \mid
    \underline{v_{2}})\nonumber\\
    f(\underline{\pi(u_{2})} \mid (\sigma,d_{1}^{j}) \cdot u \mid
    \underline{v_{1}}) &= f(\underline{\pi(u_{2})} \mid(\sigma,d_{1}^{j}) \cdot u \mid
    \underline{v_{2}})&&[\text{Lemma~\ref{lem:rightCongruenceHelper},
    point~\ref{rch:leftToAbstract}}]\label{eq:rc2}\\
    f(\underline{\pi'\pcomp\pi(u_{2})} \mid (\sigma,d_{1}^{j}) \cdot u
    \cdot v_{1}) &= f(\underline{\pi(u_{2})} \mid (\sigma,d_{1}^{j}) \cdot u
    \cdot v_{1})&&[\text{Lemma~\ref{lem:nonSufIflPermutable},
    }\pi(d_{2}^{0}),d_{1}^{0} \notin \data(\aifl_{f}(\pi(u_{2}),*))]\nonumber\\
    f(\underline{\pi'\pcomp\pi(u_{2})} \mid (\sigma,d_{1}^{j}) \cdot u
    \mid \underline{v_{1}}) &= f(\underline{\pi(u_{2})} \mid (\sigma,d_{1}^{j}) \cdot u
    \mid
    \underline{v_{1}})&&[\text{Lemma~\ref{lem:rightCongruenceHelper},
    point~\ref{rch:rightToAbstract}}]\label{eq:rc3}\\
    f(\underline{\pi'\pcomp\pi(u_{2})} \mid (\sigma,d_{1}^{j}) \cdot u
    \cdot v_{2}) &= f(\underline{\pi(u_{2})} \mid (\sigma,d_{1}^{j}) \cdot u
    \cdot v_{2})&&[\text{Lemma~\ref{lem:nonSufIflPermutable},
    }\pi(d_{2}^{0}),d_{1}^{0} \notin \data(\aifl_{f}(\pi(u_{2}),*))]\nonumber\\
    f(\underline{\pi'\pcomp\pi(u_{2})} \mid (\sigma,d_{1}^{j}) \cdot u
    \mid \underline{v_{2}}) &= f(\underline{\pi(u_{2})} \mid (\sigma,d_{1}^{j}) \cdot u
    \mid
    \underline{v_{2}})&&[\text{Lemma~\ref{lem:rightCongruenceHelper},
    point~\ref{rch:rightToAbstract}}]\label{eq:rc4}\\
    f(\underline{\pi'\pcomp\pi(u_{2})} \mid (\sigma,d_{1}^{j}) \cdot u \mid
    \underline{v_{1}}) &= f(\underline{\pi'\pcomp\pi(u_{2})} \mid(\sigma,d_{1}^{j}) \cdot u \mid
    \underline{v_{2}})&&[\text{\eqref{eq:rc2},\eqref{eq:rc3},\eqref{eq:rc4}}]\label{eq:rc5}\\
    f(\pi'\pcomp \pi(u_{2}) \cdot (\sigma,d_{1}^{j}) \cdot u \mid
    \underline{v_{1}}) &= f(\pi'\pcomp\pi(u_{2}) \cdot (\sigma,d_{1}^{j}) \cdot u \mid
    \underline{v_{2}})\nonumber&&[\text{\eqref{eq:rc1},\eqref{eq:rc5},
    Lemma~\ref{lem:facOpEqualityFromParts}}]\\
    f(\pi'\pcomp \pi(u_{2}\cdot (\sigma,d_{2}^{j}))  \cdot u \mid
    \underline{v_{1}}) &= f(\pi'\pcomp\pi(u_{2} \cdot (\sigma,d_{2}^{j})) \cdot u \mid
    \underline{v_{2}})\nonumber
    \end{align}
    Hence, if $f(u_{1} \cdot (\sigma,d_{1}^{j}) \cdot u \mid
    \underline{v_{1}}) = f(u_{1} \cdot (\sigma,d_{1}^{j}) \cdot u \mid
    \underline{v_{2}})$, then $f(\pi'\pcomp\pi(u_{2}\cdot
    (\sigma,d_{2}^{j}))  \cdot u \mid \underline{v_{1}}) = f(\pi'\pcomp
    \pi(u_{2} \cdot (\sigma,d_{2}^{j})) \cdot u \mid
    \underline{v_{2}})$.

    Conversely, suppose $f(\pi'\pcomp \pi(u_{2}\cdot (\sigma,d_{2}^{j}))
    \cdot u \mid \underline{v_{1}}) = f(\pi'\pcomp\pi(u_{2} \cdot
    (\sigma,d_{2}^{j})) \cdot u \mid \underline{v_{2}})$. Then we have
    $f(\pi'\pcomp \pi(u_{2}) \cdot (\sigma,d_{1}^{j}) \cdot u \mid
    \underline{v_{1}}) = f(\pi'\pcomp\pi(u_{2}) \cdot
    (\sigma,d_{1}^{j}) \cdot u \mid \underline{v_{2}})$. Recall that
    $\pi(d_{2}^{0})$ and $d_{1}^{0}$ are not $f$-influencing in $\pi(u_{2})$.
    \begin{align}
        f(\pi'\pcomp\pi(u_{2}) \cdot (\sigma,d_{1}^{j}) \cdot u \mid
    \underline{v_{1}}) &= f(\pi'\pcomp\pi(u_{2}) \cdot (\sigma,d_{1}^{j}) \cdot u \mid
    \underline{v_{2}})\nonumber\\
        f(\pi'\pcomp\pi(u_{2}) \mid \underline{(\sigma,d_{1}^{j}) \cdot u \cdot
    v_{1}}) &= f(\pi'\pcomp\pi(u_{2}) \mid \underline{(\sigma,d_{1}^{j}) \cdot u
    \cdot v_{2}})&&[\text{Lemma~\ref{lem:rightCongruenceHelper},
    point~\ref{rch:middleToRight}}]\nonumber\\
        f(\pi'\pcomp\pi(u_{2}) \mid \underline{\pi'((\sigma,d_{1}^{j}) \cdot u \cdot
    v_{1})}) &= f(\pi'\pcomp\pi(u_{2}) \mid \underline{\pi'((\sigma,d_{1}^{j}) \cdot u
    \cdot v_{2})})&&[\text{Lemma~\ref{lem:nonPreIflPermutable},
    }\pi(d_{2}^{0}),d_{1}^{0} \notin \nonumber\\& &&\data(\aifl_{f}(\pi'\cdot\pi(u_{2}),*))]\nonumber\\
        \pi'(f(\pi'\pcomp\pi(u_{2}) \mid \underline{\pi'((\sigma,d_{1}^{j}) \cdot u \cdot
    v_{1})})) &= \pi'(f(\pi'\pcomp\pi(u_{2}) \mid \underline{\pi'((\sigma,d_{1}^{j}) \cdot u
    \cdot v_{2})}))&&[\text{apply }\pi' \text{ on both sides}]\nonumber\\
        f(\pi(u_{2}) \mid \underline{(\sigma,d_{1}^{j}) \cdot u \cdot
    v_{1}}) &= f(\pi(u_{2}) \mid \underline{(\sigma,d_{1}^{j}) \cdot u
    \cdot
    v_{2}})&&[\text{Lemma~\ref{lem:factoredOutputInvariant}}]\label{eq:rc6}\\
        f(\pi'\pcomp\pi(u_{2}) \cdot (\sigma,d_{1}^{j}) \cdot u \mid
    \underline{v_{1}}) &= f(\pi'\pcomp\pi(u_{2}) \cdot (\sigma,d_{1}^{j}) \cdot u \mid
    \underline{v_{2}})\nonumber\\
    f(\underline{\pi'\pcomp\pi(u_{2})} \mid (\sigma,d_{1}^{j}) \cdot u \mid
    \underline{v_{1}}) &= f(\underline{\pi'\pcomp\pi(u_{2})} \mid(\sigma,d_{1}^{j}) \cdot u \mid
    \underline{v_{2}})&&[\text{Lemma~\ref{lem:rightCongruenceHelper},
    point~\ref{rch:leftToAbstract}}]\label{eq:rc7}\\
    f(\underline{\pi'\pcomp\pi(u_{2})} \mid (\sigma,d_{1}^{j}) \cdot u
    \cdot v_{1}) &= f(\underline{\pi(u_{2})} \mid (\sigma,d_{1}^{j}) \cdot u
    \cdot v_{1})&&[\text{Lemma~\ref{lem:nonSufIflPermutable},
    }\pi(d_{2}^{0}),d_{1}^{0} \notin \nonumber\\ & && \data(\aifl_{f}(\pi(u_{2}),*))]\nonumber\\
    f(\underline{\pi'\pcomp\pi(u_{2})} \mid (\sigma,d_{1}^{j}) \cdot u
    \mid \underline{v_{1}}) &= f(\underline{\pi(u_{2})} \mid (\sigma,d_{1}^{j}) \cdot u
    \mid
    \underline{v_{1}})&&[\text{Lemma~\ref{lem:rightCongruenceHelper},
    point~\ref{rch:rightToAbstract}}]\label{eq:rc8}\\
    f(\underline{\pi'\pcomp\pi(u_{2})} \mid (\sigma,d_{1}^{j}) \cdot u
    \cdot v_{2}) &= f(\underline{\pi(u_{2})} \mid (\sigma,d_{1}^{j}) \cdot u
    \cdot v_{2})&&[\text{Lemma~\ref{lem:nonSufIflPermutable},
    }\pi(d_{2}^{0}),d_{1}^{0} \notin \nonumber\\ & && \data(\aifl_{f}(\pi(u_{2}),*))]\nonumber\\
    f(\underline{\pi'\pcomp\pi(u_{2})} \mid (\sigma,d_{1}^{j}) \cdot u
    \mid \underline{v_{2}}) &= f(\underline{\pi(u_{2})} \mid (\sigma,d_{1}^{j}) \cdot u
    \mid
    \underline{v_{2}})&&[\text{Lemma~\ref{lem:rightCongruenceHelper},
    point~\ref{rch:rightToAbstract}}]\label{eq:rc9}\\
    f(\underline{\pi(u_{2})} \mid (\sigma,d_{1}^{j}) \cdot u \mid
    \underline{v_{1}}) &= f(\underline{\pi(u_{2})} \mid(\sigma,d_{1}^{j}) \cdot u \mid
    \underline{v_{2}})&&[\text{\eqref{eq:rc7},\eqref{eq:rc8},\eqref{eq:rc9}}]\label{eq:rc10}\\
    f(\pi(u_{2}) \cdot (\sigma,d_{1}^{j}) \cdot u \mid
    \underline{v_{1}}) &= f(\pi(u_{2}) \cdot (\sigma,d_{1}^{j}) \cdot u \mid
    \underline{v_{2}})\nonumber&&[\text{\eqref{eq:rc6},\eqref{eq:rc10},
    Lemma~\ref{lem:facOpEqualityFromParts}}]
    \end{align}
    Since, $u_{1}\fequiv u_{2}$ and $\pi$ satisfies all the conditions
    of Definition~\ref{def:fEquivalence}, we infer from the last
    equality above that $f(u_{1} \cdot (\sigma,d_{1}^{j}) \cdot u \mid
    \underline{v_{1}}) = f(u_{1} \cdot (\sigma,d_{1}^{j}) \cdot u \mid
    \underline{v_{2}})$. Hence, if $f(\pi'\pcomp\pi(u_{2}\cdot
    (\sigma,d_{2}^{j})) \cdot u \mid \underline{v_{1}}) = f(\pi'\pcomp
    \pi(u_{2} \cdot (\sigma,d_{2}^{j})) \cdot u \mid
    \underline{v_{2}})$, then $f(u_{1} \cdot (\sigma,d_{1}^{j}) \cdot u \mid
    \underline{v_{1}}) = f(u_{1} \cdot (\sigma,d_{1}^{j}) \cdot u \mid
    \underline{v_{2}})$. Therefore, the permutation $\pi'\pcomp\pi$
    satisfies all the conditions of Definition~\ref{def:fEquivalence},
    so $u_{1} \cdot (\sigma,d_{1}^{j}) \fequiv u_{2} \cdot
    (\sigma,d_{2}^{j})$.
\end{proof}

\end{document}